\definecolor{DarkRed}{rgb}{0.5,0.1,0.1}
\definecolor{DarkBlue}{rgb}{0.1,0.1,0.5}
\colorlet{YellowOrange}{RawSienna}
\def\BState{\State\hskip-\ALG@thistlm}
\newtheorem{theorem}{Theorem}
\newtheorem{lemma}{Lemma}[section]
\newtheorem{proposition}[lemma]{Proposition}
\newtheorem{corollary}[theorem]{Corollary}
\newtheorem{claim}[lemma]{Claim}
\newtheorem{fact}[lemma]{Fact}
\newtheorem{observation}[lemma]{Observation}
\newtheorem{definition}{Definition}
\newtheorem{problem}{Problem}
\newtheorem*{claim*}{Claim}
\newtheorem*{proposition*}{Proposition}
\newtheorem*{lemma*}{Lemma}
\newtheorem*{problem*}{Problem}
\newtheorem{mdresult}{Result}
\newtheorem{result}[mdresult]{Result}
\newenvironment{Result}{\begin{mdframed}[backgroundcolor=lightgray!40,topline=false,rightline=false,leftline=false,bottomline=false,innertopmargin=2pt,innerbottommargin=10pt]\begin{mdresult}}{\end{mdresult}\end{mdframed}}
\newtheorem{mdinvariant}{Invariant}
\theoremstyle{definition}
\newtheorem{remark}[lemma]{Remark}
\newtheorem{example}[lemma]{Example}
\newcommand{\Leq}[1]{\ensuremath{\underset{\textnormal{#1}}\leq}}
\newcommand{\Geq}[1]{\ensuremath{\underset{\textnormal{#1}}\geq}}
\newcommand{\Eq}[1]{\ensuremath{\underset{\textnormal{#1}}=}}
\renewcommand{\qed}{\nobreak \ifvmode \relax \else
      \ifdim\lastskip<1.5em \hskip-\lastskip
      \hskip1.5em plus0em minus0.5em \fi \nobreak
      \vrule height0.75em width0.5em depth0.25em\fi}
\newcommand{\Qed}[1]{\ensuremath{\qed_{\textnormal{~#1}}}}
\newcommand*\samethanks[1][\value{footnote}]{\footnotemark[#1]}
\newcommand{\toShrink}{-.20cm}
\newcommand{\toShrinkEnu}{-.2cm}
\newcommand{\algline}{
  \rule{0.5\linewidth}{.1pt}\hspace{\fill}%
  \par\nointerlineskip \vspace{.1pt}
}
\newcommand{\tvd}[2]{\ensuremath{\norm{#1 - #2}_{tvd}}}
\renewcommand{\tvd}[2]{\ensuremath{\Delta_{\textnormal{\texttt{TV}}}(#1,#2)}}
\newcommand{\Ot}{\ensuremath{\widetilde{O}}}
\newcommand{\eps}{\ensuremath{\varepsilon}}
\newcommand{\Paren}[1]{\Big(#1\Big)}
\newcommand{\Bracket}[1]{\Big[#1\Big]}
\newcommand{\bracket}[1]{\left[#1\right]}
\newcommand{\paren}[1]{\ensuremath{\left(#1\right)}\xspace}
\newcommand{\card}[1]{\left\vert{#1}\right\vert}
\newcommand{\Omgt}{\ensuremath{\widetilde{\Omega}}}
\newcommand{\set}[1]{\ensuremath{\left\{ #1 \right\}}}
\newcommand{\poly}{\mbox{\rm poly}}
\newcommand{\polylog}{\mbox{\rm  polylog}}
\newcommand{\alg}{\ensuremath{\mathcal{A}}\xspace}
\DeclareMathOperator*{\Exp}{\ensuremath{{\mathbb{E}}}}
\DeclareMathOperator*{\Prob}{\ensuremath{\textnormal{Pr}}}
\renewcommand{\Pr}{\Prob}
\newcommand{\Ex}{\Exp}
\newcommand{\event}[1]{\ensuremath{{\sf E}_{#1}}}
\newenvironment{tbox}{\begin{tcolorbox}[
		enlarge top by=5pt,
		enlarge bottom by=5pt,
		 breakable,
		 boxsep=0pt,
                  left=4pt,
                  right=4pt,
                  top=10pt,
                  boxrule=1pt,toprule=1pt,
                  colback=white,
                  arc=-1pt,
                  ]
	}
{\end{tcolorbox}}
\newcommand{\dist}{\ensuremath{\mathcal{D}}}
\newcommand{\Prot}{\ensuremath{\Pi}}
\newcommand{\prot}{\ensuremath{\pi}}
\newcommand{\bA}{\bm{A}}
\newcommand{\bR}{\bm{R}}
\newcommand{\bB}{\ensuremath{\bm{B}}}
\newcommand{\bC}{\ensuremath{\bm{C}}}
\newcommand{\bD}{\ensuremath{\bm{D}}}
\newcommand{\supp}[1]{\ensuremath{\textsc{supp}(#1)}}
\renewcommand{\event}{\mathcal{E}}
\title{Polynomial Pass Lower Bounds for Graph Streaming Algorithms}
\author{Sepehr Assadi\thanks{Department of Computer Science, Princeton University. Work done while the author was a graduate student at University of Pennsylvania and was supported in part
by the National Science Foundation grant CCF-1617851. Email: {{\small {\tt sassadi@princeton.edu.}}}}
\and Yu Chen\thanks{Department of Computer and Information Science, University of Pennsylvania. Supported in part by the National Science Foundation grants CCF-1617851 and CCF-1763514. Email: {{\small {\tt \{chenyu2,sanjeev\}@cis.upenn.edu.}}}}
\and Sanjeev Khanna\samethanks
}
\date{}
\begin{document}
\maketitle

\pagenumbering{roman}

\begin{abstract}
We present new lower bounds that show that a polynomial number of passes are necessary for solving some fundamental graph problems in the streaming model of computation.
For instance, we show that any streaming algorithm that finds a weighted minimum $s$-$t$ cut in an $n$-vertex undirected graph requires $n^{2-o(1)}$ space unless it makes $n^{\Omega(1)}$ passes over the stream. 

\smallskip

To prove our lower bounds, we introduce and analyze a new four-player communication problem that we refer to as the {\em hidden-pointer chasing} problem. This is a problem in spirit of the standard pointer chasing problem
with the key difference that the pointers in this problem are hidden to players and finding each one of them requires solving another communication problem, namely the set intersection problem. 
Our lower bounds for graph problems are then obtained by reductions from the hidden-pointer chasing problem. 

\smallskip

Our hidden-pointer chasing problem appears flexible enough to find other applications and is therefore interesting in its own right. To showcase this, we further
present an interesting application of this problem beyond streaming algorithms. Using a reduction from hidden-pointer chasing, we prove that any algorithm for submodular function minimization needs to make $n^{2-o(1)}$ value queries to the function unless it has a polynomial degree of adaptivity.

\end{abstract}

\clearpage

\setcounter{tocdepth}{3}
\tableofcontents

\clearpage

\pagenumbering{arabic}
\setcounter{page}{1}

\section{Introduction}\label{sec:intro}

Graph streaming algorithms are algorithms that solve computational problems on graphs, say, finding a maximum matching, when the input is
 presented as a sequence of edges, under the usual constraints of the streaming model, namely sequential access to the stream and limited memory. 
Formally, in the graph streaming model, the edges of a graph $G(V,E)$ are presented one by one in an arbitrary order. The algorithm can make one or a limited number of sequential passes over this stream, while using a limited memory to process the graph, 
preferably $O(n \cdot \polylog{(n)})$ memory, referred to as \emph{semi-streaming} restriction~\cite{FeigenbaumKMSZ05}; here $n$ is the number of vertices in $G$.

It turns out allowing for multiple passes over the stream greatly enhances the capability of graph streaming algorithms. A striking example is the (global) minimum cut problem: While $\Omega(n^2)$ space is needed for computing an exact minimum cut in a single pass~\cite{Zelke11}, 
a recent result of~\cite{RubinsteinSW18} implies that a minimum cut of an undirected unweighted graph can be computed in $\Ot(n)$ space in only
 two passes over the stream\footnote{\label{foot:imply} The result of~\cite{RubinsteinSW18} is not stated as a streaming algorithm. However, the algorithm in~\cite{RubinsteinSW18} combined with the known 
graph streaming algorithms for cut sparsifiers (see, e.g.~\cite{McGregor14}) immediately imply the claimed result.}. Table~\ref{tab:single-multi} presents 
several other examples of this phenomenon.
 
 \def\arraystretch{1.5}

\newsavebox{\tabone}
\sbox{\tabone}{
             {\small
        \centering
        \begin{tabular}{|l|c|c|c|c|c|c|c|}
            \cline{1-8}
        
          \multicolumn{1}{|c|}{\multirow{2}{*}{\textbf{Problem}}} &\multicolumn{4}{c|}{\textbf{Multi-Pass}} & \multicolumn{3}{c|}{\textbf{Single-Pass}}\\
            \cline{2-8}
             & \textbf{Space} & \textbf{Apx} & \textbf{Passes} & \textbf{Ref} & \textbf{Space} & \textbf{Apx} & \textbf{Ref} \\
            \cline{1-8}
     	     {Unweighted Min-Cut}  & $\Ot(n)$ & 1 & $2$ & \cite{RubinsteinSW18}  & $\Omega(n^2)$ & 1 & \cite{Zelke11}\\
            \cline{1-8}
     	     {Unweighted $s$-$t$ Min-Cut}  & $\Ot(n^{5/3})$ & 1 & $2$ & \cite{RubinsteinSW18}  & $\Omega(n^2)$ & 1 & \cite{Zelke11}\\
            \cline{1-8}
	     {Triangle Counting}  & $\Ot(\frac{m^{3/2}}{T})$ & $1+\eps$ & $4$ & \cite{BeraC17}  & $\Omega(\frac{m^3}{T^2})$ & $\Theta(1)$ & \cite{KutzkovP14a}\\
\hline
            {Maximum Matching}  & $\Ot(n)$ & $1+\eps$ & $O(1)$ & \cite{McGregor05}  & $n^{1+\Omega(\frac{1}{\log\log{n}})}$ & $\frac{e}{e-1}$ & \cite{Kapralov13}\\
	\hline
             {Single Source Shortest Path}  & $\Ot(n)$ & $1+\eps$ & $O(1)$ & \cite{BeckerKKL17}  & $\Omega(n^2)$ & $\frac{5}{3}$ & \cite{FeigenbaumKMSZ08}\\
	\hline
	     {Maximal Independent Set}  & $\Ot(n)$ & $-$ & $O(\log\log{n})$ & \cite{GhaffariGKMR18}  & $\Omega(n^2)$ & $-$ & \cite{AssadiCK19}\\
	\hline
	     {Minimum Dominating Set}  & $\Ot(n)$ & $O(\log{n})$ & $O(\log{n})$ & \cite{Har-PeledIMV16}  & $n^{2-o(1)}$ & $n^{o(1)}$ & \cite{AssadiKL16}\\
       \hline
        \end{tabular}
      }
  }
  
 \begin{table}[h!]
\begin{tikzpicture}
   \node[fill=white](boz){};
  \node[drop shadow={black, shadow xshift=5pt,shadow yshift=-5pt, opacity=0.5}, fill=white, inner xsep=-7pt, inner ysep=0pt](table)[right=0pt of boz]{\usebox{\tabone}};
\end{tikzpicture}
          \caption{ A sample of multi-pass graph streaming algorithms and corresponding single-pass lower bounds. 
          All results are for graphs $G(V,E)$ with $n$ vertices and $m$ edges (and $T$ triangles).
        \label{tab:single-multi}}

    \end{table}

Multi-pass graph streaming algorithms have been gaining increasing attention in recent years and for many well-studied graph problems, space efficient algorithms have been designed that use at most a logarithmic number of passes (see, 
e.g.~\cite{FeigenbaumKMSZ05,McGregor05,EggertKS09,KonradMM12,Kapralov13,KaleT17, McGregorVV16, HenzingerKN16, BeckerKKL17, AhnGM12a,AhnGM12b,SarmaGP11,GuruswamiO13,KapralovW14,ChakrabartiW16,Har-PeledIMV16,BeraC17}). 
But for many other problems, such results have proved elusive. Examples 
include shortest path and diameter computation~\cite{DistanceOP}, random walks~\cite{RandomWalkOP}, and directed reachability and maximum flow~\cite{McGregor14} (see also~\cite{StrongOP}). At the same time, known techniques for proving streaming lower bounds are unable to prove essentially any lower bounds beyond logarithmic number of passes (but see Section~\ref{sec:landscape} for an exception to this rule and the inherent limitation behind it). 
For example, the best known lower bounds for several key problems such as shortest path, directed reachability, and perfect matchings, only imply $\Omega(\frac{\log{n}}{\log\log{n}})$ passes for semi-streaming algorithms~\cite{FeigenbaumKMSZ08,GuruswamiO13}, while none of these problems currently admit an algorithm with 
$n^{2-\Omega(1)}$ space and $n^{o(1)}$ passes. 

Our goal in this paper is to remedy this situation by \textbf{{presenting new tools for proving stronger multi-pass graph streaming lower bounds}}. 
To better understand the challenges along the way, we first briefly revisit the current state-of-affairs.   

\subsection{Landscape of Graph Streaming Lower Bounds} \label{sec:landscape}

A vast body of work in graph streaming lower bounds concerns algorithms that make only one or a few passes over the stream. Examples
of single-pass lower bounds include the ones for diameter~\cite{FeigenbaumKMSZ08}, approximate matchings~\cite{GoelKK12,Kapralov13,AssadiKLY16,AssadiKL17}, exact minimum/maximum
cuts~\cite{Zelke11}, and maximal independent sets~\cite{AssadiCK19,CormodeDK18}. Examples of multi-pass lower bounds
include the ones for BFS trees~\cite{FeigenbaumKMSZ08}, perfect matchings~\cite{GuruswamiO13}, shortest path~\cite{GuruswamiO13}, 
and minimum vertex cover and dominating set~\cite{Har-PeledIMV16}.
These lower bounds are almost always obtained by considering {communication complexity} of the problem with \emph{limited number of rounds} of communication which gives
a lower bound on the space complexity of streaming algorithms with proportional number of passes to the limits on rounds of communication (see e.g.~\cite{AlonMS96,GuhaM08}). 
The communication lower bounds are then typically proved via reductions from (variants of) the \emph{pointer chasing} problem~\cite{PapadimitriouS84,NisanW91,ChakrabartiCM08} for multi-pass lower bounds and
the \emph{indexing} problem~\cite{Ablayev93,KremerNR95} and \emph{boolean hidden (hyper-)matching} problem~\cite{GavinskyKKRW07,VerbinY11} for single-pass lower bounds. 

In the pointer chasing problem, Alice and Bob are given functions $f,g :[n] \rightarrow [n]$ and the goal is to compute $f(g(\cdots f(g(0))))$ for $k$ iterations.
Computing this function in less than $k$ rounds requires $\Omgt(n/k)$ communication~\cite{Yehudayoff16} (see also~\cite{DurisGS84,PapadimitriouS84,NisanW91,PonzioRV99}).
The reductions from pointer chasing to graph streaming lower bounds are based on using vertices of the graph to encode $[n]$ and each edge to encode a pointer~\cite{FeigenbaumKMSZ08,GuruswamiO13}. Directly using pointer chasing
does not imply lower bounds stronger than $\Omega(n)$ and hence variants of pointer chasing with multiple pointers such as multi-valued pointer chasing~\cite{FeigenbaumKMSZ08,JainRS03} and set pointer chasing~\cite{GuruswamiO13}, were considered.  Using multiple pointers
however has the undesired side effect that the lower bound deteriorates exponentially with number of rounds. 
As such, these lower bounds do not go beyond $O(\log{n})$ passes even for algorithms with $O(n)$ space.

There are however a number of results that prove 
lower bounds for a very large number of passes (even close to $n$). Examples include lower bounds for approximating clique and independent set~\cite{HalldorssonSSW12}, 
approximating dominating set~\cite{Assadi17sc}, computing girth~\cite{FeigenbaumKMSZ08}, estimating the number of triangles~\cite{Bar-YossefKS02,JowhariG05,CormodeJ17,BeraC17}, and finding minimum vertex cover or coloring~\cite{AbboudCKP19}.  These results are all proven by considering the communication complexity of the 
problem with \emph{no limits on rounds} of communication. Such bounds
then imply lower bounds on the {product} of space and number of passes of streaming algorithms (see, e.g.~\cite{AlonMS96}). The communication lower bounds themselves
are proven by reductions from a handful of communication problems, mainly the \emph{set disjointness} problem~\cite{BabaiFS86,KalyanasundaramS92,Razborov92,Bar-YossefJKS02}.

This approach suffers from two main drawbacks. Firstly, these lower bounds only exhibit space bounds that scale with the reciprocal of the number of passes and 
are hence unable to capture more nuanced space/pass trade-offs. More importantly, there is an inherit limitation to this approach since the computational model considered here is much stronger than the streaming model.
This means that many problems of interest admit efficient communication protocols in this model and hence one simply cannot prove interesting lower bounds for them. 
An illustrating example is the directed $s$-$t$ reachability problem which admits an $O(n)$ communication protocol, 
ruling out the possibility of essentially any non-trivial lower bound using this approach (even ``harder'' problems such as maximum matching admit non-trivial protocols with $\Ot(n^{3/2})$ communication~\cite{IvanyosKLSW12,DobzinskiNO14}).

\subsection{Our Contributions}\label{sec:result}

We introduce and analyze a new communication problem similar in spirit to standard pointer chasing, which we refer to as the \emph{hidden-pointer chasing} (HPC) problem. 
What differentiate HPC from previous variants 
of pointer chasing is that the pointers are ``hidden'' from players and finding each one of them requires solving another communication problem, namely the \emph{set intersection} problem, in which
the goal is to \emph{find} the \emph{unique} element in the intersection of players input. 
We limit ourselves to the following informal definition of HPC here and postpone the formal definition to Section~\ref{sec:tech-hpc}. 
There are four players in HPC paired into groups of size two each. Each pair of players inside a group shares $n$ instances of the {set intersection} problem on $n$ elements. 
The intersecting element in each instance of each group ``points'' to an instance in the other group. 
The goal is to start from a fixed instance and follow these pointers for a fixed number of steps. 
We prove the following communication complexity lower bound for HPC.

\begin{Result}\label{res:hpc}
	Any $r$-round protocol that with constant probability finds the $(r+1)$-th pointer in the hidden-pointer chasing problem requires $\Omega({n^2}/{r^2})$ communication. 
\end{Result}

Result~\ref{res:hpc} implies a new approach towards proving graph streaming lower bounds that sits squarely in the middle of previous methods: 
HPC is a problem that admits an ``efficient'' protocol when there is no limit on rounds of communication and yet is ``hard'' with even a polynomial limitation on number of rounds. 
We use this result to prove strong pass lower bounds for some fundamental problems in graph streams via  reductions from HPC. 

\paragraph{Cut and Flow Problems.} One of the main applications of Result~\ref{res:hpc} is the following result. 

\begin{result}\label{res:cut}
	Any $p$-pass streaming algorithm that with a constant probability outputs the minimum $s$-$t$ cut value in a weighted graph (undirected or directed) requires 
	$\Omega({n^2}/{p^5})$ space. 
\end{result}

Prior to our work, the best lower bound known for this problem was an $n^{1+\Omega(1/p)}$ space lower bound for $p$-pass algorithms~\cite{GuruswamiO13} (for weighted undirected graphs and unweighted directed graphs).
Result~\ref{res:cut} significantly improves upon this. In particular, it implies that $\Omgt(n^{1/5})$ passes are necessary for semi-streaming algorithms, exponentially improving 
upon the $\Omega(\frac{\log{n}}{\log\log{n}})$ lower bound of~\cite{GuruswamiO13}. At the same time, Result~\ref{res:cut} also shows that any streaming algorithm for this problem with a small number of passes, namely $\polylog{(n)}$ passes, 
requires $\Omgt(n^2)$ space, almost the same space as the trivial single-pass algorithm that stores the input graph entirely. 

Our Result~\ref{res:cut} should be contrasted with the results of~\cite{RubinsteinSW18} that imply an $\Ot(n^{5/3})$ space
algorithm for unweighted minimum $s$-$t$ cut on undirected graphs in only \emph{two} passes (see Footnote~\ref{foot:imply}). 

By max-flow min-cut theorem, Result~\ref{res:cut} also implies identical bounds for computing the value of maximum $s$-$t$ flow
in capacitated graphs, making progress on a question raised in~\cite{McGregor14} regarding the streaming complexity of maximum flow in directed graphs.

\paragraph{Lexicographically-First Maximal Independent Set.} A maximal independent set (MIS) returned by the sequential greedy algorithm that visits the vertices of the graph in their lexicographical order 
is called the lexicographically-first MIS. We prove the following result for this problem. 

\begin{result}\label{res:mis}
	Any $p$-pass streaming algorithm that with constant probability finds a lexicographically first maximal independent set of in a graph requires 
	$\Omega({n^2}/{p^5})$ space. 
\end{result}

The lexicographically-first MIS has a rich history in computer science and in particular parallel algorithms~\cite{Cook85,AlonBI86,Luby86,BlellochFS12}. However, even though multiple variants of the independent set problem have been studied in the
streaming model~\cite{HalldorssonHLS10,HalldorssonSSW12,HalldorssonHLS16,CormodeDK18a,CormodeDK18,AssadiCK19,GhaffariGKMR18}, we are not aware of any work on this particular problem (we remark that standard MIS problem
admits an $\Ot(n)$ space $O(\log\log{n})$ pass algorithm~\cite{GhaffariGKMR18}).
 Besides being a fundamental problem in its own right, what makes this problem appealing for us is that it nicely illustrates the power of our techniques compared to previous approaches. 
The lexicographically-first MIS can be computed with $O(n)$ communication in the two-player 
communication model (or for any constant number of players) with no restriction on number of rounds by a direct simulation of the sequential algorithm. Hence, this problem perfectly fits the class of problems for which previous techniques 
cannot prove lower bounds beyond logarithmic passes. To our knowledge, this is the first super-logarithmic pass lower bound for any graph problem that admits an efficient protocol with no restriction on number of
rounds.

\subsubsection*{Beyond Graph Streams: An Application to Submodular Minimization}  

We also use Result~\ref{res:hpc} to prove query/adaptivity tradeoffs for the submodular function minimization (SFM) problem.
 In SFM, we have a submodular function $f: 2^{[n]} \rightarrow [M]$ and our goal is to find 
a set $S^* \subseteq [n]$ that minimizes $f(S^*)$ by making value queries to $f$. SFM has been studied extensively over the 
years~\cite{GrotschelLS81,Cunningham85,IwataFF00,IwataO09,Schrijver00,LeeSW15,ChakrabartyLSW17}, culminating in the currently best algorithms of~\cite{LeeSW15} and~\cite{ChakrabartyLSW17} with $\Ot(n^2)$ 
and $\Ot(n \cdot M^3)$ queries, respectively. The best 
lower bound for SFM is $\Omega(n)$ queries~\cite{Harvey08,Harvey08t} and determining the query complexity of this problem remains a fascinating open question~\cite{Harvey08,RubinsteinSW18}.

Another question in this area that has received a significant attention in recent years is to understand the query/adaptivity tradeoffs in submodular 
optimization~\cite{BalkanskiRS16,BalkanskiRS17,BalkanskiS17,BalkanskiS18,BalkansiRS18,BalkanskiBS18,FahrbachMZ18,FahrbachMZ18a,EneN18,EneNV18}. 
An algorithm for SFM is called \emph{$k$-adaptive} iff it makes at most $k$ rounds of adaptive queries, where the queries in each round are performed in parallel.
We prove that any $k$-round adaptive algorithm for SFM requires $\Omgt({n^{2}}/{k^5})$ queries (see Theorem~\ref{thm:sfm}). This in particular implies that if there is an algorithm with truly sub-quadratic query complexity, then it must 
have a polynomial degree of adaptivity. The only other adaptivity lower bound for SFM that we are aware of is an exponential lower bound on query complexity of 
\emph{non-adaptive} algorithms (even for approximation)~\cite{BalkanskiS17}. However, once we allow even two rounds of adaptivity, no lower bounds better than $\Omega(n)$ queries were known.

\subsection{Our Techniques}\label{sec:techniques}

Our reductions in this paper take a different path than previous pointer chasing based reductions that used edges of the graph to directly encode pointers. 
In particular, our hidden-pointer chasing problem allows us encode a single pointer among $\Theta(n)$ edges and thus work with graphs with density $\Omega(n^2)$ and still keep a polynomial dependence on number of rounds in the communication lower bound. This results
in space lower bounds of the form $n^2/p^{O(1)}$ for $p$-pass streaming algorithms. 

The main technical contribution of our paper is the communication complexity lower bound for HPC in Result~\ref{res:hpc}. This result is proved by combining inductive arguments 
for round/communication tradeoffs (see, e.g.~\cite{NisanW91,Yehudayoff16}) with direct-sum arguments for information complexity (see, e.g.~\cite{Bar-YossefJKS02,BarakBCR10,BravermanR11,BravermanEOPV13}) to account for the 
role of set intersection inside HPC. To make this argument work, we also need to prove a stronger 
lower bound for set intersection than currently known results (see, e.g.~\cite{BrodyCKWY14}). In particular, 
we prove that any protocol that can even slightly reduce the ``uncertainty'' about the intersecting element must have a ``large'' communication and information complexity. 

Our new lower bound for set intersection is also proved using tools from information complexity to reduce this problem to a primitive problem, namely set intersection itself on a universe of size two. 
This requires a novel argument to handle the protocols for set intersection that reduce the uncertainty about the intersecting element without necessarily making much ``progress'' on finding this element.
Another challenge is that unlike typical direct-sum results in this context, say reducing disjointness to the AND problem; see, e.g.~\cite{Bar-YossefJKS02,BravermanM13,BravermanGPW13,WeinsteinW15},
set intersection cannot be decomposed into \emph{independent} instances of the primitive problem (this is similar-in-spirit to challenges in analyzing information complexity of set disjointness on \emph{intersecting} distributions~\cite{JayramKS03,ChattopadhyayM15} 
as opposed to (more standard) non-intersecting ones). Finally, we prove a lower bound for the primitive problem using the product structure of Hellinger distance for communication protocols (see, e.g.~\cite{Bar-YossefJKS02,WeinsteinW15}).

\subsubsection*{Organization}

The rest of the paper is organized as follows. We set up our notation in Section~\ref{sec:prelim}. Section~\ref{sec:technical} contains a detailed technical overview of our approach. 
We present the proof of our new communication lower bound
for set intersection that is needed for establishing Result~\ref{res:hpc} in Section~\ref{sec:SI}. 
Section~\ref{sec:hpc} then uses this to finalize the proof of Result~\ref{res:hpc}. We present our lower bounds for graph streaming algorithms and for submodular minimization in Sections~\ref{sec:lower} and~\ref{sec:SFM}, respectively. 
 Appendix~\ref{app:related} presents further discussion on related work and Appendix~\ref{app:prelim} contains the backgrounds and preliminaries.

\newcommand{\kl}[2]{\ensuremath{\DD(#1~||~#2)}}
\newcommand{\rA}{\rv{A}}
\newcommand{\rB}{\rv{B}}
\newcommand{\rC}{\rv{C}}
\newcommand{\rD}{\rv{D}}

\newcommand{\rX}{\rv{X}}
\newcommand{\rY}{\rv{Y}}
\newcommand{\rI}{\rv{I}}

\newcommand{\rv}[1]{\ensuremath{\mathsf{#1}}}

\newcommand{\II}{\ensuremath{\mathbb{I}}}
\newcommand{\HH}{\ensuremath{\mathbb{H}}}
\newcommand{\mi}[2]{\ensuremath{\II(#1 \,; #2)}}
\newcommand{\en}[1]{\ensuremath{\HH(#1)}}

\newcommand{\itfacts}[1]{Fact~\ref{fact:it-facts}-(\ref{part:#1})\xspace}

\newcommand{\DD}{\ensuremath{\mathbb{D}}}

\newcommand{\distribution}[1]{\ensuremath{\textnormal{dist}(#1)}\xspace}

\newcommand{\XX}{\ensuremath{\mathcal{X}}}
\newcommand{\YY}{\ensuremath{\mathcal{Y}}}

\newcommand{\HPC}{\ensuremath{\textnormal{\textsf{HPC}}}\xspace}
\newcommand{\HPCk}{\ensuremath{\HPC_k}\xspace}

\newcommand{\protHPC}{\ensuremath{\prot_{\HPC}}}

\newcommand{\distHPC}{\ensuremath{\dist_{\HPC}}}

\newcommand{\unif}{\ensuremath{\mathcal{U}}}

\newcommand{\rZ}{\rv{Z}}
\newcommand{\rt}{\rv{t}}
\newcommand{\rT}{\rv{T}}
\newcommand{\rE}{\rv{E}}
\newcommand{\rP}{\rv{P}}
\newcommand{\rProt}{\rv{\Prot}}

\newcommand{\distSI}{\ensuremath{\dist}_{\ensuremath{\textnormal{\textsf{SI}}}}\xspace}

\newcommand{\protSI}{\ensuremath{\prot}_{\ensuremath{\textnormal{\textsf{SI}}}}\xspace}
\newcommand{\ProtSI}{\ensuremath{\Prot}_{\ensuremath{\textnormal{\textsf{SI}}}}\xspace}
\newcommand{\rProtSI}{\ensuremath{\rv{\Prot}}_{\ensuremath{\textnormal{\textsf{SI}}}}\xspace}

\newcommand{\distPI}{\ensuremath{\dist}_{\ensuremath{\textnormal{\textsf{PI}}}}\xspace}
\newcommand{\protPI}{\ensuremath{\prot}_{\ensuremath{\textnormal{\textsf{PI}}}}\xspace}
\newcommand{\ProtPI}{\ensuremath{\Prot}_{\ensuremath{\textnormal{\textsf{PI}}}}\xspace}
\newcommand{\rProtPI}{\ensuremath{\rv{\Prot}}_{\ensuremath{\textnormal{\textsf{PI}}}}\xspace}

\newcommand{\SI}{\ensuremath{\textnormal{\textsf{Set-Int}}}\xspace}
\newcommand{\PI}{\ensuremath{\textnormal{\textsf{Pair-Int}}}\xspace}

\newcommand{\errs}{\ensuremath{\textnormal{errs}}\xspace}

\newcommand{\rK}{\rv{K}}
\newcommand{\rJ}{\rv{J}}
\newcommand{\rL}{\rv{L}}
\newcommand{\rS}{\rv{S}}
\newcommand{\rM}{\rv{M}}
\newcommand{\rR}{\rv{R}}
\newcommand{\rbarS}{\rv{\barS}}

\newcommand{\distDisjN}{\ensuremath{\distDisj^{\textnormal{\textsf{N}}}}\xspace}
\newcommand{\distDisjY}{\ensuremath{\distDisj^{\textnormal{\textsf{Y}}}}\xspace}

\newcommand{\mii}[3]{\ensuremath{\II_{#3}(#1 \,; #2)}}

\newcommand{\disjin}[4]{ \ensuremath{\begin{bmatrix} #1&\hspace{-8pt}#2 \\ #3&\hspace{-10pt}#4 \end{bmatrix}}}
\renewcommand{\disjin}[4]{\textnormal{[\ensuremath{#1#2,#3#4}]}}

\newcommand{\barS}{\overline{S}}

\newcommand{\hd}[2]{\ensuremath{\textnormal{h}(#1,#2)}\xspace}
\newcommand{\hdt}[2]{\ensuremath{\textnormal{h}^2(#1,#2)}\xspace}

\newcommand{\CC}[2]{\ensuremath{\textnormal{\textsf{CC}}_{#2}(#1)}\xspace}
\newcommand{\ICost}[2]{\ensuremath{\textnormal{\textsf{IC}}_{#2}(#1)}\xspace}
\newcommand{\IC}[2]{\ICost{#1}{#2}}

\newcommand{\rTheta}{\rv{\Theta}}

\newcommand{\comp}{\succ_{\ProtSI}}
\newcommand{\compr}{\prec_{\ProtSI}}

\section{Preliminaries}\label{sec:prelim}

\paragraph{Notation.} For any integer $a $, we define $[a] := \set{1,\ldots,a}$. For a tuple $(X_1,\ldots,X_n)$ and integer $i \in [n]$, $X^{<i} := (X_1,\ldots,X_{i-1})$
and $X_{-i} := (X_1,\ldots,X_{i-1},X_{i+1},\ldots,X_n)$. We use capital `san-serif' font to denote the random variables, e.g. $\rX$. $\unif_S$ denotes the uniform distribution over $S$. 

For random variables $\rX,\rY$, $\en{\rX}$ denotes the Shannon entropy of $\rX$ and $\mi{\rX}{\rY}$ denotes the mutual information. 
For distributions $\mu,\nu$, $\kl{\mu}{\nu}$ denotes the KL-divergence, $\tvd{\mu}{\nu}$ denotes the total variation distance, and $\hd{\mu}{\nu}$ denotes the Hellinger distance. 
Necessary background on information theory, including the definitions and basic tools, is provided in Appendix~\ref{sec:info}. 

\paragraph{Communication Complexity and Information Complexity.} We consider the standard communication model of Yao~\cite{Yao79}. 
We use $\prot$ to denote the protocol used by players and use $\CC{\prot}{}$ to denote the \emph{communication cost} of $\prot$ defined as the worst-case bit-length of the messages communicated between the players. 
We further use \emph{internal information cost}~\cite{BarakBCR10} for protocols that measures the average amount of information each player learns about the input of the other in the protocol, defined formally as follows. 
Consider an input distribution $\dist$ and a protocol $\prot$. Let $(\rX,\rY) \sim \dist$ and $\rProt$ denote the random variables for the inputs and 
the transcript of the protocol (including the public randomness). 
The \emph{information cost} of $\prot$ with respect to
$\dist$ is $\ICost{\prot}{\dist}:=\mii{\rProt}{\rX \mid \rY}{\dist} + \mii{\rProt}{\rY \mid \rX}{\dist}$. As one bit of communication can only reveal one bit of information, 
information cost of a protocol lower bounds its communication cost (see Proposition~\ref{prop:cc-ic}).

Appendix~\ref{sec:cc-ic} contains the relevant background and definitions on communication complexity and information complexity that we use in this paper. 

\paragraph{Set Intersection Problem.} We use the set intersection problem in construction of our HPC problem. 
Set intersection (\SI) is a two-player communication problem in which Alice and Bob are given sets $A$ and $B$ from $[n]$, respectively, with the promise that there exists a unique
element $t$ such that $\set{t} = A \cap B$. The goal is for players to find the \emph{target element} $t$. 
An $\Omega(n)$ communication lower bound for $\SI$ follows directly from lower bounds for set disjointness~\cite{KalyanasundaramS92,Razborov92,Bar-YossefJKS02,BravermanM13,BravermanGPW13}; see, e.g.~\cite{BrodyCKWY14} 
(this lower bound by itself is however not useful for our application).

\section{Technical Overview}\label{sec:technical}

We start with defining the hidden-pointer chasing (HPC) problem and briefly discuss a reduction from HPC
that establishes the lower bound for minimum cut problem in Result~\ref{res:cut}. We then sketch the proof of the communication lower bound for HPC in Result~\ref{res:hpc}. 
Along the way, we also present a new lower bound for set intersection that is needed for establishing Result~\ref{res:hpc}. 
We emphasize that this section oversimplifies many details and the discussions will be informal for the sake of intuition. 

\subsection{The Hidden-Pointer Chasing Problem}\label{sec:tech-hpc}

The {hidden-pointer chasing} (\HPC) problem is a {four-party} communication problem with players $P_A,P_B,P_C$, and $P_D$. Let $\XX := \set{x_1,\ldots,x_n}$ and $\YY := \set{y_1,\ldots,y_n}$ be two disjoint universes. 

\begin{enumerate}[leftmargin=25pt]
	\item For any $x \in \XX$, $P_A$ and $P_B$ are given an instance $(A_x,B_x)$ of $\SI$ over the universe $\YY$ where $A_x \cap B_x = \set{t_x}$ for $t_x \in \YY$. 
	 
	 \item Similarly, for any $y \in \YY$, $P_C$ and $P_D$ are given an instance $(C_y,D_y)$ of $\SI$ over the universe $\XX$ where $C_y \cap D_y = \set{t_y}$ for $t_y \in \XX$.  
	
	\item We define two mappings $f_{AB} : \XX \rightarrow \YY$ and $f_{CD}: \YY \rightarrow \XX$ such that:
	\begin{enumerate}
		\item for any $x \in \XX$, $f_{AB}(x) = t_x \in \YY$ in the instance $(A_x,B_x)$ of $\SI$. 
		\item for any $y \in \YY$, $f_{CD}({y}) = t_y \in \XX$ in the instance $(C_y,D_y)$ of $\SI$.
	\end{enumerate}
	
	\item Let $x_1 \in \XX$ be an arbitrary fixed element of $\XX$ known to all players. The pointers $z_0,z_1,z_2,z_3,\ldots$ are defined inductively as follows: 
	$
		z_0 := x_1,  z_1 := f_{AB}(z_0),  z_2:= f_{CD}(z_1),  z_3:= f_{AB}(z_2),  \cdots .
	$
\end{enumerate}

The \emph{$k$-step hidden-pointer chasing} problem ($\HPCk$) is defined as the communication problem of finding the pointer $z_k$. See Figure~\ref{fig:hpc} for an illustration. 

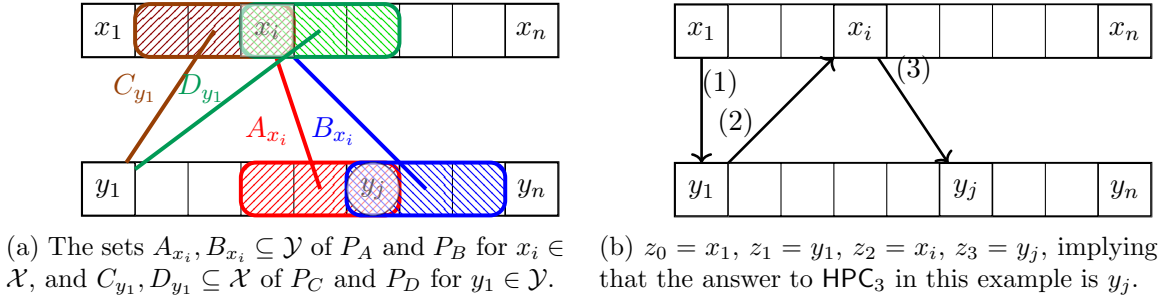
\begin{figure}[h!]
    \centering
    \subcaptionbox{The sets $A_{x_i},B_{x_i} \subseteq \YY$ of $P_A$ and $P_B$ for $x_i \in \XX$, and $C_{y_1},D_{y_1} \subseteq \XX$ of $P_C$ and $P_D$ for $y_1 \in \YY$.}[0.45\textwidth]{

\begin{tikzpicture}[ auto ,node distance =1cm and 2cm , on grid , semithick , state/.style ={ circle ,top color =white , bottom color = white , draw, black , text=black}, every node/.style={inner sep=0,outer sep=0}]

\node[state,rectangle, top color=white, black, bottom color=white, text=black, opacity=0, text opacity=1,  align=center]  (XX){$\XX:$};
\node[state,rectangle, top color=white, black, bottom color=white, text=black, minimum height=20pt, minimum width = 180pt,  align=center]  (X)[right=105pt of XX]{\tikz{\draw[step=20pt] (0,0)  grid (180pt,20pt);}};
\node[state,rectangle, top color=white, black, bottom color=white, text=black, opacity=1, text opacity=1, minimum height=20pt, minimum width = 20pt]  (X1) [left=80pt of X]{$x_1$};
\node[state,rectangle, top color=white, black, bottom color=white, text=black, opacity=1, text opacity=1, minimum height=20pt, minimum width = 20pt]  (Xi) [right=60pt of X1]{$x_i$};
\node[state,rectangle, top color=white, black, bottom color=white, text=black, opacity=1, text opacity=1, minimum height=20pt, minimum width = 20pt]  (Xn) [right=100pt of Xi]{$x_n$};

\node[state,rectangle, top color=white, black, bottom color=white, text=black, opacity=0, text opacity=1,  align=center]  (YY)[below=60pt of XX]{$\YY:$};
\node[state,rectangle, top color=white, black, bottom color=white, text=black, minimum height=20pt, minimum width = 180pt,  align=center]  (Y)[below=60pt of X] {\tikz{\draw[step=20pt] (0,0)  grid (180pt,20pt);}};
\node[state,rectangle, top color=white, black, bottom color=white, text=black, opacity=1, text opacity=1, minimum height=20pt, minimum width = 20pt]  (Y1) [left=80pt of Y]{$y_1$};
\node[state,rectangle, top color=white, black, bottom color=white, text=black, opacity=1, text opacity=1, minimum height=20pt, minimum width = 20pt]  (Yj) [right=100pt of Y1]{$y_j$};
\node[state,rectangle, top color=white, black, bottom color=white, text=black, opacity=1, text opacity=1, minimum height=20pt, minimum width = 20pt]  (Yn) [right=60pt of Yj]{$y_n$};

\node[rectangle, rounded corners=2mm, red, draw, minimum height=20pt, minimum width = 60pt, line width=1.5pt, pattern=north east lines, pattern color=red]  (Axi)[left=0pt of Y] {};
\node[rectangle, rounded corners=2mm, blue, draw, minimum height=20pt, minimum width = 60pt, line width=1.5pt, pattern=north west lines, pattern color=blue]  (Bxi)[right=40pt of Y] {};
\node[state,circle, top color=white, black, bottom color=white, text=black, opacity=0.4, text opacity=1, minimum height=20pt, minimum width = 20pt]  (Yj) [right=100pt of Y1]{$y_j$};


\draw[-,line width=1.5pt, red, text=red] (Xi) -- (Axi.center) node[below left=38pt and 0pt of Xi]{$A_{x_i}$}; 

\draw[-,line width=1.5pt, blue, text=blue] (Xi) -- (Bxi.center) node[below right=38pt and 25pt of Xi]{$B_{x_i}$};

\node[rectangle, rounded corners=2mm, YellowOrange, draw, minimum height=20pt, minimum width = 60pt, line width=1.5pt, pattern=north east lines, pattern color=YellowOrange]  (Cy1)[left=40pt of X] {};
\node[rectangle, rounded corners=2mm, ForestGreen, draw, minimum height=20pt, minimum width = 60pt, line width=1.5pt, pattern=north west lines, pattern color=ForestGreen]  (Dy1)[right=0pt of X] {};
\node[state,rectangle, top color=white, black, bottom color=white, text=black, opacity=0.4, text opacity=1, minimum height=20pt, minimum width = 20pt]  (Xi) [right=60pt of X1]{$x_i$};


\draw[-,line width=1.5pt, YellowOrange, text=YellowOrange] (Y1) -- (Cy1.center) node[above right=38pt and 10pt of Y1]{$C_{y_1}$}; 

\draw[-,line width=1.5pt, ForestGreen, text=ForestGreen] (Y1) -- (Dy1.center) node[above right=38pt and 35pt of Y1]{$D_{y_1}$};

\end{tikzpicture}
}   \hspace{2mm}
    \subcaptionbox{$z_0 = x_1$, $z_1 = y_1$, $z_2 = x_i$, $z_3 = y_j$, implying that the answer to $\HPC_3$ in this example is $y_j$.}[0.45\textwidth]{

    \begin{tikzpicture}[ auto ,node distance =1cm and 2cm , on grid , semithick , state/.style ={ circle ,top color =white , bottom color = white , draw, black , text=black}, every node/.style={inner sep=0,outer sep=0}]

\node[state,rectangle, top color=white, black, bottom color=white, text=black, opacity=0, text opacity=1,  align=center]  (XX){$\XX:$};
\node[state,rectangle, top color=white, black, bottom color=white, text=black, minimum height=20pt, minimum width = 180pt,  align=center]  (X)[right=105pt of XX]{\tikz{\draw[step=20pt] (0,0)  grid (180pt,20pt);}};
\node[state,rectangle, top color=white, black, bottom color=white, text=black, opacity=1, text opacity=1, minimum height=20pt, minimum width = 20pt]  (X1) [left=80pt of X]{$x_1$};
\node[state,rectangle, top color=white, black, bottom color=white, text=black, opacity=1, text opacity=1, minimum height=20pt, minimum width = 20pt]  (Xi) [right=60pt of X1]{$x_i$};
\node[state,rectangle, top color=white, black, bottom color=white, text=black, opacity=1, text opacity=1, minimum height=20pt, minimum width = 20pt]  (Xn) [right=100pt of Xi]{$x_n$};

\node[state,rectangle, top color=white, black, bottom color=white, text=black, opacity=0, text opacity=1,  align=center]  (YY)[below=60pt of XX]{$\YY:$};
\node[state,rectangle, top color=white, black, bottom color=white, text=black, minimum height=20pt, minimum width = 180pt,  align=center]  (Y)[below=60pt of X] {\tikz{\draw[step=20pt] (0,0)  grid (180pt,20pt);}};
\node[state,rectangle, top color=white, black, bottom color=white, text=black, opacity=1, text opacity=1, minimum height=20pt, minimum width = 20pt]  (Y1) [left=80pt of Y]{$y_1$};
\node[state,rectangle, top color=white, black, bottom color=white, text=black, opacity=1, text opacity=1, minimum height=20pt, minimum width = 20pt]  (Yj) [right=100pt of Y1]{$y_j$};
\node[state,rectangle, top color=white, black, bottom color=white, text=black, opacity=1, text opacity=1, minimum height=20pt, minimum width = 20pt]  (Yn) [right=60pt of Yj]{$y_n$};

\draw[->,line width=1pt] (X1) -- (Y1) node[near start]{(1)}; 
\draw[->,line width=1pt] (Y1) -- (Xi) node[near start]{(2)}; ; 
\draw[->,line width=1pt] (Xi) -- (Yj) node[near start]{(3)}; ; 

\end{tikzpicture}
}
    \caption{Illustration of the \HPC problem.}
    \label{fig:hpc}
\end{figure}

We define a \emph{phase} (similar to a round) for protocols that solve $\HPC$. In an odd (resp. even) phase, only $P_C$ and $P_D$ (resp. $P_A$ and $P_B$) are allowed to communicate with each other, 
and the phase ends once a message is sent to $P_A$ or $P_B$ (resp. $P_C$ or $P_D$). A protocol is called a \emph{k-phase} protocol iff it uses at most $k$ phases. See Appendix~\ref{app:phase} for more details.

It is easy to see that in $k+1$ phases, we can compute \HPCk with $O(k \cdot n)$ total communication by solving the $\SI$ instances 
corresponding to $z_0,z_1,\ldots,z_k$ one at a time in each phase. We prove that if we only have $k$ phases however, solving $\HPCk$ requires a large communication. 

\begin{theorem}[Informal]\label{thm:tech-hpck}
	Any $k$-phase protocol that outputs the correct solution to $\HPCk$ with constant probability requires $\Omega(n^2/k^2 + n)$ bits of communication. 
\end{theorem}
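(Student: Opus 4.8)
The plan is to prove the lower bound by a round-elimination / induction-on-phases argument, analogous to the classical pointer-chasing lower bounds of~\cite{NisanW91,Yehudayoff16}, but where each "step of uncertainty reduction" is costed using information complexity and the new lower bound for set intersection announced in Section~\ref{sec:techniques}. The key conceptual point is that after $i$ phases a $k$-phase protocol cannot have pinned down $z_i$ "for free": by the structure of $\HPC$, learning $z_i$ requires (morally) having solved the $\SI$ instance associated with $z_{i-1}$, which by our strengthened $\SI$ lower bound forces $\Omega(n)$ information to have been communicated in that phase — and since the identity of $z_{i-1}$ is itself uncertain, a direct-sum argument over the $n$ candidate instances forces $\Omega(n)$ information \emph{per phase}, giving $\Omega(kn)$ total; combined with a separate $\Omega(n^2/k^2)$ bound this yields the claim. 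Let me expand on the two pieces.

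First, the $\Omega(n^2/k^2)$ bound. This is the "main" term and should come from the following intuition: a $k$-phase protocol has only $k$ phases to chase $k$ hidden pointers, so on average each hidden pointer must be found after essentially one phase of communication, but a single phase only touches \emph{one} side's $\SI$ instances. I would set up an input distribution $\distHPC$ in which all the $\SI$ instances are independent and drawn from the hard distribution for $\SI$ (the one for which even slightly reducing uncertainty about the target is expensive). The inductive hypothesis should say: for a $k$-phase protocol, conditioned on the transcript of the first $i$ phases, the pointer $z_i$ is still nearly uniformly distributed on its universe unless $\Omega(n)$ information has been revealed about the relevant $\SI$ instance — and crucially, since $z_{i-1}$ is uncertain, the protocol must "spread" its information over many instances. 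The direct-sum step says: a phase-$i$ message that reduces the conditional uncertainty of $z_i$ by a constant, while $z_{i-1}$ is $\delta$-close to uniform over $[n]$, costs $\Omega(n)$ information (by the $\SI$ lower bound applied to a "random index" instance, in the spirit of~\cite{Bar-YossefJKS02}); doing this for the base case $z_k$ and unwinding over all phases gives total information $\Omega(n^2/k^2)$ after accounting for the $1/k^2$ loss inherent to pointer-chasing induction (one $1/k$ factor from the round-elimination error budget, another from the fact that progress on $z_i$ in a phase is only guaranteed on a $1/k$-fraction of the "information mass").

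Second, the additive $\Omega(n)$ term. Even a protocol with many phases must communicate $\Omega(n)$ bits, because outputting $z_k$ reveals at least one bit about $z_{k-1}$'s $\SI$ instance, and by the (standard) $\Omega(n)$ lower bound for $\SI$ — which follows from set disjointness as noted in Section~\ref{sec:prelim} — this already forces $\Omega(n)$ communication regardless of the number of rounds. This is a simple reduction: fix all instances except the one indexed by the (known-to-the-reduction) value of $z_{k-1}$, embed a hard $\SI$ instance there, and observe that solving $\HPCk$ solves that instance.

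The main obstacle I expect is making the round-elimination induction interact cleanly with the direct-sum over $\SI$ instances. In ordinary pointer chasing, conditioning on the transcript keeps the pointers uniform because the pointer \emph{is} the hard object; here the hard object is two layers deep (pointer $\to$ $\SI$ target $\to$ the bit-level primitive on a universe of size two), and the instances are not fully independent once you condition on which instance is "active" (this is the intersecting-distribution subtlety flagged in Section~\ref{sec:techniques}). Concretely, I need a "no free lookahead" lemma: conditioned on the first $i$ phases, not only is $z_i$ near-uniform, but the \emph{collection} of $\SI$ instances on the side to be queried in phase $i+1$ is still distributed so that the $\SI$ lower bound (in its information-cost, uncertainty-reduction form) applies to the one indexed by $z_i$. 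Establishing this requires carefully choosing $\distHPC$ so that conditioning on the pointer prefix leaves each relevant instance's conditional distribution dominated by the $\SI$-hard distribution, and then invoking the strengthened $\SI$ bound as a black box. I would structure the write-up so that the $\SI$ bound (Section~\ref{sec:SI}) is fully abstracted into a statement of the form "any protocol reducing the conditional uncertainty of the target by $\Omega(1)$ has information cost $\Omega(n)$," and the $\HPC$ argument (Section~\ref{sec:hpc}) only uses that statement plus the round-elimination bookkeeping.
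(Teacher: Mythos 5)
Your overall architecture matches the paper's actual proof: take $\distHPC$ to be an independent product of hard $\SI$ instances, show by induction on phases that, conditioned on the transcript and the previously revealed pointers, the current pointer $z_j$ stays close to uniform, use the phase structure (rectangle property) to get the ``no free lookahead'' independence $\rZ_j \perp \rProt_j \mid \rE_j$, charge the information revealed about the active instance via a direct sum over the $n$ candidate instances, and obtain the additive $\Omega(n)$ term by embedding a single $\SI$ instance. All of this is Section~\ref{sec:hpc}.

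The gap is in the quantitative step that actually produces $n^2/k^2$. You propose to abstract the $\SI$ result as ``reducing the conditional uncertainty of the target by $\Omega(1)$ costs $\Omega(n)$ information,'' and you describe a phase as ``reducing the conditional uncertainty of $z_i$ by a constant.'' Neither is what happens, and the constant-$\eps$ form of the bound is too weak to drive the induction: a protocol with total communication $C$ reveals only $O(C/n)$ information about a uniformly random instance (by the direct sum), and by the full Theorem~\ref{thm:SI-tvd} this shifts the target's distribution by at most $\eps = O(\sqrt{C}/n)$ in total variation --- which is $O(1/k)$ for the protocols being ruled out, so no single phase ever makes constant progress and the constant-$\eps$ statement never applies. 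The induction instead accumulates these $O(\sqrt{C}/n)$ drifts \emph{additively} over $k$ phases, and correctness forces the accumulated drift to be $\Omega(1)$, i.e., $k\sqrt{C}/n = \Omega(1)$, hence $C = \Omega(n^2/k^2)$. The quadratic dependence $\Omega(\eps^2 n)$ is precisely what converts ``$O(C/n)$ information per instance'' into ``$O(\sqrt{C}/n)$ drift per phase'' and supplies one of the two factors of $1/k$; your attribution of the two factors to a ``round-elimination error budget'' and a ``$1/k$-fraction of the information mass'' is not a valid derivation, and the opening claim that each phase costs $\Omega(n)$ information ``giving $\Omega(kn)$ total'' is a non sequitur that neither yields nor is needed for the stated bound. (Minor: the intersecting-distribution subtlety you cite lives \emph{inside} a single $\SI$ instance, in its reduction to the two-coordinate primitive; across instances $\distHPC$ is an honest product, so the cross-instance direct sum is standard, modulo the low-probability event that the random index collides with one of the $k$ conditioned pointers.)
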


We give a proof sketch of the $\Omega(n^2/k^2)$ term in Theorem~\ref{thm:tech-hpck} in Section~\ref{sec:tech-cc-hpc} (the $\Omega(n)$ term follows immediately from set intersection lower bound). 
Before that, we show an application of this result in proving graph streaming lower bounds to illustrate our general approach.
 
\subsection{A Streaming Lower Bound for Minimum Weighted $s$-$t$ Cut Problem}\label{sec:tech-cut}

We sketch the proof of Result~\ref{res:cut} for directed graphs in this section. The proof is by a reduction from $\HPC$. We show how to turn any instance of $\HPCk$ for $k \geq 1$ into a 
weighted directed graph $G$ such that the minimum $s$-$t$ cut weight in $G$ determines the pointer $z_k$ in $\HPCk$. The rest of the proof then follows by standard arguments that relate communication complexity to 
space complexity of streaming algorithms. For the purpose of this proof, it would be more convenient to consider the maximum $s$-$t$ flow problem instead and then use min-cut max-flow duality. 

The high level construction of $G$ is as follows. The vertices in graph $G$ consists of $k+1$ layers each of size $n$ plus source and sink vertices $s$ and $t$. 
The even layers of this graph correspond to elements in $\XX$ while the odd layers correspond to $\YY$. The edges between the layers are then created by using the sets in the instances of $\SI$ inside the 
$\HPCk$ problem. The idea is to place the edges such that each vertex corresponding to $x_i$ (resp. $y_i$) in an even layer (resp. odd layer) can send a ``larger'' flow to the vertex corresponding to the target element of the instance $(A_{x_i},B_{x_i})$
(resp. target element of $(C_{y_i},D_{y_i})$) than any other vertex in the next layer. By choosing the weight of edges carefully and adding some extra gadgets, 
we ensure that the maximum $s$-$t$ flow should route the flow from $s$ along the path that corresponds to pointers $z_0,z_1,\ldots,z_k$. The vertices in the last layer have capacities that encode their identity and hence
the maximum $s$-$t$ flow value in this graph reveals the identity of $z_k$, thus solving $\HPCk$. See Figure~\ref{fig:cut} for an illustration. 

\begin{figure}[h!]
    \centering
\begin{tikzpicture}[ auto ,node distance =1cm and 2cm , on grid , semithick , state/.style ={ circle ,top color =white , bottom color = white , draw, black , text=black}, every node/.style={inner sep=0,outer sep=0}]

\node[circle, black, line width=0.15mm, minimum height=7pt, minimum width=7pt, fill=Gray, draw] (a1){};
\node[state, circle, black, line width=0.15mm, minimum height=7pt, minimum width=7pt] (a2) [below= 0.75cm of a1]{};
\node[state, circle, black, line width=0.15mm, minimum height=7pt, minimum width=7pt] (a3) [below=0.75cm of a2]{};
\node[state, circle, black, line width=0.15mm, minimum height=7pt, minimum width=7pt] (a4) [below=0.75cm of a3]{};
\node[state, circle, black, line width=0.15mm, minimum height=7pt, minimum width=7pt] (a5) [below=0.75cm of a4]{};
\node (v0) [below=0.75cm of a5]{$V_0$};


\node[rectangle, dashed, rounded corners = 3mm, inner sep=5pt, draw,  black, fit=(a1) (a5), line width=0.15mm] {};

\node[state, circle, black, draw, line width=0.15mm, minimum height=7pt, minimum width=7pt] (b1) [right=3cm of a1]{};
\node[ circle, black, line width=0.15mm, minimum height=7pt, minimum width=7pt, fill=Gray, draw] (b2) [below= 0.75cm of b1]{};
\node[state, circle, black, line width=0.15mm, minimum height=7pt, minimum width=7pt] (b3) [below=0.75cm of b2]{};
\node[state, circle, black, line width=0.15mm, minimum height=7pt, minimum width=7pt] (b4) [below=0.75cm of b3]{};
\node[state, circle, black, line width=0.15mm, minimum height=7pt, minimum width=7pt] (b5) [below=0.75cm of b4]{};
\node (v1) [below=0.75cm of b5]{$V_1$};

\node[rectangle, dashed,rounded corners = 3mm, inner sep=5pt, draw,  black, fit=(b1) (b5), line width=0.15mm] {};

\node[state, circle, black, draw, line width=0.15mm, minimum height=7pt, minimum width=7pt] (c1) [right=3cm of b1]{};
\node[state, circle, black, line width=0.15mm, minimum height=7pt, minimum width=7pt] (c2) [below= 0.75cm of c1]{};
\node[state, circle, black, line width=0.15mm, minimum height=7pt, minimum width=7pt] (c3) [below=0.75cm of c2]{};
\node[ circle, black, line width=0.15mm, minimum height=7pt, minimum width=7pt, fill=Gray, draw] (c4) [below=0.75cm of c3]{};
\node[state, circle, black, line width=0.15mm, minimum height=7pt, minimum width=7pt] (c5) [below=0.75cm of c4]{};
\node (v2) [below=0.75cm of c5]{$V_2$};

\node[rectangle, dashed,rounded corners = 3mm, inner sep=5pt, draw,  black, fit=(c1) (c5), line width=0.15mm] {};

\node[state, circle, black, draw, line width=0.15mm, minimum height=7pt, minimum width=7pt] (d1) [right=3cm of c1]{};
\node[state, circle, black, line width=0.15mm, minimum height=7pt, minimum width=7pt] (d2) [below= 0.75cm of d1]{};
\node[ circle, black, line width=0.15mm, minimum height=7pt, minimum width=7pt, fill=Gray, draw] (d3) [below=0.75cm of d2]{};
\node[state, circle, black, line width=0.15mm, minimum height=7pt, minimum width=7pt] (d4) [below=0.75cm of d3]{};
\node[state, circle, black, line width=0.15mm, minimum height=7pt, minimum width=7pt] (d5) [below=0.75cm of d4]{};
\node (v3) [below=0.75cm of d5]{$V_3$};

\node[rectangle, dashed, rounded corners = 3mm, inner sep=5pt, draw,  black, fit=(d1) (d5), line width=0.15mm] {};

\node[state, circle, black, draw, line width=0.15mm, minimum height=7pt, minimum width=7pt] (s)[left=3cm of a3]{};
\node (ss) [below=0.75cm of s]{$s$};
\node[state, circle, black, draw, line width=0.15mm, minimum height=7pt, minimum width=7pt] (t)[right=3cm of d3]{};
\node (tt) [below=0.75cm of t]{$t$};

\node (l1)[above right=1cm and 2cm of b1]{};

\draw[in=180, black, line width=0.5pt] (b1) to (l1.center);
\draw[in=180, black, line width=0.5pt] (b2) to (l1.center);
\draw[in=180, black, line width=0.5pt] (b3) to (l1.center);
\draw[in=180, black, line width=0.5pt] (b4) to (l1.center);
\draw[in=180, black, line width=0.5pt](b5) to (l1.center);
\draw[in=180, black, line width=0.5pt,->,>=latex, rounded corners = 20pt] (l1.center) -| (t);

\node (l2)[above right=1cm and 2cm of c1]{};

\draw[in=180, black, line width=0.5pt](c1) to (l2.center);
\draw[in=180, black, line width=0.5pt] (c2) to (l2.center);
\draw[in=180, black, line width=0.5pt](c3) to (l2.center);
\draw[in=180, black, line width=0.5pt](c4) to (l2.center);
\draw[in=180, black, line width=0.5pt](c5) to (l2.center);

\node (l3)[above right=1cm and 2cm of d1]{};

\draw[in=180, black, line width=0.5pt](d1) to (l3.center);
\draw[in=180, black, line width=0.5pt] (d2) to (l3.center);
\draw[in=180, black, line width=0.5pt](d3) to (l3.center);
\draw[in=180, black, line width=0.5pt](d4) to (l3.center);
\draw[in=180, black, line width=0.5pt](d5) to (l3.center);

\draw[->,>=latex,bend left, black, line width=0.5pt] (d1) to (t);
\draw[->,>=latex,bend left, black, line width=0.5pt] (d2) to (t);
\draw[->,>=latex,black, line width=0.5pt] (d3) to (t);
\draw[->,>=latex,bend right, black, line width=0.5pt](d4) to (t);
\draw[->,>=latex,bend right, black, line width=0.5pt] (d5) to (t);

\draw[->,>=latex, black, line width=0.5pt, bend left] (s) to (a1) node[near start]{}; 

\draw[->,>=latex, blue, bend right, line width=1.5pt] (a1) to (b2) node[near start]{}; 
\draw[->,>=latex, blue, bend right, line width=1.5pt] (a1) to (b4) node[near start]{}; 

\draw[->,>=latex, red, bend left, line width=1.5pt] (a1) to (b3) node[near start]{}; 
\draw[->,>=latex, red, bend left, line width=1.5pt] (a1) to (b2) node[near start]{}; 

\draw[->,>=latex, ForestGreen, bend right, line width=1.5pt] (b2) to (c1) node[near start]{}; 
\draw[->,>=latex, ForestGreen, bend right, line width=1.5pt] (b2) to (c4) node[near start]{}; 

\draw[->,>=latex, YellowOrange, bend left, line width=1.5pt] (b2) to (c4) node[near start]{}; 
\draw[->,>=latex, YellowOrange, bend left, line width=1.5pt] (b2) to (c3) node[near start]{}; 

\draw[->,>=latex, blue, bend right, line width=1.5pt] (c4) to (d3) node[near start]{}; 
\draw[->,>=latex, blue, bend right, line width=1.5pt] (c4) to (d2) node[near start]{}; 

\draw[->,>=latex, red, bend left, line width=1.5pt] (c4) to (d3) node[near start]{}; 
\draw[->,>=latex, red, bend left, line width=1.5pt] (c4) to (d4) node[near start]{};

\end{tikzpicture}
    \caption{Illustration of the graph in the reduction for minimum $s$-$t$ cut from $\HPC_3$ with $n=5$. The black (thin) edges form input-independent gadgets
    while blue, red , brown, and green (thick) edges depend on the inputs of $P_A$, $P_B$, $P_C$, and $P_D$, respectively. Marked nodes denote the vertices corresponding to pointers $z_0,\ldots,z_3$.
    The input-dependent edges incident on ``non-pointer'' vertices are omitted. This construction has parallel edges but Remark~\ref{rem:simple-graph} shows how to remove them.
    }
    \label{fig:cut}
\end{figure}
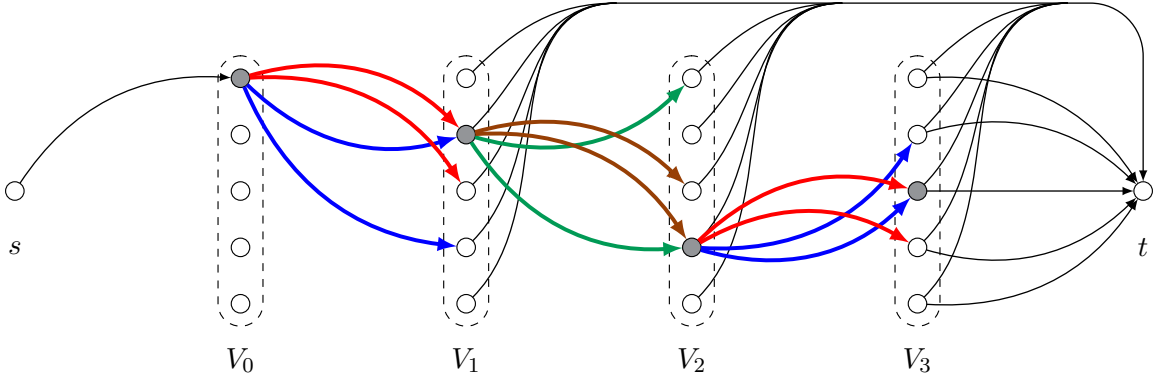

It is now easy to show that any $(k/3)$-pass streaming algorithm for minimum weighted $s$-$t$ cut with space $S$ can be turned into a $k$-phase protocol for $\HPCk$ with communication cost $O(k \cdot S)$ using this reduction. 
As the graph $G$ constructed above has $O(k \cdot n)$ vertices, we obtain the desired lower bound in Result~\ref{res:cut} by the communication complexity lower bound for $\HPC$ in Theorem~\ref{thm:tech-hpck}. 

\subsection{Communication Complexity of Hidden-Pointer Chasing}\label{sec:tech-cc-hpc}

We now sketch the proof of Theorem~\ref{thm:tech-hpck} which is the main technical contribution of this paper. 
Let $\distSI$ be a hard distribution on instances $(A,B)$ for $\SI$. In this distribution $A$ and $B$ are each sets of size almost $n/3$ such that they intersect in 
a unique element in the universe chosen uniformly at random.  We define the distribution $\distHPC$ over inputs of $\HPC$ as
the distribution in which all instances $(A_x,B_x)$ and $(C_y,D_y)$ for $x \in \XX$ and $y \in \YY$ are sampled independently from $\distSI$ (note that $\distHPC$ is not a product distribution as $\distSI$ is not a product distribution). 

Fix any $k$-phase {deterministic} protocol $\protHPC$ for $\HPCk$ throughout this section and suppose towards a contradiction that $\CC{\protHPC}{} = o(n^2/k^2)$ (the lower bound extends to randomized
protocols by Yao's minimax principle~\cite{Yao83}). 
For any $j \in [k]$, we define $\Prot_j$ as the set of all messages communicated by $\protHPC$ in phase $j$ and 
$\Prot:= (\Prot_1,\ldots,\Prot_k)$ as the transcript of the protocol $\protHPC$. 
We further define $Z = (z_1,\ldots,z_k)$, $E_j := (\Prot^{<j},Z^{<j})$ for any $j > 1$, and $E_1 = z_0$. We think of $E_j$ as the information ``easily known'' to players at the beginning of phase $j$. 
The main step of the proof of Theorem~\ref{thm:tech-hpck} is the following key lemma which we prove inductively. 

\begin{lemma}[Informal]\label{lem:tech-hpc-induction}
	For all $j \in [k]$: 
	$
		\Ex_{(E_j,\Prot_j)} \Bracket{\tvd{\distribution{\rZ_j \mid E_j,\Prot_j}}{\distribution{\rZ_j}}} = o(1). 
	$
\end{lemma}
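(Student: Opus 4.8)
The plan is to prove Lemma~\ref{lem:tech-hpc-induction} by induction on $j$, with the base case and the inductive step each reducing, via a direct-sum argument, to the new lower bound for set intersection that says any protocol which noticeably reduces the uncertainty about the target element $t_x$ of an \SI{} instance must incur large information cost. Fix $j \in [k]$ and assume the statement for $j-1$ (for $j=1$ it is trivial since $E_1 = z_0$ is fixed and $\rZ_1 = f_{AB}(z_0)$ is already distributed according to a uniformly random target of a single \SI{} instance). I would first argue that, conditioned on $E_j = (\Prot^{<j}, Z^{<j})$, the pointer $z_{j-1}$ is ``fresh'' in the sense that the relevant \SI{} instance (say $(A_{z_{j-1}}, B_{z_{j-1}})$ when $j$ is odd) has, up to a small correction, been touched by the protocol only through the messages in phases before $j$ in which the players on the \emph{other} side ($P_C,P_D$) were speaking. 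The key structural observation is that phase $j$ is the first phase in which $P_A,P_B$ (the owners of $(A_{z_{j-1}},B_{z_{j-1}})$) get to act on information that pins down which of the $n$ \SI{} instances is the relevant one; by the inductive hypothesis, the distribution of $z_{j-1}$ given $E_j$ is $o(1)$-close to uniform, so the ``index'' $z_{j-1}$ is essentially hidden from the transcript so far.

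Next I would set up the embedding. Because the $n$ instances $\{(A_x,B_x)\}_{x\in\XX}$ are drawn i.i.d.\ from \distSI{}, and because the transcript $\Prot_j$ of phase $j$ together with $E_j$ induces, after averaging, some distribution over how much has been ``learned'' about $\rZ_j = f_{AB}(z_{j-1})$, I want to show that if $\Ex_{(E_j,\Prot_j)}[\tvd{\distribution{\rZ_j \mid E_j,\Prot_j}}{\distribution{\rZ_j}}]$ were $\Omega(1)$, then there is a way to extract from $\protHPC$ a protocol for a single \SI{} instance that reduces the uncertainty about its target by $\Omega(1/n)$ on average using only $o(n/k)$ bits of information. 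Concretely: condition on $E_j$; by the inductive hypothesis, $z_{j-1}$ is near-uniform over the $n$ candidate instances, so the total phase-$j$ information that the messages $\Prot_j$ carry about the union of all $n$ instances is at most $\CC{\protHPC}{}$ but is ``spread'' over $n$ instances, giving roughly $\CC{\protHPC}{}/n = o(n/k^2)$ information per instance on average; and if the total TV-gain is $\Omega(1)$, then the average per-instance gain in the target's distribution is $\Omega(1/n)$. The set intersection lower bound (the strengthened form promised in the overview, ruling out even small uncertainty reduction with small communication) then forces $\CC{\protHPC}{}/n = \Omega(n/k^2)$, i.e. $\CC{\protHPC}{} = \Omega(n^2/k^2)$, contradicting the assumption. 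Summing the per-phase statements (there are $k$ phases) is where the $1/k$ slack is absorbed, so one actually needs the per-phase gain to be $o(1/k)$ rather than $o(1)$; this just tightens the bookkeeping but not the structure of the argument.

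I would handle the technical subtleties in the following order. First, a conditioning lemma: I would show that conditioning on $(\Prot^{<j}, Z^{<j})$ keeps the joint distribution of the ``untouched'' instances close to the product distribution $\distSI^{\otimes n}$ — this is the analogue of the standard ``rectangle / independence of inputs given transcript'' fact, but complicated by the fact that \distSI{} is not a product distribution, so I would invoke the same machinery used to handle information complexity of disjointness on intersecting distributions (as the overview flags, following \cite{JayramKS03,ChattopadhyayM15}). Second, a direct-sum step: averaging over the near-uniform index $z_{j-1}$ lets me view the phase-$j$ contribution to $\mii{\rProt_j}{\text{target}}{}$ as a sum of $n$ terms, one per instance, whose average is small; this is the ``reducing set intersection to a primitive'' flavor of argument, and I would use the internal information cost additivity and the fact that the index is (almost) independent of the instances given $E_j$. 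Third, plug into the strengthened \SI{} lower bound to close the induction.

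The main obstacle I expect is the second step — the direct sum across the $n$ instances in the presence of the hidden index $z_{j-1}$. The difficulty is that $z_{j-1}$ is only \emph{near}-uniform and only near-independent of the instances given $E_j$ (the inductive hypothesis gives $o(1)$ in TV, which must be converted into an additive loss in an information/TV budget without blowing up by factors of $n$), and the players in phase $j$ may try to ``spread their effort adaptively'' — putting most of their communication on the one instance that happens to be $z_{j-1}$. Ruling this out is exactly why the \SI{} lower bound has to be of the strong ``you cannot even slightly reduce uncertainty cheaply'' form rather than the standard $\Omega(n)$ bound: even an adversary who somehow guesses the right instance and spends all its $o(n/k)$ bits there cannot gain $\Omega(1/k)$ TV on the target. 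Making the averaging-over-$E_j$ and averaging-over-the-index interchange rigorous, with the $o(1)$ errors from the inductive hypothesis propagating correctly through a chain-rule/telescoping over the $k$ phases, is the delicate part; I would isolate it as a standalone ``single-phase reduction'' lemma so that the outer induction is a one-line telescoping sum.
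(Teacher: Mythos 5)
Your proposal follows essentially the same route as the paper: induction on the phase index, a reduction (via the near-uniformity of $z_{j-1}$ supplied by the inductive hypothesis) to the statement that the transcript so far cannot distort the target of a \emph{uniformly random} \SI instance (the paper's Lemma~\ref{lem:multi-SI}), a direct-sum embedding of a single instance at a random position so that the information cost is divided by $n$, an appeal to the strengthened ``$\eps$-solving'' lower bound of Theorem~\ref{thm:SI-tvd}, and a telescoping accumulation of $o(1/k)$ errors over the $k$ phases. This is exactly the decomposition used in the paper's proof of Lemma~\ref{lem:hpc-induction}, including your observation that the strong form of the \SI bound is what rules out a protocol concentrating its budget on the pointer instance.

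Two imprecisions are worth flagging, though neither changes the structure. First, the quantities in your middle paragraph do not close the contradiction as stated: a single-instance protocol that gains $\Omega(1/n)$ in total variation using $o(n/k)$ bits of information does \emph{not} violate the $\Omega(\eps^2 \cdot n)$ bound (that would only require $\Omega(1/n)$ information). The accounting that works, and which you recover at the end of that paragraph, is a per-phase gain of $\Omega(1/k)$ against an information budget of $\CC{\protHPC}{}/n = o(n/k^2)$. Second, the role of the current phase is cleaner than you describe: for odd $j$ the pointer $\rZ_j$ is a function of $(\bA,\bB)$ while $\Prot_j$ is a function of $(\bC,\bD)$, so the rectangle property gives $\rZ_j \perp \Prot_j \mid E_j$ outright and the phase-$j$ messages are dropped from the conditioning entirely (Claim~\ref{clm:hpc-rectangle}); your statement that the relevant instance has been touched ``only through the messages in phases where $P_C,P_D$ were speaking'' has this backwards --- the contents of $(A_{z_{j-1}},B_{z_{j-1}})$ are touched precisely by the phases in which its owners $P_A,P_B$ speak, whereas what the other side's phases (and the conditioning on $Z^{<j}$, handled in the paper by the additive $j/n$ correction) determine is the hidden \emph{index} $z_{j-1}$, not the instance itself.
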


Lemma~\ref{lem:tech-hpc-induction} states that if the communication cost of a protocol is ``small'', i.e., is $o(n^2/k^2)$, then 
even after communicating the messages in the first $j$ phases of the protocol, distribution of $z_j$ is still ``close'' to being uniform. This in particular implies that at the end of the protocol, i.e., at the end of phase $k$, the target pointer $z_k$ is essentially 
distributed as in its original distribution (which is uniform over $\YY$ or $\XX$ depending on whether $k$ is odd or even). 
Hence $\protHPC$ should not be able to find $z_k$ at the end of phase $k$. The proof of Theorem~\ref{thm:tech-hpck} follows easily from this intuition. 

\paragraph{Proof Sketch of Lemma~\ref{lem:tech-hpc-induction}.} The first step of proof is to show that finding the target element of a \emph{uniformly at random} chosen instance of $\SI$ (as opposed to an instance corresponding to any particular pointer) 
in $\HPC$ is not possible with low communication. For any $x \in \XX$ and any $y \in \YY$, define 
the random variables $\rT_{x} \in \YY$ and $\rT_y \in \XX$, which correspond to the target elements of \SI on $(A_x,B_x)$ and $(C_y,D_y)$, respectively. 
The following lemma formalizes the above statement. For simplicity, we only state it for $\rT_x$ for $x \sim \unif_{\XX}$; an identical bound also hold for $\rT_y$ for $y \sim \unif_{\YY}$. 

\begin{lemma}[Informal]\label{lem:tech-multi-SI}
	For $j \in [k]$:  	$\Ex_{(E_j,\Prot_j)}\Ex_{x \sim \unif_{\XX}}\bracket{\tvd{\distribution{\rT_x \mid E_j,\Prot_j}}{\distribution{\rT_x}}} = o(1)$.	
\end{lemma}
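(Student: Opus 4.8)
The plan is to prove Lemma~\ref{lem:tech-multi-SI} by a direct-sum / information-complexity argument, reducing the task of finding the target element of a uniformly random $\SI$ instance (conditioned on the "easy information" $E_j,\Prot_j$) to the hard instance-wise lower bound for $\SI$ promised in the technical overview (the strengthened $\SI$ lower bound that says any protocol which even slightly reduces the uncertainty about the target must pay large information). First I would fix the phase $j$ and observe that the messages $\Prot_j$ sent in phase $j$ are, by the phase structure of $\HPC$, produced by only the pair $P_C,P_D$ (or $P_A,P_B$), so they constitute a two-party transcript on the joint input of that pair; conditioned on $E_j$, this pair's input is a collection of $n$ independent $\distSI$ instances $(A_x,B_x)_{x\in\XX}$ (independence is exactly why $\distHPC$ was chosen as a product over instances, even though each $\distSI$ itself is not product). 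The quantity we want to bound, $\Ex_{x\sim\unif_\XX}\Ex_{(E_j,\Prot_j)}\bracket{\tvd{\distribution{\rT_x\mid E_j,\Prot_j}}{\distribution{\rT_x}}}$, I would first upgrade from total-variation distance to mutual information via Pinsker: it suffices to show
\[
\Ex_{x\sim\unif_\XX}\bracket{\mi{\rT_x}{\rProt_j\mid \rE_j}} = o(1),
\]
since by concavity of $\sqrt{\cdot}$ and Pinsker $\Ex_x\Ex_{E_j,\Prot_j}\bracket{\tvd{\cdot}{\cdot}}\le \Ex_x \sqrt{\tfrac12\,\mi{\rT_x}{\rProt_j\mid \rE_j}}\le\sqrt{\tfrac12\,\Ex_x \mi{\rT_x}{\rProt_j\mid\rE_j}}$.

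Next I would run the direct sum: $\sum_{x\in\XX}\mi{\rT_x}{\rProt_j\mid\rE_j}$ should be bounded by (a constant times) the information cost of $\Prot_j$ as a two-party protocol for the $n$-fold problem, which in turn is at most the communication cost $\CC{\protHPC}{}$. The key point is that because the $n$ instances $(A_x,B_x)$ are mutually independent given $E_j$, a transcript that carried information about many of the individual $\rT_x$ simultaneously would have large information cost; formally I would write $\mi{(\rT_x)_{x\in\XX}}{\rProt_j\mid\rE_j}\ge \sum_x \mi{\rT_x}{\rProt_j\mid \rE_j}$ using the super-additivity of mutual information for conditionally independent blocks, and then bound $\mi{(\rT_x)_x}{\rProt_j\mid\rE_j}\le\mi{(\rA_x,\rB_x)_x}{\rProt_j\mid\rE_j}\le\ICost{\Prot_j}{\cdot}\le \CC{\Prot_j}{}\le\CC{\protHPC}{}=o(n^2/k^2)$. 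Dividing by $n$ and taking the average over $x$ gives $\Ex_{x\sim\unif_\XX}\mi{\rT_x}{\rProt_j\mid\rE_j}=o(n/k^2)$ — which is not yet $o(1)$, so this is exactly where the strengthened per-instance $\SI$ lower bound must be invoked: rather than the crude bound "information about $\rT_x$ is at most information about $(\rA_x,\rB_x)$", I use that revealing even $\Omega(1/n)$-worth of information about a single $\rT_x$ forces $\Omega(n)$ information about that instance, so $\mi{\rT_x}{\rProt_j\mid\rE_j}=o(1)$ for all but an $o(1)$-fraction of $x$, and for the exceptional $x$ the total-variation contribution is trivially at most $1$, yielding the claimed $o(1)$ in expectation over $x\sim\unif_\XX$.

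There are two subtleties I would be careful about. First, $E_j=(\Prot^{<j},Z^{<j})$ includes the earlier pointers $z_1,\ldots,z_{j-1}$, which are themselves target elements of some instances; conditioning on them breaks the clean independence of the $n$ instances for that pair. I would handle this by noting that $Z^{<j}$ fixes at most $j-1$ specific instances (the ones along the pointer path so far) and, conditioned on $\Prot^{<j}$ and those revealed targets, the remaining instances are still independent $\distSI$'s, or close enough; since we are averaging $\rT_x$ over a uniformly random $x\in\XX$ and only $O(k)$ coordinates are "spoiled", their contribution to the average is $O(k/n)=o(1)$ and can be discarded. The second subtlety — and I expect this to be the main obstacle — is making the direct-sum-plus-strengthened-$\SI$ step fully rigorous: the reduction that embeds a single $\SI$ instance into coordinate $x$ of the $\HPC$ instance while sampling the other $n-1$ coordinates from public/private randomness must preserve the relevant conditional distributions $\distribution{\rT_x\mid E_j,\Prot_j}$, and one must verify that the embedding player can actually simulate phase $j$ (which it can, since phase $j$ is a two-party conversation of a single pair whose input on the other coordinates is sampleable). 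Once this embedding is in place, the strengthened $\SI$ lower bound of Section~\ref{sec:SI} applies coordinate-by-coordinate and the lemma follows; I would organize the writeup so that the embedding and the strengthened-$\SI$ invocation are the only non-routine ingredients.
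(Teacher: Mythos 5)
Your proposal follows essentially the same route as the paper's proof of this lemma (formalized as Lemma~\ref{lem:multi-SI}): embed a single $\SI$ instance at a uniformly random coordinate of the active pair's input, sample the remaining coordinates with a public/private split so that phase $j$ can be simulated, bound the resulting single-instance information cost by $O(\CC{\protHPC}{}/n + j\log n/n)$ via a chain-rule direct sum, invoke the strengthened $\SI$ lower bound (Theorem~\ref{thm:SI-tvd}), and discard the $O(j/n)$ fraction of coordinates spoiled by $Z^{<j}$ --- precisely Claims~\ref{clm:si-eps-solve-reduction} and~\ref{clm:si-info-cost}. The one local misstep is your opening Pinsker reduction to $\Ex_x\bracket{\mi{\rT_x}{\rProt_j\mid\rE_j}}=o(1)$: Theorem~\ref{thm:SI-tvd} controls the total variation shift of the target, not its mutual information with the transcript, and the implication ``$\Omega(1/n)$ bits about $\rT_x$ forces $\Omega(n)$ bits about $(A_x,B_x)$'' is not what that theorem provides, so you should skip the detour and apply the $\eps$-solving bound $\eps_x = O(\sqrt{I_x/n})$ to the total variation distance directly, averaging over $x$ by concavity of $\sqrt{\cdot}$, as the paper does.
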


Let us first see why Lemma~\ref{lem:tech-multi-SI} implies Lemma~\ref{lem:tech-hpc-induction}. The proof is by induction. Consider some phase $j \in [k]$ and suppose
$j$ is odd by symmetry. The goal is to prove that distribution of $\rZ_j$ conditioned on $(E_j,\Prot_j) = (z_1,\ldots,z_{j-1},\Prot_1,\ldots,\Prot_{j-1},\Prot_j)$ is close to original distribution of $\rZ_j$ (on average over choices of $(E_j,\Prot_j)$).  
Notice that since we assumed $j$ is odd, $\rZ_j$ is a function of the inputs to $P_A$ and $P_B$. On the other hand, in an odd phase, only the players $P_C$ and $P_D$ communicate and hence
$\Prot_j$ is a function of the inputs to these players. Conditioning on $E_j$ and using the rectangle property of deterministic protocols (see Fact~\ref{fact:rectangle}), together with the fact
that inputs to $P_A,P_B$ are independent of inputs to $P_C,P_D$, implies that $\rZ_j \perp \Prot_j \mid E_j$. We now have:  
\begin{enumerate}[label=(\roman*)]
	\item Conditioned on $z_{j-1}$, $\rZ_{j}$ is the target element of the instance $(A_{z_{j-1}},B_{z_{j-1}})$, i.e., $\rZ_{j} = \rT_{z_{j-1}}$.
	\item $z_{j-1}$ itself is distributed according to $\distribution{\rZ_{j-1} \mid E_{j-1},\Prot_{j-1}}$ (because we removed the conditioning on $\Prot_j$ by the above argument). 
	\item $\distribution{\rZ_{j-1} \mid E_{j-1},\Prot_{j-1}}$ is close to the uniform distribution by induction. 
\end{enumerate}

As such we can now simply apply Lemma~\ref{lem:tech-multi-SI} (by replacing $x$ with $z_{j-1}$ since they essentially have the same distribution) and obtain that
distribution of $\rZ_j = \rT_{z_{j-1}}$ with and without conditioning on $(E_j,\Prot_j)$ is almost the same (averaged over choices of $(E_j,\Prot_j)$), proving the lemma. 

\paragraph{Proof Sketch of Lemma~\ref{lem:tech-multi-SI}}
The proof of this lemma is based on a direct-sum style argument combined with a new result that we prove for \SI. 
The direct-sum argument implies that since $x$ is chosen uniformly at random from $n$ elements in $\XX$, and protocol $\protHPC$ is communicating $o(n^2)$ bits in total, 
then it can only reveal $o(n)$ bits of information about the instance $(A_x,B_x)$. This part follows the standard direct-sum arguments for information complexity (see, e.g.~\cite{BarakBCR10,BravermanR11}) but
we also need to take into account that if $x$ is one of the pointers we conditioned on in $E_j$, then $\protHPC$ may reveal more information about $(A_x,B_x)$; fortunately, this event happens with negligible probability for $k \ll n$ and so the 
argument continues to hold.  

By above argument, proving Lemma~\ref{lem:tech-multi-SI} reduces to showing that if a protocol reveals $o(n)$ bits of information about an instance of $\SI$, 
then the distribution of the target element varies from the uniform distribution in total variation distance by only $o(1)$. This is the main part of the proof of Lemma~\ref{lem:tech-multi-SI} and is 
precisely the content of our next  technical result in the following section.

\subsection{A New Communication Lower Bound for Set Intersection}\label{sec:tech-si}

We say that a protocol $\protSI$ \emph{$\eps$-solves} $\SI$ on the distribution $\distSI$ iff it can alter the distribution of the target element from its original distribution by at least $\eps$ in total variation distance, i.e., 
$\Ex_{\ProtSI \sim \rProtSI}\Bracket{\tvd{\distribution{\rT \mid \ProtSI}}{\distribution{\rT}}} \geq \eps$; here $\rProtSI$ and $\rT$ are the random variables for the transcript of the protocol (including public randomness) and the target element, respectively. 

To finish the proof of Lemma~\ref{lem:tech-multi-SI}, we need to prove that a protocol that $\Omega(1)$-solves $\SI$ has $\Omega(n)$ communication cost (even information cost). 
Note that $\eps$-solving is an algorithmically simpler task than finding the target element. For example, a protocol may change the distribution of $\rT$ to having $(1+\eps)/n$ probability on $n/2$ elements and $(1-\eps)/n$ probability on the remaining $n/2$. 
This $\eps$-solves $\SI$ yet the target element can only be found with probability $(1+\eps)/n$ in this distribution. On the other hand, any protocol that finds the target element with probability $p \in (0,1)$
also $p$-solves $\SI$. Because of this, the lower bounds mentioned in Section~\ref{sec:prelim} for set intersection do not suffice for our purpose.  Instead, we prove the following theorem in this paper. 

\begin{theorem}[Informal]\label{thm:tech-SI-tvd} 
    Any protocol $\protSI$ that $\eps$-solves $\SI$ on distribution $\distSI$ has internal information cost $\ICost{\protSI}{\distSI} = \Omega(\eps^2 \cdot n)$.
\end{theorem}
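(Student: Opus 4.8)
The plan is to prove Theorem~\ref{thm:tech-SI-tvd} by a \emph{direct-sum reduction} from \SI\ to a primitive two-player problem \PI\ --- set intersection on a universe of size two --- together with a self-contained lower bound for \PI\ that exploits the product structure of the Hellinger distance for deterministic protocols. Throughout, $\distSI$ is the hard distribution in which the target $\rT$ is uniform over the $n$ universe elements and, conditioned on $\rT=t$, each coordinate $i\neq t$ independently presents a ``collision-free'' pair of indicator bits (supported on $\set{(0,0),(1,0),(0,1)}$, so that $A$ and $B$ have size $\approx n/3$ and are disjoint outside $t$), while coordinate $t$ presents $(1,1)$. Let \distPI\ be the analogous distribution for \PI\ on the universe $\set{1,2}$: a uniform target bit $\rS$, the target element placed in both sets, and the non-target element placed collision-free into at most one of the two sets.

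A first, cheap observation explains what the real difficulty is: since the prior of $\rT$ is uniform, $\mi{\rProtSI}{\rT}=\Ex_{\ProtSI}\bracket{\kl{\distribution{\rT\mid\ProtSI}}{\unif}}$, so ``$\eps$-solving'' together with Pinsker's inequality already gives $\mi{\rProtSI}{\rT}=\Omega(\eps^2)$ and hence $\ICost{\protSI}{\distSI}=\Omega(\eps^2)$ for free. This is far from the claimed bound: it ignores that $\rT$ is ``buried'' inside an instance of size $n$, and --- as stressed in the technical overview --- an $\eps$-solving protocol need not make any genuine progress toward \emph{locating} $\rT$; it may merely tilt the posterior mass across the universe. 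The substance of the proof is to turn this single $\Omega(\eps^2)$ bit of information about $\rT$ into $\Omega(n)$ individually innocuous primitive tasks, each of which the protocol is nonetheless forced to solve, and to do the accounting so that their \emph{squared} contributions sum to $\Omega(\eps^2 n)$ rather than the trivial $\Omega(\eps^2)$ or $\Omega(\eps^2/n)$.

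The main step, and the one I expect to be the principal obstacle, is the direct sum: one must show that $\ICost{\protSI}{\distSI}$ is, up to constants, at least the sum over coordinates $i\in[n]$ of the internal information cost (w.r.t.\ \distPI) of the \PI-protocol that $\protSI$ induces when a fresh \PI\ instance is embedded at coordinate $i$ and the remaining coordinates are completed with shared randomness. The difficulty is that $\distSI$ is \emph{not} a product distribution --- the ``exactly one intersection'' promise correlates the coordinates, so one cannot simply condition out the other coordinates without destroying the embedded instance. This is exactly the intersecting-distribution phenomenon familiar from information-complexity analyses of set disjointness, and it is why a new argument is needed rather than a black-box appeal to known \SI\ bounds. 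The fix is to condition on an auxiliary per-coordinate ``side'' variable that records the collision-free form each coordinate \emph{would} take were it not the target: conditioned on this variable every coordinate faces an honest copy of \PI, the inputs to $P_A$ and $P_B$ factor so that the rectangle property of deterministic protocols applies, and the chain rule for mutual information delivers the desired super-additive bound. One also has to check that revealing this variable leaks at most $o(1)$ about the identity of $\rT$ (which is where the bound $k\ll n$ from the broader argument is comfortable). In parallel one must quantify the per-coordinate ``solving quality'' $\delta_i$ so that a convexity/Cauchy--Schwarz estimate gives $\sum_i \delta_i^2 \ge \big(\sum_i \delta_i\big)^2/n = \Omega(\eps^2 n)$; concretely, writing $p^\tau_i:=\Pr\bracket{\rT=i\mid\ProtSI=\tau}$, ``$\eps$-solving'' says $\Ex_\tau\bracket{\sum_i \card{p^\tau_i-1/n}}\ge 2\eps$, and because conditioning on ``which case'' renormalizes away the $1/n$ base rate, the relevant distinguishing advantage of the induced \PI-protocol at coordinate $i$ is the relative perturbation $\delta_i \approx n\cdot\Ex_\tau\bracket{\card{p^\tau_i-1/n}}$, whose sum over $i$ is $\Omega(\eps n)$.

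Finally, the lower bound for \PI: any protocol achieving distinguishing advantage $\delta$ between the target- and non-target cases of \PI\ on \distPI\ must have $\ICost{\protPI}{\distPI}=\Omega(\delta^2)$. Here one converts the advantage into a total-variation gap $\tvd{\distribution{\rProtPI\mid\rS=1}}{\distribution{\rProtPI\mid\rS=2}}=\Omega(\delta)$ between transcript distributions; then, using the cut-and-paste and ``Pythagorean'' identities for transcript distributions of deterministic protocols over a combinatorial rectangle, this gap is witnessed by an $\Omega(\delta)$ Hellinger distance $\hd{\cdot}{\cdot}$ between transcript distributions of two inputs that differ in a single player's coordinate; and the standard inequality $\ICost{\protPI}{\distPI}=\Omega\big(\sum \hdt{\cdot}{\cdot}\big)$ then yields $\Omega(\delta^2)$. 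Combining the three steps gives $\ICost{\protSI}{\distSI}=\Omega\big(\sum_i \delta_i^2\big)=\Omega(\eps^2 n)$, and $\CC{\protSI}{}\ge \ICost{\protSI}{\distSI}$ delivers the communication-cost version as well. The delicate points to watch are entirely in the interaction of the two obstacles above: the non-product structure forces the auxiliary-variable conditioning, and that conditioning must be compatible with the renormalization that keeps the per-instance qualities at scale $\Omega(\eps)$ on average rather than $\Omega(\eps/n)$.
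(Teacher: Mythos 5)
Your overall architecture matches the paper's: reduce $\SI$ to the two-element primitive $\PI$, prove an $\Omega(\delta^2)$ information lower bound for $\PI$ via cut-and-paste/Hellinger, and lift via a direct-sum argument; you also correctly identify the non-product (intersecting) structure of $\distSI$ as the crux. The final step of your plan (the $\PI$ lower bound) is essentially the paper's Lemma~\ref{lem:PI-ic}. The gap is in the middle step, and it is a real one. Your per-coordinate decomposition --- embed ``a fresh $\PI$ instance at coordinate $i$,'' bound $\ICost{\protSI}{\distSI}$ by $\sum_i \delta_i^2$, and finish with Cauchy--Schwarz --- is not well-posed as stated: $\PI$ has two coordinates, and the promise of $\distSI$ means a single coordinate $i$ carries no self-contained ``target vs.\ non-target'' question (if all other coordinates are fixed collision-free, coordinate $i$ is forced to be the target). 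Your proposed fix, conditioning on an auxiliary per-coordinate variable recording the collision-free value each coordinate ``would'' take, does not repair this: if that variable is public, each player can compare it against her own bit and locate the target herself (Alice holding $a_i=1$ while the would-be value has first component $0$ reveals $t=i$), which destroys the hardness; if it is private to one side, the symmetric factorization needed for the rectangle/cut-and-paste arguments is lost. The paper's actual device is different: it embeds the \emph{two} $\PI$ coordinates at two random positions $i,j$ (one of which is the target), completes the instance with a carefully asymmetric public/private sampling, and has the induced protocol compare the posteriors $p_i$ and $p_j$ from $\distribution{\rT\mid\ProtSI}$. Correctness then follows because $j$ is uniform on $[n]\setminus\set{i}$ even conditioned on the instance, so $p_j$ is an unbiased benchmark and the advantage is $\Omega(\eps)$ outright; the information bound is a single averaging over the $n-1$ positions of the non-target index, yielding $\ICost{\protSI}{\distSI}/(n-1)$ directly, with no per-coordinate $\delta_i$ and no Cauchy--Schwarz.

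A second, related problem: you assert the direct sum for the \emph{internal} information cost of the induced $\PI$ protocol and invoke a $\PI$ lower bound for that same quantity. The paper explicitly notes that the full internal-information direct sum does not appear to hold for this kind of reduction precisely because of the intersecting distribution; it instead bounds the weaker quantity $\mii{\rX_1,\rX_2}{\rProtPI\mid\rY_1,\rY_2,\rK}{\distPI}+\mii{\rY_1,\rY_2}{\rProtPI\mid\rX_1,\rX_2,\rK}{\distPI}$ (conditioned on which index is the target) and then proves the $\PI$ lower bound for exactly that weaker measure. Without either establishing the stronger direct sum for your (unspecified) embedding or switching to the conditional quantity on both sides, the chain of inequalities does not close. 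As a minor aside, your ``cheap observation'' that $\eps$-solving gives $\mi{\rProtSI}{\rT}=\Omega(\eps^2)$ and hence $\ICost{\protSI}{\distSI}=\Omega(\eps^2)$ ``for free'' is not justified either, since $\rT$ is a joint function of both inputs and $\mi{\rProtSI}{\rT}$ is not obviously dominated by the internal information cost; but this is not load-bearing for your argument.
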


As information cost lower bounds communication cost (see Proposition~\ref{prop:cc-ic}), Theorem~\ref{thm:tech-SI-tvd} also proves a communication lower bound for $\SI$ (although we need the stronger result
for information cost in our proofs). 
By our discussion earlier, Theorem~\ref{thm:tech-SI-tvd} can be used to finalize the proof of Lemma~\ref{lem:tech-multi-SI} (and hence Theorem~\ref{thm:tech-hpck}). 
We now give an overview of the proof of Theorem~\ref{thm:tech-SI-tvd}.

For an instance $(A,B)$ of $\SI$, with a slight abuse of notation, we write $A := (a_1,\ldots,a_n)$ and $B := (b_1,\ldots,b_n)$ for $a_i,b_i \in \set{0,1}$ as characteristic vector of the sets given to Alice and Bob. Under this notation, 
the target element corresponds to the unique index $t \in [n]$ such that $(a_t,b_t) = (1,1)$. The proof of Theorem~\ref{thm:tech-SI-tvd} is based on reducing $\SI$ to a special case of this problem on only $2$ coordinates, which we define
as the $\PI$ problem. In $\PI$, Alice and Bob are given $(x_1,x_2)$ and $(y_1,y_2)$ in $\set{0,1}^{2}$ and their goal is to find the unique index $k \in \set{1,2}$ such that $(x_k,y_k) = (1,1)$. We use $\distPI$ to denote the hard distribution for this problem
which is equivalent to $\distSI$ for $n=2$. 

Given a protocol $\protSI$ for $\eps$-solving $\SI$ on $\distSI$, we design a protocol $\protPI$ for finding the index $k$ in instances of $\PI$ sampled from $\distPI$ with probability $1/2 + \Omega(\eps)$. 
The reduction is as follows. 

\paragraph{Reduction:} Alice and Bob publicly sample $i,j \in [n]$ uniformly at random without replacement. Then, Alice sets $a_i = x_1$ and $a_j = x_2$ and Bob sets $b_i = y_1$ and $b_j = y_2$, using their given inputs in $\PI$. 
The players sample the remaining coordinates of $(A,B)$ in $[n] \setminus \set{i,j}$ using a combination of public and private randomness that we explain later in the proof sketch of Lemma~\ref{lem:tech-p2}. This sampling ensures that the resulting instance $(A,B)$ of $\SI$ is sampled from $\distSI$ such that its target element is $i$ when $k=1$ and is $j$ when $k=2$. 
After this, the players run the protocol $\protSI$ on $(A,B)$
and let $\ProtSI$ be the transcript of this protocol. Using this, Bob computes the distribution $\distribution{\rT \mid \ProtSI} = (p_1,\ldots,p_n)$ which assigns probabilities to elements in $[n]$ as being the target element. 
Finally, Bob checks the value of $p_i$ and $p_j$ and return $k=1$ if $p_i > p_j$ and $k=2$ otherwise (breaking the ties consistently when $p_i = p_j$). 
The remainder of the proof consists of three main steps:
\begin{enumerate}[label=($\roman*$)]
	\item Proving the correctness of protocol $\protPI$: 
	
	\begin{lemma}[Informal]\label{lem:tech-p1}
	    Protocol $\protPI$ outputs the correct answer with probability  $\frac{1}{2} + \Omega(\eps)$.
	\end{lemma}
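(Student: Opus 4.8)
The plan is to reduce the statement to an elementary inequality about distributions on $[n]$. Write $\ProtSI$ for the (random) transcript of the call to $\protSI$ made inside $\protPI$, including its public coins, and let $p=(p_1,\dots,p_n):=\distribution{\rT\mid\ProtSI}$ be the posterior distribution of the target that Bob computes from it. Since the target $\rT$ is uniform on $[n]$ under $\distSI$, the standing hypothesis that $\protSI$ $\eps$-solves $\SI$ says exactly that $\Ex_{\ProtSI}\Bracket{\tvd{p}{\unif_{[n]}}}\ge\eps$, and we must show that the protocol is correct with probability at least $\tfrac12+\tfrac{\eps}{2}$.

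\emph{Step 1 (the joint law).} By the way the $\SI$-instance is assembled — which is exactly what the sampling of the coordinates outside $\{i,j\}$ in Lemma~\ref{lem:tech-p2} delivers — the instance $(A,B)$ that $\protSI$ is run on is distributed as $\distSI$, its target $\rT$ is one of the two embedded coordinates $\{i,j\}$, and $k=1$ iff $\rT=i$ (while $k=2$ iff $\rT=j$). Because $i$ and $j$ are a uniformly random ordered pair of distinct coordinates chosen independently of the $\PI$-input, because the remaining coordinates are filled in a way symmetric in the labels $[n]$, and because $\distPI$ is invariant under swapping its two coordinates, one checks that, conditioned on $(A,B)$, the decoy index $\rJ:=\{i,j\}\setminus\{\rT\}$ is uniform over $[n]\setminus\{\rT\}$ and $k$ is a uniform bit independent of $(A,B)$ and of $\rJ$. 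Since $\protSI$ is run on $(A,B)$ with fresh coins, all of this survives conditioning on $\ProtSI$: $\ \rT\mid\ProtSI\sim p$, $\ \rJ\mid(\rT,\ProtSI)\sim\unif_{[n]\setminus\{\rT\}}$, and $k\mid(\rT,\rJ,\ProtSI)\sim\unif_{\{1,2\}}$.

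\emph{Step 2 (advantage and conclusion).} The protocol outputs the index of $\{i,j\}$ with the larger $p$-mass, ties going to a fixed side. Conditioning on $(\ProtSI,\rT,\rJ)$ fixes $p$ and the pair $\{i,j\}=\{\rT,\rJ\}$ while leaving $k$ a fair coin, so a short case check over $k\in\{1,2\}$ and the sign of $p_\rT-p_\rJ$ shows that $\Pr[\text{correct}\mid\ProtSI,\rT,\rJ]-\Pr[\text{wrong}\mid\ProtSI,\rT,\rJ]=\operatorname{sign}(p_\rT-p_\rJ)$, which handles ties automatically (both probabilities are $\tfrac12$ when $p_\rT=p_\rJ$). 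Taking the expectation over $\rT\sim p$ and $\rJ\sim\unif_{[n]\setminus\{\rT\}}$ and symmetrizing the double sum over the unordered pair, this equals $\frac{1}{2(n-1)}\sum_{a,b\in[n]}\lvert p_a-p_b\rvert$. I would then apply the elementary bound $\sum_{a,b\in[n]}\lvert p_a-p_b\rvert\ge 2n\cdot\tvd{p}{\unif_{[n]}}$, valid for any probability vector $p$ on $[n]$: partition $[n]$ according to whether $p_a>1/n$, $p_a<1/n$ or $p_a=1/n$; the cross terms among these three parts already sum to at least $2n\cdot\tvd{p}{\unif_{[n]}}$, while the within-part terms are nonnegative. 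Combining, $\Pr[\text{correct}\mid\ProtSI]-\Pr[\text{wrong}\mid\ProtSI]\ge\tfrac{n}{n-1}\tvd{p}{\unif_{[n]}}\ge\tvd{p}{\unif_{[n]}}$; averaging over $\ProtSI$ and using the $\eps$-solving hypothesis yields $\Pr[\text{correct}]-\Pr[\text{wrong}]\ge\eps$, and since these two probabilities sum to $1$ we obtain $\Pr[\text{correct}]\ge\tfrac12+\tfrac{\eps}{2}=\tfrac12+\Omega(\eps)$, which is the claim.

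The one genuinely substantive ingredient is Step 1 — the assertion that the decoy index $\rJ$ is (exactly, or up to an error negligible compared to $\eps$) uniform over the non-target coordinates and independent of the transcript given $(A,B)$. This is precisely where the details of how the coordinates of $(A,B)$ outside $\{i,j\}$ are sampled enter, so the right order is to prove that sampling guarantee (the content of Lemma~\ref{lem:tech-p2}) first and then invoke it here; everything downstream is bookkeeping plus the one-line distributional inequality.
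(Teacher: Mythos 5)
Your proposal is correct and shares the paper's skeleton: both arguments reduce the claim to the two distributional facts that, given the transcript $\ProtSI$, the target coordinate is distributed according to the posterior $p=\distribution{\rT\mid\ProtSI}$ while the decoy coordinate is uniform on the remaining $n-1$ positions and independent of $\ProtSI$ (the paper's Observations~\ref{obs:dist-PI-SI} and~\ref{obs:dist-PI-SI-hidden}, which indeed follow from the sampling step alone and do not require the information-cost lemma). Where you diverge is in the combinatorial core. The paper conditions on $k=1$, sorts the $p_\ell$ in increasing order, and bounds $\Pr(\rI\compr\rJ)\le \tfrac12-\delta/2$ via a Chebyshev-sum/rearrangement inequality (Claim~\ref{clm:tvd-calculation} using Proposition~\ref{prop:rearr}). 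You instead compute the advantage exactly as the Gini-type quantity $\frac{1}{2(n-1)}\sum_{a,b}|p_a-p_b|$ and lower-bound it by $\frac{n}{n-1}\tvd{p}{\unif_{[n]}}$ with a partition of $[n]$ into above-, below-, and at-average masses; this is arguably cleaner and more symmetric (it treats $k=1$ and $k=2$ in one stroke), while the paper's version isolates the one-sided statement about $\Pr(\rI\compr\rJ)$ that it reuses verbatim in the final chain of inequalities. Both yield the same constant $\tfrac12+\eps/2$.

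One small inaccuracy worth fixing: conditioned on $(\ProtSI,\rT,\rJ)$ with $p_{\rT}=p_{\rJ}$, the protocol's answer is \emph{deterministic} (the tie is broken by the fixed index order in $\comp$), so the conditional advantage is $\pm1$, not $0$, and it is not true that "both probabilities are $\tfrac12$." The identity you want only holds after symmetrizing over which element of the unordered pair $\{a,b\}$ is the target: the two orderings occur with weights $p_a/(n-1)$ and $p_b/(n-1)$, which coincide exactly in the tie case, so the $+1$ and $-1$ contributions cancel and your formula $\frac{1}{2(n-1)}\sum_{a,b}|p_a-p_b|$ survives intact. The slip is therefore harmless, but the justification should be routed through the symmetrization rather than the pointwise sign identity.
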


	\item Proving an upper bound on ``information cost'' of $\protPI$ (the reason for quotations is that strictly speaking this quantity is not the information cost of $\protPI$ but rather 
	a lower bound for it). 
	
	\begin{lemma}[Informal]\label{lem:tech-p2}
    Let $\rProtPI$ denote the random variable for the transcript of the protocol $\protPI$ and $\rK$ be the random variable for the index $k$ in distribution $\distPI$. We have, 
    \begin{align*}
        \mii{\rX_1,\rX_2}{\rProtPI \mid \rY_1,\rY_2, \rK}{\distPI} + \mii{\rY_1,\rY_2}{\rProtPI \mid \rX_1,\rX_2, \rK}{\distPI} &\leq \frac{1}{n-1} \cdot \ICost{\protSI}{\distSI}.
    \end{align*}
   \end{lemma}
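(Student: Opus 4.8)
\emph{Reducing to one player.} The plan is to run the standard information-theoretic direct-sum argument for the embedding that defines $\protPI$. Write $\rA=(\rA_1,\ldots,\rA_n)$ and $\rB=(\rB_1,\ldots,\rB_n)$ for the characteristic vectors of the \SI-instance built inside $\protPI$, and let $\rT\in[n]$ be its target coordinate. The transcript $\rProtPI$ is the concatenation of (i) the public randomness used by the reduction, namely the ordered pair $(\rI,\rJ)$ and a shared random variable $\rD$ driving the sampling of the coordinates outside $\set{\rI,\rJ}$, and (ii) the transcript $\rProtSI$ of the invoked copy of $\protSI$. As $(\rI,\rJ,\rD)$ is sampled independently of the \PI-input (hence of $\rK$), the chain rule gives
\[
\mii{\rX_1,\rX_2}{\rProtPI \mid \rY_1,\rY_2,\rK}{\distPI}=\mii{\rX_1,\rX_2}{\rProtSI \mid \rY_1,\rY_2,\rK,\rI,\rJ,\rD}{\distPI},
\]
and symmetrically after swapping $\rX$ and $\rY$. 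By the Alice--Bob symmetry of the reduction it therefore suffices to show the right-hand side is at most $\tfrac{1}{n-1}\cdot\mii{\rA}{\rProtSI \mid \rB}{\distSI}$, since adding this bound to its mirror image gives the lemma.

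\emph{Collapsing to a single coordinate.} Conditioned on $\rI=i,\rJ=j$, the reduction sets $(\rX_1,\rX_2)=(\rA_i,\rA_j)$ and $(\rY_1,\rY_2)=(\rB_i,\rB_j)$, makes $(\rA,\rB)$ distributed as $\distSI$, and makes $\rK$ the index of $\set{i,j}$ that carries $\rT$. Since $\rA_{\rT}=\rB_{\rT}=1$ deterministically, conditioning further on $\rK$ collapses $(\rA_i,\rA_j)$ to its unique non-target entry (and collapses $(\rB_i,\rB_j)$ similarly, the target entry being the constant $1$). Averaging over $\rK\in\set{1,2}$ and over the symmetric choice of the ordered pair $(i,j)$ then rewrites the right-hand side above as
\[
\Ex_{\rT}\;\Ex_{j\,\sim\,[n]\setminus\set{\rT}}\Bracket{\;\mii{\rA_j}{\rProtSI \mid \rB_j,\rD,\rT}{\distSI}\;},
\]
where $\rT$ is uniform over $[n]$, $j$ is the non-target index, and $\rD$ here denotes the public structure of the coordinates in $[n]\setminus\set{\rT,j}$; it is important that $\rD$ governs only those coordinates, so it does not reveal $\rA_j$ on its own.

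\emph{The direct-sum step.} There are $n(n-1)$ ordered pairs $(t,j)$ with $t\ne j$, so bounding the above expectation by $\tfrac{1}{n-1}\mii{\rA}{\rProtSI\mid\rB}{\distSI}$ is equivalent to $\sum_{t\in[n]}\sum_{j\ne t}\mii{\rA_j}{\rProtSI \mid \rB_j,\rD,\rT=t}{}\le n\cdot\mii{\rA}{\rProtSI\mid\rB}{\distSI}$. For each fixed target $t$ the plan is to bound the inner sum by $\mii{\rA}{\rProtSI\mid\rB,\rT=t}{}$, by writing $\mii{\rA}{\rProtSI\mid\rB,\rT=t}{}=\mii{\rA_{-t}}{\rProtSI\mid\rB,\rT=t}{}$ (as $\rA_t$ is constant), expanding it by the chain rule over the $n-1$ coordinates of $\rA_{-t}$, and showing that each per-coordinate term is at least the corresponding summand. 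Summing over $t$ then yields $\sum_t\mii{\rA}{\rProtSI\mid\rB,\rT=t}{}=n\cdot\mii{\rA}{\rProtSI\mid\rB,\rT}{}$, which is at most $n\cdot\mii{\rA}{\rProtSI\mid\rB}{}$ because $\rT$ is a deterministic function of $(\rA,\rB)$, so $\mii{\rA}{\rProtSI\mid\rB}{}=\mii{\rT}{\rProtSI\mid\rB}{}+\mii{\rA}{\rProtSI\mid\rB,\rT}{}\ge\mii{\rA}{\rProtSI\mid\rB,\rT}{}$; it is this last observation, together with the fact that the two embedded coordinates always contain the target (so only one of them carries information), that turns the naive $\tfrac1n$ into the desired $\tfrac{1}{n-1}$.

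\emph{Main obstacle.} The hard part is showing that each per-coordinate term of the chain-rule expansion dominates the corresponding summand: this is exactly a direct-sum claim for \SI, and $\distSI$ is \emph{not} a product distribution. Conditioned on $\rB$, and even on $(\rB,\rT)$, the coordinates of $\rA$ are \emph{anti}-correlated through the unique-intersection promise --- exactly one of the coordinates $\rA_\ell$ with $\rB_\ell=1$ equals $1$ --- so the usual super-additivity of mutual information under coordinate-wise independence cannot be applied out of the box. Handling this requires exploiting the specific structure of $\distSI$ and the way the public variable $\rD$ is sampled: one wants $\rD$ to carry just enough structural information about the non-embedded coordinates to restore conditional independence of $\rA$'s coordinates, while (a) the induced law on the embedded pair remains exactly $\distPI$, and (b) conditioning on $\rD$ does not inflate the information that $\rProtSI$ reveals about $\rA$ beyond the unconditional internal information cost $\mii{\rA}{\rProtSI\mid\rB}{\distSI}$ appearing on the right-hand side. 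Making (a) and (b) hold simultaneously, while preserving the $\tfrac{1}{n-1}$ factor, is where I expect the real difficulty to lie.
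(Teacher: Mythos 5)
Your setup (splitting over $\rK$, embedding $(\rX_2,\rY_2)$ into the non-target coordinate $j$, and the final accounting step $\sum_t \mi{\rA}{\rProtSI\mid\rB,\rT=t} = n\cdot\mi{\rA}{\rProtSI\mid\rB,\rT}\le n\cdot\mi{\rA}{\rProtSI\mid\rB}$ via $\rProtSI\perp\rT\mid\rA,\rB$) matches the paper. But the step you flag as the "main obstacle" is left unresolved, and it is precisely the content of the proof. The resolution is not to "restore conditional independence of $\rA$'s coordinates and then apply super-additivity"; no independence across coordinates is needed for the key step. The public randomness is designed so that, in addition to $(i,j)$, the players sample a uniformly random size $\ell\in\{0,\ldots,n-2\}$ and a uniformly random $\ell$-subset $S$ of $[n]\setminus\{i,j\}$, revealing $\rA_S$ and $\rB_{\barS}$ publicly. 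The per-coordinate term then takes the form $\mi{\rA_j}{\rProtSI\mid\rA_S,\rB_{-i},\rT=i}$, i.e., coordinate $j$'s information is measured conditioned on a \emph{random prefix} $\rA_S$ of the remaining coordinates. A counting argument over permutations of $[n]\setminus\{i\}$ (Claim~\ref{clm:aux-ds-sum}) shows that averaging these terms over $(j,\ell,S)$ exactly reassembles the chain-rule decomposition $\sum_{\ell}\mi{\rA_{\sigma(\ell+1)}}{\rProtSI\mid\rA_{\sigma(<\ell+1)},\rB_{-i},\rT=i}=\mi{\rA_{-i}}{\rProtSI\mid\rB_{-i},\rT=i}$ for every ordering $\sigma$, which is an exact identity valid for arbitrarily correlated coordinates. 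This is the idea your proposal is missing; without it the plan does not go through.

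Two smaller points. First, your diagnosis of the correlation structure is off: conditioned on the target location $\rT=t$, the remaining coordinates of $\distSI$ are i.i.d.\ from $\mu$, so there is no anti-correlation at that stage; the only genuine conditional-independence fact the paper needs is $\rA_j\perp\rB_{-i}\mid\rB_j$ (given $\rT=i$), which is used via Proposition~\ref{prop:info-increase} to pass from conditioning on $\rB_j$ to conditioning on all of $\rB_{-i}$. The non-product nature of $\distSI$ (through the unknown location of $t$) is why the lemma only bounds the $\rK$-conditioned quantity rather than the full internal information cost of $\protPI$ — a weakening you have implicitly adopted but whose necessity is tied to this issue. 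Second, your requirement (b) — that conditioning on the public structure not inflate the information beyond $\mi{\rA}{\rProtSI\mid\rB}{}$ — is automatic in the paper's scheme because the chain-rule reassembly lands exactly on $\mi{\rA_{-i}}{\rProtSI\mid\rB_{-i},\rT=i}$, which is then related to the unconditional internal information cost by the $\rT$-conditioning argument you already have.
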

   	\item Proving a lower bound on ``information cost'' (as used in Part $(ii)$) of protocols for $\PI$: 
   	
	\begin{lemma}\label{lem:tech-p3}
    If $\protPI$ outputs the correct answer on $\distPI$ with probability at least $\frac{1}{2} + \Omega(\eps)$, then, 
    \begin{align*}
        \mii{\rX_1,\rX_2}{\rProtPI \mid \rY_1,\rY_2, \rK}{\distPI} + \mii{\rY_1,\rY_2}{\rProtPI \mid \rX_1,\rX_2, \rK}{\distPI} = \Omega(\eps^2).
    \end{align*}
	\end{lemma}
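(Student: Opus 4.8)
The plan is to reduce the claimed information‑cost bound to a statement about Hellinger distances between transcript distributions on a handful of fixed inputs, and then exploit the cut‑and‑paste structure of communication protocols. For $x,y\in\set{0,1}^2$, let $\prot_{x,y}$ denote the distribution of the transcript $\rProtPI$ (including public randomness) when Alice is given $x$ and Bob is given $y$; write e.g. $\prot_{11,10}$ for Alice$=(1,1)$, Bob$=(1,0)$. First I would decompose each of the two conditional mutual information terms of Lemma~\ref{lem:tech-p3} according to the value of $\rK$ (and of the implicit ``owner'' of the non‑target coordinate of $\distPI$). Conditioned on $\rK=1$, the coordinate $\rX_1=\rY_1=1$ is pinned down, so $\mii{\rX_1,\rX_2}{\rProtPI}{\rY_1,\rY_2,\rK=1}=\mii{\rX_2}{\rProtPI}{\rY_2,\rK=1}$, and on the part of the distribution with $\rY_2=0$ the bit $\rX_2$ is a coin of constant bias that flips the input between $((1,0),(1,0))$ and $((1,1),(1,0))$. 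Using the standard fact that, for a constant‑bias coin $\rW$, $\mi{\rW}{\rProtPI}$ is $\Omega(1)$ times the squared Hellinger distance between the two transcript distributions it induces (information dominates Jensen--Shannon divergence, which dominates squared Hellinger distance; see Appendix~\ref{sec:cc-ic} and~\cite{Bar-YossefJKS02}), and arguing symmetrically in the other three cases, I would get that the left‑hand side $L$ of the lemma is at least a constant times
\[
\hdt{\prot_{11,10}}{\prot_{10,10}}+\hdt{\prot_{10,11}}{\prot_{10,10}}+\hdt{\prot_{11,01}}{\prot_{01,01}}+\hdt{\prot_{01,11}}{\prot_{01,01}}.
\]

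\paragraph{Bridging to a cross‑$\rK$ distance.} Each of these four ``one‑coordinate‑flip'' distances connects a $k=1$ input to another $k=1$ input, or a $k=2$ input to another $k=2$ input; none of them compares a $k=1$ transcript to a $k=2$ transcript. To create such a comparison I would invoke the cut‑and‑paste lemma for communication protocols (see~\cite{Bar-YossefJKS02,WeinsteinW15}): $\hd{\prot_{11,10}}{\prot_{10,11}}=\hd{\prot_{11,11}}{\prot_{10,10}}$ and $\hd{\prot_{11,01}}{\prot_{01,11}}=\hd{\prot_{11,11}}{\prot_{01,01}}$, where $\prot_{11,11}$ is the transcript distribution on the ``illegal'' all‑ones input $((1,1),(1,1))$. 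By the triangle inequality for Hellinger distance, $\prot_{11,11}$ is within $O(\sqrt L)$ of both $\prot_{10,10}$ and $\prot_{01,01}$, hence so is every legal $k=1$ transcript of every legal $k=2$ transcript. Since $\distribution{\rProtPI\mid\rK=1}$ and $\distribution{\rProtPI\mid\rK=2}$ are mixtures of the transcript distributions on legal $k=1$ and legal $k=2$ inputs, respectively, convexity of total variation distance together with $\tvd{P}{Q}=O(\hd{P}{Q})$ gives $\tvd{\distribution{\rProtPI\mid\rK=1}}{\distribution{\rProtPI\mid\rK=2}}=O(\sqrt L)$.

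\paragraph{Using correctness.} The output of $\protPI$ is a function of the transcript and the public coins — which is exactly how the protocol built in the reduction preceding the lemma behaves (there Bob's guess depends only on the $\SI$‑transcript and the public indices $i,j$). Since $\protPI$ is correct on $\distPI$ with probability $\tfrac12+\Omega(\eps)$ and $\rK$ is a uniform bit, the posterior $\distribution{\rK\mid\rProtPI}$ must deviate from uniform by $\Omega(\eps)$ in expectation, which is equivalent to $\tvd{\distribution{\rProtPI\mid\rK=1}}{\distribution{\rProtPI\mid\rK=2}}=\Omega(\eps)$. Comparing with the previous paragraph gives $\sqrt L=\Omega(\eps)$, i.e. $L=\Omega(\eps^2)$, which is the lemma.

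\paragraph{Main obstacle.} The delicate step is the bridging one. Unlike set disjointness — where the ``primitive'' is the $\textsc{and}$ function on one bit, the all‑ones input is a legitimate instance, and the lower bound has a clean direct‑sum base case — the all‑ones input of $\PI$ violates the uniqueness promise and is never produced by $\distPI$. Consequently there is no pair of legal inputs differing in a single coordinate whose answers differ, and the information‑cost terms only ever control Hellinger distances \emph{within} a fixed value of $\rK$; the cut‑and‑paste identities are precisely what let the argument reach the illegal transcript $\prot_{11,11}$ and use it as a ``meeting point'' between the two answer classes. One has to verify that this routing, and the subsequent passage from per‑input Hellinger bounds to the mixtures $\distribution{\rProtPI\mid\rK}$, loses only constant factors — this is the concrete manifestation of the ``intersecting distribution'' difficulty flagged in Section~\ref{sec:techniques}. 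A secondary nuisance is that the coins flipping the non‑target coordinate are only constant‑biased, so the information‑to‑Hellinger inequality must be applied in its skewed form.
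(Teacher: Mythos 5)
Your proposal is correct and follows essentially the same route as the paper's proof of this lemma (Lemma~\ref{lem:PI-ic}): decompose by $\rK$, convert the conditional information terms into squared Hellinger distances between single-coordinate-flip transcript pairs, use the cut-and-paste property to route through the illegal all-ones input $\rProt_{\disjin{1}{1}{1}{1}}$ as the bridge between the $k=1$ and $k=2$ classes, and then contradict correctness via total variation distance. The only (cosmetic) difference is in the last step, where you aggregate into the mixtures $\distribution{\rProtPI\mid\rK=1}$ versus $\distribution{\rProtPI\mid\rK=2}$ by convexity, while the paper argues that some single pair of legal inputs with different answers must have transcripts at total variation distance $\Omega(\eps)$.
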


\end{enumerate} 
\noindent
By Lemma~\ref{lem:tech-p2}, $\IC{\protSI}{\distSI}$ is $\Omega(n)$ times larger than LHS of Lemma~\ref{lem:tech-p3}, and
this, combined with Lemma~\ref{lem:tech-p1}, implies that information cost of $\protSI$ needs to be $\Omega(\eps^2) \cdot \Omega(n)$, proving Theorem~\ref{thm:tech-SI-tvd}.  

\paragraph{Proof Sketch of Lemma~\ref{lem:tech-p1}.} Let us again consider a protocol $\protSI$ such that $\distribution{\rT \mid \ProtSI}$ is putting $(1+\eps)/n$ mass over $n/2$ elements and $(1-\eps)/n$ mass on the remaining ones.
 Suppose that the correct answer to the instance of $\PI$ is index $1$. We know that in this case, the index $i$ chosen by $\protPI$ will be 
the target index $t$ in the instance $(A,B)$. A key observation here is that the index $j$ however can be any of the coordinates in instance $(A,B)$ other than the target element with the same probability. 
As such, parameters $p_i$ and $p_j$ used to decide the answer in $\protPI$ are distributed as follows: $p_i$ is sampled from $\distribution{\rT \mid \ProtSI}$ and hence has value $(1+\eps)/n$ with probability $(1+\eps)/2$ and $(1-\eps)/n$ with probability
$(1-\eps)/2$. On the other hand, $p_j$ is chosen uniformly at random from $(p_1,\ldots,p_n)$ and hence is $(1+\eps)/n$ or $(1-\eps)/n$ with the same probability of half. Thus $p_i > p_j$
with probability $1/2+\Omega(\eps)$ and hence $\protPI$ has $\Omega(\eps)$ advantage over random guessing. 

The proof of Lemma~\ref{lem:tech-p1} then formalizes the observations above and extend this argument to any protocol $\protSI$ that $\eps$-solves $\SI$ no matter how
it alters the distribution of the target element. 

\paragraph{Proof Sketch of Lemma~\ref{lem:tech-p2}.} We first note that the LHS in Lemma~\ref{lem:tech-p2} is \emph{not} the internal information cost of $\protPI$ 
due to further conditioning on $\rK$ (this term can only be smaller than $\IC{\protPI}{\distPI}$). Hence, Lemma~\ref{lem:tech-p2} 
is proving a ``weaker'' statement than a direct-sum result for information cost of $\protPI$ based on $\protSI$. The reason for settling for this weaker statement has to do with the fact that the coordinates in distribution $\distSI$ are \emph{not} chosen 
independently (see Section~\ref{sec:cc-hpc} for more detail). 

The intuition behind the proof is as follows. The LHS in Lemma~\ref{lem:tech-p3} is the information revealed about the input of players (in $\PI$) averaged over choices of $k=1$ and $k=2$.
Let us assume $k=1$ by symmetry. In this case, this quantity is simply the information revealed about $(x_2,y_2)$ by the protocol as $(x_1,y_1) = (1,1)$ and hence has no entropy. 
However, when $k=1$, $(x_2,y_2)$ is embedded in index $j$, i.e., $(x_2,y_2) = (a_j,b_j)$ and has the same distribution as all other coordinates in $A_{-i},B_{-i}$. 
As such, since the protocol $\protSI$ called inside $\protPI$ is oblivious to the choice of $j$, the information revealed about $(a_j,b_j)$ in average is smaller than the information revealed by $\protSI$ about $A_{-i},B_{-i}$ 
(which itself is at most the information cost of $\protSI$) by a factor of $n-1$.

This outline oversimplifies many details. One such detail is the way of ensuring a ``symmetric treatment'' of both indices $i$ and $j$. This is crucial for the above argument to work for both $k=1$ and $k=2$ cases simultaneously, without the players knowing which index the ``averaging'' of information is being done for (index $j$ in the context of the discussion above). 
The key step in making this information-theoretic argument work is the following public-private sampling: Alice and Bob use public randomness to pick an integer $\ell \in [n-2]$ uniformly at random and then pick a set $S$ of size $\ell$ uniformly at random from $[n] \setminus \set{i,j}$. 
Next, the players sample $a_{i'}$ and $b_{j'}$ for $i' \in S$ and $j' \in ([n] \setminus \set{i,j}) \setminus S$ from $\distSI$ again using public randomness. Finally, each player samples the remaining coordinates in the input 
using private randomness from $\distSI$. Figure~\ref{fig:si-sample} gives an example. 

\input{si-sampling}

\paragraph{Proof Sketch of Lemma~\ref{lem:tech-p3}.}
Let $\Prot_{\disjin{x_1}{x_2}{y_1}{y_2}}$ denote the transcript of the protocol condition on the inputs $(x_1,x_2)$ and $(y_1,y_2)$ to Alice and Bob. 
Suppose towards a contradiction that the LHS of Lemma~\ref{lem:tech-p3} is $o(\eps^2)$. By focusing on the conditional terms when $k=1$, we can
show that distribution of $\Prot_{\disjin{1}{x'_2}{1}{y'_2}}$ and $\Prot_{\disjin{1}{x''_2}{1}{y''_2}}$ for all choices of $(x'_2,y'_2)$ and $(x''_2,y''_2)$ in the support of $\distPI$ are quite close. This is intuitively because the information revealed about $(x_2,y_2)$ by $\protPI$ conditioned on $k=1$ is small (the same result holds for 
$\Prot_{\disjin{x'_2}{1}{y'_2}{1}}$ and $\Prot_{\disjin{x''_2}{1}{y''_2}{1}}$ by $k=2$ terms). 

Up until this point, there is no contradiction as the answer to inputs $(1,*)$,$(1,*)$ to Alice and Bob is always $1$ and hence there is no problem with the corresponding transcripts 
in $\Prot_{\disjin{1}{*}{1}{*}}$ to be similar (similarly for $\Prot_{\disjin{*}{1}{*}{1}}$ separately). However, we combine this  with the cut-and-paste property of randomized protocols  based on Hellinger distance (see Fact~\ref{fact:r-rectangle})
to argue that in fact the distribution of $\Prot_{\disjin{1}{0}{1}{0}}$ and $\Prot_{\disjin{0}{1}{0}{1}}$ are also similar. This then implies that $\Prot_{\disjin{1}{*}{1}{*}}$ essentially has the same distribution as 
$\Prot_{\disjin{*}{1}{*}{1}}$; but then this is a contradiction as the answer to the protocol (which is only a function of the transcript) needs to be different between these two types of inputs.

\section{The Set Intersection Problem}\label{sec:SI}

Starting from this section, we delve into the formal proofs of our results.  
 This section contains our new lower bound for the set intersection problem (stated informally in Theorem~\ref{thm:tech-SI-tvd}). 
Recall that $\SI$ is a two-player communication problem in which Alice and Bob are given sets $A$ and $B$ from $[n]$, respectively, with the promise that there exists a unique
element $t$ such that $\set{t} = A \cap B$. The goal is for Alice and Bob to find $t$, referred to as the \emph{target element}. It is sometimes more convenient to consider the characteristic vector
of sets $A$ and $B$ rather than the sets directly. Hence, with a slight abuse of notation, we write $A := (a_1,\ldots,a_n) \in \set{0,1}^{n}$ and $B:=(b_1,\ldots,b_n) \in \set{0,1}^{n}$ where $a_i = 1$ (resp. $b_i = 1$) 
iff the element $i$ belongs to the set $A$ (resp. to $B$). In this notation, the target element $t$ corresponds to the \emph{unique} index where $(a_t,b_t) = (1,1)$. 

The \SI problem is closely related to the well-known \emph{set disjointness} problem. It is in fact straightforward to prove an $\Omega(n)$ lower bound on the communication complexity of $\SI$ using a simple reduction from the 
set disjointness problem. However, in this paper, we are interested in an algorithmically simpler variant of this problem which we define below. 

\subsection{Problem Statement}\label{sec:SI-def}

Consider the following distribution $\distSI$ for $\SI$. 

\begin{tbox}
    \textbf{Distribution $\distSI$} on sets $(A,B)$ from the universe $[n]$: 

    \begin{enumerate}
        \item Define $\mu$ as the uniform distribution over the set $\set{(0,0),(0,1),(1,0)}$. 
        \item For $i \in [n]$, choose $(a_i,b_i)$ independently from distribution $\mu$. 
        \item Sample an element $t \in [n]$ uniformly at random and change $(a_t,b_t) = (1,1)$. 

    \end{enumerate}
\end{tbox}

Rather than finding the target element $t$, we are only interested in slightly reducing the ``uncertainty'' about its identity as formalized below. 
\begin{definition}\label{def:eps-solve}
    We say that a protocol $\protSI$ \textbf{$\bm{\eps}$-solves} the $\SI$ problem on the distribution $\distSI$ iff 
    \begin{align}
        \Ex_{\ProtSI \sim \rProtSI}\Bracket{\tvd{\distribution{\rT \mid \ProtSI}}{\unif_{[n]}}} \geq \eps, \label{eq:si}
    \end{align} 
    where $\rT$ is the random variable for the target element and $\unif_{[n]}$ is the uniform distribution on $[n]$. 
\end{definition}

Let us first consider two ``extreme examples'' of a protocol that $\eps$-solves $\SI$ and see how much communication is needed to realize each one. 

\begin{example}\label{ex:si-1} One way of ensuring Eq~(\ref{eq:si}) is to have protocols that after communication can rule out  $\Theta(\eps \cdot n)$ elements as candidates for $t$ and leave the 
target element to be uniformly distributed on the remaining $n-\Theta(\eps \cdot n)$ elements. 

Intuitively, such a protocol should require a large communication as it is making a significant ``progress'' towards finding the target element. Indeed, if the communication cost of this protocol is small, we can run this protocol again on the 
remaining candidates and shrink their number further, and continue doing this until we find the target element $t$, without making a large communication. This contradicts the  $\Omega(n)$ communication lower bound for finding the element $t$ exactly. 
\end{example}

\begin{example}\label{ex:si-2} Another way of satisfying Eq~(\ref{eq:si}) is to have protocols that simply change the probability mass of the target element $t$ on half of the elements
from $1/n$ to $(1+\eps)/n$, and on the remaining half from $1/n$ to $(1-\eps)/n$. 

Analyzing the communication cost of such protocols is distinctly more delicate. On the surface, it does not seem that the protocol has made much ``progress'' towards finding the target element $t$ as nearly all elements are still quite likely candidates
for being the target. Hence, to show such protocols require large communication, we now need to go beyond reducing this problem to finding the target element $t$ exactly. Roughly speaking, we show that to be able to make such a change
in distribution of $t$, the protocol needs to communicate non-trivial information for every potential element, hence requiring a large communication again. 
\end{example}

\medskip

In the following, we show that no matter
how a protocol decides to change the variation distance of $t$ from its original distribution, it needs a large communication. However, we also encourage the reader to consider our arguments in the context of the above two examples for 
concreteness. 

\subsection{Communication Complexity of $\eps$-solving \SI}

We prove the following lower bound on the information cost of protocols for $\eps$-solving $\SI$. 

\begin{theorem}\label{thm:SI-tvd} 
    Suppose $\protSI$ is a protocol for $\SI$ on instances $(A,B)$ sampled from $\distSI$.  Let $\ProtSI$ denote the transcript of the protocol $\protSI$. If 
    $\Ex_{\ProtSI \sim \rProtSI}\Bracket{\tvd{\distribution{\rT \mid \ProtSI}}{\unif_{[n]}}} \geq \eps$, i.e., $\protSI$ $\eps$-solves $\SI$,
    then the internal information cost of $\protSI$ on $\distSI$ is $\ICost{\protSI}{\distSI} = \Omega(\eps^2 \cdot n)$.
\end{theorem}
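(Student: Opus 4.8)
The plan is to reduce the task of $\eps$-solving $\SI$ on $\distSI$ to the task of \emph{exactly finding} the unique intersecting coordinate in a two-coordinate variant, which I call $\PI$: Alice and Bob hold $(x_1,x_2),(y_1,y_2)\in\set{0,1}^2$ with a promised unique index $k\in\set{1,2}$ such that $(x_k,y_k)=(1,1)$, under the distribution $\distPI$ which is $\distSI$ specialized to $n=2$. Given a protocol $\protSI$ that $\eps$-solves $\SI$, I would build a protocol $\protPI$ for $\PI$ as follows. Using public randomness the players draw two distinct indices $i,j\in[n]$, plant $(x_1,x_2)$ into coordinates $(a_i,a_j)$ of Alice's input and $(y_1,y_2)$ into $(b_i,b_j)$ of Bob's, and sample the remaining $n-2$ coordinates so that $(A,B)$ is distributed exactly as $\distSI$ with target $i$ when $k=1$ and target $j$ when $k=2$. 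They then run $\protSI$ on $(A,B)$; Bob forms the posterior $\distribution{\rT\mid\ProtSI}=(p_1,\dots,p_n)$ and outputs $1$ if $p_i>p_j$ and $2$ otherwise, breaking ties consistently. The crucial ingredient in the completion step is a \emph{public--private} sampling scheme that renders $i$ and $j$ exchangeable from the protocol's viewpoint: publicly pick $\ell\in[n-2]$ and a uniformly random $S\subseteq[n]\setminus\set{i,j}$ with $\card{S}=\ell$, publicly sample $a_{i'}$ for $i'\in S$ and $b_{j'}$ for $j'\in([n]\setminus\set{i,j})\setminus S$ from the appropriate marginal of $\distSI$, and let each player privately sample its remaining coordinates; one checks this realizes $\distSI$ exactly while keeping $\protSI$ blind to which of $i,j$ carries the ``non-target'' planted coordinate.

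With the reduction fixed, the theorem follows by chaining the three lemmas stated informally in Section~\ref{sec:tech-si}. First I would prove correctness of $\protPI$ (Lemma~\ref{lem:tech-p1}): when $k=1$ the planted index $i$ is the true target, so $p_i$ is distributed as a draw from the posterior $\distribution{\rT\mid\ProtSI}$, while $j$ is a uniformly random non-target coordinate, so $p_j$ is distributed as a uniformly random entry of $(p_1,\dots,p_n)$; since the posterior has expected total variation distance at least $\eps$ from $\unif_{[n]}$, an elementary estimate (insensitive to the particular way the posterior deviates from uniform, with ties absorbed by the consistent tie-breaking) yields that $\protPI$ is correct with probability $\tfrac12+\Omega(\eps)$, and symmetrically for $k=2$. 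Second I would prove the direct-sum upper bound of Lemma~\ref{lem:tech-p2},
\[
\mii{\rX_1,\rX_2}{\rProtPI\mid\rY_1,\rY_2,\rK}{\distPI}+\mii{\rY_1,\rY_2}{\rProtPI\mid\rX_1,\rX_2,\rK}{\distPI}\le\frac{1}{n-1}\cdot\ICost{\protSI}{\distSI}.
\]
The point is that conditioning on $\rK$ destroys the entropy of the planted $(1,1)$ coordinate, so the left-hand side only measures information leaked about the \emph{other} planted coordinate, which --- by the symmetric public--private sampling --- occupies a uniformly random one of the $n-1$ coordinates outside index $i$; hence in expectation the protocol reveals a $1/(n-1)$ fraction of the information it reveals about $A_{-i},B_{-i}$, which is itself at most $\ICost{\protSI}{\distSI}$. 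Third I would prove the lower bound of Lemma~\ref{lem:tech-p3}: any protocol for $\PI$ with success probability $\tfrac12+\Omega(\eps)$ has this conditioned information quantity $\Omega(\eps^2)$. Combining the three gives $\ICost{\protSI}{\distSI}\ge(n-1)\cdot\Omega(\eps^2)=\Omega(\eps^2\cdot n)$, which is the statement of the theorem.

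For Lemma~\ref{lem:tech-p3} I would argue by contradiction in the style of the information-theoretic disjointness lower bounds, but with the extra twist that the inputs here are \emph{intersecting}. If the conditioned information quantity is $o(\eps^2)$, then Pinsker's inequality shows that the transcript distributions $\Prot_{\disjin{1}{x_2}{1}{y_2}}$ are pairwise $o(\eps)$-close in total variation over all $(x_2,y_2)$ in the support of $\distPI$, and likewise the $\Prot_{\disjin{x_1}{1}{y_1}{1}}$ are pairwise close. This alone is consistent, since all inputs of the first type have answer $1$ and all of the second type have answer $2$. The contradiction is produced by the cut-and-paste (Hellinger-rectangle) property of randomized protocols, Fact~\ref{fact:r-rectangle}, which transfers this intra-family closeness into inter-family closeness: it forces $\Prot_{\disjin{1}{0}{1}{0}}$ and $\Prot_{\disjin{0}{1}{0}{1}}$ to be statistically close, hence the whole family $\Prot_{\disjin{1}{x_2}{1}{y_2}}$ to be statistically close to the whole family $\Prot_{\disjin{x_1}{1}{y_1}{1}}$. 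But the output of $\protPI$ is a deterministic function of the transcript and must equal $1$ on the former and $2$ on the latter, contradicting success probability $\tfrac12+\Omega(\eps)$.

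I expect the main obstacle to be Lemma~\ref{lem:tech-p2}, and within it, making the exchangeability of the two planted indices rigorous. Because $\distSI$ is not a product distribution --- there is exactly one $(1,1)$ coordinate, so fixing its location correlates all remaining coordinates --- the usual device of embedding into a fresh block of mutually independent coordinates is unavailable, and the public--private $(\ell,S)$ sampling is precisely the substitute. One has to verify simultaneously that the scheme (i) produces exactly $\distSI$ with target at $i$ (resp.\ $j$) when $k=1$ (resp.\ $k=2$), (ii) leaves $\protSI$ genuinely symmetric in $i$ and $j$, so that the information ``averaging'' can be carried out over index $j$ when $\rK=1$ and over index $i$ when $\rK=2$ without the players knowing which, and (iii) yields the clean $1/(n-1)$ loss factor for both values of $\rK$ at once. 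The remaining pieces are comparatively routine: Lemma~\ref{lem:tech-p1} is an elementary probabilistic computation and Lemma~\ref{lem:tech-p3} is a fairly standard application of the Hellinger-distance toolkit for two-party randomized communication.
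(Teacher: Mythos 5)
Your proposal is correct and follows essentially the same route as the paper: the same reduction from $\SI$ to $\PI$ with the same public--private $(\ell,S)$ sampling to make the two planted indices exchangeable, the same three lemmas (correctness $\tfrac12+\Omega(\eps)$, the $\tfrac{1}{n-1}$ direct-sum bound conditioned on $\rK$, and the $\Omega(\eps^2)$ lower bound for $\PI$ via cut-and-paste), and the same final combination. You also correctly identify the main technical obstacle --- the non-product structure of $\distSI$ forcing the weaker, $\rK$-conditioned information measure --- which is precisely where the paper's argument departs from standard direct-sum proofs.
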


We shall remark that for our purpose, we crucially use the fact that the lower bound in Theorem~\ref{thm:SI-tvd} is for the internal information cost and for the distribution $\distSI$. However, 
as information cost lower bounds communication cost by Proposition~\ref{prop:cc-ic}, this immediately implies that communication complexity of $\SI$ is also large, which is of independent interest. 

\begin{corollary}\label{cor:SI-tvd}
    Any protocol $\protSI$ for $\eps$-solving $\SI$ on distribution $\distSI$ needs to communicate $\Omega(\eps^2 \cdot n)$ bits of communication, i.e., $\CC{\prot}{\dist} = \Omega(\eps^2 \cdot n)$. 
\end{corollary}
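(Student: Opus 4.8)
At the level of Corollary~\ref{cor:SI-tvd} itself there is nothing to do beyond one standard implication: internal information cost lower bounds communication cost (Proposition~\ref{prop:cc-ic}), so $\CC{\protSI}{\distSI} \ge \ICost{\protSI}{\distSI}$ for any protocol $\protSI$ run on $\distSI$. If $\protSI$ $\eps$-solves $\SI$ on $\distSI$, then Theorem~\ref{thm:SI-tvd} gives $\ICost{\protSI}{\distSI} = \Omega(\eps^2 n)$, and chaining the two bounds yields $\CC{\protSI}{\distSI} = \Omega(\eps^2 n)$. So the actual work, and where I would spend all the effort, is Theorem~\ref{thm:SI-tvd}; I outline how I would prove that, after which the corollary is immediate as above.

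Following the plan behind Theorem~\ref{thm:tech-SI-tvd}, I would reduce $\SI$ on $\distSI$ to the two-coordinate problem $\PI$ on $\distPI$. Given a protocol $\protSI$ that $\eps$-solves $\SI$, build a protocol $\protPI$ for $\PI$ as follows: Alice and Bob publicly draw distinct $i\ne j$ in $[n]$ uniformly, plant their $\PI$ inputs $(x_1,x_2),(y_1,y_2)$ into coordinates $i,j$ of a fresh instance $(A,B)$, and sample the other $n-2$ coordinates (via the public/private split described below) so that $(A,B)\sim\distSI$ with target $i$ when $k=1$ and target $j$ when $k=2$; then they run $\protSI$ on $(A,B)$, Bob computes the posterior $\distribution{\rT \mid \ProtSI}=(p_1,\ldots,p_n)$, and outputs $k=1$ if $p_i>p_j$ and $k=2$ otherwise (breaking ties consistently). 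The theorem then follows by combining three facts: (i) $\protPI$ is correct with probability $\frac12+\Omega(\eps)$ (Lemma~\ref{lem:tech-p1}); (ii) a certain $\rK$-conditioned ``information cost'' of $\protPI$ is at most $\frac{1}{n-1}\cdot\IC{\protSI}{\distSI}$ (Lemma~\ref{lem:tech-p2}); and (iii) every $\PI$-protocol with advantage $\Omega(\eps)$ has $\rK$-conditioned information cost $\Omega(\eps^2)$ (Lemma~\ref{lem:tech-p3}). Together these force $\IC{\protSI}{\distSI}=\Omega(\eps^2 n)$.

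For (i) the point is that conditioned on the true answer $k=1$ the planted coordinate $i$ is the actual target $t$ of $(A,B)$, so $p_i$ is a sample from $\distribution{\rT\mid\ProtSI}$, whereas $j$ is a uniformly random non-target coordinate, so $p_j$ is a uniformly random entry of $(p_1,\ldots,p_n)$; an elementary inequality shows that a sample from a distribution that is $\eps$-far from uniform in total variation beats a uniformly random entry with probability $\frac12+\Omega(\eps)$. For (ii) the non-product structure of $\distSI$ (its coordinates are coupled through the unique $(1,1)$) rules out a plain direct sum, so I would additionally condition on $\rK$ and on a public ``which-half-is-private'' choice of fillers: publicly pick $\ell\in[n-2]$ and a random $\ell$-subset $S$ of $[n]\setminus\set{i,j}$, sample $\{a_{i'}\}_{i'\in S}$ and $\{b_{j'}\}_{j'\notin S}$ publicly and the rest privately, chosen symmetrically in $i$ and $j$ so that conditioned on $k=1$ the pair $(x_2,y_2)=(a_j,b_j)$ is exchangeable with every other filler coordinate. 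Then the information $\protSI$ leaks about $(a_j,b_j)$ is on average a $1/(n-1)$ fraction of what it leaks about all $n-1$ non-$i$ coordinates together, which is at most $\IC{\protSI}{\distSI}$; the symmetry in $i,j$ is exactly what lets the same oblivious $\protSI$ serve both the $k=1$ and $k=2$ cases.

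The main obstacle is Lemma~\ref{lem:tech-p3}, because conditioning on $\rK$ destroys the entropy of one of the two coordinates, so a small $\rK$-conditioned information cost does not immediately contradict correctness. My plan: a small $\rK$-conditioned information cost forces, in Hellinger and hence in total variation, the transcript $\Prot_{\disjin{1}{x_2}{1}{y_2}}$ to be nearly independent of $(x_2,y_2)$ and $\Prot_{\disjin{x_1}{1}{y_1}{1}}$ nearly independent of $(x_1,y_1)$; then apply the cut-and-paste property of randomized protocols (Fact~\ref{fact:r-rectangle}) to transport these relations and conclude $\Prot_{\disjin{1}{0}{1}{0}}\approx\Prot_{\disjin{0}{1}{0}{1}}$, i.e. the transcript on a $k=1$ input and on a $k=2$ input are statistically close; since the output of $\protPI$ is a function of the transcript and must disagree on these two, the advantage cannot be $\Omega(\eps)$ unless the $\rK$-conditioned information cost is $\Omega(\eps^2)$. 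The quadratic loss is the price of the Hellinger-to-total-variation conversions along this chain, and nailing down that $\Omega(\eps^2)$ is the one genuinely delicate estimate; everything else is bookkeeping, after which Corollary~\ref{cor:SI-tvd} follows in one line as in the first paragraph.
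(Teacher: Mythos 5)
Your proposal is correct and takes essentially the same route as the paper: the corollary is exactly the one-line chaining of Theorem~\ref{thm:SI-tvd} with Proposition~\ref{prop:cc-ic}, and your outline of the underlying theorem (reduction to $\PI$ via the symmetric public/private sampling, the three lemmas, and the cut-and-paste argument) mirrors the paper's proof.
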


One standard approach to proving the lower bound in Theorem~\ref{thm:SI-tvd} is to reduce the \SI problem---via a direct-sum type argument---to \emph{many} instances of a \emph{simpler} problem, and then prove the lower bound for the
simpler problem directly. To do so, we reduce $\SI$ to the same problem on only two coordinates, which we refer to as the \emph{pair intersection} problem, denoted by $\PI$. 
In $\PI$, Alice and Bob are given tuples $(x_1,x_2) \in \set{0,1}^{2}$ and $(y_1,y_2) \in \set{0,1}^{2}$, respectively (we 
also use the concise notation $\disjin{x_1}{x_2}{y_1}{y_2}$ to denote the joint inputs to the players), with the promise that there exists a unique index $k \in \set{1,2}$ such that $(x_k,y_k) = (1,1)$. The goal is to output 
the index $k$. Note that this problem is equivalent to 
$\SI$ when $n=2$ modulo the fact that here we actually care about finding $k$ as opposed to $\eps$-solving (to avoid ambiguity, we use $k$ to denote the target element for $\PI$ and $t$ for $\SI$). 
Consider the following distribution which is equivalent to $\distSI$ for $n=2$.

\begin{tbox}
    \textbf{Distribution $\distPI$} on tuples $(x_1,x_2)$ and $(y_1,y_2)$ from $\set{0,1}^2$. 
    \begin{enumerate}
        \item For $i \in \set{1,2}$, choose $(x_i,y_i)$ uniformly at random from distribution $\mu$ (defined in $\distSI$). 
        \item Pick $k \in \set{1,2}$ uniformly at random and change $(x_k,y_k)$ to $(1,1)$.
    \end{enumerate}
\end{tbox}

We prove that any protocol that $\eps$-solves $\SI$ on $\distSI$ with internal information cost $o(\eps^2 \cdot n)$ bits can be used to obtain a protocol for 
$\PI$ that only reveals $o(\eps^2)$ bits of information about the input (with respect to distribution $\distPI$) but is able to solve this problem with probability at least $1/2+\eps$ on distribution $\distPI$. We then prove that such
a protocol cannot exist for $\PI$. We should note that the notion of information revealed for $\PI$ that we use is rather non-standard (it neither corresponds to internal information cost nor to external information cost that are typically studied). We elaborate more on this later in Lemma~\ref{lem:protDisj-info-cost}.

\subsection*{Proof of Theorem~\ref{thm:SI-tvd}}
In the following, let $\protSI$ be any protocol for $\SI$ that satisfies Eq~(\ref{eq:si}), i.e., $\eps$-solves $\SI$ on $\distSI$. We use this protocol to obtain
a protocol $\protPI$ for $\PI$.

\begin{tbox}
    \textbf{Protocol $\protPI$}: The protocol for $\PI$ using a protocol $\protSI$ for $\SI$. 

    \smallskip

    \textbf{Input:} An instance $\disjin{x_1}{x_2}{y_1}{y_2} \sim \distPI$. \\
    \textbf{Output:} $k \in \set{1,2}$ as the answer to $\PI$.

    \algline

    \begin{enumerate}
        \item \textbf{Sampling the instance.} The players create an instance $(A,B)$ of $\SI$ as follows (see Figure~\ref{fig:si-sample} on page~\pageref{fig:si-sample} for an illustration):
            \begin{enumerate}
                \item Using \underline{public coins}, Alice and Bob sample $i,j \in [n]$ uniformly without replacement. 
                \item Alice sets $a_i = x_1$ and $a_j = x_2$ and Bob sets $b_i = y_1$ and $b_j = y_2$, using their given inputs in $\PI$. 
                \item Using \underline{public coins}, Alice and Bob sample $\ell \in \set{0,1,\ldots,n-2}$ uniformly at random and then pick an $\ell$-subset $S$ of $[n] \setminus \set{i,j}$ uniformly at random. Let $\barS := ([n] \setminus \set{i,j}) \setminus S$.
                \item Using \underline{public coins}, Alice and Bob sample $A_{S},B_{\barS}$ independently from distribution $\mu$ (defined in $\distSI$). 
                \item Using \underline{private coins}, Alice samples the remaining coordinates in $A_{\barS}$ so that joint distribution of each coordinate is $\mu$. Similarly, Bob samples the coordinates in $B_S$. 
            \end{enumerate}
        \item \textbf{Computing the answer.} Alice and Bob run the protocol $\protSI$ on $(A,B)$ and let $\ProtSI$ be the transcript of the protocol. They compute the answer to $\PI$ as follows: 
            \begin{enumerate}
                \item The players compute the distribution $\distribution{\rT \mid \ProtSI} = (p_1,\ldots,p_n)$ where $\rT$ denotes the random variable for the target element of $\SI$.  

                \item Fix a total ordering $\comp$ on $[n]$ such that for $x \neq y \in [n]$, $x \comp y$ iff $p_x > p_y$ or $p_x = p_y$ and $x > y$. We use $x \compr y$ to mean $y \comp x$.

                \item Return $1$ if $i \comp j$ and $2$ otherwise.  
            \end{enumerate}
    \end{enumerate}
\end{tbox}

The following  observations are in order. Firstly, we note that the rather peculiar way of sampling the instances $(A,B)$ in $\protPI$ via public and private randomness is only for the purpose of making the information-theoretic arguments needed to reduce $\SI$ to $\PI$ work; for the purpose of correctness of the reduction, we only need the fact that these instances are sampled from $\distSI$ as captured by the following observation. 

\begin{observation}\label{obs:dist-PI-SI}
    For an input $\disjin{x_1}{x_2}{y_1}{y_2} \sim \distPI$, the distribution of the instances $(A,B)$ constructed in $\protPI$ is $\distSI$, where target $t=i$ when $x_1 \wedge y_1 = 1$ and target $t=j$ when $x_2 \wedge y_2 = 1$. 
\end{observation}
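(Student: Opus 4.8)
The plan is to verify directly that the sampling procedure in Step~1 of $\protPI$ produces an instance distributed exactly as $\distSI$, with the stated identification of the target. First I would recall what $\distSI$ requires: there is a uniformly random target $t \in [n]$, the coordinate $(a_t, b_t)$ is set to $(1,1)$, and all \emph{other} coordinates are i.i.d.\ samples from $\mu = \unif\set{(0,0),(0,1),(1,0)}$, mutually independent of the choice of $t$. So I need to check three things about the output of $\protPI$: (i) conditioned on the location of the target being some fixed index, the remaining $n-1$ coordinates are i.i.d.\ $\mu$; (ii) the target index is uniform on $[n]$; and (iii) the target index coincides with $i$ exactly when $x_1 \wedge y_1 = 1$ (i.e.\ $k=1$) and with $j$ exactly when $x_2 \wedge y_2 = 1$ (i.e.\ $k=2$).

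For (iii): since $\disjin{x_1}{x_2}{y_1}{y_2} \sim \distPI$, exactly one of the pairs $(x_1,y_1)$, $(x_2,y_2)$ equals $(1,1)$, and the other is a fresh sample from $\mu$ (hence not $(1,1)$). After Step~1(b), $(a_i,b_i) = (x_1,y_1)$ and $(a_j,b_j) = (x_2,y_2)$. Every other coordinate $(a_m,b_m)$ with $m \notin \set{i,j}$ is, by construction in Steps~1(c)--1(e), a sample from $\mu$ (I will argue this marginal is correct in the next step), so it is never $(1,1)$. Hence the unique $(1,1)$ coordinate of $(A,B)$ is at position $i$ if $k=1$ and at position $j$ if $k=2$; this gives the target identification. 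For the marginal claim used here: for each $m \notin \set{i,j}$, either $m \in S$ and then $a_m$ is drawn publicly from the $a$-marginal of $\mu$ while $b_m$ is drawn privately by Bob conditioned on $a_m$ so that $(a_m,b_m) \sim \mu$; or $m \in \barS$ and the symmetric statement holds with Alice completing the pair. Either way $(a_m,b_m) \sim \mu$ marginally, and these pairs are mutually independent across $m$ and independent of everything else (the only shared randomness is the choice of $i,j,\ell,S$, which I condition on).

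For (i) and (ii): conditioning on $k$ (equivalently, on which of $i,j$ is the target), the non-target coordinates are: the other of $\set{i,j}$, whose value is a fresh $\mu$-sample (the non-$(1,1)$ pair of the $\distPI$ instance), plus all $n-2$ coordinates outside $\set{i,j}$, each an independent $\mu$-sample by the previous paragraph. That is $n-1$ i.i.d.\ $\mu$ coordinates, matching $\distSI$. Finally, for uniformity of the target: $i$ and $j$ are sampled uniformly without replacement from $[n]$ and, crucially, independently of the $\distPI$ instance (in particular of $k$); so the target, which is $i$ or $j$ according to $k$, is uniform on $[n]$ and independent of the non-target coordinates' values, exactly as in $\distSI$. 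I do not anticipate a genuine obstacle here — the statement is a routine but slightly fiddly bookkeeping check; the only point requiring care is confirming that the public/private split in Steps~1(c)--1(e) is engineered so that every coordinate outside $\set{i,j}$ has the correct $\mu$ marginal and is independent of the rest, which is precisely why that split was designed the way it was (its real purpose, as the text notes, is the later information-theoretic argument, not this observation).
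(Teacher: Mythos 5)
Your verification is correct, and it is exactly the routine bookkeeping the paper has in mind: the paper states this observation without proof, treating it as immediate from the construction, and your three checks (non-$\{i,j\}$ coordinates are i.i.d.\ $\mu$ after the public/private split, the unique $(1,1)$ sits at $i$ or $j$ according to $k$, and the target index is uniform since $(i,j)$ is a uniform ordered pair independent of $k$) are precisely what make it immediate. No gaps.
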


The following observation states a key property of the ``non-target'' index in $\distPI$. 

\begin{observation}\label{obs:dist-PI-SI-hidden}
    Conditioned on $x_1 \wedge y_1 = 0$ and any fixed choice of $(A,B)$, the index $i$ in $\protPI$ is {uniformly distributed} on $[n] \setminus \set{j}$ (similarly for index $j$ if $x_2 \wedge y_2 = 0$). 
\end{observation}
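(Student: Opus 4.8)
Observation~\ref{obs:dist-PI-SI-hidden} — my proof plan.

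The plan is to unpack the sampling procedure in $\protPI$ and observe that, from the perspective of the final object $(A,B)$, the roles of the index $i$ and every index in $[n]\setminus\set{j}$ are completely symmetric once we condition on the event $x_1 \wedge y_1 = 0$. First I would fix the event and the target: by Observation~\ref{obs:dist-PI-SI}, when $x_1 \wedge y_1 = 0$ we have $x_2 \wedge y_2 = 1$, so the target element of the constructed $\SI$ instance is $t = j$. Thus $j$ is distinguished (it carries the $(1,1)$ pair), but $i$ is just ``the other special slot'' that received $(x_1,y_1)$, which under our conditioning is an ordinary $\mu$-distributed pair — exactly like every coordinate in $[n]\setminus\set{i,j}$.

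Next I would make the symmetry precise by a change-of-variables / relabelling argument. Condition additionally on the value of $j$ and on the full instance $(A,B)$. The randomness that remains to be accounted for is the public choice of the ordered pair $(i,j)$ (uniform over ordered pairs without replacement), the public choice of $\ell$ and the $\ell$-subset $S\subseteq [n]\setminus\set{i,j}$, and the public/private draws filling in the remaining coordinates. For any candidate value $i_0 \in [n]\setminus\set{j}$, I claim the conditional probability that $i = i_0$ given $(A,B)$, $j$, and $x_1\wedge y_1 = 0$ is the same for all $i_0$. To see this, note that for a fixed realized $(A,B)$ the set of coordinate-values is determined; the only thing that changes when we swap which coordinate was ``$i$'' versus which was ``a coordinate of $[n]\setminus\set{i,j}$'' is an internal relabelling of which public/private coins produced which entry, and the sampling distributions of all those entries are identical (each coordinate's marginal is $\mu$, and the $S$ vs.\ $\barS$ split is over a uniformly random subset, so no coordinate outside $\set{i,j}$ is treated preferentially). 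Hence a measure-preserving bijection on the underlying randomness maps the event $\set{i = i_0}$ to $\set{i = i_0'}$ while preserving $(A,B)$ and $j$, giving equal conditional probabilities; since these probabilities sum to $1$ over the $n-1$ choices of $i_0 \in [n]\setminus\set{j}$, each equals $1/(n-1)$. The symmetric statement for $j$ under $x_2\wedge y_2 = 0$ follows by exchanging the roles of the two $\PI$-coordinates.

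The one point that needs care — and is the main (mild) obstacle — is that the split into $S$ and $\barS$ is \emph{not} symmetric between Alice's and Bob's sides (on $S$, Alice's coordinates are public and Bob's private; on $\barS$, vice versa), so I must check that this asymmetry does not break the coordinate-relabelling symmetry \emph{within} $[n]\setminus\set{j}$. It does not: for any two indices $i_0, i_0' \in [n]\setminus\set{j}$, conditioned on $(A,B)$ and on $\ell$, the index $i_0$ lies in $S$ with probability $\ell/(n-1)$ and so does $i_0'$ (and likewise for membership in $\barS$), independently of their actual $\set{0,1}^2$-values given $(A,B)$; so the pair $(a_{i_0}, b_{i_0})$ and $(a_{i_0'}, b_{i_0'})$ have exchangeable joint laws even accounting for the public/private bookkeeping. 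Therefore the relabelling bijection above is genuinely measure-preserving, and the conclusion that $i \sim \unif_{[n]\setminus\set{j}}$ stands. I would present this as a short formal argument: condition on $(A,B)$, invoke exchangeability of the $n-1$ non-$j$ coordinates under $\distSI$ restricted to the event $x_1 \wedge y_1 = 0$, and conclude uniformity of the label $i$ among them.
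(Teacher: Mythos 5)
Your proof is correct and rests on exactly the same observation as the paper's one-line argument: conditioned on $x_1\wedge y_1=0$, the pair $(a_i,b_i)$ is distributed according to $\mu$ just like every coordinate in $[n]\setminus\set{i,j}$, so the joint law of $(i,(A,B))$ given $j$ is exchangeable over the $n-1$ non-target slots and $i$ is uniform on $[n]\setminus\set{j}$. Your extra care about the public/private split of $S$ versus $\barS$ is harmless but unnecessary here, since only the marginal distribution of $(i,j,A,B)$ matters for this statement, not which coins produced which entry.
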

\begin{proof}
    Conditioned on $x_1 \wedge y_1 = 0$, the distribution of $(a_i,b_i)$ in $(A,B)$ is $\mu$, the same as all other indices except for $j$. 
\end{proof}

The proof of Theorem~\ref{thm:SI-tvd} consists of three main steps: bounding the error probability of protocol $\protPI$, analyzing the information cost of $\protPI$ in terms of information cost of $\protSI$, 
and proving a lower bound on the information cost of $\protPI$ based on its error probability. 
Formally, in the first step we prove that: 

\begin{lemma}[Correctness of $\protPI$]\label{lem:protDisj-correctness}
    For instances sampled from $\distPI$, $\protPI$ outputs the correct answer with probability at least $\frac{1}{2} + \Omega(\eps)$ (over the randomness of the distribution and the protocol). 
\end{lemma}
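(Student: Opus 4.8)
The plan is to condition on the target index $\rK\in\set{1,2}$, which is uniform under $\distPI$, and then exploit the symmetry between the two coordinates: conditioned on $\rK=2$ everything is identical to the case $\rK=1$ after swapping the roles of the indices $i,j$ and of the two possible outputs, so I only need to show $\protPI$ is correct with probability at least $\tfrac12+\Omega(\eps)$ conditioned on $\rK=1$. So fix $\rK=1$. By Observation~\ref{obs:dist-PI-SI} the instance $(A,B)$ fed to $\protSI$ has law $\distSI$ with target $\rT=i$, and since $x_2\wedge y_2=0$ in this case, Observation~\ref{obs:dist-PI-SI-hidden} (applied to the index $j$) says that, conditioned on the realized $(A,B)$, the index $j$ is uniform on $[n]\setminus\set{\rT}$ and independent of the private coins of $\protSI$. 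Consequently, once I additionally condition on the transcript $\ProtSI$ -- which determines the posterior $(p_1,\dots,p_n)=\distribution{\rT\mid\ProtSI}$ and hence the ordering $\comp$ -- the true target $\rT$ is distributed as $(p_1,\dots,p_n)$ (here I use that conditioned on $\rK=1$ the pair $((A,B),\rT)$ has law exactly $\distSI$, so $\ProtSI$ reveals about $\rT$ exactly what the posterior does and nothing about which of $i,j$ is the target), while $j$ stays uniform on $[n]\setminus\set{\rT}$. Since $\protPI$ outputs the correct value $1$ exactly when $i=\rT\comp j$, writing $\textnormal{pos}(\ell)\in[n]$ for the rank of $\ell$ in $\comp$ (with $\textnormal{pos}(\ell)=1$ for the $\comp$-largest element), this gives, for each fixed transcript,
\begin{align*}
\Pr\left[\protPI\ \textnormal{correct}\mid\ProtSI,\rK=1\right]\;=\;\sum_{\ell\in[n]}p_\ell\cdot\frac{n-\textnormal{pos}(\ell)}{n-1}.
\end{align*}

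Next I would convert the right-hand side into something involving $\tvd{\distribution{\rT\mid\ProtSI}}{\unif_{[n]}}$. Let $\sigma$ be the permutation with $\sigma(1)\comp\sigma(2)\comp\cdots\comp\sigma(n)$ (so $p_{\sigma(1)}\ge\cdots\ge p_{\sigma(n)}$), put $q_m:=p_{\sigma(m)}-\tfrac1n$ and $Q_m:=\sum_{l\le m}q_l$, so $Q_0=Q_n=0$. Reindexing the sum by position and using $\sum_m q_m=0$, a routine calculation gives
\begin{align*}
\sum_{\ell\in[n]}p_\ell\cdot\frac{n-\textnormal{pos}(\ell)}{n-1}\;=\;\frac{1}{n-1}\sum_{m=1}^{n}(n-m)\,p_{\sigma(m)}\;=\;\frac12+\frac{1}{n-1}\sum_{m=1}^{n-1}Q_m,
\end{align*}
so the whole lemma reduces to proving $\Ex_{\ProtSI\sim\rProtSI}\bigl[\sum_{m=1}^{n-1}Q_m\bigr]=\Omega(\eps\cdot n)$.

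The hard part -- and really the only step needing genuine care -- is the following elementary estimate relating the total-variation mass of the posterior to the advantage of the comparison-based decision rule. The sequence $(Q_m)_{m=0}^{n}$ is non-negative, vanishes at both endpoints, and is concave (its increments $q_m$ are non-increasing because the $p_{\sigma(m)}$ are), and its peak equals the total-variation gap: $\max_m Q_m=\sum_{l:\,q_l>0}q_l=\tfrac12\sum_l\lvert q_l\rvert=\tvd{\distribution{\rT\mid\ProtSI}}{\unif_{[n]}}$. For any such sequence, concavity forces $Q_m$ to lie above the two line segments of the ``tent'' that interpolates the values $0,\max_m Q_m,0$ at $0,m^\star,n$ (where $m^\star$ attains the peak); summing those two arithmetic progressions yields $\sum_{m=1}^{n-1}Q_m\ge\frac{n}{2}\cdot\max_m Q_m$. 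Substituting, taking expectations over $\ProtSI$, and invoking the hypothesis that $\protSI$ $\eps$-solves $\SI$ (Definition~\ref{def:eps-solve}), i.e.\ $\Ex_{\ProtSI\sim\rProtSI}[\tvd{\distribution{\rT\mid\ProtSI}}{\unif_{[n]}}]\ge\eps$, gives
\begin{align*}
\Pr\left[\protPI\ \textnormal{correct}\mid\rK=1\right]\;\ge\;\frac12+\frac{n}{2(n-1)}\cdot\Ex_{\ProtSI\sim\rProtSI}\Bigl[\tvd{\distribution{\rT\mid\ProtSI}}{\unif_{[n]}}\Bigr]\;\ge\;\frac12+\frac{\eps}{2}.
\end{align*}
By the symmetry noted at the outset the same bound holds conditioned on $\rK=2$, so $\protPI$ is correct with probability at least $\tfrac12+\tfrac{\eps}{2}=\tfrac12+\Omega(\eps)$. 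Beyond the ``tent'' estimate, the remaining work is just bookkeeping about independence of random variables -- in particular verifying that, conditioned on the realized instance, the ``decoy'' index is a uniformly random non-target coordinate (Observation~\ref{obs:dist-PI-SI-hidden}) and that the conditional law of $\rT$ given $\ProtSI$ under $\distPI|_{\rK=1}$ coincides with the posterior $\distribution{\rT\mid\ProtSI}$ under $\distSI$.
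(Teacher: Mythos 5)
Your proof is correct and follows essentially the same route as the paper's: condition on the target index, observe that the true target is distributed according to the posterior $(p_1,\dots,p_n)$ given the transcript while the decoy index is uniform over the non-target coordinates (Observations~\ref{obs:dist-PI-SI} and~\ref{obs:dist-PI-SI-hidden}), and then show the rank-comparison rule has advantage $\Omega(\delta)$ where $\delta$ is the total variation distance of the posterior from uniform. The only difference is in the final elementary estimate: the paper splits coordinates into those with $p_\ell < 1/n$ and the rest and applies a Chebyshev-type sum inequality (Proposition~\ref{prop:rearr}), whereas you use concavity of the partial sums of the sorted deviations and a tent lower bound --- both yield the same $\frac{n\delta}{2(n-1)}$ advantage.
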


In the second step, we show that: 
\begin{lemma}[Information cost of $\protPI$]\label{lem:protDisj-info-cost}
    Let $\rProtPI$ denote the random variable for the transcript of the protocol $\protPI$ and $\rK$ be the random variable for the index $k$ in distribution $\distPI$. We have, 
    \begin{align*}
        \mii{\rX_1,\rX_2}{\rProtPI \mid \rY_1,\rY_2, \rK}{\distPI} + \mii{\rY_1,\rY_2}{\rProtPI \mid \rX_1,\rX_2, \rK}{\distPI} &\leq \frac{1}{n-1} \cdot \ICost{\protSI}{\distSI}.
    \end{align*}
\end{lemma}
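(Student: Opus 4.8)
The plan is to reduce the four mutual-information terms on the left to a single ``single-coordinate'' quantity and then charge that quantity against the internal information cost of $\protSI$ by a direct-sum averaging over the $n-1$ non-target coordinates of the \SI-instance. The first step is a symmetrization. Conditioned on $\rK=1$ the pair $(\rX_1,\rY_1)$ equals $(1,1)$ deterministically, and conditioned on $\rK=2$ the pair $(\rX_2,\rY_2)$ equals $(1,1)$ deterministically; moreover both $\distPI$ and the construction of $\protPI$ are invariant under exchanging the two \PI-coordinates (the ordered pair $(i,j)$ is exchangeable and everything else in $\protPI$ depends only on $\set{i,j}$). Expanding over $\rK$ and discarding the constant coordinates therefore collapses $\mii{\rX_1,\rX_2}{\rProtPI \mid \rY_1,\rY_2,\rK}{\distPI}$ to $\mii{\rX_2}{\rProtPI \mid \rY_2, \rK=1}{\distPI}$ and likewise $\mii{\rY_1,\rY_2}{\rProtPI \mid \rX_1,\rX_2,\rK}{\distPI}$ to $\mii{\rY_2}{\rProtPI \mid \rX_2, \rK=1}{\distPI}$, so it suffices to prove $\mii{\rX_2}{\rProtPI \mid \rY_2, \rK=1}{} \le \tfrac{1}{n-1}\cdot\mii{\rProtSI}{A \mid B}{\distSI}$ together with its mirror image for the $\rY_2$-term.

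Next I would ``decode'' how the \PI-instance is embedded inside the \SI-instance. Write $\rProtPI=(\Rpub,\rProtSI)$, where $\Rpub$ is the public randomness of $\protPI$ (the pair $i,j$, the integer $\ell$, the set $S$, and the publicly sampled coordinates $A_S,B_{\barS}$) and $\rProtSI$ is the transcript of the inner call to $\protSI$ on $(A,B)$. Since $\Rpub$ is independent of the \PI-input it contributes nothing to the mutual information, and by Observation~\ref{obs:dist-PI-SI}, conditioned on $\rK=1$ the instance $(A,B)$ has exactly the law of $\distSI$ conditioned on target $\rT=i$, with $(\rX_2,\rY_2)=(a_j,b_j)$. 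The whole point of the public/private sampling in $\protPI$ is that, with $\Rpub$ held fixed, sampling the remaining coordinates ($a_\cdot$ privately on $\barS$, $b_\cdot$ privately on $S$) reconstructs the conditional law of $\distSI\mid(\rT=i)$ given $(b_j,A_S,B_{\barS})$; this should yield the identity
\[
  \mii{\rX_2}{\rProtPI \mid \rY_2, \rK=1}{} \;=\; \Ex_{i,j,\ell,S}\Bracket{\mii{a_j}{\rProtSI \mid b_j,\,A_S,\,B_{\barS},\,\rT=i}{\distSI}},
\]
where $i$ is uniform, $j$ is uniform in $[n]\setminus\set{i}$, $\ell$ is uniform in $\set{0,\dots,n-2}$, and $S$ is a uniform $\ell$-subset of $[n]\setminus\set{i,j}$.

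For the last step, fix the target $i$ and set $L_i:=[n]\setminus\set{i}$; conditioned on $\rT=i$ the coordinates of $L_i$ are i.i.d.\ copies of $\mu$. Drawing $(j,\ell,S)$ as above is equivalent to drawing a uniform linear order $l_1\prec\cdots\prec l_{n-1}$ of $L_i$, taking $j$ at a uniformly random position $m$, and setting $S=\set{l_1,\dots,l_{m-1}}$ and $\barS=\set{l_{m+1},\dots,l_{n-1}}$, so the inner expectation equals
\[
  \frac{1}{n-1}\,\Ex_{\prec}\Bracket{\sum_{m=1}^{n-1}\mii{a_{l_m}}{\rProtSI \mid b_{l_m},\,a_{l_1},\dots,a_{l_{m-1}},\,b_{l_{m+1}},\dots,b_{l_{n-1}},\,\rT=i}{\distSI}}.
\]
Using that the coordinates of $L_i$ are conditionally independent given $\rT$, adding $b_{l_1},\dots,b_{l_{m-1}}$ into the conditioning of the $m$-th summand cannot decrease it (one is conditioning on variables independent of $a_{l_m}$ given what is already conditioned on), and the sum then becomes exactly the chain-rule expansion of $\mii{A_{L_i}}{\rProtSI \mid B_{L_i},\rT=i}{\distSI}$ along $\prec$. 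Averaging over $i$ and using that $\rT$ is a deterministic function of $(A,B)$ gives $\Ex_i\bigl[\mii{A_{L_i}}{\rProtSI \mid B_{L_i},\rT=i}{\distSI}\bigr]=\mii{\rProtSI}{A\mid B,\rT}{\distSI}\le\mii{\rProtSI}{A\mid B}{\distSI}$, which establishes the $\tfrac{1}{n-1}$ bound; the symmetric computation bounds the $\rY_2$-term by $\tfrac{1}{n-1}\mii{\rProtSI}{B\mid A}{\distSI}$, and adding the two proves the lemma.

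The step I expect to be the main obstacle is the decoding identity of the second paragraph. Its validity hinges on the public/private sampling being arranged so that conditioning on $(b_j,A_S,B_{\barS})$ both reconstitutes a genuine $\distSI$-sample conditioned on the target \emph{and} produces mutual information of the internal ``what $\rProtSI$ reveals about $a_j$ given $b_j$'' flavour, which is precisely what lets the chain rule of the third step land inside $\ICost{\protSI}{\distSI}$ rather than a strictly larger external-cost quantity. A routing of the non-$\set{i,j}$ coordinates that is not a uniform-length prefix of a uniform order would destroy the prefix/suffix pattern that makes the third step's chain rule close up cleanly.
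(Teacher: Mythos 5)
Your proposal is correct and follows essentially the same route as the paper: expand over $\rK$, reduce to the information revealed about the embedded coordinate $(a_j,b_j)$ after conditioning on the public randomness, recognize the $(j,\ell,S)$ sampling as a uniform position in a uniform ordering of $[n]\setminus\set{i}$ (this is exactly the paper's Claim~\ref{clm:aux-ds-sum}), add the independent $b$-coordinates to the conditioning, close up via the chain rule, and drop the conditioning on $\rT$. The ``decoding identity'' you flag as the main obstacle is handled in the paper precisely as you anticipate, via Proposition~\ref{prop:public-random} together with Observation~\ref{obs:dist-PI-SI}.
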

\noindent
The LHS in Lemma~\ref{lem:protDisj-info-cost} is \emph{not} the internal information cost of $\protPI$ 
due to further conditioning on $\rK$. In fact, it is not hard to show that this quantity can only be smaller than the internal information cost of $\protPI$. Hence, Lemma~\ref{lem:protDisj-info-cost} 
is proving a ``weaker'' statement than a direct-sum result for internal information cost of $\protPI$ based on $\protSI$. The reason for settling for this weaker statement has to do with the fact that the coordinates in distribution $\distSI$ are \emph{not} chosen independently and so the stronger bound does not seem to be true for our reduction\footnote{Similar issues arise when analyzing information complexity
of set disjointness on \emph{intersecting} distributions~\cite{JayramKS03} as opposed to the more standard case of non-intersecting distributions (e.g.~\cite{Bar-YossefJKS02,BravermanM13,BravermanGPW13,WeinsteinW15}).}. Nevertheless, we show in the third part of the argument that this weaker statement suffices for our purpose. 

In the final step of the proof, we prove that any protocol for $\PI$ that has a small error probability should have a large information cost with respect to the measure in Lemma~\ref{lem:protDisj-info-cost}.

\begin{lemma}[Information complexity of $\PI$]\label{lem:PI-ic}
    Suppose $\protPI$ outputs the correct answer on $\distPI$ with probability at least $\frac{1}{2} + \Omega(\eps)$. Then, 
    \begin{align*}
        \mii{\rX_1,\rX_2}{\rProtPI \mid \rY_1,\rY_2, \rK}{\distPI} + \mii{\rY_1,\rY_2}{\rProtPI \mid \rX_1,\rX_2, \rK}{\distPI} = \Omega(\eps^2).
    \end{align*}
\end{lemma}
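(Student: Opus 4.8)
The plan is to argue by contradiction: assume $\protPI$ is correct on $\distPI$ with probability at least $\tfrac12+\Omega(\eps)$, yet the left-hand side of the lemma is $o(\eps^2)$, and derive a contradiction. For Alice-input $(x_1,x_2)$ and Bob-input $(y_1,y_2)$, let $\Prot_{\disjin{x_1}{x_2}{y_1}{y_2}}$ denote the distribution of the transcript of $\protPI$ (including its public coins) on that pair of inputs; this is well defined for \emph{every} input pair, in particular for the off-promise pair $\disjin{1}{1}{1}{1}$, which will act as a bridge below. Recall also that (by convention) the output of $\protPI$ is a deterministic function of its transcript.

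\textbf{Step 1 (information $\Rightarrow$ Hellinger closeness).} The first step is to open up the two conditional mutual informations by conditioning on $\rK\in\set{1,2}$ and on the value of Bob's coordinate in the non-target position: when $\rK=1$ the pair $(\rX_1,\rY_1)$ is fixed to $(1,1)$ and $(\rX_2,\rY_2)\sim\mu$, and symmetrically for $\rK=2$. A routine calculation then shows that, up to a fixed positive constant factor, the left-hand side of the lemma equals the sum of four quantities of the form $\mi{\rZ}{\rProtPI\mid\cdots}$ with $\rZ$ a uniform bit and the conditioning fixing all remaining input coordinates; these correspond to the four single-coordinate flips between the input pairs $(\disjin{1}{0}{1}{0},\disjin{1}{1}{1}{0})$, $(\disjin{1}{0}{1}{0},\disjin{1}{0}{1}{1})$, $(\disjin{0}{1}{0}{1},\disjin{1}{1}{0}{1})$, and $(\disjin{0}{1}{0}{1},\disjin{0}{1}{1}{1})$. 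Since the mutual information of a uniform bit with the transcript is exactly the Jensen--Shannon divergence of the two associated conditional transcript distributions, and this is at least a constant times their squared Hellinger distance, the hypothesis $o(\eps^2)$ forces each of these four squared Hellinger distances to be $o(\eps^2)$, i.e.\ each of the four Hellinger distances to be $o(\eps)$.

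\textbf{Step 2 (bridging the two clusters via cut-and-paste).} By the triangle inequality for Hellinger distance applied to the first two of those pairs, $\hd{\Prot_{\disjin{1}{1}{1}{0}}}{\Prot_{\disjin{1}{0}{1}{1}}}=o(\eps)$. Now invoke the cut-and-paste property of randomized protocols (Fact~\ref{fact:r-rectangle}) with Alice-inputs $(1,1),(1,0)$ and Bob-inputs $(1,0),(1,1)$: it swaps Bob's inputs and gives $\hd{\Prot_{\disjin{1}{1}{1}{0}}}{\Prot_{\disjin{1}{0}{1}{1}}}=\hd{\Prot_{\disjin{1}{1}{1}{1}}}{\Prot_{\disjin{1}{0}{1}{0}}}$, so $\hd{\Prot_{\disjin{1}{1}{1}{1}}}{\Prot_{\disjin{1}{0}{1}{0}}}=o(\eps)$. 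Symmetrically, the last two pairs give $\hd{\Prot_{\disjin{1}{1}{0}{1}}}{\Prot_{\disjin{0}{1}{1}{1}}}=o(\eps)$, and cut-and-paste with Alice-inputs $(1,1),(0,1)$ and Bob-inputs $(0,1),(1,1)$ yields $\hd{\Prot_{\disjin{1}{1}{1}{1}}}{\Prot_{\disjin{0}{1}{0}{1}}}=o(\eps)$. Chaining through the bridge $\disjin{1}{1}{1}{1}$ and applying the triangle inequality once more, $\hd{\Prot_{\disjin{1}{0}{1}{0}}}{\Prot_{\disjin{0}{1}{0}{1}}}=o(\eps)$, hence $\tvd{\Prot_{\disjin{1}{0}{1}{0}}}{\Prot_{\disjin{0}{1}{0}{1}}}=o(\eps)$ as well. (Cut-and-paste is a combinatorial identity of the protocol tree and applies to all inputs, in particular to the promise-violating $\disjin{1}{1}{1}{1}$, so this is legitimate.)

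\textbf{Step 3 (contradiction) and the main obstacle.} The pair $\disjin{1}{0}{1}{0}$ lies in the support of $\distPI$ with correct answer $1$, while $\disjin{0}{1}{0}{1}$ lies in the support with correct answer $2$. Since $\protPI$ is correct with probability $\tfrac12+\Omega(\eps)$ on each, and its output is a function of the transcript, the probability that $\protPI$ outputs $1$ differs by $\Omega(\eps)$ between these two inputs, whence $\tvd{\Prot_{\disjin{1}{0}{1}{0}}}{\Prot_{\disjin{0}{1}{0}{1}}}=\Omega(\eps)$; choosing the hidden constant in the assumed $o(\eps^2)$ bound small enough contradicts Step~2, completing the proof. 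I expect Step~2 to be the crux: one has to realize that the ``$k=1$ transcripts'' $\Prot_{\disjin{1}{*}{1}{*}}$ and the ``$k=2$ transcripts'' $\Prot_{\disjin{*}{1}{*}{1}}$ can be forced close only \emph{indirectly} --- by routing through the single off-promise input $\disjin{1}{1}{1}{1}$ via the cut-and-paste identity --- because the information bound of Step~1 only ever constrains flips that stay within one of the two worlds. A secondary technical point is the bookkeeping in Step~1: pinning down exactly which single-bit flips the conditional mutual informations detect (only the non-target coordinate, and only in the sub-case where the corresponding Bob-bit is $0$, since otherwise that coordinate is deterministic) and invoking the mutual-information/Hellinger inequality with the correct constant.
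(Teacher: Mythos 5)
Your Steps 1 and 2 reproduce the paper's argument essentially verbatim: the reduction of the conditional information terms to the four single-coordinate flips with the non-target bit of the other player fixed to $0$ (Claims~\ref{clm:disj2-info-terms} and~\ref{clm:disj2-h-terms-1}), the conversion to squared Hellinger distance via the Jensen--Shannon/KL-to-midpoint bound (Fact~\ref{fact:hellinger-kl}), and the bridging of the two ``worlds'' through the off-promise input $\disjin{1}{1}{1}{1}$ by cut-and-paste (Fact~\ref{fact:r-rectangle}) plus triangle inequalities (Claim~\ref{clm:disj2-h-terms-2}) are exactly the paper's route.

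Step 3, however, is not valid as written. You assert that $\protPI$ ``is correct with probability $\tfrac{1}{2}+\Omega(\eps)$ on each'' of the two inputs $\disjin{1}{0}{1}{0}$ and $\disjin{0}{1}{0}{1}$ and conclude that the single distance $\tvd{\rProt_{\disjin{1}{0}{1}{0}}}{\rProt_{\disjin{0}{1}{0}{1}}}$ must be $\Omega(\eps)$. But the lemma's hypothesis is only that the protocol succeeds with probability $\tfrac{1}{2}+\Omega(\eps)$ \emph{averaged over} $\distPI$; it need not succeed with any particular probability on an individual input. For example, the protocol in which Alice announces $x_2$ and the players output $1$ iff $x_2=0$ succeeds with probability $5/6$ on $\distPI$ yet fails with probability $1$ on $\disjin{1}{1}{1}{0}$; distributional correctness therefore does not force apart any one prescribed pair of inputs. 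The repair is exactly the paper's Claim~\ref{clm:disj2-sep-terms}: combine your Steps 1 and 2 with further triangle inequalities to show that \emph{every} pair $(T_1,T_2)$ with $T_1$ having answer $1$ and $T_2$ having answer $2$ satisfies $\tvd{\rProt_{T_1}}{\rProt_{T_2}}=o(\eps)$; since the six support points of $\distPI$ are equiprobable, averaging the per-input success probabilities then bounds the overall success probability by $\tfrac{1}{2}+o(\eps)$, yielding the contradiction. All the distance estimates needed for this are already contained in your Steps 1 and 2, so the fix is a short averaging argument, but the inference you actually wrote down does not follow from the stated hypothesis.
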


We prove each of these three lemmas in the following sections. Before that, we show Theorem~\ref{thm:SI-tvd} follows easily from these lemmas. 
\begin{proof}[Proof of Theorem~\ref{thm:SI-tvd} (assuming Lemmas~\ref{lem:protDisj-correctness},~\ref{lem:protDisj-info-cost}, and~\ref{lem:PI-ic})]
    Suppose towards a contradiction that $\protSI$ is a protocol that $\eps$-solves $\SI$ on $\distSI$ and has information cost $\ICost{\protSI}{\distSI} = o(\eps^2 \cdot n)$. 
    Create the protocol $\protPI$ using $\protSI$ as described in the reduction above. We have, 
    \begin{itemize}
        \item By Lemma~\ref{lem:protDisj-correctness}, 
            $\protPI$ outputs the correct answer on $\distPI$ w.p. at least $\frac{1}{2} + \Omega(\eps)$. 

        \item By Lemma~\ref{lem:protDisj-info-cost}, 
            $
            \mii{\rX_1,\rX_2}{\rProtPI\mid \rY_1,\rY_2, \rK}{\distPI} + \mii{\rY_1,\rY_2}{\rProtPI \mid \rX_1,\rX_2, \rK}{\distPI}  = o(\eps^2).
            $
    \end{itemize}
    However, these two properties contradict Lemma~\ref{lem:PI-ic}. As such, the internal information cost of $\protSI$ on $\distSI$ should be
    $\Omega(\eps^2 \cdot n)$, finalizing the proof. \Qed{Theorem~\ref{thm:SI-tvd}}

\end{proof}

\subsection*{Proof of Lemma~\ref{lem:protDisj-correctness}: Correctness of Protocol $\protPI$}

The following is a re-statement of Lemma~\ref{lem:protDisj-correctness} that we prove in this section. 

\begin{lemma*}[Restatement of Lemma~\ref{lem:protDisj-correctness}]
    For an instance $\disjin{x_1}{x_2}{y_1}{y_2} \sim \distPI$, $\protPI$ outputs the correct answer with probability at least $\frac{1}{2} + \Omega(\eps)$ (over the randomness of the distribution and the protocol). 
\end{lemma*}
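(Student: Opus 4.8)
The plan is to analyze the random variables $p_i$ and $p_j$ that protocol $\protPI$ uses to decide its output, and show that the ``target'' one is stochastically larger than the ``non-target'' one by an $\Omega(\eps)$ margin. By symmetry of $\distPI$ under swapping the two coordinates, it suffices to condition on the event $\{k=1\}$ (so the correct answer is $1$) and show $\Pr[i \comp j] \geq \frac{1}{2} + \Omega(\eps)$; the case $k=2$ is identical. By Observation~\ref{obs:dist-PI-SI}, conditioned on $k=1$ the instance $(A,B)$ is distributed as $\distSI$ with target $t=i$, while by Observation~\ref{obs:dist-PI-SI-hidden} the index $j$ is, conditioned on $(A,B)$ and the transcript $\ProtSI$, uniformly distributed on $[n]\setminus\{i\} = [n]\setminus\{t\}$ — crucially, $\protSI$ is run obliviously to which of the two special coordinates was the ``planted'' target, so the transcript carries no information distinguishing $i$ from a uniformly random non-target index. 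Hence, writing $(p_1,\dots,p_n) = \distribution{\rT \mid \ProtSI}$, we have that $p_i = p_t$ while $p_j$ is distributed as $p_u$ for $u$ uniform on $[n]\setminus\{t\}$.

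The next step is to lower bound $\Pr[t \comp u]$ (with ties broken by the fixed ordering $\comp$, which only helps) in terms of the total variation quantity that $\protSI$ achieves. The key identity is that the total variation distance of $\distribution{\rT\mid \ProtSI} = (p_1,\dots,p_n)$ from $\unif_{[n]}$ equals $\sum_{\ell : p_\ell > 1/n}(p_\ell - 1/n)$, i.e. the ``excess mass'' above uniform. Now fix a transcript $\ProtSI$ (hence a fixed vector $(p_1,\dots,p_n)$). Since the true target $t$ is itself distributed according to $\distribution{\rT\mid\ProtSI}$ — that is, $\Pr[t = \ell \mid \ProtSI] = p_\ell$ — and $u$ is independent and uniform on the complement of $t$, a direct computation of $\Pr[p_t > p_u \mid \ProtSI] + \frac12\Pr[p_t = p_u \mid \ProtSI]$ should give something of the form $\frac12 + \frac{1}{2}\cdot(\text{something comparable to }\sum_\ell p_\ell|p_\ell - 1/n|)$, and in any case at least $\frac12 + c\cdot \tvd{\distribution{\rT\mid\ProtSI}}{\unif_{[n]}}$ for an absolute constant $c>0$. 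Concretely I would write $\Pr[p_t > p_u] = \sum_\ell p_\ell \cdot \frac{|\{m \neq \ell : p_m < p_\ell\}|}{n-1}$ and compare this to the value $\frac{1}{n}\sum_\ell \frac{|\{m\neq\ell: p_m<p_\ell\}|}{n-1}$ one would get if $t$ were uniform (which is exactly $1/2$ after accounting for ties symmetrically); the difference is $\sum_\ell (p_\ell - 1/n)\cdot\frac{|\{m\neq\ell:p_m<p_\ell\}|}{n-1}$, and one checks that elements with $p_\ell > 1/n$ contribute positively and dominate, yielding the claimed $\Omega(\eps)$ advantage after taking expectation over $\ProtSI$ and using Eq~\eqref{eq:si}.

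Finally I would assemble the pieces: take expectation over the transcript, invoke $\Ex_{\ProtSI}[\tvd{\distribution{\rT\mid\ProtSI}}{\unif_{[n]}}] \geq \eps$, and conclude $\Pr[\protPI \text{ correct} \mid k=1] \geq \frac12 + \Omega(\eps)$, then average over $k \in \{1,2\}$.

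\medskip

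\noindent\textbf{Main obstacle.} The delicate point is the second step — turning the total-variation guarantee into a margin on the comparison $p_i$ versus $p_j$ — because it must be robust to \emph{any} way the protocol reshapes $\distribution{\rT\mid\ProtSI}$, including the ``Example~\ref{ex:si-2}''-style perturbations where almost all coordinates remain plausible. The clean way to handle this is the ``excess mass'' reformulation of total variation above, which makes the positive contribution of the above-average coordinates transparent and shows the advantage is genuinely proportional to $\eps$ (not $\eps^2$ or worse) regardless of the shape of the distribution; one also has to be careful that ties ($p_i = p_j$) are broken in a way that does not erode the advantage, which is why $\comp$ is defined as a fixed total refinement of the order by $p$-value.
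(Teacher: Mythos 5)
Your reduction is exactly the paper's: condition on $k=1$, use Observation~\ref{obs:dist-PI-SI} to see that $i$ is the target of an instance drawn from $\distSI$, use Observation~\ref{obs:dist-PI-SI-hidden} to see that $j$ is uniform on $[n]\setminus\{i\}$ even conditioned on $(A,B)$ (hence on $\ProtSI$), and then reduce everything to comparing $p_t$ (with $t\sim\distribution{\rT\mid\ProtSI}$) against $p_u$ for $u$ uniform on the complement. Your formula for the advantage, $\sum_\ell (p_\ell-1/n)\cdot\frac{|\{m\neq\ell:\,m\compr\ell\}|}{n-1}$, is also the right quantity, and you correctly flag the comparison step as the crux.

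The gap is precisely there: ``elements with $p_\ell>1/n$ contribute positively and dominate'' is asserted, not proved, and the obvious way to check it does not give $\Omega(\delta)$. The natural domination argument — every excess index outranks every deficit index under $\comp$, so its rank-count is larger — only separates the two groups by at least one rank, which yields a lower bound of order $\delta/(n-1)$ on the advantage, a factor of $n$ too weak. (A cautionary configuration: deficit mass $\delta$ sitting just below rank $k$ and excess mass $\delta$ just above it would give exactly this if it were allowed.) What saves the claim is the monotonicity forced by sorting: after renaming so that $p_1\le\cdots\le p_n$, the deficits $1/n-p_\ell$ are \emph{non-increasing} in $\ell$ while the rank-fractions $\frac{\ell-1}{n-1}$ are increasing, so the deficit mass is concentrated on systematically low ranks and the excess mass on systematically high ranks. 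The paper exploits this via Chebyshev's sum inequality (Proposition~\ref{prop:rearr}), applied separately to the block $\{1,\ldots,k\}$ of below-average indices and its complement, to show $\Pr(\rI\compr\rJ)\le\frac12-\frac{n\delta}{2n-2}$ for each fixed transcript (Claim~\ref{clm:tvd-calculation}). Your writeup needs this (or an equivalent averaging argument over the two blocks) to close the factor-of-$n$ hole; everything before and after it matches the paper's proof.
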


To give some intuition about this lemma, let us consider the Examples~\ref{ex:si-1} and~\ref{ex:si-2}. Suppose the correct answer to the instance of $\PI$ is index $1$ and  
protocol $\protSI$ that we use in reduction is of the type described in Example~\ref{ex:si-1}. We know that the set of $n-\Theta(\eps \cdot n)$ elements computed by $\distSI$ definitely contains element $i$. 
What can be said about element $j$ here? By Observation~\ref{obs:dist-PI-SI-hidden}, the element $j$ is chosen uniformly at random from all elements $[n] \setminus \set{i}$, \emph{even} conditioned on a choice
of $A$ and $B$. As such, with probability $\Theta(\eps)$, element $j$ does not belong to the set of candidates for the target element computed by $\protSI$. In this case, protocol $\protPI$ outputs the correct answer. This allows us to infer that
$\protPI$ is able to get $\Theta(\eps)$ advantage over random guessing, exactly what is asserted by Lemma~\ref{lem:protDisj-correctness}. A similar argument also works if protocol $\protSI$ is of the type in Example~\ref{ex:si-2}. We now prove this lemma
for general protocols. 

\begin{proof}[Proof of Lemma~\ref{lem:protDisj-correctness}]
    Assume $x_1 \wedge y_1 = 1$, i.e., index $1$ is the correct answer to $\PI$ (the other case is symmetric). Let $(A,B)$ be the instance of $\SI$ constructed by $\protPI$ and 
    let $\ProtSI$ be the transcript of the protocol $\protSI$ on $(A,B)$ which is communicated inside $\protPI$. Recall that $\distribution{\rT \mid \ProtSI} = (p_1,\ldots,p_n)$ is defined in $\protPI$. Also, define $\rI$ and $\rJ$ as the random variables for indices
    $i$ and $j$ in $\protPI$. We claim,
    \begin{align}
        \Pr\paren{\protPI~\errs \mid x_1 \wedge y_1 = 1} = \Ex_{\ProtSI \sim \rProtSI\mid \rT=\rI} \bracket{\Pr\paren{\rI \compr \rJ \mid \rProtSI = \ProtSI,\rT=\rI}} \label{eq:disj-claim-correct}. 
    \end{align}
    This is by construction of the protocol as $x_1 \wedge y_1 = 1$ and $\rT=\rI$ are equivalent, and conditioned on $x_1 \wedge y_1 = 1$, the correct answer is the index $1$ which would be output by the protocol iff $i \comp j$. 

    For any fixed transcript $\ProtSI$, the bound in RHS of Eq~(\ref{eq:disj-claim-correct}) is only a function of the distribution of $(\rI,\rJ)$. Hence, let us examine 
    $\distribution{\rI,\rJ \mid \ProtSI,\rT=\rI} = \distribution{\rI \mid \ProtSI,\rT=\rI} \cdot \distribution{\rJ \mid \ProtSI,\rT=\rI=i}$. For any $\ell \in [n]$, we have,
    \begin{align}
        \Pr_{\distPI}\paren{\rI = \ell \mid \ProtSI,\rT=\rI} = \Pr_{\distSI}\paren{\textnormal{target element is $\ell$} \mid \ProtSI} = p_\ell. \label{eq:dist-I}
    \end{align}
    This is simply by Observation~\ref{obs:dist-PI-SI} that implies instances created in $\protPI$ are sampled from $\distSI$ and because we conditioned on $\rT = \rI$. 
    On the other hand, conditioned on $\rT=\rI=i$, for any $\ell \in [n] \setminus \set{i}$, 
    \begin{align}
        \Pr_{\distPI}\paren{\rJ = \ell \mid \ProtSI, \rT=\rI=i} = \Pr_{\distPI}\paren{\rJ = \ell \mid \rT=\rI=i} = \frac{1}{n-1}. \label{eq:dist-J}
    \end{align}
    This is by Observation~\ref{obs:dist-PI-SI-hidden} as $\ProtSI$ is only a function of $(A,B)$, while $\rJ$ is independent of $(\rA,\rB)$ (conditioned on $\rJ \neq \rT$) 
    and is uniform on any index which is not the target element.

    Now that we have determined the distribution of $(\rI,\rJ)$ (conditioned on $\ProtSI$ and $\rT=\rI$), our goal is to simply bound the RHS of Eq~(\ref{eq:disj-claim-correct}) (for any fixed choice of $\ProtSI$). 
    Intuitively, we should expect this quantity to be small as we are picking $\rI$ by gravitating towards higher rank numbers according to $\comp$, while $\rP_\rJ$ is chosen independent of $\comp$. We formalize this intuition in the following. 

    \begin{claim}\label{clm:tvd-calculation}
        Let $\delta := \tvd{\distribution{\rI \mid \ProtSI,\rT=\rI}}{\unif_{[n]}}$; then $\Pr\paren{\rI \compr \rJ \mid \ProtSI,\rT=\rI} \leq \frac{1}{2} - \Omega(\delta)$.
    \end{claim}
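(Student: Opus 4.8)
The plan is to reduce everything to a one-line formula for $\Pr\paren{\rI \compr \rJ \mid \ProtSI,\rT=\rI}$ in terms of the vector $(p_1,\dots,p_n)=\distribution{\rT \mid \ProtSI}$, and then to prove a clean inequality about that formula. By the conditional-independence structure already established (Eq.~(\ref{eq:dist-I}) and Eq.~(\ref{eq:dist-J})), under the conditioning on $\ProtSI$ and $\rT=\rI$ the index $\rI$ is distributed as $(p_1,\dots,p_n)$, and given $\rI=i$ the index $\rJ$ is uniform on $[n]\setminus\set{i}$. The event $\rI\compr\rJ$ is exactly the event that $\rJ$ $\comp$-precedes $\rI$: if $\rI$ is the $r$-th element in the total order $\comp$, then there are $r-1$ elements $\comp$-above it, so this event has conditional probability $(r-1)/(n-1)$. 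Writing $q_1\ge q_2\ge\cdots\ge q_n$ for the values $p_1,\dots,p_n$ sorted non-increasingly (the order induced by $\comp$), this gives
\[
\Pr\paren{\rI \compr \rJ \mid \ProtSI,\rT=\rI} \;=\; \sum_{r=1}^{n} q_r\cdot\frac{r-1}{n-1}.
\]
When $q$ is uniform this is exactly $1/2$, so the claim amounts to showing that a total-variation deviation $\delta$ from uniform pushes this sum down by $\Omega(\delta)$.

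Next I would massage the expression. Since $\sum_r q_r=1$ and $\sum_r \tfrac1n\cdot\tfrac{r-1}{n-1}=\tfrac12$, we get
\[
\tfrac12 - \sum_{r=1}^n q_r\,\tfrac{r-1}{n-1} \;=\; -\frac{1}{n-1}\sum_{r=1}^n\bigl(q_r-\tfrac1n\bigr)r \;=\; \frac{1}{n-1}\sum_{r=1}^{n-1} E_r,
\]
where $E_r:=\sum_{s=1}^r q_s-\tfrac{r}{n}$ and the last step is Abel summation (using $E_0=E_n=0$). I then record three facts about the sequence $E_0,\dots,E_n$: (i) $E_r\ge 0$ for all $r$, since $q$ is sorted and hence its top-$r$ mass is at least the average $r/n$; (ii) $\max_r E_r=\tvd{q}{\unif_{[n]}}=\delta$, because the maximizing set in the variational characterization of total variation distance is a prefix $\set{1,\dots,m_0}$ for the sorted vector $q$; and (iii) the sequence is concave, because its increments $E_r-E_{r-1}=q_r-\tfrac1n$ are non-increasing in $r$.

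The main estimate is a short concavity computation. A concave sequence vanishing at $r=0$ and $r=n$ and attaining value $\delta$ at an interior point $m_0$ lies above the chords joining $(0,0)$, $(m_0,\delta)$, and $(n,0)$; thus $E_r\ge \tfrac{r}{m_0}\delta$ for $r\le m_0$ and $E_r\ge \tfrac{n-r}{n-m_0}\delta$ for $r\ge m_0$. Summing these two linear bounds (and subtracting the doubly-counted term at $r=m_0$) yields $\sum_{r=0}^n E_r\ge \delta n/2$, hence $\sum_{r=1}^{n-1}E_r\ge \delta n/2$ and finally
\[
\Pr\paren{\rI \compr \rJ \mid \ProtSI,\rT=\rI} \;\le\; \tfrac12 - \frac{\delta n}{2(n-1)} \;\le\; \tfrac12 - \tfrac{\delta}{2},
\]
which is the claim (the case $\delta=0$ being trivial). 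Conceptually nothing here is deep; the step that needs the most care is property (ii) — pinning down that the running excess $E_r$ never exceeds $\delta$ but does attain it — since it is precisely this that upgrades the concavity bound from $\Omega(\delta)$ to $\Omega(\delta n)$, which is what the factor $1/(n-1)$ out front then needs. The remaining pieces (the Abel-summation identity and the chord estimates) are routine.
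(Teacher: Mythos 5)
Your proof is correct, and it proves exactly the inequality the paper proves — in fact you arrive at the identical intermediate bound $\Pr\paren{\rI \compr \rJ \mid \ProtSI,\rT=\rI} \le \tfrac12 - \tfrac{n\delta}{2(n-1)}$ — but by a genuinely different route for the key inequality. Both arguments start from the same closed form $\sum_{r} q_r\cdot\tfrac{r-1}{n-1}$ for the error probability (the paper sorts the $p_\ell$ ascending and writes it as $\sum_\ell p_\ell\cdot\tfrac{n-\ell}{n-1}$; this is the same sum under the reversal $\ell \mapsto n+1-r$). The paper then splits the sorted vector at the threshold $1/n$ into the sub-uniform prefix of length $k$ and the complementary block, applies the Chebyshev-sum inequality (Proposition~\ref{prop:rearr}) to each block separately, and finishes by direct computation using $q = \tfrac{k}{n}-\delta$. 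You instead rewrite the deficit $\tfrac12 - \sum_r q_r\tfrac{r-1}{n-1}$ via Abel summation as $\tfrac{1}{n-1}\sum_{r=1}^{n-1}E_r$, where $E_r$ is the running excess of the sorted distribution over uniform, and observe that $(E_r)$ is a nonnegative concave bridge with $E_0=E_n=0$ and peak exactly $\delta$ (your fact (ii), which is the correct variational characterization of $\tvd{q}{\unif_{[n]}}$ for a sorted vector); the chord bound then yields $\sum_r E_r \ge \delta n/2$. The paper's route is shorter once Proposition~\ref{prop:rearr} is in hand; yours dispenses with the rearrangement inequality entirely and makes transparent \emph{why} the advantage is $\Omega(\delta)$ rather than, say, $\Omega(\delta/n)$ — a concave bridge of length $n$ peaking at $\delta$ must enclose area $\ge \delta n/2$, which exactly cancels the $1/(n-1)$ prefactor. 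All three of your supporting facts and the edge cases ($\delta=0$, and $1\le m_0\le n-1$ when $\delta>0$) check out.
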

    \begin{proof}[Proof of Claim~\ref{clm:tvd-calculation}]
        In the following, all random variables are conditioned on $(\ProtSI,\rT=\rI)$ and hence with a slight abuse of notation we drop this conditioning throughout the proof. 
        Recall that $\distribution{\rI} = (p_1,\ldots,p_n)$ (by Eq~(\ref{eq:dist-I})) and without loss of generality assume $p_1 \leq p_2 \leq \ldots \leq p_n$ as we can always rename the indices to obtain this property (and breaking the ties as in the protocol $\protPI$ by the original index). As for the distribution of $\rJ$, note that for any $\ell \in [n]$, $\Pr\paren{\rJ \in [\ell+1,n] \mid \rI = \ell} = \frac{n-\ell}{n-1}$ by Eq~(\ref{eq:dist-J}). Note that after this renaming, $\rI \compr \rJ$ iff $\rI < \rJ$. 
        Hence, we have, 
        \begin{align*}
            \Pr\paren{\rI \compr \rJ} = \Pr\paren{\rI < \rJ} = \sum_{\ell=1}^n  \Pr\paren{\rI=\ell}\Pr(\rJ\in [\ell+1,n] \mid \rI=\ell) = \sum_{\ell=1}^n p_\ell \cdot \frac{n-\ell}{n-1}.
        \end{align*}

        Let $k \in [n]$ be the largest index such that $p_k < 1/n$. 
        Define $q := \sum_{\ell=1}^{k} p_\ell$ as the total probability mass of indices with probability less than $1/n$. We have,
        \begin{align} \label{equ:del_pL}
            \delta = \tvd{\rI}{\unif_{[n]}} = \frac{1}{2} \cdot \sum_{\ell=1}^{n} \card{p_\ell - \frac{1}{n}} = \frac{1}{2} \cdot \Paren{(\frac{k}{n}-q)+((1-q)-\frac{n-k}{n})} = \frac{k}{n}-q
        \end{align}
        which implies that $q = \frac{k}{n} - \delta$. By the equation above for $\Pr\paren{\rI < \rJ}$, we have, 
        \[
        \Pr\paren{\rI < \rJ}  = \sum_{\ell=1}^k p_\ell\cdot \frac{n-\ell}{n-1} + \sum_{\ell=k+1}^n p_\ell\cdot \frac{n-\ell}{n-1}.
        \]
        Now, using the assumption that $p_1\le p_2 \le \dots \le p_n$ and by the inequality of Proposition~\ref{prop:rearr},
        \begin{align*}
            \Pr\paren{\rI < \rJ} & \le \frac{1}{k}\sum_{\ell=1}^k p_\ell \sum_{\ell=1}^k \frac{n-\ell}{n-1} + \frac{1}{n-k}\sum_{\ell=k+1}^n p_l \sum_{\ell=k+1}^n \frac{n-\ell}{n-1} \\
                                 & = \frac{q}{k} \cdot \frac{k\cdot(2n-k-1)}{2n-2} + \frac{1-q}{n-k} \cdot \frac{(n-k-1)(n-k)}{2n-2} \\
                                 & = q \cdot \frac{2n-k-1}{2n-2} + (1-q) \cdot \frac{n-k-1}{2n-2} = \frac{n-k-1}{2n-2} + q \cdot \frac{n}{2n-2} \\
                                 & = \frac{1}{2} - \frac{k-n \cdot q}{2n-2}  \Eq{Eq~(\ref{equ:del_pL})} \frac{1}{2} - \frac{n\delta}{2n-2} < 1/2 - \delta/2,
        \end{align*}
        completing the proof. \Qed{Claim~\ref{clm:tvd-calculation}} 
        
    \end{proof}

    We are now ready to finalize the proof of Lemma~\ref{lem:protDisj-correctness}. 
    \begin{align*}
        \Pr\paren{\protPI~\errs \mid x_1 \wedge y_1 = 1} &\Eq{Eq~(\ref{eq:disj-claim-correct})} \Ex_{\ProtSI \sim \rProtSI\mid \rT=\rI} \bracket{\Pr\paren{\rI \compr \rJ \mid \rProtSI = \ProtSI,\rT=\rI}} \\
                                                          &\hspace{-0.2cm} \Leq{Claim~\ref{clm:tvd-calculation}}\Ex_{\ProtSI \sim \rProtSI\mid \rT=\rI} \bracket{\frac{1}{2} - \Omega\paren{\tvd{\distribution{\rI \mid \rProtSI = \ProtSI,\rT=\rI}}{\unif_{[n]}}}} \\
                                                             &\hspace{0.2cm} = \Ex_{\ProtSI \sim \rProtSI} \bracket{\frac{1}{2} - \Omega\paren{\tvd{\distribution{\rT \mid \rProtSI=\ProtSI}}{\unif_{[n]}}}} \tag{distribution of $\rI = \rT$ and $\rProtSI \perp \rT=\rI$} \\
                                                                &\hspace{0.2cm} \leq \frac{1}{2} - \Omega(\eps), 
    \end{align*}
    where the last inequality is because $\protSI$ $\eps$-solves $\SI$. We can also do the same exact analysis for the case when $x_2 \wedge y_2 = 1$, hence obtaining that
    $\Pr\paren{\protPI~\errs} = \frac{1}{2} - \Omega(\eps)$. \Qed{Lemma~\ref{lem:protDisj-correctness}}

\end{proof}

\subsection*{Proof of Lemma~\ref{lem:protDisj-info-cost}: Information Cost of Protocol $\protPI$}

We prove this lemma by a direct-sum type argument that shows if the (internal) information cost of $\protSI$ is small, then protocol $\protPI$ is revealing a small information about its input \emph{assuming conditioning on the target element}. 
We emphasize that this information revealed 
is \emph{not} equivalent with the internal information cost as we are conditioning on some information not known to neither Alice nor Bob. 
The following is a restatement of Lemma~\ref{lem:protDisj-info-cost} that we prove in this section. 
\begin{lemma*}[Restatement of Lemma~\ref{lem:protDisj-info-cost}]
    Let $\rProtPI$ denote the random variable for the transcript of the protocol $\protPI$ and $\rK$ be the random variable for index $k$ in distribution $\distPI$. We have, 
    \begin{align*}
        \mii{\rX_1,\rX_2}{\rProtPI \mid \rY_1,\rY_2, \rK}{\distPI} + \mii{\rY_1,\rY_2}{\rProtPI \mid \rX_1,\rX_2, \rK}{\distPI} &\leq \frac{1}{n-1} \cdot \ICost{\protSI}{\distSI}.
    \end{align*}
\end{lemma*}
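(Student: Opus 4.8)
The plan is a direct-sum argument in the style of \cite{BarakBCR10,BravermanR11}, adapted to the fact that $\distSI$ is not a product distribution. The guiding picture is this: by Observation~\ref{obs:dist-PI-SI} the instance $(A,B)$ constructed inside $\protPI$ is drawn from $\distSI$ with target $\rT=\rI$ when $\rK=1$ and $\rT=\rJ$ when $\rK=2$, and by Observation~\ref{obs:dist-PI-SI-hidden} the ``other'' of $\rI,\rJ$ is, conditioned on $(A,B)$, a uniformly random non-target index. Since $\protSI$ only sees $(A,B)$ and is oblivious to which index we planted a $\PI$-coordinate in, the information it leaks about that single random coordinate --- averaged over the $n-1$ non-target positions --- is at most a $1/(n-1)$ fraction of what it leaks about the whole input, namely $\ICost{\protSI}{\distSI}$. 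The purpose of the public/private sampling of $\rL,\rS,\rA_S,\rB_{\barS}$ is precisely to make this averaging come out cleanly and \emph{simultaneously} for Alice and for Bob.

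I would first move the public coins $\rI,\rJ,\rL,\rS$ of $\protPI$ into the conditioning of the left-hand side of Lemma~\ref{lem:protDisj-info-cost}; this is for free since they are independent of the $\PI$-input. Conditioning further on $\rK$ and taking, say, $\rK=1$, the pair $(\rX_1,\rY_1)$ is the constant $(1,1)$ and $(\rX_2,\rY_2)=(\rA_{\rJ},\rB_{\rJ})$, so Alice's contribution for $\rK=1$ equals $\Ex_{i,j,\ell,S}\Bracket{\mi{\rA_j}{\rProtPI \mid \rB_j}}$ conditioned on $\rK=1,\rI=i,\rJ=j,\rL=\ell,\rS=S$ (hence on target $\rT=i$), and symmetrically for $\rK=2$. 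Next I would replace $\rProtPI$ by the inner $\protSI$-transcript $\rProtSI$ in two steps. The extra public components $\rA_S,\rB_{\barS}$ live in coordinates disjoint from $j$, and under $\distSI$ the non-target coordinates are i.i.d., so $\rA_j\perp(\rA_S,\rB_{\barS})$ given $\rB_j$; hence we may drop $\rA_S,\rB_{\barS}$ from the transcript by placing them in the conditioning. And adding the still-missing Bob-coordinates $\rB_S$ (with $\rB_i=1$ already fixed by $\rT=i$) to the conditioning only \emph{increases} the mutual information, since $\rA_j\perp\rB_S$ given $\rB_j,\rA_S,\rB_{\barS}$ and $\rT=i$. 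After these moves Alice's $\rK=1$ contribution is at most $\Ex_{i,j,\ell,S}\Bracket{\mi{\rA_j}{\rProtSI \mid \rA_S,\rB,\rT=i}}$.

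The combinatorial heart is the observation that, for a fixed target $i$, the pair $(j,S)$ generated by ``pick $j$ uniform in $[n]\setminus\set{i}$, pick $\ell$ uniform in $\set{0,\dots,n-2}$, pick $S$ a uniform $\ell$-subset of $[n]\setminus\set{i,j}$'' has exactly the same law as ``pick a uniformly random ordering of $[n]\setminus\set{i}$ and a uniformly random split point, and let $j$ be the element at the split and $S$ the set of elements preceding it'' --- a one-line count of orderings confirms the two densities coincide. Consequently, by the chain rule along a uniformly random ordering,
\[
\Ex_{j,\ell,S}\Bracket{\mi{\rA_j}{\rProtSI \mid \rA_S,\rB,\rT=i}}=\frac{1}{n-1}\,\mi{\rA_{-i}}{\rProtSI \mid \rB,\rT=i}=\frac{1}{n-1}\,\mi{\rA}{\rProtSI \mid \rB,\rT=i}.
\]
Averaging over $i$ --- which, over the choice of $\rK$, is distributed exactly as $\rT$ --- yields $\frac{1}{n-1}\mi{\rA}{\rProtSI \mid \rB,\rT}$, and since $\rT$ is a deterministic function of $(A,B)$ we get $\mi{\rA}{\rProtSI \mid \rB,\rT}\le\mii{\rProtSI}{\rA\mid\rB}{\distSI}$, one of the two terms of $\ICost{\protSI}{\distSI}$. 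Thus each of the $\rK=1$ and $\rK=2$ contributions, and hence their average $\mii{\rX_1,\rX_2}{\rProtPI\mid\rY_1,\rY_2,\rK}{\distPI}$, is at most $\frac{1}{n-1}\mii{\rProtSI}{\rA\mid\rB}{\distSI}$. The identical computation for Bob --- now with prefix $\barS$ instead of $S$, whose law is the same as that of $S$, which is exactly why $\rB_{\barS}$ (rather than $\rB_S$) is the publicly sampled half --- bounds $\mii{\rY_1,\rY_2}{\rProtPI\mid\rX_1,\rX_2,\rK}{\distPI}$ by $\frac{1}{n-1}\mii{\rProtSI}{\rB\mid\rA}{\distSI}$, and adding the two bounds gives the lemma.

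The step I expect to be the main obstacle is the middle one: checking that the ``swap transcript $\leftrightarrow$ conditioning'' manipulations are valid in exactly the right direction and that they terminate on the quantity $\mi{\rA_j}{\rProtSI \mid \rA_S,\rB,\rT=i}$ whose average telescopes via the chain rule. In particular, the asymmetric choice to publish $\rA_S$ together with $\rB_{\barS}$ (and not, say, $\rA_S$ together with $\rB_S$) is what makes Alice's prefix set $S$ and Bob's prefix set $\barS$ both uniformly distributed over subsets of uniformly random size, so that one and the same chain-rule identity closes both terms of $\ICost{\protSI}{\distSI}$ at once; getting this matching right, for $\rK=1$ and $\rK=2$ and for both players, is where the care lies.
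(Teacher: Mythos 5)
Your proposal is correct and follows essentially the same route as the paper's proof: reduce to the inner $\protSI$-transcript via the public-randomness identity, embed $(\rX_2,\rY_2)$ in coordinate $j$, use the conditional independence of coordinates (given $\rT=i$) to move $\rB_{-i}$ into the conditioning, and then exploit the fact that $(j,S)$ is distributed as a uniformly random ordering of $[n]\setminus\set{i}$ with a uniform split point so that the chain rule telescopes to $\frac{1}{n-1}\mi{\rA_{-i}}{\rProtSI\mid\rB_{-i},\rT=i}$. Your "random ordering plus split point" observation is exactly the content of the paper's Claim~\ref{clm:aux-ds-sum} (the $(n-2-\ell)!\,\ell!$ counting), and the remaining steps (adding the deterministic target coordinate, dropping the conditioning on $\rT$, and the symmetric handling of $\rK=2$ and of Bob via $\barS$) match the paper's argument.
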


The intuition behind the proof is as follows. The LHS in Lemma~\ref{lem:protDisj-info-cost} is the information revealed about the input of players (in $\PI$) averaged over choices of $k=1$ and $k=2$.
Let us assume $k=1$, as the other case is symmetric. In this case, this quantity is simply the information revealed about $(x_2,y_2)$ by the protocol as $(x_1,y_1) = (1,1)$ and hence has $0$ information (once we have conditioned on the event $k=1$). 
However, when $k=1$, $(x_2,y_2)$ is embedded in index $j$, i.e., $(x_2,y_2) = (a_j,b_j)$ and have the same distribution as all other coordinates in $A_{-i},B_{-i}$. 
As such, since the protocol $\protSI$ called inside $\protPI$ is oblivious to the choice of $j$, the information revealed about $(a_j,b_j)$ in average is smaller than the information revealed by $\protSI$ about $A_{-i},B_{-i}$ 
(which itself is at most the internal information cost of $\protSI$), by a factor of $n-1$ (i.e., the number of coordinates in $[n] \setminus \set{i}$ we are averaging over).

The outline above oversimplifies many details. One such detail is the way of ensuring a ``symmetric treatment'' of both indices $i$ and $j$
through the rather peculiar choice of public-private sampling in $\protPI$ (via the choices of $\ell$ and $S$). This is crucial 
for the above argument to work for both $k=1$ and $k=2$ cases simultaneously, without the players knowing which index the ``averaging'' of information is being done for (index $j$ in the context of the discussion above). 

\begin{proof}[Proof of Lemma~\ref{lem:protDisj-info-cost}]
    For simplicity of exposition, we drop the subscript $\distPI$ from all mutual information terms with the understanding that all random variables
    are distributed according to $\distPI$ (and the randomness of protocol $\protPI$ on $\distPI$) unless explicitly stated otherwise. 

    We bound the first term in LHS above (the second term can be bounded the same way). By expanding the conditional mutual information term we have, 
    \begin{align}
        \mi{\rX_1,\rX_2}{\rProtPI \mid \rY_1,\rY_2, \rK}{} &= \frac{1}{2} \cdot \mi{\rX_1,\rX_2}{\rProtPI \mid \rY_1,\rY_2, \rK=1}{} \notag \\
                                                                 &\hspace{3cm}+ \frac{1}{2} \cdot \mi{\rX_1,\rX_2}{\rProtPI \mid \rY_1,\rY_2, \rK=2}{} \label{eq:disj-info-K}.
    \end{align}
    We now focus on the first term in the LHS of Eq~(\ref{eq:disj-info-K}). We have, 
    \begin{align*}
        \mi{\rX_1,\rX_2}{\rProtPI \mid \rY_1,\rY_2, \rK=1} &= \mi{\rX_2}{\rProtPI \mid \rY_2, \rK=1} \tag{$(\rX_1,\rY_1)$ is always equal to $(1,1)$ in $\distPI$ conditioned on $\rK=1$} \\
                                                                   &= \mi{\rX_2}{\rProtSI \mid \rY_2, \rI,\rJ,\rS,\rL,\rA_{\rS},\rB_{\rbarS}, \rK=1} \tag{$\protPI$ runs $\protSI$ with public randomness $\rI,\rJ,\rS,\rL,\rA_{\rS},\rB_{\rbarS}$ ($\rL$ is for $\ell$) and by Proposition~\ref{prop:public-random}} \\
                                                                         &= \sum_{i \neq j} \frac{1}{n (n-1)} \cdot \mi{\rA_j}{\rProtSI \mid \rB_j, \rL,\rS,\rA_{\rS},\rB_{\rbarS},\rI=i,\rJ=j, \rK=1}  \tag{$(\rX_2,\rY_2)$ is embedded in $(\rA_j,\rB_j)$ conditioned on $\rJ=j$}.
    \end{align*}

    Recall that $\rT$ denotes the unique index in $[n]$ in instances $(A,B) \sim \distSI$ which is equal to $(1,1)$. Note that $\rT = i$ conditioned on $\rI = i$ and $\rK=1$, and that conditioning
    on the event $\rT = i$ has the same effect on all random variables above as conditioning on the joint event $\rI = i,\rK=1$. Hence, we can write the RHS above as, 
    \begin{align*}
        \mi{\rX_1,\rX_2}{\rProtPI \mid \rY_1,\rY_2, \rK=1}{} &= \frac{1}{n (n-1)} \sum_{i \neq j}  \mi{\rA_j}{\rProtSI \mid \rB_j, \rL,\rS,\rA_{\rS},\rB_{\rbarS},\rT=i,\rJ=j}{} \\
                                                                     &\leq \frac{1}{n (n-1)} \sum_{i \neq j}  \mi{\rA_j}{\rProtSI \mid \rL,\rS,\rA_{\rS},\rB_{-i},\rT=i,\rJ=j}{}. \tag{as $\rA_j \perp \rB_{-i} \mid \rB_j$ (and other variables above) and hence we can apply Proposition~\ref{prop:info-increase}} 
    \end{align*}
    By further expanding the conditional mutual information term in RHS over $\rL$ and $\rS$, 
    \begin{align}
        &\mi{\rX_1,\rX_2}{\rProtPI \mid \rY_1,\rY_2, \rK=1}{} \notag \\ 
              &\hspace{0.5cm}\leq 
        \frac{1}{n(n-1)} \sum_{i=1}^{n} \sum_{\substack{j=1 \\ j\neq i}}^{n} \sum_{\ell = 0}^{n-2} \sum_{\substack{S \subseteq [n] \setminus \set{i,j} \\ \card{S} = \ell}} \frac{1}{n-1}  {{n-2}\choose{\ell}}^{-1} \cdot
        \mi{\rA_j}{\rProtSI \mid \rA_{S},\rB_{-i},\rL = \ell, \rS=S,\rT=i,\rJ=j}{} \notag \\
        &\hspace{0.5cm}= 
        \frac{1}{n(n-1) \cdot (n-1)!} \sum_{i=1}^{n} \sum_{\substack{j=1 \\ j\neq i}}^{n} \sum_{\ell = 0}^{n-2} \sum_{\substack{S \subseteq [n] \setminus \set{i,j} \\ \card{S} = \ell}} \paren{(n-2-\ell)!\ell!} \cdot
        \mi{\rA_j}{\rProtSI \mid \rA_{S},\rB_{-i},\rT=i}{}, \label{eq:apply-clm-RHS}
    \end{align}
    by reorganization of the terms and dropping the conditioning on events $\rL = \ell,\rS=S,\rJ=j$ as the distribution of remaining random variables are independent of these events. We now have the following
    auxiliary claim. 
    \begin{claim}\label{clm:aux-ds-sum}
        For any choice of  $i \in [n]$, 
        \begin{align*}
            &\sum_{\substack{j=1 \\ j\neq i}}^{n} \sum_{\ell = 0}^{n-2} \sum_{\substack{S \subseteq [n] \setminus \set{i,j} \\ \card{S} = \ell}} \paren{(n-2-\ell)!\ell!} \cdot
            \mi{\rA_j}{\rProtSI \mid \rA_{S},\rB_{-i},\rT=i}{} \\
            &\hspace{1cm} = \sum_{\sigma \in \mathcal{S}_{-i}} \sum_{\ell=0}^{n-2}\mi{\rA_{\sigma(\ell+1)}}{\rProtSI \mid \rA_{\sigma(<\ell+1)},\rB_{-i},\rT=i}{},
        \end{align*}
        where $\mathcal{S}_{-i}$ is the set of all permutations of $[n] \setminus \set{i}$.
    \end{claim}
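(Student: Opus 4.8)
The plan is to prove Claim~\ref{clm:aux-ds-sum} purely by a combinatorial re-indexing: the right-hand side is a sum over permutations of $[n]\setminus\set{i}$, and I would group those permutations according to the data they induce, matching the result term-by-term with the left-hand side.

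First I would observe that a pair $(\sigma,\ell)$ with $\sigma \in \mathcal{S}_{-i}$ and $\ell \in \set{0,\ldots,n-2}$ determines a triple $(j,\ell,S)$ by setting $j := \sigma(\ell+1) \in [n]\setminus\set{i}$ and $S := \set{\sigma(1),\ldots,\sigma(\ell)} \subseteq [n]\setminus\set{i,j}$, a set of size exactly $\ell$. The key point is that the summand $\mi{\rA_{\sigma(\ell+1)}}{\rProtSI \mid \rA_{\sigma(<\ell+1)},\rB_{-i},\rT=i}{}$ depends on $(\sigma,\ell)$ only through this triple: conditioning on the ordered tuple $\rA_{\sigma(1)},\ldots,\rA_{\sigma(\ell)}$ is the same as conditioning on the unordered collection $\rA_S$ (the joint distribution is insensitive to the order in which we list the variables), and $\rA_{\sigma(\ell+1)} = \rA_j$; hence the term equals exactly $\mi{\rA_j}{\rProtSI \mid \rA_S,\rB_{-i},\rT=i}{}$, which is precisely the mutual-information factor appearing on the left-hand side.

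Next I would count, for a fixed admissible triple $(j,\ell,S)$ (that is, $j\in[n]\setminus\set{i}$, $\ell\in\set{0,\ldots,n-2}$, $S\subseteq[n]\setminus\set{i,j}$ with $\card{S}=\ell$), how many pairs $(\sigma,\ell')$ on the right map to it. Necessarily $\ell'=\ell$ since $\card{S}=\ell$; the restriction of $\sigma$ to positions $1,\ldots,\ell$ is an arbitrary ordering of $S$, giving $\ell!$ choices; $\sigma(\ell+1)=j$ is forced; and the restriction of $\sigma$ to the remaining $n-2-\ell$ positions is an arbitrary ordering of the $n-2-\ell$ elements of $[n]\setminus(\set{i,j}\cup S)$, giving $(n-2-\ell)!$ choices. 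So the multiplicity of the triple is exactly $\ell!\,(n-2-\ell)!$, matching the coefficient in front of the corresponding term on the left-hand side.

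Combining the two observations, re-grouping the right-hand sum by the induced triple yields
$\sum_{j\neq i}\sum_{\ell=0}^{n-2}\sum_{\substack{S\subseteq[n]\setminus\set{i,j}\\ \card{S}=\ell}} (n-2-\ell)!\,\ell!\cdot \mi{\rA_j}{\rProtSI \mid \rA_S,\rB_{-i},\rT=i}{}$, which is exactly the left-hand side, completing the proof. I do not expect any genuine obstacle here; the only points that merit a moment of care are the (standard) fact that conditioning on an ordered tuple of random variables is the same as conditioning on the underlying set, so the information term depends only on $(j,S)$, and getting the permutation count right, namely that pinning down the images of the first $\ell+1$ positions leaves exactly $(n-2-\ell)!$ ways to fill in the rest.
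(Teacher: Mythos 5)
Your proposal is correct and matches the paper's own proof: both arguments fix a triple $(j,\ell,S)$, observe that the mutual-information term $\mi{\rA_{\sigma(\ell+1)}}{\rProtSI \mid \rA_{\sigma(<\ell+1)},\rB_{-i},\rT=i}{}$ depends only on that triple, and count that exactly $\ell!\,(n-2-\ell)!$ permutations $\sigma\in\mathcal{S}_{-i}$ induce it. Your write-up is if anything slightly more explicit about why conditioning on the ordered tuple equals conditioning on the unordered set, but the argument is the same.
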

    \begin{proof}
        Fix any $(j,S)$ in the LHS. For integer $\ell = \card{S}$, there are exactly $\paren{(n-2-\ell)!\ell!}$ permutations $\sigma \in \mathcal{S}_{-i}$ such that $(i)$ $\sigma(\ell+1) = j$ and $(ii)$ $\set{\sigma(1),\ldots,\sigma(\ell)} = S$. 
        Hence, $\mi{\rA_j}{\rProtSI \mid \rA_{S},\rB_{-i},\rT=i}{}$ for $(j,S)$ appears exactly $\paren{(n-2-\ell)!\ell!}$ times in RHS as $\mi{\rA_{\sigma(\ell+1)}}{\rProtSI \mid \rA_{\sigma(<\ell+1)},\rB_{-i},\rT=i}{}$ (for appropriate choices of $\sigma$ as described above),
        proving the claim. \Qed{Claim~\ref{clm:aux-ds-sum}}

    \end{proof}

    \noindent
    By applying Claim~\ref{clm:aux-ds-sum} to the RHS of Eq~(\ref{eq:apply-clm-RHS}), we obtain that, 
    \begin{align*}
        \mi{\rX_1,\rX_2}{\rProtPI \mid \rY_1,\rY_2, \rK=1}{} &\leq
        \frac{1}{n(n-1)  (n-1)!} \sum_{i=1}^{n} \sum_{\sigma \in \mathcal{S}_{-i}} \sum_{\ell=0}^{n-2}\mi{\rA_{\sigma(\ell+1)}}{\rProtSI \mid \rA_{\sigma(<\ell+1)},\rB_{-i},\rT=i}{} \\
        &= \frac{1}{n(n-1)  (n-1)!} \sum_{i=1}^{n} \sum_{\sigma \in \mathcal{S}_{-i}} \mi{\rA_{-i}}{\rProtSI \mid \rB_{-i},\rT=i} \tag{by chain rule of mutual information in \itfacts{chain-rule}} \\ 
              &= \frac{1}{n(n-1)} \sum_{i=1}^{n} \mi{\rA_{-i}}{\rProtSI \mid \rB_{-i},\rT=i} \tag{as $\card{\mathcal{S}_{-i}} = (n-1)!$} \\
                    &= \frac{1}{n-1} \cdot \mi{\rA}{\rProtSI \mid \rB,\rT} \tag{as $(\rA_\rT,\rB_\rT) = (1,1)$ in $\distPI$ and hence we can add them to the information term} \\
                          &\leq \frac{1}{n-1} \cdot \mi{\rA}{\rProtSI \mid \rB} = \frac{1}{n-1} \cdot \mii{\rA}{\rProtSI \mid \rB}{\distSI},
    \end{align*}
    where the last inequality is because $\rProtSI \perp \rT \mid \rA,\rB$ (as the transcript is only a function of the inputs) and hence we can apply Proposition~\ref{prop:info-decrease}, and the last equality
    is because by Observation~\ref{obs:dist-PI-SI}, joint distribution of $\distPI$ and randomness of the protocol $\protPI$ is the same as distribution $\distSI$. Using the same exact analysis (by switching the role of indices $i$ and $j$ and noting that
    the rest is all symmetric), we also obtain the following bound for the second term of Eq~(\ref{eq:disj-info-K}),
    \begin{align*}
        \mi{\rX_1,\rX_2}{\rProtPI \mid \rY_1,\rY_2, \rK=2}{} \leq \frac{1}{n-1} \cdot \mii{\rA}{\rProtSI \mid \rB}{\distSI}.
    \end{align*}
    Plugging in these bounds in Eq~(\ref{eq:disj-info-K}), we obtain that,
    \begin{align}
        \mii{\rX_1,\rX_2}{\rProtPI \mid \rY_1,\rY_2, \rK}{\distPI}  \leq \frac{1}{n-1} \cdot \mii{\rA}{\rProtSI \mid \rB}{\distSI}. \label{eq:sum-1}
    \end{align}

    Similarly, the second term in the LHS of Lemma~\ref{lem:protDisj-info-cost} can be upper bounded using a similar analysis (by switching the role of $\rA$ and $\rB$, and $\rS$ and $\rbarS$ and noting that the rest is all symmetric), implying
    the following bound:
    \begin{align}
        \mii{\rY_1,\rY_2}{\rProtPI \mid \rX_1,\rX_2, \rK}{\distPI}  \leq \frac{1}{n-1} \cdot \mii{\rB}{\rProtSI \mid \rA}{\distSI}. \label{eq:sum-2}
    \end{align}	
    Summing up the LHS and RHS in Eq~(\ref{eq:sum-1}) and Eq~(\ref{eq:sum-2}), finalizes the proof. \Qed{Lemma~\ref{lem:protDisj-info-cost}}

\end{proof}

\subsection*{Proof of Lemma~\ref{lem:PI-ic}: Information Complexity of $\PI$}

We now prove the final step of the proof of Theorem~\ref{thm:SI-tvd}. The following is a restatement of Lemma~\ref{lem:PI-ic}. 

\begin{lemma*}[Restatement of Lemma~\ref{lem:PI-ic}]
    Suppose $\protPI$ outputs the correct answer on $\distPI$ with probability at least $\frac{1}{2} + \Omega(\eps)$. Then, 
    \begin{align*}
        \mii{\rX_1,\rX_2}{\rProtPI \mid \rY_1,\rY_2, \rK}{\distPI} + \mii{\rY_1,\rY_2}{\rProtPI \mid \rX_1,\rX_2, \rK}{\distPI} = \Omega(\eps^2).
    \end{align*}
\end{lemma*}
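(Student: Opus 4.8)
\textbf{Proof plan (for Lemma~\ref{lem:PI-ic}).} The plan is to turn the left-hand side into a sum of squared Hellinger distances between conditional transcript distributions of $\protPI$, and then play the correctness hypothesis off against the cut-and-paste property of randomized protocols (Fact~\ref{fact:r-rectangle}).

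\textbf{Step 1: reduce to Hellinger distances.} For fixed inputs $(x_1,x_2)$ to Alice and $(y_1,y_2)$ to Bob, let $\Prot_{\disjin{x_1}{x_2}{y_1}{y_2}}$ denote the distribution of $\rProtPI$ conditioned on these inputs. Conditioned on $\rK=1$ one has $(\rX_1,\rY_1)=(1,1)$ and $(\rX_2,\rY_2)\sim\mu$, so $\mi{\rX_1,\rX_2}{\rProtPI\mid\rY_1,\rY_2,\rK=1}=\mi{\rX_2}{\rProtPI\mid\rY_2,\rK=1}$; conditioning further on $\rY_2$, only the branch $\rY_2=0$ (which has probability $2/3$ given $\rK=1$) contributes, and on it $\rX_2$ is an unbiased bit with transcript $\Prot_{\disjin{1}{0}{1}{0}}$ when $\rX_2=0$ and $\Prot_{\disjin{1}{1}{1}{0}}$ when $\rX_2=1$. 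By the standard lower bound on the mutual information of an unbiased binary input in terms of squared Hellinger distance, this term is bounded below by a universal constant times $\hdt{\Prot_{\disjin{1}{0}{1}{0}}}{\Prot_{\disjin{1}{1}{1}{0}}}$. Doing the same for the $\rK=2$ half of the first summand and for both halves of the second summand, and abbreviating the six conditional transcript distributions
\[
a:=\Prot_{\disjin{1}{0}{1}{0}},\ \ b:=\Prot_{\disjin{1}{1}{1}{0}},\ \ c:=\Prot_{\disjin{1}{0}{1}{1}},\qquad
a':=\Prot_{\disjin{0}{1}{0}{1}},\ \ b':=\Prot_{\disjin{1}{1}{0}{1}},\ \ c':=\Prot_{\disjin{0}{1}{1}{1}},
\]
the left-hand side of the lemma is $\Omega\!\left(\hdt{a}{b}+\hdt{a}{c}+\hdt{a'}{b'}+\hdt{a'}{c'}\right)$; call this sum $S$. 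Observe that $\{a,b,c\}$ are exactly the transcript distributions of the three $\distPI$-inputs whose answer is $1$ and $\{a',b',c'\}$ those with answer $2$, and these six inputs constitute the entire support of $\distPI$. So it suffices to show $S=\Omega(\eps^2)$.

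\textbf{Step 2: extract a large cross distance.} Since the answer of $\protPI$ is a deterministic function of $\rProtPI$, its success probability on $\distPI$ is at most $\tfrac12+\tfrac12\,\tvd{\tfrac13(a+b+c)}{\tfrac13(a'+b'+c')}$, so the hypothesis gives $\tvd{\tfrac13(a+b+c)}{\tfrac13(a'+b'+c')}=\Omega(\eps)$. By convexity of total variation distance there is a pair $P\in\{a,b,c\}$ and $Q\in\{a',b',c'\}$ with $\tvd{P}{Q}=\Omega(\eps)$, hence $\hdt{P}{Q}=\Omega(\eps^2)$. \textbf{Step 3: route a bridge through $\Prot_{\disjin{1}{1}{1}{1}}$.} Set $\gamma:=\max\{\hdt{a}{b},\hdt{a}{c},\hdt{a'}{b'},\hdt{a'}{c'}\}\le S$. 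Because Hellinger distance is a metric, $\hdt{b}{c}\le 4\gamma$ and $\hdt{b'}{c'}\le 4\gamma$. Now apply cut-and-paste (Fact~\ref{fact:r-rectangle}) to $\protPI$, swapping the two players' inputs: this yields $\hdt{b}{c}=\hdt{\Prot_{\disjin{1}{1}{1}{1}}}{a}$ and $\hdt{b'}{c'}=\hdt{\Prot_{\disjin{1}{1}{1}{1}}}{a'}$. Thus $\Prot_{\disjin{1}{1}{1}{1}}$ lies within squared Hellinger $4\gamma$ of both $a$ and $a'$, so $\hdt{a}{a'}\le 16\gamma$, and using $a$ and $a'$ as pivots every cross pair satisfies $\hdt{\cdot}{\cdot}\le 36\gamma$. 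Combined with Step 2 this forces $36\gamma=\Omega(\eps^2)$, hence $S\ge\gamma=\Omega(\eps^2)$, and Step 1 then gives the claimed bound.

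\textbf{The main obstacle} is Step 3: the four ``budget'' Hellinger terms only connect the three answer-$1$ transcripts among themselves (through the hub $a$) and the three answer-$2$ transcripts among themselves (through the hub $a'$), with nothing linking the two groups directly; the cut-and-paste identity evaluated at the out-of-support input $\disjin{1}{1}{1}{1}$ is precisely what synthesizes the missing bridge between the two hubs. The remaining points — carrying the universal constant from the information-to-Hellinger inequality and the $2/3$ weights, and verifying that conditioning under $\distPI$ really does make the relevant coordinate an unbiased bit — are routine.
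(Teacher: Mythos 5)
Your proposal is correct and follows essentially the same route as the paper's proof: the left-hand side is lowered to the same four squared Hellinger distances via the unbiased-bit version of Fact~\ref{fact:hellinger-kl} (the paper's Claims~\ref{clm:disj2-info-terms} and~\ref{clm:disj2-h-terms-1}), the two answer-classes are bridged by applying cut-and-paste (Fact~\ref{fact:r-rectangle}) at the out-of-support input $\disjin{1}{1}{1}{1}$ (Claim~\ref{clm:disj2-h-terms-2}), and correctness is played off against the resulting closeness of cross pairs (Claim~\ref{clm:disj2-sep-terms}). The only differences are presentational: you argue directly with explicit constants rather than by contradiction with $o(\cdot)$ asymptotics, and your Step~2 extracts the far cross pair by joint convexity of total variation on the two uniform mixtures, which is a slightly cleaner formulation of the paper's final averaging step.
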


The idea behind the proof of Lemma~\ref{lem:PI-ic} is as follows. Recall that $\Prot_{\disjin{x_1}{x_2}{y_1}{y_2}}$ denotes the transcript of the protocol condition on the input being $\disjin{x_1}{x_2}{y_1}{y_2}$. 
Suppose towards the contradiction that the LHS of Lemma~\ref{lem:PI-ic} is $o(\eps^2)$ instead. By focusing on the conditional terms when $k=1$, we can
show that distribution of $\Prot_{\disjin{1}{x'_2}{1}{y'_2}}$ and $\Prot_{\disjin{1}{x''_2}{1}{y''_2}}$ for all choices of $(x'_2,y'_2)$ and $(x''_2,y''_2)$ in the support of $\distPI$ (basically everything except for $(1,1)$) 
are quite close. This is intuitively because the information revealed about $(x_2,y_2)$ by $\protPI$ conditioned on $k=1$ is small. Similarly, by focusing on the $k=2$ terms, we obtain the same result for 
$\Prot_{\disjin{x'_2}{1}{y'_2}{1}}$ and $\Prot_{\disjin{x''_2}{1}{y''_2}{1}}$. 

Up until this point, there is no contradiction as the answer to $\disjin{1}{*}{1}{*}$ is always $1$ and hence there is no problem with the corresponding transcripts 
in $\Prot_{\disjin{1}{*}{1}{*}}$ to be similar (similarly for $\Prot_{\disjin{*}{1}{*}{1}}$ separately). However, we combine the previous part with the cut-and-paste property of randomized protocols (Fact~\ref{fact:r-rectangle})
to argue that in fact the distribution of $\Prot_{\disjin{1}{0}{1}{0}}$ and $\Prot_{\disjin{0}{1}{0}{1}}$ are also similar. This then basically implies that $\Prot_{\disjin{1}{*}{1}{*}}$ essentially has the same distribution as 
$\Prot_{\disjin{*}{1}{*}{1}}$; but then this is a contradiction as the answer to the protocol (which is only a function of the transcript) needs to be different between these two types of inputs. We now formalize the proof (a schematic organization of the proof is provided in Appendix~\ref{app:schematic}).

\begin{proof}[Proof of Lemma~\ref{lem:PI-ic}]
    The distribution of random variables below is always $\distPI$ (and the randomness of the protocol $\protPI$ on $\distPI$) and 
    hence  we drop the subscript $\distPI$ from all mutual information terms. Suppose towards a contradiction that the LHS in the lemma statement is $o(\eps^2)$. 
    As we showed in Eq~(\ref{eq:disj-info-K}) and the subsequent equation in the proof of Lemma~\ref{lem:protDisj-info-cost}, the LHS can be written as
    \begin{align}
        &\frac{1}{2} \cdot \paren{\mi{\rX_2}{\rProtPI \mid \rY_2, \rK=1}{} + \mii{\rY_2}{\rProtPI \mid \rX_2, \rK=1}{}} \notag \\
              &\hspace{1.5cm} + \frac{1}{2} \cdot \paren{\mi{\rX_2}{\rProtPI \mid \rY_2, \rK=1}{} + \mii{\rY_2}{\rProtPI  \mid \rX_2, \rK=1}{}} = o(\eps^2) \label{eq:each-term-cont}.
    \end{align}
    By bounding each of the above term above separately by $o(\eps^2)$ and expanding the mutual information terms, we prove the following claim.
    \begin{claim}\label{clm:disj2-info-terms}
        Assuming Eq~(\ref{eq:each-term-cont}),
        \begin{align*}
            (1) ~ \mi{\rX_2}{\rProtPI \mid \rY_2=0, \rK=1} = o(\eps^2), \qquad (2) ~ \mi{\rY_2}{\rProtPI \mid \rX_2=0, \rK=1} = o(\eps^2), \\
            (3) ~ \mi{\rX_1}{\rProtPI \mid \rY_1=0, \rK=2} = o(\eps^2), \qquad (4) ~ \mi{\rY_1}{\rProtPI \mid \rX_1=0, \rK=2} = o(\eps^2).
        \end{align*}
    \end{claim}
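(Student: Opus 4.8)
The plan is to read off all four identities directly from Eq~(\ref{eq:each-term-cont}) using only two elementary moves: splitting a sum of non-negative information terms, and then decomposing each conditional mutual information over the (binary) value of its conditioning coordinate.

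First I would recall that Eq~(\ref{eq:each-term-cont})---which comes from Eq~(\ref{eq:disj-info-K}) together with the simplification already used in the proof of Lemma~\ref{lem:protDisj-info-cost}, namely that conditioned on $\rK=1$ the pair $(\rX_1,\rY_1)$ is the constant $(1,1)$ and symmetrically $(\rX_2,\rY_2)=(1,1)$ conditioned on $\rK=2$---asserts that $\frac{1}{2}$ times the sum of the four terms $\mi{\rX_2}{\rProtPI \mid \rY_2, \rK=1}$, $\mi{\rY_2}{\rProtPI \mid \rX_2, \rK=1}$, $\mi{\rX_1}{\rProtPI \mid \rY_1, \rK=2}$, and $\mi{\rY_1}{\rProtPI \mid \rX_1, \rK=2}$ equals $o(\eps^2)$. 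Since each of these four terms is a non-negative mutual information, each one is individually $o(\eps^2)$.

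Next, for each of the four terms I would condition further on the value held by the other player. Take the first one, $\mi{\rX_2}{\rProtPI \mid \rY_2, \rK=1}$. Under $\distPI$ conditioned on $\rK=1$, the pair $(\rX_2,\rY_2)$ is distributed as $\mu$, i.e., uniformly over $\set{(0,0),(0,1),(1,0)}$; hence $\rY_2=0$ with probability $2/3$ and $\rY_2=1$ with probability $1/3$, and conditioned on the event $\{\rY_2=1,\rK=1\}$ the value of $\rX_2$ is forced to $0$, so $\mi{\rX_2}{\rProtPI \mid \rY_2=1, \rK=1}=0$. Therefore $\mi{\rX_2}{\rProtPI \mid \rY_2, \rK=1}=\frac{2}{3}\cdot\mi{\rX_2}{\rProtPI \mid \rY_2=0, \rK=1}$, and since the left-hand side is $o(\eps^2)$ this yields item $(1)$. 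Items $(2)$, $(3)$, and $(4)$ follow by the identical computation after swapping the roles of the two players and/or swapping $\rK=1$ with $\rK=2$: in each case conditioning on the ``$=1$'' value of the prescribed coordinate pins the other coordinate to $0$, so that branch contributes nothing and the ``$=0$'' branch inherits the $o(\eps^2)$ bound up to the factor $2/3$.

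I do not expect a genuine obstacle in this particular claim; it is essentially bookkeeping. The two points that need a moment of care are (a) checking that under $\distPI$ conditioned on $\rK=k$ the non-target pair is distributed exactly as $\mu$, so that the ``$=1$'' branches really are degenerate, and (b) keeping straight which of the two index pairs is pinned to $(1,1)$ in the $\rK=1$ versus $\rK=2$ cases. The substantive part of Lemma~\ref{lem:PI-ic}---using items $(1)$--$(4)$ to conclude that $\Prot_{\disjin{1}{x'_2}{1}{y'_2}}$ has essentially the same distribution for all $(x'_2,y'_2)$ in the support of $\distPI$, then invoking the cut-and-paste property (Fact~\ref{fact:r-rectangle}) to force $\Prot_{\disjin{1}{0}{1}{0}}$ and $\Prot_{\disjin{0}{1}{0}{1}}$ to be close, and finally contradicting correctness---comes after this claim and is not needed to establish it.
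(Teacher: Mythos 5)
Your proof is correct and follows essentially the same route as the paper: split Eq~(\ref{eq:each-term-cont}) into its four non-negative summands so each is individually $o(\eps^2)$, then expand each conditional mutual information over the binary conditioning coordinate and observe that the ``$=1$'' branch is degenerate (the other coordinate is forced to $0$ under $\mu$), leaving only the $2/3$-weighted ``$=0$'' branch. Nothing is missing; the only extra care you flag (that the non-target pair is distributed exactly as $\mu$ conditioned on $\rK=k$) is indeed the fact the paper also relies on.
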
 
    \begin{proof}
        To prove the first equation, we write the first term in Eq~(\ref{eq:each-term-cont}) as follows: 
        \begin{align*}
            \mi{\rX_2}{\rProtPI \mid \rY_2, \rK=1} &= \frac{2}{3} \cdot \mi{\rX_2}{\rProtPI \mid \rY_2 = 0, \rK=1} + \frac{1}{3} \cdot \mi{\rX_2}{\rProtPI \mid \rY_2 = 1, \rK=1} \\
                                                      &=  \frac{2}{3} \cdot \mi{\rX_2}{\rProtPI \mid \rY_2 = 0, \rK=1},
        \end{align*}
        since for $(\rX_2,\rY_2) \sim \distPI \mid \rK=1$, if $\rY_2 = 1$, then $\rX_2$ is always equal to $0$ and hence the second term above is zero. As the LHS of above equation is $o(\eps^2)$ by
        Eq~(\ref{eq:each-term-cont}) (and non-negativity of mutual information in~\itfacts{info-zero}), we obtain the first equation in the statement of the claim. The remaining equations can be proven exactly the same. 
        \Qed{Claim~\ref{clm:disj2-info-terms}}

    \end{proof}

    We now use Claim~\ref{clm:disj2-info-terms}, to bound the distance between different transcripts of the protocol. Recall that $\rProt_{\disjin{x_1}{x_2}{y_1}{y_2}}$ denotes the transcript of the protocol 
    conditioned on the input $(x_1,x_2)$ to Alice, and $(y_1,y_2)$ to Bob. 
    \begin{claim}\label{clm:disj2-h-terms-1}
        Assuming Eq~(\ref{eq:each-term-cont}),
        \begin{align*}
            (1) ~ \hdt{\rProt_{\disjin{1}{1}{1}{0}}}{\rProt_{\disjin{1}{0}{1}{0}}} = o(\eps^2), \qquad (2) ~ \hdt{\rProt_{\disjin{1}{0}{1}{1}}}{\rProt_{\disjin{1}{0}{1}{0}}} = o(\eps^2), \\
            (3) ~ \hdt{\rProt_{\disjin{1}{1}{0}{1}}}{\rProt_{\disjin{0}{1}{0}{1}}} = o(\eps^2), \qquad (4) ~ \hdt{\rProt_{\disjin{0}{1}{1}{1}}}{\rProt_{\disjin{0}{1}{0}{1}}} = o(\eps^2).  
        \end{align*}
    \end{claim}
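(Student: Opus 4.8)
The plan is to read off each of the four squared-Hellinger bounds directly from the matching mutual-information bound in Claim~\ref{clm:disj2-info-terms}, using the standard fact that the information a protocol reveals about a uniformly random input bit upper bounds the squared Hellinger distance between the two transcript distributions that bit induces; concretely (see the information-theory preliminaries in Appendix~\ref{sec:info}, cf.~\cite{Bar-YossefJKS02}), for a bit $\rZ \sim \unif_{\set{0,1}}$ and a protocol with transcript $\rProtPI$,
\[
  \mi{\rZ}{\rProtPI} \;\geq\; \hdt{\distribution{\rProtPI \mid \rZ=0}}{\distribution{\rProtPI \mid \rZ=1}}.
\]

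First I would prove part (1). Work in the probability space obtained by conditioning $\distPI$ (and the coins of $\protPI$) on the event $\set{\rK=1,\ \rY_2=0}$. By the definition of $\distPI$, conditioning on $\rK=1$ forces $(\rX_1,\rY_1)=(1,1)$ and leaves $(\rX_2,\rY_2)\sim\mu$; further conditioning on $\rY_2=0$ then leaves $\rX_2$ uniform on $\set{0,1}$. Hence in this space $\distribution{\rProtPI \mid \rX_2 = 1} = \rProt_{\disjin{1}{1}{1}{0}}$ and $\distribution{\rProtPI \mid \rX_2 = 0} = \rProt_{\disjin{1}{0}{1}{0}}$, so applying the displayed inequality with $\rZ = \rX_2$ gives
\[
  \hdt{\rProt_{\disjin{1}{1}{1}{0}}}{\rProt_{\disjin{1}{0}{1}{0}}} \;\leq\; \mi{\rX_2}{\rProtPI \mid \rY_2=0,\ \rK=1} \;=\; o(\eps^2),
\]
the last step being part (1) of Claim~\ref{clm:disj2-info-terms}.

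Parts (2)--(4) follow verbatim after choosing the appropriate conditioning event so that the residual input is again a single $\set{0,1}$-uniform coordinate whose two values index exactly the two transcript distributions named in the claim: for (2), condition on $\set{\rK=1,\ \rX_2=0}$ (free bit $\rY_2$, transcripts $\rProt_{\disjin{1}{0}{1}{1}}$ and $\rProt_{\disjin{1}{0}{1}{0}}$) and use part (2) of Claim~\ref{clm:disj2-info-terms}; for (3), condition on $\set{\rK=2,\ \rY_1=0}$ (free bit $\rX_1$, transcripts $\rProt_{\disjin{1}{1}{0}{1}}$ and $\rProt_{\disjin{0}{1}{0}{1}}$) and use part (3); for (4), condition on $\set{\rK=2,\ \rX_1=0}$ (free bit $\rY_1$, transcripts $\rProt_{\disjin{0}{1}{1}{1}}$ and $\rProt_{\disjin{0}{1}{0}{1}}$) and use part (4).

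I do not expect a real obstacle here: the only thing that needs checking is that, case by case, conditioning on the stated event really does leave a single uniform bit and that its two values pick out precisely the two transcripts in the statement — a mechanical unwinding of the definition of $\distPI$. The genuinely delicate step, which combines these four Hellinger bounds through the cut-and-paste property of randomized protocols (Fact~\ref{fact:r-rectangle}) to conclude that $\rProt_{\disjin{1}{0}{1}{0}}$ and $\rProt_{\disjin{0}{1}{0}{1}}$ are close, belongs to the subsequent claim rather than to this one.
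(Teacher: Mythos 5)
Your proposal is correct and follows essentially the same route as the paper: the paper also reduces each part to the corresponding mutual-information bound of Claim~\ref{clm:disj2-info-terms} after observing that the free coordinate is uniform on $\set{0,1}$ under the stated conditioning, the only cosmetic difference being that you invoke the inequality $\mi{\rZ}{\rProtPI}\geq \hdt{\distribution{\rProtPI\mid \rZ=0}}{\distribution{\rProtPI\mid \rZ=1}}$ as a single standard fact, whereas the paper derives it on the spot from Fact~\ref{fact:kl-info} together with Fact~\ref{fact:hellinger-kl} applied to the mixture $\rProt_{\disjin{1}{*}{1}{0}}=\frac{1}{2}\paren{\rProt_{\disjin{1}{0}{1}{0}}+\rProt_{\disjin{1}{1}{1}{0}}}$.
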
 
    \begin{proof}
        We write the LHS of the first equation in Claim~\ref{clm:disj2-info-terms} in terms of the KL-divergence using Fact~\ref{fact:kl-info}. Define $\rProt_{\disjin{1}{*}{1}{0}}$ as the distribution of $\rProt$ conditioned on the given value for $x_1,y_1,y_2$ (leaving out 
        the assignment for $x_2$). We have,
        \begin{align*}
            \mi{\rX_2}{\rProtPI \mid \rY_2=0, \rK=1} &\Eq{Fact~\ref{fact:kl-info}} \Ex_{x_2 \sim \rX_2 \mid \rY_2=0,\rK=1}[{\kl{\rProt_{\disjin{1}{x_2}{1}{0}}}{\rProt_{\disjin{1}{*}{1}{0}}}}] \\
                                                           &\hspace{10pt}= \frac{1}{2} \cdot \kl{\rProt_{\disjin{1}{0}{1}{0}}}{\rProt_{\disjin{1}{*}{1}{0}}} + \frac{1}{2} \cdot \kl{\rProt_{\disjin{1}{1}{1}{0}}}{\rProt_{\disjin{1}{*}{1}{0}}} \\
                                                                 &\Geq{Fact~\ref{fact:hellinger-kl}} \hdt{\rProt_{\disjin{1}{0}{1}{0}}}{\rProt_{\disjin{1}{1}{1}{0}}}.
        \end{align*}
        The distribution of $\rX_2$ conditioned on $\rY_2 = 0,\rK=1$ in $\distPI$ 
        is uniform over $\set{0,1}$ (hence the second equality). As such, $\rProt_{\disjin{1}{*}{1}{0}} = \frac{1}{2} \cdot \paren{\rProt_{\disjin{1}{0}{1}{0}}+\rProt_{\disjin{1}{1}{1}{0}}}$ and so we can apply Fact~\ref{fact:hellinger-kl} to obtain the last inequality.  
        As $\mi{\rX_2}{\rProtPI \mid \rY_2=0, \rK=1} = o(\eps^2)$ by Claim~\ref{clm:disj2-info-terms}, we obtain the first equation (note that h is symmetric). The remaining equations can be proven similarly. 
        \Qed{Claim~\ref{clm:disj2-h-terms-1}}

    \end{proof}

    The next step is to use the cut-and-paste property (Fact~\ref{fact:r-rectangle}) of randomized protocols to prove the following claim. 
    \begin{claim}\label{clm:disj2-h-terms-2}
        Assuming Eq~(\ref{eq:each-term-cont}), $\hdt{\rProt_{\disjin{1}{0}{1}{0}}}{\rProt_{\disjin{0}{1}{0}{1}}} = o(\eps^2)$. 
    \end{claim}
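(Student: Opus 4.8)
The plan is to derive Claim~\ref{clm:disj2-h-terms-2} from Claim~\ref{clm:disj2-h-terms-1} using only two applications of the cut-and-paste property (Fact~\ref{fact:r-rectangle}) together with the triangle inequality for Hellinger distance, routing everything through the single auxiliary input $\disjin{1}{1}{1}{1}$. Since Hellinger distance is a metric, it is convenient to work with $\hd{P}{Q}$ rather than $\hdt{P}{Q}$ throughout: a bound $\hdt{P}{Q} = o(\eps^2)$ is equivalent to $\hd{P}{Q} = o(\eps)$, a sum of finitely many $o(\eps)$ terms is again $o(\eps)$, and one only squares the final estimate back up at the very end. Note that $\disjin{1}{1}{1}{1}$ lies outside the support of $\distPI$, but the transcript random variable $\rProt_{\disjin{1}{1}{1}{1}}$ of the fixed randomized protocol $\protPI$ (including its public randomness) is still perfectly well-defined on it, which is all that cut-and-paste needs.

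First I would establish $\hd{\rProt_{\disjin{1}{1}{1}{1}}}{\rProt_{\disjin{1}{0}{1}{0}}} = o(\eps)$. Applying Fact~\ref{fact:r-rectangle} with Alice's two inputs $(1,1),(1,0)$ (read as $(x_1,x_2)$) and Bob's two inputs $(1,0),(1,1)$ (read as $(y_1,y_2)$) gives the identity
\[
\hd{\rProt_{\disjin{1}{1}{1}{0}}}{\rProt_{\disjin{1}{0}{1}{1}}} = \hd{\rProt_{\disjin{1}{1}{1}{1}}}{\rProt_{\disjin{1}{0}{1}{0}}}.
\]
By parts $(1)$ and $(2)$ of Claim~\ref{clm:disj2-h-terms-1}, both $\rProt_{\disjin{1}{1}{1}{0}}$ and $\rProt_{\disjin{1}{0}{1}{1}}$ are within $o(\eps)$ Hellinger distance of $\rProt_{\disjin{1}{0}{1}{0}}$, so the triangle inequality bounds the left-hand side, hence the right-hand side, by $o(\eps)$.

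Symmetrically, applying Fact~\ref{fact:r-rectangle} with Alice's two inputs $(1,1),(0,1)$ and Bob's two inputs $(0,1),(1,1)$ yields
\[
\hd{\rProt_{\disjin{1}{1}{0}{1}}}{\rProt_{\disjin{0}{1}{1}{1}}} = \hd{\rProt_{\disjin{1}{1}{1}{1}}}{\rProt_{\disjin{0}{1}{0}{1}}},
\]
and combining with parts $(3)$ and $(4)$ of Claim~\ref{clm:disj2-h-terms-1} and the triangle inequality gives $\hd{\rProt_{\disjin{1}{1}{1}{1}}}{\rProt_{\disjin{0}{1}{0}{1}}} = o(\eps)$. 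One final triangle inequality through the common point $\rProt_{\disjin{1}{1}{1}{1}}$ then produces $\hd{\rProt_{\disjin{1}{0}{1}{0}}}{\rProt_{\disjin{0}{1}{0}{1}}} = o(\eps)$, equivalently $\hdt{\rProt_{\disjin{1}{0}{1}{0}}}{\rProt_{\disjin{0}{1}{0}{1}}} = o(\eps^2)$, which is the claim.

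The work here is essentially bookkeeping rather than conceptual depth, and the step I would double-check most carefully is that the cut-and-paste identity is invoked with the Alice and Bob coordinates paired up correctly — it is easy to transpose a coordinate and land on the wrong input pair — and that the version of Fact~\ref{fact:r-rectangle} being cited is stated for arbitrary input pairs (including the promise-violating $\disjin{1}{1}{1}{1}$ and the implicit mixed inputs), with the public randomness handled consistently inside every conditioning. Correctness of $\protPI$ plays no role in this claim: it already entered through Claim~\ref{clm:disj2-h-terms-1}, and the tension between $\rProt_{\disjin{1}{0}{1}{0}} \approx \rProt_{\disjin{0}{1}{0}{1}}$ and the fact that $\protPI$ must answer $1$ on the former input but $2$ on the latter is exploited later in the proof of Lemma~\ref{lem:PI-ic}, not here.
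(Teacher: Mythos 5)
Your proof is correct and follows essentially the same route as the paper: two applications of cut-and-paste through the auxiliary input $\disjin{1}{1}{1}{1}$, combined with the bounds from Claim~\ref{clm:disj2-h-terms-1} and the triangle inequality. The only cosmetic difference is that you work with the unsquared Hellinger distance and square at the end, whereas the paper stays with $\textnormal{h}^2$ and uses Cauchy--Schwarz to absorb the cross terms; these are equivalent.
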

    \begin{proof}
        We start with proving the following two equations first:
        \begin{align*}
            (1)~ \hdt{\rProt_{\disjin{1}{1}{1}{1}}}{\rProt_{\disjin{1}{0}{1}{0}}} = o(\eps^2), \qquad (2)~ \hdt{\rProt_{\disjin{1}{1}{1}{1}}}{\rProt_{\disjin{0}{1}{0}{1}}} = o(\eps^2). 
        \end{align*}
        For the first equation, 
        \begin{align*}
            \hdt{\rProt_{\disjin{1}{1}{1}{1}}}{\rProt_{\disjin{1}{0}{1}{0}}} &=  \hdt{\rProt_{\disjin{1}{1}{1}{0}}}{\rProt_{\disjin{1}{0}{1}{1}}} \tag{by the cut-and-paste property in Fact~\ref{fact:r-rectangle}} \\
                                                                                   &\leq \paren{\hd{\rProt_{\disjin{1}{1}{1}{0}}}{\rProt_{\disjin{1}{0}{1}{0}}} + \hd{\rProt_{\disjin{1}{0}{1}{0}}}{\rProt_{\disjin{1}{0}{1}{1}}}}^2 \tag{by triangle inequality} \\
                                                                                         &\leq 2\cdot \paren{\hdt{\rProt_{\disjin{1}{1}{1}{0}}}{\rProt_{\disjin{1}{0}{1}{0}}} + \hdt{\rProt_{\disjin{1}{0}{1}{0}}}{\rProt_{\disjin{1}{0}{1}{1}}}} \tag{by Cauchy-Schwartz} \\
                                                                                               &= o(\eps^2) \tag{by parts (1) and (2) of Claim~\ref{clm:disj2-h-terms-1}}.
        \end{align*}
        The second equation can be proven similarly using parts (3) and (4) of Claim~\ref{clm:disj2-h-terms-1}. We can now prove the claim as follows:
        \begin{align*}
            \hdt{\rProt_{\disjin{1}{0}{1}{0}}}{\rProt_{\disjin{0}{1}{0}{1}}} &\leq \paren{\hd{\rProt_{\disjin{1}{0}{1}{0}}}{\rProt_{\disjin{1}{1}{1}{1}}} + \hd{\rProt_{\disjin{1}{1}{1}{1}}}{\rProt_{\disjin{0}{1}{0}{1}}}}^2 \tag{by triangle inequality} \\
                                                                                   &\leq 2\cdot \paren{\hdt{\rProt_{\disjin{1}{0}{1}{0}}}{\rProt_{\disjin{1}{1}{1}{1}}} + \hdt{\rProt_{\disjin{1}{1}{1}{1}}}{\rProt_{\disjin{0}{1}{0}{1}}}} \tag{by Cauchy-Schwartz} \\
                                                                                         &= o(\eps^2). \tag{by part (1) and (2) of the equation above}
        \end{align*}
        This concludes the proof. \Qed{Claim~\ref{clm:disj2-h-terms-2}}

    \end{proof}
    Define $I_1 := \set{\disjin{1}{0}{1}{0}, \disjin{1}{1}{1}{0}, \disjin{1}{0}{1}{1}}$ and $I_2 := \set{\disjin{0}{1}{0}{1}, \disjin{1}{1}{0}{1}, \disjin{0}{1}{1}{1}}.$
    The tuples in $I_1 \cup I_2$ partition all the input tuples in the support of $\distPI$ and moreover, for every tuple in $I_1$, the correct answer to $\PI$ is the first index, 
    while for every tuple in $I_2$, the correct answer is the second index. We now bound the total variation distance between every pair of tuples in $I_1$ and $I_2$. 

    \begin{claim}\label{clm:disj2-sep-terms}
        Assuming Eq~(\ref{eq:each-term-cont}), for every $(T_1,T_2) \in I_1 \times I_2$, $\tvd{\rProt_{T_1}}{\rProt_{T_2}} = o(\eps)$. 
    \end{claim}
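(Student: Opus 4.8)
The plan is to chain together the Hellinger-distance estimates already obtained in Claims~\ref{clm:disj2-h-terms-1} and~\ref{clm:disj2-h-terms-2}, using the fact that Hellinger distance $\hd{\cdot}{\cdot}$ is a genuine metric, and then to convert the resulting bound into total variation distance. Concretely, I would treat $\rProt_{\disjin{1}{0}{1}{0}}$ as the ``anchor'' transcript on the $I_1$ side and $\rProt_{\disjin{0}{1}{0}{1}}$ as the anchor on the $I_2$ side. Parts (1)--(2) of Claim~\ref{clm:disj2-h-terms-1} give $\hdt{\rProt_{T_1}}{\rProt_{\disjin{1}{0}{1}{0}}} = o(\eps^2)$ for every $T_1 \in I_1$ (trivially when $T_1 = \disjin{1}{0}{1}{0}$), and parts (3)--(4) give $\hdt{\rProt_{T_2}}{\rProt_{\disjin{0}{1}{0}{1}}} = o(\eps^2)$ for every $T_2 \in I_2$, while Claim~\ref{clm:disj2-h-terms-2} bridges the two anchors via $\hdt{\rProt_{\disjin{1}{0}{1}{0}}}{\rProt_{\disjin{0}{1}{0}{1}}} = o(\eps^2)$.

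Given an arbitrary pair $(T_1,T_2) \in I_1 \times I_2$, I would then apply the triangle inequality to $\hd{\cdot}{\cdot}$ along the three-leg path $\rProt_{T_1} \to \rProt_{\disjin{1}{0}{1}{0}} \to \rProt_{\disjin{0}{1}{0}{1}} \to \rProt_{T_2}$. Each leg has squared Hellinger distance $o(\eps^2)$, i.e.\ Hellinger distance $o(\eps)$, so the sum is $o(\eps)$, giving $\hd{\rProt_{T_1}}{\rProt_{T_2}} = o(\eps)$ and hence $\hdt{\rProt_{T_1}}{\rProt_{T_2}} = o(\eps^2)$. Finally, invoking the standard comparison $\tvd{\mu}{\nu} \le \sqrt{2}\cdot\hd{\mu}{\nu}$ for any two distributions $\mu,\nu$ (Appendix~\ref{sec:info}) yields $\tvd{\rProt_{T_1}}{\rProt_{T_2}} = o(\eps)$, which is exactly the claim.

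There is no genuine obstacle in this step; it is a routine metric-triangle-inequality packaging argument. The only points requiring a moment's care are that the triangle inequality must be applied to the unsquared distance $\hd{\cdot}{\cdot}$ rather than to $\hdt{\cdot}{\cdot}$ (which is precisely why Claims~\ref{clm:disj2-h-terms-1} and~\ref{clm:disj2-h-terms-2} were phrased for squared distances bounded by $o(\eps^2)$, so that the square roots still add up to $o(\eps)$), and that one must route through the specific anchors $\disjin{1}{0}{1}{0}$ and $\disjin{0}{1}{0}{1}$, since Claim~\ref{clm:disj2-h-terms-2} is the unique available bridge linking an $I_1$-transcript to an $I_2$-transcript. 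The content of the claim lies entirely in the earlier claims; this step simply turns them into the statement that every input with answer $1$ and every input with answer $2$ produce transcripts that are $o(\eps)$-indistinguishable, which the subsequent argument will contradict against the assumed $\tfrac12 + \Omega(\eps)$ correctness of $\protPI$.
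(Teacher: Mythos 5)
Your proposal is correct and follows essentially the same route as the paper: both arguments anchor at $\rProt_{\disjin{1}{0}{1}{0}}$ and $\rProt_{\disjin{0}{1}{0}{1}}$, chain through Claims~\ref{clm:disj2-h-terms-1} and~\ref{clm:disj2-h-terms-2} via the triangle inequality, and convert Hellinger to total variation using Fact~\ref{fact:hellinger-tvd}. The only (immaterial) difference is that you apply the triangle inequality to the Hellinger metric and convert to total variation once at the end, whereas the paper converts each leg to total variation first and then applies the triangle inequality there.
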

    \begin{proof}
        Proving the claim amounts to proving the following nine equations: 
        \begin{align*}
            (1)~ \tvd{\rProt_{\disjin{1}{0}{1}{0}}}{\rProt_{\disjin{0}{1}{0}{1}}} = o(\eps), ~ (2)~ \tvd{\rProt_{\disjin{1}{0}{1}{0}}}{\rProt_{\disjin{1}{1}{0}{1}}} = o(\eps), ~ (3)~\tvd{\rProt_{\disjin{1}{0}{1}{0}}}{\rProt_{\disjin{0}{1}{1}{1}}} = o(\eps), \\
            (4)~ \tvd{\rProt_{\disjin{1}{1}{1}{0}}}{\rProt_{\disjin{0}{1}{0}{1}}} = o(\eps), ~ (5)~ \tvd{\rProt_{\disjin{1}{1}{1}{0}}}{\rProt_{\disjin{1}{1}{0}{1}}} = o(\eps), ~ (6)~\tvd{\rProt_{\disjin{1}{1}{1}{0}}}{\rProt_{\disjin{0}{1}{1}{1}}} = o(\eps), \\
            (7)~ \tvd{\rProt_{\disjin{1}{0}{1}{1}}}{\rProt_{\disjin{0}{1}{0}{1}}} = o(\eps), ~ (8)~ \tvd{\rProt_{\disjin{1}{0}{1}{1}}}{\rProt_{\disjin{1}{1}{0}{1}}} = o(\eps), ~ (9)~\tvd{\rProt_{\disjin{1}{0}{1}{1}}}{\rProt_{\disjin{0}{1}{1}{1}}} = o(\eps),
        \end{align*}
        The first equation can be proven as follows: 
        \begin{align*}
            \tvd{\rProt_{\disjin{1}{0}{1}{0}}}{\rProt_{\disjin{0}{1}{0}{1}}} \leq \sqrt{2} \cdot \hd{\rProt_{\disjin{1}{0}{1}{0}}}{\rProt_{\disjin{0}{1}{0}{1}}} = o(\eps), 
        \end{align*}
        where the  inequality is by Fact~\ref{fact:hellinger-tvd} and the equality is by Claim~\ref{clm:disj2-h-terms-2}. This proves the equation (1) above. 
        Now note that,
        \begin{align*}
            \tvd{\rProt_{\disjin{1}{0}{1}{0}}}{\rProt_{\disjin{1}{1}{0}{1}}} &\leq \tvd{\rProt_{\disjin{1}{0}{1}{0}}}{\rProt_{\disjin{0}{1}{0}{1}}} + \tvd{\rProt_{\disjin{0}{1}{0}{1}}}{\rProt_{\disjin{1}{1}{0}{1}}} \tag{by triangle inequality} \\
                                                                                   &\leq o(\eps) + \sqrt{2} \cdot \hd{\rProt_{\disjin{0}{1}{0}{1}}}{\rProt_{\disjin{1}{1}{0}{1}}} \tag{by equation (1) above for the first term and Fact~\ref{fact:hellinger-tvd} for the second} \\
                                                                                         &= o(\eps). \tag{by part (3) of Claim~\ref{clm:disj2-h-terms-1}}
        \end{align*}

        This proves the equation (2). All the remaining equations can now be proven using a similar argument as above by first relating the distance between the two variables to the distance between
        $\tvd{\rProt_{\disjin{1}{0}{1}{0}}}{\rProt_{\disjin{1}{1}{0}{1}}}$ (which we know is $o(\eps)$ by equation (1)) using triangle inequality, and then use Fact~\ref{fact:hellinger-tvd} combined with Claim~\ref{clm:disj2-h-terms-1} to bound
        each of the remaining terms with $o(\eps)$. \Qed{Claim~\ref{clm:disj2-sep-terms}}

    \end{proof}

    We are now almost done. By Claim~\ref{clm:disj2-sep-terms}, if we assume Eq~(\ref{eq:each-term-cont}), then for every $(T_1,T_2) \in I_1 \times I_2$, $\tvd{\rProt_{T_1}}{\rProt_{T_2}} = o(\eps)$. On the other hand, 
    for $\protPI$ to be able to output the correct answer with probability $1/2+\Omega(\eps)$ (over the randomness of the protocol and the distribution), for at least one pair $(T_1,T_2) \in I_1 \times I_2$, we should have
    $\tvd{\rProt_{T_1}}{\rProt_{T_2}} = \Omega(\eps)$ as the output of the protocol on $T_1$ (resp. $T_2$) is only a function of $\rProt_{T_1}$ (resp. $\rProt_{T_2}$), and hence otherwise would be the same with probability $1-o(\eps)$ 
    by Fact~\ref{fact:tvd-small}. This implies that assuming Eq~(\ref{eq:each-term-cont}), the protocol errs with probability at least $1/2-o(\eps)$, which is a contradiction. Hence Eq~(\ref{eq:each-term-cont}) cannot hold
    \Qed{Lemma~\ref{lem:PI-ic}}

\end{proof}

\section{The Hidden-Pointer Chasing Problem}\label{sec:hpc}

Recall that the {hidden-pointer chasing} (\HPC) problem is a \emph{four-party} communication problem with players $P_A,P_B,P_C$, and $P_D$ defined as follows. 
Let $\XX := \set{x_1,\ldots,x_n}$ and $\YY := \set{y_1,\ldots,y_n}$ be two disjoint universes of size $n$ each. We define \HPC as follows: 

\begin{enumerate}[leftmargin=25pt]
	\item For any $x \in \XX$, $P_A$ and $P_B$ are given an instance $(A_x,B_x)$ of $\SI$ over the universe $\YY$ where $A_x \cap B_x = \set{t_x}$ for a single target element $t_x \in \YY$. We define $\bA := \set{A_{x_1},\ldots,A_{x_n}}$
	and $\bB := \set{B_{x_1},\ldots,B_{x_n}}$ as the whole input to $P_A$ and $P_B$, respectively. 
	\item For any $y \in \YY$, $P_C$ and $P_D$ are given an instance $(C_y,D_y)$ of $\SI$ over the universe $\XX$ where $C_y \cap D_y = \set{t_y}$ for a single target element $t_y \in \XX$.  We define $\bC := \set{C_{y_1},\ldots,C_{y_n}}$
	and $\bD := \set{D_{y_1},\ldots,D_{y_n}}$ as the whole input to $P_C$ and $P_D$, respectively.

	\item We define two mappings $f_{AB} : \XX \rightarrow \YY$ and $f_{CD}: \YY \rightarrow \XX$ such that:
	\begin{enumerate}
		\item for any $x \in \XX$, $f_{AB}(x) = t_x \in \YY$ in the instance $(A_x,B_x)$ of $\SI$. 
		\item for any $y \in \YY$, $f_{CD}({y}) = t_y \in \XX$ in the instance $(C_y,D_y)$ of $\SI$.
	\end{enumerate}
	
	\item Let $x_1 \in \XX$ be an arbitrary fixed element of $\XX$ known to all players. The pointers $z_0,z_1,z_2,z_3,\ldots$ are defined inductively as follows: 
	\begin{align*}
		z_0 := x_1, \qquad z_1 := f_{AB}(z_0), \qquad z_2:= f_{CD}(z_1), \qquad z_3:= f_{AB}(z_2), \qquad \ldots.
	\end{align*}
\end{enumerate}

For any integer $k \geq 1$, the $k$-step hidden-pointer chasing problem, denoted by \HPCk is defined as the communication problem of finding the pointer $z_k$. 
See Figure~\ref{fig:hpc} on page~\pageref{fig:hpc} for an illustration.

\subsection{Communication Complexity of \HPCk}\label{sec:cc-hpc}

It is easy to see that in $k+1$ phases, we can compute \HPCk with $O(k \cdot n)$ total communication: we simply skip the first phase; in the second phase, $P_A$ and $P_B$ solve the \SI instance
$(A_{z_0},B_{z_0})$ with $O(n)$ communication to compute $z_1 = f_{AB}(z_0)$ and send this pointer to $P_C$ and $P_D$; $P_C$ and $P_D$ in the next phase compute $f_{CD}(z_1)$ and the players continue like this to find the pointer $z_k$, which takes
$k+1$ phases in total. 

In the following, we prove that if we only have $k$ phases however, solving $\HPCk$ requires $\Omega(n^2/k^2 + n)$ bits of communication. 

\begin{theorem}\label{thm:hpck}
	For any integer $k \geq 1$, any $k$-phase protocol that outputs the correct solution to $\HPCk$ with constant probability requires $\Omega(n^2/k^2 + n)$ bits of communication. 
\end{theorem}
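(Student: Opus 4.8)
By Yao's minimax principle~\cite{Yao83} it suffices to exhibit an input distribution on which every \emph{deterministic} $k$-phase protocol solving $\HPCk$ with constant probability must communicate $\Omega(n^2/k^2)$ bits; the $\Omega(n)$ term will be handled separately. We use the distribution $\distHPC$ in which every $\SI$ instance $(A_x,B_x)$, $x\in\XX$, and $(C_y,D_y)$, $y\in\YY$, is drawn independently from the hard distribution $\distSI$ of Section~\ref{sec:SI} (note $\distHPC$ is not a product distribution, since $\distSI$ is not, but crucially the joint input $(\bA,\bB)$ of $P_A,P_B$ is independent of the joint input $(\bC,\bD)$ of $P_C,P_D$). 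Fix a deterministic $k$-phase protocol $\protHPC$ and suppose, for contradiction, that $\CC{\protHPC}{}=o(n^2/k^2)$. Let $\rProt_j$ be the transcript of phase $j$, $\rProt=(\rProt_1,\dots,\rProt_k)$, $\rZ=(\rZ_1,\dots,\rZ_k)$ the sequence of pointers, and set $\rE_1:=z_0$ and $\rE_j:=(\rProt^{<j},\rZ^{<j})$ for $j>1$ --- the information ``easily known'' to all players at the start of phase $j$.

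\textbf{Key invariant.}
The heart of the argument is to prove, by induction on $j$, that there is a quantity $\eta=o(1/k)$ such that for every $j\in[k]$,
\begin{align*}
\Ex_{(\rE_j,\rProt_j)}\bracket{\tvd{\distribution{\rZ_j\mid \rE_j,\rProt_j}}{\distribution{\rZ_j}}}\ \le\ j\cdot\eta ,
\end{align*}
where $\distribution{\rZ_j}$ is the uniform distribution over $\YY$ (if $j$ is odd) or $\XX$ (if $j$ is even). Applying this with $j=k$ shows that, conditioned on the entire transcript, $\rZ_k$ is $o(1)$-close in total variation to uniform, so no output rule --- being a function of the transcript only --- can name $z_k$ with constant probability, a contradiction; hence $\CC{\protHPC}{}=\Omega(n^2/k^2)$.

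\textbf{The inductive step and the role of $\SI$.}
The step from $j-1$ to $j$ rests on a ``multi-instance $\SI$'' estimate: for every $j\in[k]$,
\begin{align*}
\Ex_{(\rE_j,\rProt_j)}\Ex_{x\sim\unif_{\XX}}\bracket{\tvd{\distribution{\rT_x\mid \rE_j,\rProt_j}}{\distribution{\rT_x}}}\ =\ o(1/k),
\end{align*}
where $\rT_x$ is the target element of $(A_x,B_x)$, with the symmetric statement for $\rT_y$, $y\sim\unif_{\YY}$. This combines (i) a direct-sum argument for internal information cost in the style of~\cite{BarakBCR10,BravermanR11}: since $\IC{\protHPC}{\distHPC}\le\CC{\protHPC}{}=o(n^2/k^2)$ and $x$ is uniform over the $n$ instances, a prefix of the transcript reveals on average only $o(n/k^2)$ bits about $(A_x,B_x)$ --- the one caveat being that if $x$ equals one of the at most $k$ pointers recorded in $\rE_j$ the transcript may reveal much more, but this event has probability $O(k/n)=o(1)$ and is absorbed; and (ii) our new set intersection bound Theorem~\ref{thm:SI-tvd} in contrapositive form: a protocol revealing only $o(n/k^2)$ bits of information about a $\distSI$ instance can $\eps$-solve it only for $\eps=o(1/k)$, since $\eps^2 n=O(\text{info})=o(n/k^2)$. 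Granting this estimate, fix phase $j$, say odd (the even case is symmetric): $\rZ_j$ is a function of $(\bA,\bB)$, while in an odd phase only $P_C,P_D$ speak so $\rProt_j$ is a function of $(\bC,\bD)$ and $\rE_j$; conditioning on $\rE_j$ and invoking the rectangle property of deterministic protocols (Fact~\ref{fact:rectangle}) together with the independence of $(\bA,\bB)$ and $(\bC,\bD)$ yields $\rZ_j\perp\rProt_j\mid \rE_j$. Now $\rZ_j=\rT_{z_{j-1}}$ once $z_{j-1}$ is fixed; the marginal law of $z_{j-1}$ (over the choice of $\rE_j$) equals the mixture of $\distribution{\rZ_{j-1}\mid \rE_{j-1},\rProt_{j-1}}$, which by the induction hypothesis is within $(j-1)\eta$ of uniform on $\XX$; and the multi-instance estimate says that replacing this near-uniform $z_{j-1}$ by a truly uniform $x$ and then conditioning on $(\rE_j,\rProt_j)$ perturbs the law of $\rT_x$ by at most $\eta$. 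Chaining the two perturbations gives the bound $j\eta$ at phase $j$, closing the induction.

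\textbf{Finishing and the main obstacle.}
It remains to add the $\Omega(n)$ term: this follows from a straightforward reduction embedding a single $\SI$ instance as the last hop of the pointer chain of $\HPCk$ (all earlier instances being deterministic and publicly known), together with the $\Omega(n)$ communication lower bound for exactly solving $\SI$~\cite{KalyanasundaramS92,Razborov92,Bar-YossefJKS02}. The delicate point --- and the step I expect to be the main obstacle --- is the multi-instance $\SI$ estimate: the direct-sum argument must be made robust to the conditioning on $\rE_j$, which can leak the entire content of up to $k$ instances, and, more importantly, the induction accumulates a total-variation error of $\eta$ per phase, so one genuinely needs $\eta=o(1/k)$. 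This is exactly what forces the communication threshold to be $n^2/k^2$ rather than $n^2/k$, and is the reason we must appeal to the quantitatively tight $\eps^2$ dependence of Theorem~\ref{thm:SI-tvd} rather than merely to an $\Omega(n)$-communication bound for exact set intersection.
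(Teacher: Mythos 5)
Your proposal follows essentially the same route as the paper's proof: the same inductive invariant on $\Ex[\tvd{\distribution{\rZ_j\mid E_j,\Prot_j}}{\distribution{\rZ_j}}]$, the same multi-instance set-intersection estimate proved by a direct-sum argument combined with Theorem~\ref{thm:SI-tvd}, the same use of the rectangle property to get $\rZ_j\perp\rProt_j\mid E_j$, and the same chaining of the two total-variation perturbations per phase. The quantitative bookkeeping (information $o(n/k^2)$ per instance forcing $\eps=o(1/k)$, the $O(k/n)$ probability that $x$ lies among the conditioned pointers, and the $\Omega(n)$ term from a single embedded $\SI$ instance) also matches the paper's treatment.
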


The rest of this section is devoted to the proof of Theorem~\ref{thm:hpck}. We start with defining our hard distribution of instances for $\HPCk$ and then use this distribution to
prove the lower bound.  

\subsection*{A Hard Distribution for $\HPC$}
The hard distribution for $\HPC$ is simply the product of distribution $\distSI$ for every $x \in \XX$ and $y \in \YY$. 

\begin{tbox}
    \textbf{Distribution $\distHPC$} on tuples $(\bA,\bB,\bC,\bD)$ from the universes $\XX$ and $\YY$: 

    \begin{enumerate}
		\item For any $x \in \XX$, sample $(A_x,B_x) \sim \distSI$ from the universe $\YY$ \emph{independently}. 
		\item For any $y \in \YY$, sample $(C_y,D_y) \sim \distSI$ from the universe $\XX$ \emph{independently}. 
    \end{enumerate}
\end{tbox}

The following simple observation is in order. 

\begin{observation}\label{obs:distHPC}
	Distribution $\distHPC$ is \underline{not} a product distribution. However, in this distribution: 
	\begin{enumerate}[label=(\roman*)]
		\item The inputs to $P_A$ and $P_B$ are {independent} of the inputs to $P_C$ and $P_D$, i.e., $(\bA,\bB) \perp (\bC,\bD)$. 
		\item For any $x \in \XX$, $(A_x,B_x)$ is independent of all other $(A_{x'},B_{x'})$ for $x'\neq x \in \XX$. Similarly for all $y,y' \in \YY$ and $(C_y,D_y)$ and $(C_{y'},D_{y'})$. 
	\end{enumerate}
\end{observation}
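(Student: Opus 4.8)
The plan is to read all three assertions directly off the two-step definition of $\distHPC$, so the argument is really just bookkeeping about which randomness is shared and which is fresh. I do not expect any genuine obstacle here, only one point that deserves a sentence of care (flagged at the end).

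First, for the failure of the product property: here ``product distribution'' means that the joint law of the four players' inputs $(\bA,\bB,\bC,\bD)$ equals the product of its four marginals, i.e.\ the four inputs are mutually independent. It is enough to exhibit one violation, and the simplest one lives inside a single $\SI$ instance. Fix $x\in\XX$ and an element $y\in\YY$, and look at the two indicators $\mathbf{1}[y\in A_x]$ and $\mathbf{1}[y\in B_x]$, which are functions of $\bA$ and of $\bB$ respectively. Under $\distSI$ the pair $\big(\mathbf{1}[y\in A_x],\mathbf{1}[y\in B_x]\big)$ equals $(1,1)$ only when $y$ is the target $t_x$ (off the target the pair is drawn from $\mu$, which omits $(1,1)$); hence the conditional law of $\mathbf{1}[y\in B_x]$ given $\mathbf{1}[y\in A_x]$ genuinely depends on the value of $\mathbf{1}[y\in A_x]$, so $\bA$ and $\bB$ are not independent and $\distHPC$ is not a product distribution.

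Next, part (i): the first step of the definition of $\distHPC$ (sampling $\{(A_x,B_x)\}_{x\in\XX}$) and the second step (sampling $\{(C_y,D_y)\}_{y\in\YY}$) draw from fresh, mutually independent randomness. Since $(\bA,\bB)$ is a deterministic function of the randomness of the first step and $(\bC,\bD)$ of that of the second, the joint pmf of $(\bA,\bB,\bC,\bD)$ factors into the pmf of $(\bA,\bB)$ times that of $(\bC,\bD)$, i.e.\ $(\bA,\bB)\perp(\bC,\bD)$. For part (ii): the first step samples each $(A_x,B_x)$ from $\distSI$ \emph{independently} over $x\in\XX$, so the joint pmf of $\{(A_x,B_x)\}_{x\in\XX}$ is $\prod_{x\in\XX}\distSI(A_x,B_x)$; this factorization immediately yields that, for every fixed $x$, $(A_x,B_x)$ is independent of the tuple $\big((A_{x'},B_{x'})\big)_{x'\ne x}$. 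The same argument with $\YY$ in place of $\XX$ handles the $(C_y,D_y)$'s.

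The one subtlety worth flagging explicitly --- and the reason the observation is stated at all --- is that the dependence responsible for $\distHPC$ failing to be a product distribution is purely \emph{intra-instance} (between $P_A$'s and $P_B$'s coordinates of a single $(A_x,B_x)$), whereas parts (i) and (ii) are statements about independence \emph{across} instances and \emph{across} the two groups of players; these coexist precisely because the $\distHPC$ sampler glues together mutually independent copies of the single non-product coupling $\distSI$.
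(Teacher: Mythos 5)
Your proof is correct and matches the paper's (implicit) reasoning: the paper states this observation without any proof, treating all three claims as immediate consequences of the two-step, independently-seeded definition of $\distHPC$, which is exactly what you spell out. The only microscopic quibble is that your single-coordinate witness for non-productness ($\Pr(b=1\mid a=1)\neq\Pr(b=1\mid a=0)$ under $\distSI$) happens to degenerate at $n=4$, but this is irrelevant in the asymptotic regime the paper works in and is trivially repaired by looking at two coordinates.
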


Based on this observation, we also have the following simple property. 

\begin{proposition}\label{prop:hpc-rectangle}
	Let $\protHPC$ be any deterministic protocol for $\HPCk$ on $\distHPC$. Then, for any transcript $\Prot$ of $\protHPC$, $(\bA,\bB) \perp (\bC,\bD) \mid \rProt = \Prot$. 
\end{proposition}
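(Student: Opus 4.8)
The plan is to deduce this from the rectangle property of deterministic protocols (Fact~\ref{fact:rectangle}) together with Observation~\ref{obs:distHPC}-(i), by collapsing the four-party protocol into a two-party one. I would group $P_A$ and $P_B$ into a single ``super-player'' $\mathsf{Alice}$ holding the input $u:=(\bA,\bB)$, and $P_C,P_D$ into a second super-player $\mathsf{Bob}$ holding $v:=(\bC,\bD)$. The first thing to establish is that, viewed at the level of these two super-players, $\protHPC$ is an ordinary deterministic two-party protocol whose transcript is exactly $\Prot=(\Prot_1,\dots,\Prot_k)$.

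The crux is the definition of a phase: in an even phase only $P_A,P_B$ communicate and in an odd phase only $P_C,P_D$ do, the phase ending once a single hand-off message is sent to the other group. Hence, given the earlier transcript $\Prot_{<j}:=(\Prot_1,\dots,\Prot_{j-1})$, the block $\Prot_j$ is a deterministic function of $u$ alone when $j$ is even and of $v$ alone when $j$ is odd: every message inside phase $j$ is sent by one member of the active group, computed from that member's share of the input together with everything it has already seen, all of which lies in $\Prot_{<j}$ together with the already-produced prefix of $\Prot_j$; unrolling this recursion, $\Prot_j$ is measurable with respect to the pair (active group's input, $\Prot_{<j}$). Moreover, which group is active in phase $j$ is fixed by the parity of $j$, hence by $\Prot_{<j}$. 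So ``round $j$'' of the super-player protocol is simply ``the active super-player appends $\Prot_j$'', and this is an honest deterministic two-party protocol with transcript $\Prot$. Applying Fact~\ref{fact:rectangle} to it, for every transcript $\Prot$ that arises with positive probability the consistent input set is a combinatorial rectangle: $\{(u,v):\protHPC(u,v)=\Prot\}=\mathcal{U}_\Prot\times\mathcal{V}_\Prot$ for some set $\mathcal{U}_\Prot$ of values of $(\bA,\bB)$ and $\mathcal{V}_\Prot$ of values of $(\bC,\bD)$. Finally, by Observation~\ref{obs:distHPC}-(i) the prior law of $(u,v)$ under $\distHPC$ is a product measure $\mu_1\times\mu_2$; conditioning on $\rProt=\Prot$ restricts it to the product set $\mathcal{U}_\Prot\times\mathcal{V}_\Prot$ and renormalizes, which again yields a product, so $u$ and $v$ remain independent — that is, $(\bA,\bB)\perp(\bC,\bD)\mid\rProt=\Prot$.

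I do not anticipate a genuine obstacle: this is a soft, structural argument, essentially ``a protocol in which the two sides alternate in blocks still has the rectangle property, and a product prior restricted to a rectangle stays a product.'' The one place warranting care is the measurability check in the middle paragraph — that the phase structure genuinely makes each $\Prot_j$ a function of the active group's input given the past — since the four-party, phase-based model is nonstandard; it is worth writing out (using precisely the fact that a phase is one group communicating until it hands off) rather than merely asserting it, but it introduces no real difficulty.
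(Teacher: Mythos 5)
Your proposal is correct and matches the paper's own proof, which likewise collapses the four players into the two sides $(\bA,\bB)$ and $(\bC,\bD)$, invokes the rectangle property of Fact~\ref{fact:rectangle} "exactly as in the two-player case," and combines it with the independence in Observation~\ref{obs:distHPC} to conclude. Your write-up merely spells out the phase-measurability step that the paper leaves implicit.
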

\begin{proof}
	Follows from the rectangle property of the protocol $\protHPC$ (Fact~\ref{fact:rectangle}). In particular, the same exact argument as in the two-player case implies that
	if $[(\bA_1,\bB_1),(\bC_1,\bD_1)]$ and $[(\bA_2,\bB_2),(\bC_2,\bD_2)]$ are mapped to the same transcript $\Prot$, then $[(\bA_1,\bB_1),(\bC_2,\bD_2)]$ and $[(\bA_2,\bB_2),(\bC_1,\bD_1)]$ are mapped to $\Prot$ as well. 
	Hence, since $(\bA,\bB) \perp (\bC,\bD)$ by Observation~\ref{obs:distHPC}, the inputs corresponding to the same protocol would also be independent of each other, namely, $(\bA,\bB) \perp (\bC,\bD) \mid \rProt = \Prot$. 
\end{proof}

\subsection*{Proof of Theorem~\ref{thm:hpck}: A Communication Lower Bound for \HPCk}
We prove the lower bound for any arbitrary deterministic protocol $\protHPC$ and then apply Yao's minimax principle~\cite{Yao83} to extend
it to randomized protocols as well. We first setup some notation. 

\paragraph{Notation.}  Fix any $k$-phase \emph{deterministic} protocol $\protHPC$ for $\HPCk$ throughout the proof.  We use $j=1$ to $k$ to index the phases of this protocol, as well as the pointers $z_1,\ldots,z_k$. 
For any $j \in [k]$, we define $\Prot_j$ as the set of all messages communicated by $\protHPC$ in phase $j$ and 
$\Prot:= (\Prot_1,\ldots,\Prot_k)$ as the transcript of the protocol $\protHPC$. 
\smallskip

For any $x \in \XX$ and any $y \in \YY$, we define 
the random variables $\rT_{x} \in \YY$ and $\rT_y \in \XX$, which correspond to the target elements of the \SI problem on $(A_x,B_x)$ and $(C_y,D_y)$, respectively. 

\smallskip

We further define $\rE_j := (\rProt^{<j},\rZ^{<j})$ for any $j > 1$ and $\rE_1 = z_0$, i.e., the first pointer. We can think of $\rE_j$ as the information ``easily known'' to all players at the beginning of phase $j$. 

\medskip

The main step of the proof of Theorem~\ref{thm:hpck} is the following key lemma which we prove inductively.  

\begin{lemma}\label{lem:hpc-induction}
	Let $\CC{\protHPC}{} := \CC{\protHPC}{\distHPC}$. There exists an absolute constant $c > 0$ such that for all $j \in [k]$: 
	\begin{align*}
		\Ex_{(E_j,\Prot_j)} \Bracket{\tvd{\distribution{\rZ_j \mid E_j,\Prot_j}}{\distribution{\rZ_j}}} \leq j \cdot c \cdot \Paren{\frac{\sqrt{\CC{\protHPC}{} + k \cdot \log{n}} + k}{n}}. 
	\end{align*}
\end{lemma}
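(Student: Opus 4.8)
The proof goes by induction on $j$. The base case $j=1$ is essentially trivial: $\rE_1 = z_0$ is a fixed element, and $\Prot_1$ is the transcript of the first phase, which involves only $P_C,P_D$ (by the phase convention, the odd phase $j=1$ has only $P_C,P_D$ communicating). Since $\rZ_1 = f_{AB}(z_0) = \rT_{z_0}$ depends only on the inputs $(\bA,\bB)$, and $(\bA,\bB)\perp(\bC,\bD)$ (Observation~\ref{obs:distHPC}), Proposition~\ref{prop:hpc-rectangle} gives $\rZ_1 \perp \Prot_1 \mid \rE_1$; hence the TV distance is $0$, well within the claimed bound. (If one prefers to set up the induction so that phase $1$ does carry information, the argument below with $j=1$ applies verbatim.)

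For the inductive step, assume the bound holds for $j-1$ and prove it for $j$; suppose $j$ is odd (even case symmetric). The two ingredients are (a) the ``single-phase leakage'' Lemma~\ref{lem:tech-multi-SI}/its formal counterpart, which says $\Ex_{(E_j,\Prot_j)}\Ex_{x\sim\unif_\XX}\tvd{\distribution{\rT_x\mid E_j,\Prot_j}}{\distribution{\rT_x}}$ is small — of order $(\sqrt{\CC{\protHPC}{}+k\log n}+k)/n$, which is where the $\Omega(n^2/k^2)$ bound ultimately comes from via Theorem~\ref{thm:SI-tvd}; and (b) the rectangle/independence structure. First I would observe that in the odd phase $j$ only $P_C,P_D$ communicate, so $\Prot_j$ is a function of $(\bC,\bD)$ together with $\rE_j$; conditioning on $\rE_j$ and invoking Proposition~\ref{prop:hpc-rectangle} (applied to the sub-protocol after $\rE_j$ is fixed) yields $(\bA,\bB)\perp(\bC,\bD)\mid \rE_j$, and since $\rZ_j = \rT_{z_{j-1}}$ is $(\bA,\bB)$-measurable once $z_{j-1}$ is fixed while $\Prot_j$ is $(\bC,\bD)$-measurable once $\rE_j$ is fixed, we get $\rZ_j \perp \Prot_j \mid \rE_j$. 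Therefore
\[
\Ex_{(E_j,\Prot_j)}\tvd{\distribution{\rZ_j\mid E_j,\Prot_j}}{\distribution{\rZ_j}} = \Ex_{E_j}\tvd{\distribution{\rZ_j\mid E_j}}{\distribution{\rZ_j}}.
\]
Now $\rE_j = (\rProt^{<j},\rZ^{<j})$, and conditioned on $\rE_{j-1}=(\rProt^{<j-1},\rZ^{<j-1})$ and $\Prot_{j-1}$, the pointer $z_{j-1}$ is distributed as $\distribution{\rZ_{j-1}\mid E_{j-1},\Prot_{j-1}}$, which by the inductive hypothesis is within $(j-1)c(\cdots)/n$ of its marginal $\distribution{\rZ_{j-1}}=\unif$ in TV distance (on average over $E_{j-1},\Prot_{j-1}$). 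And conditioned on $z_{j-1}=x$, we have $\rZ_j = \rT_x$ with distribution $\distribution{\rT_x\mid E_j,\Prot_j}$; comparing this to $\distribution{\rT_x}$ is exactly controlled by Lemma~\ref{lem:tech-multi-SI} (applied with $x$ replaced by the random $z_{j-1}$, which is legitimate because $z_{j-1}$ is close to $\unif_\XX$ — this is where the two error terms combine additively).

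The combination step is the crux. I would write the target expectation as an average over $z_{j-1}$ drawn from $\distribution{\rZ_{j-1}\mid E_{j-1},\Prot_{j-1}}$, then swap it for an average over $z_{j-1}\sim\unif_\XX$ at the cost of the inductive TV error (triangle inequality / data-processing for TV distance under the coupling of the two sampling distributions), and finally bound the resulting $\Ex_{x\sim\unif_\XX}\tvd{\distribution{\rT_x\mid\cdots}}{\distribution{\rT_x}}$ by Lemma~\ref{lem:tech-multi-SI}. Summing the per-step contribution $c(\sqrt{\CC{\protHPC}{}+k\log n}+k)/n$ with the inductive $(j-1)c(\cdots)/n$ yields $j\cdot c(\cdots)/n$, closing the induction. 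The main obstacle — and the place requiring the most care — is making precise the ``replace $z_{j-1}$ by a uniform sample'' step: one must check that the relevant quantity (the function $x\mapsto \tvd{\distribution{\rZ_j\mid \rE_{j-1},\Prot_{j-1},z_{j-1}=x}}{\unif}$, averaged appropriately over the remaining randomness) is bounded and that changing the distribution of $x$ from $\distribution{\rZ_{j-1}\mid E_{j-1},\Prot_{j-1}}$ to $\unif$ perturbs its expectation by at most the TV distance between these two laws; and that the events where $z_{j-1}$ coincides with an earlier pointer (so extra information may have leaked about $(A_{z_{j-1}},B_{z_{j-1}})$) occur with probability $O(k/n)$, which is absorbed into the $+k$ term — this is precisely the ``fortunately negligible for $k\ll n$'' caveat flagged in the overview. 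Once these bookkeeping points are handled, Theorem~\ref{thm:hpck} follows by taking $j=k$: if $\CC{\protHPC}{}=o(n^2/k^2)$ then the bound is $o(1)$, so $\distribution{\rZ_k\mid E_k,\Prot_k}$ is $o(1)$-close to uniform and no protocol can output $z_k$ with constant probability, a contradiction.
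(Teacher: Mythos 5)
Your proposal is correct and follows essentially the same route as the paper's proof: the base case via the fixed $z_0$ and the rectangle property, the identity $\rZ_j=\rT_{z_{j-1}}$ with the conditional independence $\rZ_j\perp\Prot_j\mid E_j$ (Claim~\ref{clm:hpc-rectangle}), the swap of $z_{j-1}$ from $\distribution{\rZ_{j-1}\mid E_{j-1},\Prot_{j-1}}$ to uniform at the cost of the inductive TV error (Fact~\ref{fact:tvd-small}), and the application of Lemma~\ref{lem:multi-SI} to the uniform average, with the collision-with-earlier-pointers issue absorbed into the $+k$ term exactly as you flag. No gaps.
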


Recall that distribution of each pointer $z_j$ is uniform over its support, i.e., over $\XX$ if $j$ is even, and over $\YY$ if $j$ is odd. 
Intuitively speaking, Lemma~\ref{lem:hpc-induction} states that if communication cost of a protocol is ``small'', i.e., is $o(n^2/k^2)$, then 
even after communicating the messages in the first $j$ phases of the protocol, distribution of $z_j$ is still ``close'' to being uniform. In other words,
the first $j$ phases of the protocol do not reveal ``any useful information'' about $z_j$. This in particular implies that at the end of the protocol, i.e., at the end of phase $k$, the target pointer $z_k$ is still 
uniform and $\protHPC$ should not be able to find it. We first formalize this inution and use it to prove Theorem~\ref{thm:hpck} and then present a proof of Lemma~\ref{lem:hpc-induction} which is the heart of the argument. 

\begin{proof}[Proof of Theorem~\ref{thm:hpck} (assuming Lemma~\ref{lem:hpc-induction})]
	The $\Omega(n)$ term in the lower bound trivially follows from the $\Omega(n)$ lower bound for set intersection (e.g. Theorem~\ref{thm:SI-tvd} with constant $\eps$). 
	In the following we prove the first (and the main) term. Note that for this purpose, we can assume $k = o(\sqrt{n})$ as otherwise the dominant term would already be the second term. 
	
	Let $\protHPC$ be any deterministic protocol for $\HPCk$ for $k=o(\sqrt{n})$ with communication cost $\CC{\protHPC}{\distHPC} = o(n^2/k^2)$. 
	Recall that $\distribution{\rZ_k} = \unif_\XX$ if $k$ is even and $\distribution{\rZ_k} = \unif_\YY$ if $k$ is odd. Let us assume by symmetry that $k$ is even. 
	By Lemma~\ref{lem:hpc-induction}, we have, 
	\begin{align}
		\Ex_{(E_k,\Prot_k)} \Bracket{\tvd{\distribution{\rZ_k \mid E_k,\Prot_k}}{\unif_\XX}} &\leq  k \cdot c \cdot \Paren{\frac{\sqrt{\CC{\protHPC}{} + k \cdot \log{n}} + k}{n}} \notag \\
		&= k \cdot c \cdot \paren{o(\frac{1}{k}) + o(\frac{\sqrt{\log{n}}}{n^{3/4}}) + o(\frac{k}{n})}  \notag \\ 
		&= o(\frac{k}{k}) + o(\frac{k \cdot \sqrt{\log{n}}}{n^{3/4}}) + o(\frac{k^2}{n}) = o(1),\label{eq:hpc-dist-unif}
	\end{align}
	as $c$ is an absolute constant. 
	
	On the other hand, $(E_k,\Prot_k)$ contains the whole transcript $\Prot$ of the protocol and hence the output of the protocol $\protHPC$ is fixed conditioned on $(E_k,\Prot_k)$. 
	We use $O(E_k,\Prot_k)$ to denote this output. We have, 
	\begin{align*}
		\Pr_{(E_k,\Prot_k)}\paren{\textnormal{$\protHPC$ is correct}} &= \Ex_{(E_k,\Prot_k)}\Pr_{\rZ_k \mid (E_k,\Prot_k)}\paren{\rZ_k = O(E_k,\Prot_k)} \\
		&\hspace{-11pt}\Leq{Fact~\ref{fact:tvd-small}} \Ex_{(E_k,\Prot_k)}\bracket{\Pr_{\rZ_k \sim \unif_\XX}\paren{\rZ_k = O(E_k,\Prot_k)} + \tvd{\distribution{\rZ_k \mid E_k,\Prot_k}}{\unif_\XX}} \\
		&\leq \frac{1}{n} + \Ex_{(E_k,\Prot_k)} \Bracket{\tvd{\distribution{\rZ_k \mid E_k,\Prot_k}}{\unif_\XX}} \Leq{Eq~(\ref{eq:hpc-dist-unif})} \frac{1}{n} + o(1). 
	\end{align*}
	Hence, $\protHPC$ cannot output the correct solution with at least a constant probability of success, proving the lower bound for deterministic algorithms. 
	
	To finalize, we can extend this (distributional) lower bound to randomized protocols by the easy direction of Yao's minimax principle~\cite{Yao83}, namely by an averaging argument that picks the ``best'' choice for randomness of the protocol. 
	This concludes the proof. \Qed{Theorem~\ref{thm:hpck}}
	
\end{proof}

\subsection*{Proof of Lemma~\ref{lem:hpc-induction}}

The following is a restatement of Lemma~\ref{lem:hpc-induction}.
\begin{lemma*}[Restatement of Lemma~\ref{lem:hpc-induction}]
	Let $\CC{\protHPC}{} := \CC{\protHPC}{\distHPC}$. There exists an absolute constant $c > 0$ such that for all $j \in [k]$: 
	\begin{align*}
		\Ex_{(E_j,\Prot_j)} \Bracket{\tvd{\distribution{\rZ_j \mid E_j,\Prot_j}}{\distribution{\rZ_j}}} \leq j \cdot c \cdot \Paren{\frac{\sqrt{\CC{\protHPC}{} + k \cdot \log{n}} + k}{n}}. 
	\end{align*}
\end{lemma*}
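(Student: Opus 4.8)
The plan is to prove the bound by induction on $j$, following the outline of Section~\ref{sec:tech-cc-hpc}. For the base case $j=1$ we have $\rE_1 = z_0$, a fixed element; phase $1$ is odd, so by definition of a phase the message $\rProt_1$ is a deterministic function of $(\bC,\bD)$, whereas $\rZ_1 = \rT_{z_0}$ is a function of $(A_{z_0},B_{z_0})$. Since $(\bA,\bB)\perp(\bC,\bD)$ by Observation~\ref{obs:distHPC}, we get $\rZ_1\perp\rProt_1$, hence $\distribution{\rZ_1\mid E_1,\Prot_1} = \distribution{\rZ_1}$ and the left-hand side is $0$, well within the claimed bound.

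For the inductive step, fix $j\geq 2$, assume the bound for $j-1$, and assume by symmetry that $j$ is odd (the even case swaps the roles of the two pairs of players). The first move is to \emph{drop the conditioning on $\rProt_j$}. Since $j$ is odd, only $P_C$ and $P_D$ speak in phase $j$, so given $\rProt^{<j}$ the message $\rProt_j$ is a function of $(\bC,\bD)$, while given $z_{j-1}$ the pointer $\rZ_j = \rT_{z_{j-1}}$ is a function of $(\bA,\bB)$. Starting from $(\bA,\bB)\perp(\bC,\bD)\mid\rProt^{<j}$ (Proposition~\ref{prop:hpc-rectangle}) and conditioning on the pointers $z_1,z_2,\dots,z_{j-1}$ one at a time --- each $z_i$ being, once $z_{i-1}$ is fixed, a function of exactly one of the two pairs' inputs, so that conditioning on it keeps the two sides independent --- one obtains $(\bA,\bB)\perp(\bC,\bD)\mid\rE_j$, and therefore $\rZ_j\perp\rProt_j\mid\rE_j$. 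Hence $\Ex_{(E_j,\Prot_j)}\Bracket{\tvd{\distribution{\rZ_j\mid E_j,\Prot_j}}{\distribution{\rZ_j}}} = \Ex_{E_j}\Bracket{\tvd{\distribution{\rZ_j\mid E_j}}{\distribution{\rZ_j}}}$, so it suffices to bound the latter, where $\distribution{\rZ_j} = \unif_{\YY}$.

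Now write $\rZ_j = \rT_{z_{j-1}}$ with $z_{j-1}$ recorded in $\rZ^{<j}\subseteq\rE_j$. Over the randomness of $\rE_j$, the index $z_{j-1}$ is distributed as $\distribution{\rZ_{j-1}\mid\rE_{j-1},\rProt_{j-1}}$, which by the induction hypothesis is within $(j-1)\cdot c\cdot\Paren{(\sqrt{\CC{\protHPC}{}+k\log n}+k)/n}$ of $\unif$ in expected total variation distance, so swapping $z_{j-1}$ for a uniformly random, independent $x\in\XX$ costs at most this amount. It then remains to establish the ``multi-\SI'' bound $\Ex_{x\sim\unif_{\XX}}\Ex_{E_j}\Bracket{\tvd{\distribution{\rT_x\mid E_j}}{\unif}} = O\Paren{\sqrt{\CC{\protHPC}{}+k\log n}/n}+O(k/n)$. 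This is a direct-sum argument: by mutual independence of the instances $(A_x,B_x)$, $x\in\XX$, under $\distHPC$ (Observation~\ref{obs:distHPC}) together with super-additivity of conditional mutual information, and charging the at most $k$ pointers in $\rZ^{<j}$ for an extra $O(k\log n)$ bits, the transcript together with $\rZ^{<j}$ reveals, on average over $x$, only $O\Paren{(\CC{\protHPC}{}+k\log n)/n}$ bits of internal information about $(A_x,B_x)$; the $O(k/n)$ term covers the event that $x$ equals one of the $\le k$ indices whose instances are singled out by the recorded pointers, on which nothing is claimed. For the remaining mass, the contrapositive of Theorem~\ref{thm:SI-tvd} (a protocol revealing $I$ bits of internal information about an instance of $\SI$ on $\distSI$ can only $\eps$-solve it for $\eps = O(\sqrt{I/n})$) together with Jensen's inequality gives the claimed $O(\sqrt{\CC{\protHPC}{}+k\log n}/n)$. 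Adding the swap term, the multi-\SI term, and the inductive term and choosing the absolute constant $c$ large enough closes the induction.

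I expect the main difficulty to be twofold. First, the conditional-independence bookkeeping: because $\rE_j$ carries the pointers $\rZ^{<j}$, which depend on \emph{both} pairs' inputs, one has to argue carefully --- jointly with the rectangle property and in the order the pointers are generated --- that conditioning on them preserves $(\bA,\bB)\perp(\bC,\bD)$, which is exactly where the alternating definition $z_i=f_{AB}(z_{i-1})$ / $z_i=f_{CD}(z_{i-1})$ is essential. Second, the direct-sum step for $\SI$ is delicate because $\distSI$ is \emph{not} a product distribution, so a generic instance does not split into independent copies, and because conditioning on $\rZ^{<j}$ can leak $\Omega(n)$ bits about a \emph{specific} instance (the one pointed to by a previous pointer); isolating this ``self-pointer'' event --- which contributes only an $O(k/n)$ fraction of the mass and accounts for the additive $+k$ in the bound --- and carrying the accumulated error cleanly across all $k$ phases is the technical heart of the argument.
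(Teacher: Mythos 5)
Your proposal matches the paper's proof essentially step for step: the base case via independence of $\rZ_1$ and $\rProt_1$, dropping the conditioning on $\rProt_j$ via the rectangle property (the paper's Claim~\ref{clm:hpc-rectangle}), swapping $z_{j-1}$ for a uniformly chosen element at the cost of the inductive term via Fact~\ref{fact:tvd-small}, and bounding the resulting uniform-instance term by the direct-sum reduction to Theorem~\ref{thm:SI-tvd} (the paper's Lemma~\ref{lem:multi-SI}), including the $k\log n$ entropy charge for the recorded pointers and the $O(k/n)$ self-pointer event. The approach is the same as the paper's and I see no gaps.
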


The proof of Lemma~\ref{lem:hpc-induction} consists of two main steps. We first show that finding the target element of a \emph{uniformly at random} chosen instance of $\SI$ (as opposed to the instance corresponding to any particular pointer) 
in $\HPC$ is not possible unless we make a large communication. Then, we prove inductively that in each phase $j$, the distribution of the pointer $z_j$ is close to uniform and hence by the argument in the first step, we should not be able to 
find the target element $t_{z_j}$ associated with $z_j$ and use this to finalize the proof. The following lemma captures the first part. 

\begin{lemma}\label{lem:multi-SI}
	There exists an absolute constant $c > 0$ such that for any $j \in [k]$,  
	\begin{align*}
		&\Ex_{(E_j,\Prot_j)}\Ex_{x \sim \unif_{\XX}}\bracket{\tvd{\distribution{\rT_x \mid E_j,\Prot_j}}{\distribution{\rT_x}}} \leq c \cdot \Paren{\frac{\sqrt{\CC{\protHPC}{} + j \cdot \log{n}} + j}{n}} ,\\
		&\Ex_{(E_j,\Prot_j)}\Ex_{y \sim \unif_{\YY}}\bracket{\tvd{\distribution{\rT_y \mid E_j,\Prot_j}}{\distribution{\rT_y}}} \leq c \cdot \Paren{\frac{\sqrt{\CC{\protHPC}{} + j \cdot \log{n}} + j}{n}} .
	\end{align*}
\end{lemma}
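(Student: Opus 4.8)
The plan is to derive Lemma~\ref{lem:multi-SI} by combining a direct-sum reduction to $\SI$ with Theorem~\ref{thm:SI-tvd}, used in the contrapositive form: \emph{a protocol for $\SI$ on $\distSI$ whose internal information cost is $\delta$ can alter the distribution of the target from uniform by only $O(\sqrt{\delta/n})$ in expected total variation distance} (the statement being vacuous once $\delta\ge n$). By the easy direction of Yao's minimax principle~\cite{Yao83} we may fix a deterministic $k$-phase protocol $\protHPC$, and I will prove the first inequality; the second is obtained by the identical argument after interchanging $(P_A,P_B)\leftrightarrow(P_C,P_D)$, $\XX\leftrightarrow\YY$, and the $\XX$-valued pointers $z_0,z_2,\dots$ with the $\YY$-valued pointers $z_1,z_3,\dots$. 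First I isolate the ``pointer contamination''. Since $E_j$ determines $z_0,\dots,z_{j-1}$, let $G_x$ be the event that $x$ is \emph{not} one of the (at most $\ceil{j/2}$) $\XX$-valued pointers appearing among $z_0,\dots,z_{j-1}$; note that $G_x$ is a function of $(E_j,\Prot_j)$. Using $\distribution{\rT_x}=\unif_\YY$, for every $x\in\XX$,
\[
\Ex_{(E_j,\Prot_j)}\tvd{\distribution{\rT_x\mid E_j,\Prot_j}}{\distribution{\rT_x}}\ \le\ \Pr_{\distHPC}[\neg G_x]\ +\ \Ex_{(E_j,\Prot_j)}\Bracket{\tvd{\distribution{\rT_x\mid E_j,\Prot_j}}{\unif_\YY}\cdot\mathbf{1}(G_x)}.
\]
Because $z_{2i}$ is always a well-defined element of $\XX$ we have $\sum_{x\in\XX}\Pr[z_{2i}=x]=1$ for each $i$, hence $\sum_{x}\Pr[\neg G_x]\le\ceil{j/2}\le j$, so after averaging over $x\sim\unif_\XX$ the first term contributes at most $j/n$. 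It remains to bound the $x$-average of the second term.

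\emph{The reduction.} Fix $x\in\XX$ and build a randomized $\SI$ protocol $\prot_x$ on a fresh $(A,B)\sim\distSI$ over the universe $\YY$. Alice plays $P_A$ and $P_C$, Bob plays $P_B$ and $P_D$, and $(A,B)$ is planted as $(A_x,B_x)$. Every other instance --- the $(A_{x'},B_{x'})$ with $x'\neq x$ and all $(C_y,D_y)$ --- is sampled from $\distSI$ via the following decomposition of the base distribution $\mu$: the target coordinate of the instance is chosen by a public coin and set to $(1,1)$; for each non-target coordinate a public coin names an ``owner'' in $\set{A,B}$, the owner then samples its own bit privately from $\mathrm{Bern}(2/3)$, and the non-owner's bit is publicly set to $0$. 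A direct check shows this reproduces $\mu$, and hence $\distSI$, on each instance, while keeping the free randomness of distinct instances mutually independent, so that the vectors $\set{A_{x'}}_{x'\in\XX}$ are still mutually independent conditioned on $\bB$ and on the public randomness $\bRpub$ (and likewise for $\set{B_{x'}}_{x'}$ given $\bA,\bRpub$). The players then simulate $\protHPC$ for $j$ phases, broadcasting all its messages $\rProt^{\le j}$, and walk the pointer chain using $\bRpub$: on $G_x$ they recover $z_1,\dots,z_{j-1}$, since every pointer whose $\XX$-preimage is not $x$ is a deterministic function of $\bRpub$, and on $\neg G_x$ they abort; put $\hat{\rZ}^{<j}=\rZ^{<j}$ on $G_x$ and $\hat{\rZ}^{<j}=\bot$ otherwise, and let the transcript of $\prot_x$ be $\tau_x:=(\bRpub,\rProt^{\le j},\hat{\rZ}^{<j})$. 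By the rectangle structure of $\distHPC$ (Observation~\ref{obs:distHPC}, Proposition~\ref{prop:hpc-rectangle}) the planted instance is distributed exactly as $\distSI$, and on $G_x$ the transcript $\tau_x$ determines $(E_j,\Prot_j)$ together with $\bRpub$; hence, writing $\rT$ for the target of $(A,B)$ in $\prot_x$, convexity of $\nu\mapsto\tvd{\nu}{\unif_\YY}$ gives
\[
\Ex_{(E_j,\Prot_j)}\Bracket{\tvd{\distribution{\rT_x\mid E_j,\Prot_j}}{\unif_\YY}\cdot\mathbf{1}(G_x)}\ \le\ \Ex_{\tau_x}\tvd{\distribution{\rT\mid\tau_x}}{\unif_\YY}.
\]

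\emph{Direct sum and conclusion.} Since $\bRpub\perp(A,B)$ and $\hat{\rZ}^{<j}$ is (for fixed $x$) a function of $\bRpub$, the internal information cost of $\prot_x$ is $\mi{\rProt^{\le j}}{A_x\mid B_x,\bRpub}+\mi{\rProt^{\le j}}{B_x\mid A_x,\bRpub}$. Summing over $x\in\XX$, the chain rule together with super-additivity of mutual information across the conditionally independent coordinates and the bound $\en{\rProt^{\le j}}\le\CC{\protHPC}{}$ give
\[
\sum_{x\in\XX}\ICost{\prot_x}{\distSI}\ \le\ \mi{\rProt^{\le j}}{\bA\mid\bB,\bRpub}+\mi{\rProt^{\le j}}{\bB\mid\bA,\bRpub}\ \le\ 2\,\CC{\protHPC}{}.
\]
Applying Theorem~\ref{thm:SI-tvd} to each $\prot_x$ yields $\Ex_{\tau_x}\tvd{\distribution{\rT\mid\tau_x}}{\unif_\YY}=O(\sqrt{\ICost{\prot_x}{\distSI}/n})$, and Jensen's inequality then gives
\[
\Ex_{x\sim\unif_\XX}\Ex_{\tau_x}\tvd{\distribution{\rT\mid\tau_x}}{\unif_\YY}\ =\ O\!\left(\sqrt{\tfrac{1}{n^2}\sum_{x\in\XX}\ICost{\prot_x}{\distSI}}\right)\ =\ O\!\left(\frac{\sqrt{\CC{\protHPC}{}}}{n}\right)\ \le\ O\!\left(\frac{\sqrt{\CC{\protHPC}{}+j\log n}+j}{n}\right).
\]
Combining this with the $j/n$ term from the contamination split proves the first inequality (with $c$ absorbing the hidden constants); the bound for $\rT_y$ is symmetric.

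\emph{Main obstacle.} The crux is the embedding: because $\distSI$ is \emph{not} a product distribution, one cannot simply sample every other instance with public coins --- that would force the analysis to condition on all of $\bA$ restricted to $\XX\setminus\set{x}$, and $\sum_{x}\mi{\rProt^{\le j}}{A_x\mid\bA_{-x},\bB}$ can then be as large as $n\cdot\CC{\protHPC}{}$ (a single-bit XOR message across instances already realizes this), costing a factor $\sqrt n$. The public/private split of $\mu$ above is precisely what keeps each other instance's randomness private, so that the per-instance information costs genuinely add up to the total communication rather than $n$ times it; verifying that this split exactly reproduces $\distSI$ and preserves the needed conditional independence, and that the simulators recover $\rZ^{<j}$ from $\bRpub$ exactly on $G_x$ while aborting harmlessly on $\neg G_x$, is the delicate bookkeeping of the argument.
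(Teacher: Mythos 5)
Your overall architecture is the same as the paper's: peel off the at most $j$ "contaminated" indices to get the additive $j/n$ term, embed a fresh $\SI$ instance into one slot of $(\bA,\bB)$, bound the embedded protocol's internal information cost by a direct sum over the $n$ slots, and invoke Theorem~\ref{thm:SI-tvd} in contrapositive. The contamination bookkeeping and the final application of Theorem~\ref{thm:SI-tvd} are fine. The gap is the one inequality you assert but do not prove, $\sum_{x}\ICost{\prot_x}{\distSI}\le 2\,\CC{\protHPC}{}$, and your sampling scheme does not support the argument you invoke for it. Write $\rR_{x'}$ for the public part of your owner-based decomposition of instance $x'$ (its target position, the owner assignments, and the non-owners' zeros), so that the public randomness of $\prot_x$ is $\rR_{-x}$ and $\ICost{\prot_x}{\distSI}=\mi{\rProt^{\le j}}{A_x\mid B_x,\rR_{-x}}+\mi{\rProt^{\le j}}{B_x\mid A_x,\rR_{-x}}$. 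The coordinates $\set{A_{x'}}$ are conditionally independent only given \emph{all} of $\rR=(\rR_{x'})_{x'}$, so chain rule plus super-additivity yields $\mi{\rProt^{\le j}}{\bA\mid\bB,\rR}\ge\sum_x\mi{\rProt^{\le j}}{A_x\mid\bB,\rR_{-x},\rR_x}$, where every term carries the conditioning on $\rR_x$. But $\rR_x$ reveals the target $t_x$ of the embedded instance and is therefore \emph{not} independent of $A_x$ given $B_x$: you can neither add $\rR_x$ to your terms via Proposition~\ref{prop:info-increase} nor drop it from the super-additivity terms, and the two measures can differ by an unbounded factor (a message equal to $t_x$ has $\mi{\rProt}{A_x\mid\bB,\rR}=0$ but $\mi{\rProt}{A_x\mid B_x,\rR_{-x}}=\Theta(\log n)$). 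So super-additivity controls a genuinely smaller quantity than the one you need. The only other route—upper-bounding $\mi{\rProt^{\le j}}{A_x\mid B_x,\rR_{-x}}$ by conditioning on the full inputs of the other instances—lands exactly on $\sum_x\mi{\rProt}{A_x\mid\bA_{-x},\bB}$, which you yourself observe can be $n\cdot\CC{\protHPC}{}$. As written, the middle inequality of your "direct sum and conclusion" step is unsubstantiated.

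The paper closes this gap with a different embedding, and the difference is essential rather than cosmetic. It plants $(A,B)$ at a \emph{uniformly random} position $i$ and samples the remaining instances by an asymmetric staircase: $A_{x_{j'}}$ for $j'<i$ and $B_{x_{j'}}$ for $j'>i$ are fully public, and each player privately completes her own remaining halves conditioned on the public ones. The public randomness of the embedding at $i$ is then exactly $(\bA^{<i},\bB^{>i})$; one may add $\bB^{<i}$ to the conditioning of the $A$-term because $A_{x_i}\perp\bB^{<i}\mid B_{x_i},\bA^{<i},\bB^{>i}$, and the resulting terms $\mi{\rProt}{A_{x_i}\mid\bA^{<i},\bB}$ telescope \emph{exactly} to $\mi{\rProt}{\bA\mid\bB}\le\en{\rProt}$ by the chain rule, with no auxiliary variable ever attached to the embedded coordinate. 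This also defuses the obstacle you flag: the XOR-type blow-up concerns $\sum_x\mi{\rProt}{A_x\mid\bA_{-x},\bB}$, whereas the staircase only ever conditions the $i$-th term on the prefix $\bA^{<i}$. Replacing your symmetric owner splitting by this staircase (your public handling of the pointer walk and of $(\bC,\bD)$ can stay) repairs the proof; it then coincides with the paper's.
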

\noindent
The proof of this lemma is based on a direct-sum style argument combined with Theorem~\ref{thm:SI-tvd}. For intuition, consider a protocol that uses $o(n^2)$ communication in its first $j$ phases and assume by way of contradiction that it 
can reduce the LHS of one of the equations in Lemma~\ref{lem:multi-SI} by $\Omega(1)$. Using a direct-sum style argument, we can then argue that the transcript of the first $j$ phases of this protocol only reveal $o(n)$ bits of information about 
a uniformly at random chosen instance $(A_x,B_x)$ of $\SI$ but is enough to $\Omega(1)$-solve the instance $(A_x,B_x)$ (according to Definition~\ref{def:eps-solve}), which is in contradiction with our bounds in Theorem~\ref{thm:SI-tvd}. 
Note that in this discussion, for the sake of simplicity, we neglected the role of extra conditioning on $Z^{<j}$ in $E_j$ in the LHS of equations; handling this extra conditioning results in the extra additive factor in RHS. 

\begin{proof}[Proof of Lemma~\ref{lem:multi-SI}]
	 We only prove the first equation; the second one can be proven analogously.	
	Suppose towards a contradiction that this equation does not hold. We use $\protHPC$ to design a protocol $\protSI$
	that can $\eps$-solve the \SI problem $(A_x,B_x)$ for a uniformly at random chosen $x \in \XX$ and appropriately chosen $\eps \in (0,1)$ to be determined later (see Definition~\ref{def:eps-solve} for the notion of $\eps$-solve). 
	
	\begin{tbox}
    \textbf{Protocol $\protSI$}: The protocol for $\eps$-solving $\SI$ using a protocol $\protHPC$ for $\HPC_k$. 
    
     \smallskip
    
    \textbf{Input:} An instance $(A,B) \sim \distSI$ over the universe $\YY$. 
    
    \algline
    
    \begin{enumerate}
    	\item \textbf{Sampling the instance.} Alice and Bob create an instance $(\bA,\bB,\bC,\bD)$ of $\HPC_k$ as follows (see Figure~\ref{fig:hpc-sample} below for an illustration): 
	\begin{enumerate}
	\item Using \underline{public coins}, Alice and Bob sample an index $i \in [n]$ uniformly at random, and  
	Alice sets $A_{x_i} = A$ and Bob sets $B_{x_i} = B$ using their given inputs in $\SI$. 
	\item Using \underline{public coins}, Alice and Bob sample $A_{x_j}$ and $B_{x_k}$ from $\distSI$ for all $j < i < k$. 
	\item Using \underline{private coins}, Alice samples $A_{x_k}$ for $k > i$ such that $(A_{x_k},B_{x_k}) \sim \distSI$. Similarly Bob samples $B_{x_j}$ for $j < i$. This completes construction of $(\bA,\bB)$. 
        \item Using \underline{public coins}, Alice and Bob sample $(\bC,\bD)$ completely from $\distHPC$ (this is possible by Observation~\ref{obs:distHPC} as $(\bA,\bB) \perp (\bC,\bD)$). 
        \end{enumerate}
        
           \item\label{line:AB} \textbf{Computing the answer.}  Alice and Bob first check whether $x_i$ belongs to $z_0,z_1,\ldots,z_{j-1}$ or not. To do so, they start computing these pointers using the fact
           that for any underlying instance $(A_x,B_x) \in (\bA,\bB) \setminus (A_{x_i},B_{x_i})$ either Alice or Bob knows the entire instance. They terminate the protocol if ever $x_i$ belongs to one of the pointers computed so far. 
           We use $\Prot^*$ to denote the transcript of the protocol in this step (which is either $z_1,\ldots,z_{j-1}$ or some prefix of it ending in $x_i$). 
           
           \item Next, Alice and Bob run the protocol $\protHPC$ on the instance $(\bA,\bB,\bC,\bD)$ until its $j$-th phase by Alice playing $P_A$, Bob playing $P_B$, and both Alice and Bob simulating $P_C$ and $P_D$ with no communication (this is possible as 
           both Alice and Bob know $(\bC,\bD)$ entirely).
        
	\item The players return $\ProtSI := (\Prot_1,\ldots,\Prot_j,\Prot^*)$. 
    \end{enumerate}
\end{tbox}

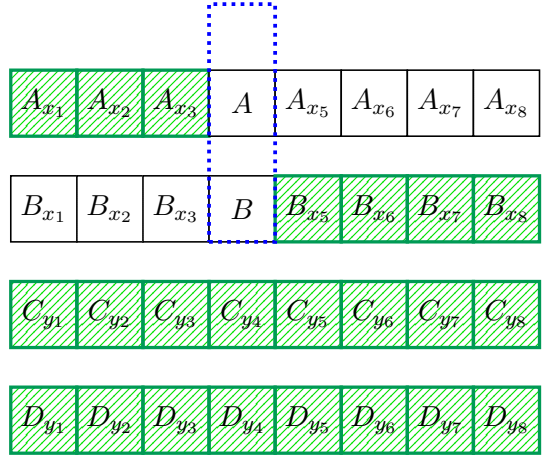
\begin{figure}[h!]
    \centering
\begin{tikzpicture}[ auto ,node distance =1cm and 2cm , on grid , semithick , state/.style ={ circle ,top color =white , bottom color = white , draw, black , text=black}, every node/.style={inner sep=0,outer sep=0}]

\node[rectangle, ForestGreen, text=black, draw, pattern=north east lines, pattern color=ForestGreen, line width=1.25pt, minimum height=25pt, minimum width = 25pt] (A1) {$A_{x_1}$};
\node[rectangle, ForestGreen, text=black, draw, pattern=north east lines, pattern color=ForestGreen, line width=1.25pt, minimum height=25pt, minimum width = 25pt] (A2) [right=25pt of A1]{$A_{x_2}$};
\node[rectangle, ForestGreen, text=black, draw, pattern=north east lines, pattern color=ForestGreen, line width=1.25pt, minimum height=25pt, minimum width = 25pt]  (A3) [right=25pt of A2]{$A_{x_3}$};
\node[state,rectangle, top color=white, black, bottom color=white, text=black, opacity=1, text opacity=1, minimum height=25pt, minimum width = 25pt]  (A4) [right=25pt of A3]{$A$};
\node[state,rectangle, top color=white, black, bottom color=white, text=black, opacity=1, text opacity=1, minimum height=25pt, minimum width = 25pt]  (A5) [right=25pt of A4]{$A_{x_5}$};
\node[state,rectangle, top color=white, black, bottom color=white, text=black, opacity=1, text opacity=1, minimum height=25pt, minimum width = 25pt]  (A6) [right=25pt of A5]{$A_{x_6}$};
\node[state,rectangle, top color=white, black, bottom color=white, text=black, opacity=1, text opacity=1, minimum height=25pt, minimum width = 25pt]  (A7) [right=25pt of A6]{$A_{x_7}$};
\node[state,rectangle, top color=white, black, bottom color=white, text=black, opacity=1, text opacity=1, minimum height=25pt, minimum width = 25pt]  (A8) [right=25pt of A7]{$A_{x_8}$};

\node[state,rectangle, top color=white, black, bottom color=white, text=black, opacity=1, text opacity=1, minimum height=25pt, minimum width = 25pt]  (B1) [below=40pt of A1]{$B_{x_1}$};
\node[state,rectangle, top color=white, black, bottom color=white, text=black, opacity=1, text opacity=1, minimum height=25pt, minimum width = 25pt]  (B2) [right=25pt of B1]{$B_{x_2}$};
\node[state,rectangle, top color=white, black, bottom color=white, text=black, opacity=1, text opacity=1, minimum height=25pt, minimum width = 25pt]  (B3) [right=25pt of B2]{$B_{x_3}$};
\node[state,rectangle, top color=white, black, bottom color=white, text=black, opacity=1, text opacity=1, minimum height=25pt, minimum width = 25pt]  (B4) [right=25pt of B3]{$B$};
\node[rectangle, ForestGreen, text=black, draw, pattern=north east lines, pattern color=ForestGreen, line width=1.25pt, minimum height=25pt, minimum width = 25pt]  (B5) [right=25pt of B4]{$B_{x_5}$};
\node[rectangle, ForestGreen, text=black, draw, pattern=north east lines, pattern color=ForestGreen, line width=1.25pt, minimum height=25pt, minimum width = 25pt]  (B6) [right=25pt of B5]{$B_{x_6}$};
\node[rectangle, ForestGreen, text=black, draw, pattern=north east lines, pattern color=ForestGreen, line width=1.25pt, minimum height=25pt, minimum width = 25pt]  (B7) [right=25pt of B6]{$B_{x_7}$};
\node[rectangle, ForestGreen, text=black, draw, pattern=north east lines, pattern color=ForestGreen, line width=1.25pt, minimum height=25pt, minimum width = 25pt] (B8) [right=25pt of B7]{$B_{x_8}$};

\node[state,rectangle, top color=white, black, bottom color=white, text=black, opacity=0, text opacity=1, minimum height=25pt, minimum width = 25pt]  (I) [above=25pt of A4]{$i$};

\node[inner sep=0pt, draw, dotted, blue, fit=(I) (B4), line width=0.5mm] {};

\node[rectangle, ForestGreen, text=black, draw, pattern=north east lines, pattern color=ForestGreen, line width=1.25pt, minimum height=25pt, minimum width = 25pt]  (C1) [below=40pt of B1]{$C_{y_1}$};
\node[rectangle, ForestGreen, text=black, draw, pattern=north east lines, pattern color=ForestGreen, line width=1.25pt, minimum height=25pt, minimum width = 25pt] (C2) [right=25pt of C1]{$C_{y_2}$};
\node[rectangle, ForestGreen, text=black, draw, pattern=north east lines, pattern color=ForestGreen, line width=1.25pt, minimum height=25pt, minimum width = 25pt]   (C3) [right=25pt of C2]{$C_{y_3}$};
\node[rectangle, ForestGreen, text=black, draw, pattern=north east lines, pattern color=ForestGreen, line width=1.25pt, minimum height=25pt, minimum width = 25pt]  (C4) [right=25pt of C3]{$C_{y_4}$};
\node[rectangle, ForestGreen, text=black, draw, pattern=north east lines, pattern color=ForestGreen, line width=1.25pt, minimum height=25pt, minimum width = 25pt]  (C5) [right=25pt of C4]{$C_{y_5}$};
\node[rectangle, ForestGreen, text=black, draw, pattern=north east lines, pattern color=ForestGreen, line width=1.25pt, minimum height=25pt, minimum width = 25pt]  (C6) [right=25pt of C5]{$C_{y_6}$};
\node[rectangle, ForestGreen, text=black, draw, pattern=north east lines, pattern color=ForestGreen, line width=1.25pt, minimum height=25pt, minimum width = 25pt]  (C7) [right=25pt of C6]{$C_{y_7}$};
\node[rectangle, ForestGreen, text=black, draw, pattern=north east lines, pattern color=ForestGreen, line width=1.25pt, minimum height=25pt, minimum width = 25pt]  (C8) [right=25pt of C7]{$C_{y_8}$};

\node[rectangle, ForestGreen, text=black, draw, pattern=north east lines, pattern color=ForestGreen, line width=1.25pt, minimum height=25pt, minimum width = 25pt]  (D1) [below=40pt of C1]{$D_{y_1}$};
\node[rectangle, ForestGreen, text=black, draw, pattern=north east lines, pattern color=ForestGreen, line width=1.25pt, minimum height=25pt, minimum width = 25pt]  (D2) [right=25pt of D1]{$D_{y_2}$};
\node[rectangle, ForestGreen, text=black, draw, pattern=north east lines, pattern color=ForestGreen, line width=1.25pt, minimum height=25pt, minimum width = 25pt]  (D3) [right=25pt of D2]{$D_{y_3}$};
\node[rectangle, ForestGreen, text=black, draw, pattern=north east lines, pattern color=ForestGreen, line width=1.25pt, minimum height=25pt, minimum width = 25pt]  (D4) [right=25pt of D3]{$D_{y_4}$};
\node[rectangle, ForestGreen, text=black, draw, pattern=north east lines, pattern color=ForestGreen, line width=1.25pt, minimum height=25pt, minimum width = 25pt]  (D5) [right=25pt of D4]{$D_{y_5}$};
\node[rectangle, ForestGreen, text=black, draw, pattern=north east lines, pattern color=ForestGreen, line width=1.25pt, minimum height=25pt, minimum width = 25pt]  (D6) [right=25pt of D5]{$D_{y_6}$};
\node[rectangle, ForestGreen, text=black, draw, pattern=north east lines, pattern color=ForestGreen, line width=1.25pt, minimum height=25pt, minimum width = 25pt]  (D7) [right=25pt of D6]{$D_{y_7}$};
\node[rectangle, ForestGreen, text=black, draw, pattern=north east lines, pattern color=ForestGreen, line width=1.25pt, minimum height=25pt, minimum width = 25pt]  (D8) [right=25pt of D7]{$D_{y_8}$};

\end{tikzpicture}
    \caption{Illustration of the process of sampling of instances of $\HPC$ in $\protSI$ for $n=8$. In this example, $i=4$ and hence $(A_{x_4},B_{x_4}) = (A,B)$ and the players sample $\set{A_{x_1},A_{x_2},A_{x_3},B_{x_5},B_{x_6},B_{x_7},B_{x_8}}$ as well
    as the entire $\bC$ and $\bD$ using public randomness. Then, Alice samples $\set{A_{x_5},A_{x_6},A_{x_7},A_{x_8}}$ and Bob samples $\set{B_{x_1},B_{x_2},B_{x_3}}$ using private randomness, respectively. }
    \label{fig:hpc-sample}
\end{figure}

Similar to the case of the sampling in protocol $\protPI$ in Section~\ref{sec:SI}, here also the public-private randomness sampling of the instance of $\HPC$ inside $\protSI$ is only for the sake of the information theoretic arguments; for the 
rest of the analysis, we only care that the distribution of the instances of $\HPC$ sampled in $\protSI$ is $\distHPC$. We first determine the parameter $\eps$ for which $\protSI$ $\eps$-solves $\SI$. 

\begin{claim}\label{clm:si-eps-solve-reduction}
	$\protSI$ $\eps$-solves $\SI$ on $\distSI$ for 
	\begin{align*}
		\eps \geq \Ex_{(E_{j},\Prot_j)} \Ex_{x \sim \unif_{\XX}}\Bracket{\tvd{\distribution{\rT_{x} \mid E_j,\Prot_j}}{\distribution{\rT_x}}} - \frac{j}{n}, 
	\end{align*}
	where $(E_j,\Prot_j,\rT_x)$ are distributed according to $\distHPC$.

\end{claim}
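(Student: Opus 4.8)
The plan is to show that on an event of probability at least $1-j/n$ the transcript of $\protSI$ reveals to an outside observer exactly the ``easily known'' data $(E_j,\Prot_j)$ about the target $\rT_{x_i}$ of the embedded instance, together with the index $i$, and then to observe that a protocol that reveals only this much already $\eps$-solves $\SI$ for the claimed $\eps$; the remaining work is a direct-sum averaging over the uniform choice of $i$.

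First I would record the structure of $\protSI$ (see Figure~\ref{fig:hpc-sample}). By construction the instance $(\bA,\bB,\bC,\bD)$ produced in the sampling step has distribution $\distHPC$; the embedded pair $(A,B)$ sits at coordinate $i$, so the $\SI$ target is $\rT=\rT_{x_i}$; and $i\sim\unif_{[n]}$ is drawn with public coins independently of the instance, hence (by the convention that the transcript includes the public randomness) $i$ is recoverable from $\ProtSI$. The crucial point is that the public/private split of the sampling step is arranged so that for every $x\neq x_i$ in $\XX$ exactly one of the two players holds the entire instance $(A_x,B_x)$, while both hold all of $(\bC,\bD)$. Therefore, on the event $\mathcal{E}:=\{x_i\notin\{z_0,z_1,\dots,z_{j-1}\}\}$, the pointer-checking step never aborts: the players successfully walk the chain $z_1,\dots,z_{j-1}$ (each $f_{AB}$-pointer announced by whichever player owns the relevant $\SI$ instance, each $f_{CD}$-pointer computed by both), so its transcript $\Prot^*$ together with $z_0=x_1$ reveals $\rZ^{<j}$; since $\ProtSI$ also contains $\Prot_1,\dots,\Prot_j$, it follows that on $\mathcal{E}$ the triple $(E_j,\Prot_j,i)=((\Prot^{<j},\rZ^{<j}),\Prot_j,i)$ is a deterministic function of $\ProtSI$, and whether $\mathcal{E}$ occurred can itself be read off $\ProtSI$. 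Finally, since $i$ is uniform over $[n]$ and independent of the instance and there are at most $j$ pointers among $z_0,\dots,z_{j-1}$, we have $\Pr(\overline{\mathcal{E}})\leq j/n$.

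Next comes the averaging. Let $g(\ProtSI)$ be the deterministic function of $\ProtSI$ equal to $(E_j,\Prot_j,i)$ on $\mathcal{E}$ and to a fixed symbol $\star$ on $\overline{\mathcal{E}}$. Conditioning on a garbling of $\ProtSI$ can only decrease the average total variation distance to the fixed distribution $\unif_{[n]}$ (convexity of $\tvd{\cdot}{\unif_{[n]}}$), and dropping the nonnegative $\star$-contribution,
\begin{align*}
\Ex_{\ProtSI}\Bracket{\tvd{\distribution{\rT\mid\ProtSI}}{\unif_{[n]}}}
&\geq \Ex\Bracket{\tvd{\distribution{\rT\mid g(\ProtSI)}}{\unif_{[n]}}} \\
&\geq \Ex\Bracket{\mathbf{1}[\mathcal{E}]\cdot\tvd{\distribution{\rT_{x_i}\mid E_j,\Prot_j}}{\unif_{[n]}}},
\end{align*}
using that on $\mathcal{E}$ one has $\distribution{\rT\mid g(\ProtSI)}=\distribution{\rT_{x_i}\mid E_j,\Prot_j,i,\mathcal{E}}=\distribution{\rT_{x_i}\mid E_j,\Prot_j}$ (the event $\mathcal{E}$ is a function of $(\rZ^{<j},i)$ and $i$ is independent of the instance). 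Since $\tvd{\cdot}{\cdot}\leq 1$ and $\Pr(\overline{\mathcal{E}})\leq j/n$, removing the indicator costs at most $j/n$, so the right-hand side is at least $\Ex_{i,E_j,\Prot_j}[\tvd{\distribution{\rT_{x_i}\mid E_j,\Prot_j}}{\unif_{[n]}}]-j/n$. In this expectation $i\sim\unif_{[n]}$ is independent of the $\distHPC$-distributed instance, so by Fubini it equals $\Ex_{x\sim\unif_{\XX}}\Ex_{E_j,\Prot_j}[\tvd{\distribution{\rT_x\mid E_j,\Prot_j}}{\unif_{[n]}}]$; and since $\distribution{\rT_x}=\unif_{[n]}$ (its $\distHPC$-marginal, which agrees with the $\distSI$-marginal of the target) we may replace $\unif_{[n]}$ by $\distribution{\rT_x}$. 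By Definition~\ref{def:eps-solve} this says precisely that $\protSI$ $\eps$-solves $\SI$ for $\eps$ at least the claimed quantity.

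The step that needs care is the structural claim that on $\mathcal{E}$ the players can genuinely compute and reveal $z_1,\dots,z_{j-1}$; this is exactly what the unusual public/private sampling buys us, since it guarantees that every non-embedded $\SI$ instance lies entirely with one player and $(\bC,\bD)$ lies with both. Everything else --- Fubini, the $j/n$ union bound over $\{z_0,\dots,z_{j-1}\}$, and convexity of total variation distance --- is routine.
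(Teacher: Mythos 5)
Your proof is correct and follows essentially the same route as the paper's: identify that on the good event $x_i\notin Z^{<j}$ the transcript determines $(E_j,\Prot_j,i)$, observe the bad event has probability at most $j/n$ and costs at most that much in total variation, and average over the uniform embedding position. Your use of an explicit garbling map $g$ together with convexity of total variation distance is a slightly more careful justification of the step where the paper directly identifies $\distribution{\rT_{x_i}\mid\ProtSI}$ with $\distribution{\rT_{x_i}\mid E_j,\Prot_j}$, but the argument is the same in substance.
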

\begin{proof}
	By Definition~\ref{def:eps-solve}, $\protSI$ $\eps$-solves $\SI$ for $\eps := \Ex_{\ProtSI} \Bracket{\tvd{\distribution{\rT \mid \ProtSI}}{\distribution{\rT}}}$. We thus bound the RHS of this equation. We have, 
	\begin{align*}
		\Ex_{\ProtSI} \Bracket{\tvd{\distribution{\rT \mid \ProtSI}}{\distribution{\rT}}} &= \Ex_{(E_j,\Prot_j,\ProtSI,i)}\Bracket{\tvd{\distribution{\rT_{x_i} \mid \ProtSI}}{\distribution{\rT_{x_i}}}} 
		\tag{as $\rT = \rT_{x_i}$ for $\rI = i$} \\
		&=  \Ex_{(E_j,\Prot_j)} \Ex_{i}\Ex_{\ProtSI \mid (E_j,\Prot_j,i)}\Bracket{\tvd{\distribution{\rT_{x_i} \mid \ProtSI}}{\distribution{\rT_{x_i}}}} \\
		\tag{as $(\rE_j,\rProt_j) \perp \rI$} \\
		&= \Ex_{(E_j,\Prot_j)} \Bracket{\sum_{i=1}^{n} \frac{1}{n}  \Ex_{\ProtSI \mid (E_j,\Prot_j,i)}\Bracket{\tvd{\distribution{\rT_{x_i} \mid \ProtSI}}{\distribution{\rT_{x_i}}}}} 
		\tag{distribution of $i$ is uniform over $[n]$}\\
		&= \Ex_{(E_j,\Prot_j)} \Big[ \sum_{x_i \in Z^{<j}} \frac{1}{n} \cdot \tvd{\distribution{\rT_{x_i} \mid Z^{<j'}}}{\distribution{\rT_{x_i}}} \\
		&\hspace{40pt} + \sum_{x_i \notin Z^{<j}} \frac{1}{n} \cdot \tvd{\distribution{\rT_{x_i} \mid E_j,\Prot_j}}{\distribution{\rT_{x_i}}} \Big] 
		\tag{$\Prot^* := Z^{<j'}$ for some $j' < j-1$ when $x_i \in Z^{<j}$ and is otherwise equal to $E_j,\Prot_j$)} \\
		&= \Ex_{(E_j,\Prot_j)} \bracket{\sum_{x_i \notin Z^{<j}} \frac{1}{n} \cdot \tvd{\distribution{\rT_{x_i} \mid E_j,\Prot_j}}{\distribution{\rT_{x_i}}} }
		\tag{$\rT_{x_i} \perp \Prot^*$ and so $\tvd{\distribution{\rT_{x_i} \mid Z^{<j'}}}{\distribution{\rT_{x_i}}} = \tvd{\distribution{\rT_{x_i}}}{\distribution{\rT_{x_i}}} = 0$} \\
		&\geq \Ex_{(E_{j},\Prot_j)} \Ex_{i}\Bracket{\tvd{\distribution{\rT_{x_i} \mid E_j,\Prot_j}}{\distribution{\rT_{x_i}}}} - \frac{j}{n}.
		\tag{as total variation distance is bounded by one $\card{Z^{<j}} = j$}
	\end{align*}
	Replacing $x_i$ for $i$ chosen randomly from $[n]$ above by $x \sim \unif_{\XX}$ concludes the proof. \Qed{Claim~\ref{clm:si-eps-solve-reduction}}
	
\end{proof}
\noindent
The RHS in Claim~\ref{clm:si-eps-solve-reduction} is the quantity we aim to bound in this lemma (minus the extra additive $j/n$ term). 
To do so, we are going to bound the internal information cost of $\protSI$ by the communication cost of $\protHPC$ in the following claim and then use Theorem~\ref{thm:SI-tvd} to relate
this quantity to $\eps$. 

\begin{claim}\label{clm:si-info-cost}
	$\IC{\protSI}{\distSI} = O\Paren{\frac{\CC{\protHPC}{} }{n}   + \frac{j \cdot \log{n}}{n}}$. 
\end{claim}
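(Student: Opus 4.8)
The plan is a direct-sum (superadditivity of information cost) argument, where the point of the peculiar public/private sampling in the first step of $\protSI$ is exactly to make the argument symmetric in Alice and Bob. Let $\rR$ denote the public randomness of $\protSI$ (the index $\rI$, the publicly sampled coordinates $\rA_{x_{<\rI}},\rB_{x_{>\rI}}$, and all of $\bC,\bD$), and let $M:=(\rProt^{*},\rProt_1,\dots,\rProt_j)$ collect the actual messages of $\protSI$ (the pointer messages $\rProt^{*}$ computed in the pointer-chasing step, together with the first $j$ phases of $\protHPC$), so that the transcript of $\protSI$ including public randomness is $(\rR,M)$. Since the $\SI$-input $(\rA,\rB)\sim\distSI$ is drawn independently of the protocol's randomness, $\rR\perp(\rA,\rB)$, hence
\[
\ICost{\protSI}{\distSI}=\mi{M}{\rA\mid\rB,\rR}{}+\mi{M}{\rB\mid\rA,\rR}{},
\]
and by the symmetry of $\protSI$ under swapping $(A,B)$ and swapping the two halves $<\!i$ and $>\!i$ it suffices to bound the first summand.

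First I would bound $\en{M}$. The first $j$ phases of $\protHPC$ have bit-length at most $\CC{\protHPC}{}$, and $\rProt^{*}$ is a sequence of $O(j)$ elements of $[n]$: each $\YY$-pointer is recomputed locally by both players from $\bC,\bD$ at no communication cost, and each $\XX$-pointer $z_{\ell}$ is sent in a single $O(\log n)$-bit message by whichever of Alice, Bob holds the non-public side of the instance $(A_{z_\ell},B_{z_\ell})$. Hence $|\rProt^{*}|=O(j\log n)$ and $\en{M}\le\CC{\protHPC}{}+O(j\log n)$.

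Now comes the direct-sum step. Conditioned on $\rI=i$, the $\HPC$ instance produced in the sampling step of $\protSI$ is distributed exactly as $\distHPC$, and the conditional law of $M$ given the public samples and $(A_{x_i},B_{x_i})=(A,B)$ is the same whether we view it through $\protSI$ or through $\protHPC$ run on $\distHPC$. Therefore $\mi{M}{\rA\mid\rB,\rR}{}=\Ex_{i}\Bracket{\mi{A_{x_i}}{M\mid B_{x_i},\rA_{x_{<i}},\rB_{x_{>i}},\bC,\bD}{}}$ under $\distHPC$. Because the pairs $(A_{x_k},B_{x_k})$ are mutually independent over $k$ (part~(ii) of Observation~\ref{obs:distHPC}), we have $A_{x_i}\perp\rB_{x_{<i}}\mid(B_{x_i},\rA_{x_{<i}},\rB_{x_{>i}},\bC,\bD)$, so Proposition~\ref{prop:info-increase} lets us add $\rB_{x_{<i}}$ to the conditioning without decreasing each term, bounding it by $\mi{A_{x_i}}{M\mid\rA_{x_{<i}},\bB,\bC,\bD}{}$. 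Averaging over the uniform index $i$ and applying the chain rule of mutual information (\itfacts{chain-rule}) in the order $x_1,\dots,x_n$,
\[
\mi{M}{\rA\mid\rB,\rR}{}\le\frac1n\sum_{i=1}^{n}\mi{A_{x_i}}{M\mid\rA_{x_{<i}},\bB,\bC,\bD}{}=\frac1n\,\mi{\bA}{M\mid\bB,\bC,\bD}{}\le\frac{\en{M}}{n}\le\frac{\CC{\protHPC}{}+O(j\log n)}{n}.
\]
The symmetric summand $\mi{M}{\rB\mid\rA,\rR}{}$ satisfies the identical bound (now adding $\rA_{x_{>i}}$ to the conditioning and expanding the chain rule in the order $x_n,\dots,x_1$), and summing the two gives $\ICost{\protSI}{\distSI}=O\!\left(\CC{\protHPC}{}/n+j\log n/n\right)$, as claimed.

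The main obstacle — and the reason $\protSI$ samples the ``left'' coordinates $\rA_{x_{<i}}$ publicly and the ``right'' ones $\rA_{x_{>i}}$ privately to Alice, and mirror-image for Bob — is to make the two displayed inequalities hold simultaneously: the $\rA$-side needs $\rA_{x_{<i}}$ to be common knowledge so that it can be placed in the conditioning for free and the chain rule telescopes cleanly into $\mi{\bA}{M\mid\cdots}{}$, while the $\rB$-side needs the mirror property for $\rB_{x_{>i}}$; a naive split (say, all of $\bA\setminus\{A_{x_i}\}$ private to Alice) would break one of the two directions, which is typical for information-complexity reductions on non-product distributions. A secondary point to handle carefully in the full write-up is the accounting for $\rProt^{*}$: one must verify it is genuine $O(j\log n)$-bit communication (not, e.g., forcing a party to transmit an entire $\SI$ set) and that folding it into $\en{M}$ is consistent with the per-pointer computation above; once that computation is made explicit this is routine.
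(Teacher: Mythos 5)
Your overall strategy is the paper's: reduce to the private-coin information cost via Proposition~\ref{prop:public-random}, add the missing half of $\bB$ to the conditioning with Proposition~\ref{prop:info-increase}, telescope with the chain rule over a uniformly random embedding index, and bound the resulting quantity by $\en{\cdot}/n$. However, there are two concrete gaps in the execution. First, you never address the early-termination branch of $\protSI$ (Line~(\ref{line:AB})): when $x_i \in Z^{<j}$ the protocol stops after $\Prot^*$ and the first $j$ phases of $\protHPC$ are never run, so your assertion that ``the conditional law of $M$ given the public samples and $(A_{x_i},B_{x_i})$ is the same whether we view it through $\protSI$ or through $\protHPC$'' is false on that event, and $M=(\rProt^*,\rProt_1,\dots,\rProt_j)$ is not even the transcript there. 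The paper handles this by introducing the indicator $\rTheta$ of early termination, conditioning on it (legitimate since $\rA\perp\rTheta$ given the other variables), and observing that on $\rTheta=1$ the transcript is a prefix of $\rZ^{<j}$ and hence carries zero information about $\rA$; some such step is needed, since without it the claimed distributional identity underlying your direct-sum step does not hold.

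Second, your chain-rule telescoping $\sum_i \mi{A_{x_i}}{M\mid\rA_{x_{<i}},\bB,\bC,\bD}=\mi{\bA}{M\mid\bB,\bC,\bD}$ is not valid as written, because $M$ is the transcript of $\protSI$ and therefore itself depends on the summation index $i$ (through which instance is embedded, through the termination condition, and through $\Prot^*$ ending at $x_i$); the chain rule requires one fixed random variable across all terms. The paper first replaces the $i$-dependent transcript by the $i$-independent pair $(\rZ^{<j},\rProt)$, where $\rProt$ is the full $\protHPC$ transcript (this can only increase each mutual-information term), drops the conditioning on $\rI=i$, and only then telescopes; the conditioning on $\rZ^{<j}$ is then removed at cost $\en{\rZ^{<j}}\le j\log n$ via Proposition~\ref{prop:bar-hopping}. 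Your alternative accounting of the $j\log n$ term through the bit-length of $\rProt^*$ would be fine once these two repairs are made, but as written both the distributional identification and the telescoping step fail.
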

\begin{proof}
For any $i \in [n]$, define $\bA^{<i} := \set{A_{x_1},\ldots,A_{x_{i-1}}}$, $\bB^{>i} := \set{B_{x_{i+1}},\ldots,B_{x_{n}}}$. Recall that 
the internal information cost of $\protSI$ is $\IC{\protSI}{\distSI} := \mi{\rA}{\rProtSI \mid \rB} + \mi{\rB}{\rProtSI \mid \rA}$. In the following, we focus on bounding the first term. The second term can be bounded exactly
the same by symmetry.  

As $(\rI,\bA^{<\rI},\bB^{>\rI},\bC,\bD)$ is sampled via public randomness in $\protSI$, by Proposition~\ref{prop:public-random},   
\begin{align*}
	\mi{\rA}{\rProtSI \mid \rB} &= \mi{\rA}{\rProtSI \mid \rB, \rI,\bA^{<\rI},\bB^{>\rI},\bC,\bD} \leq \mi{\rA}{\rProtSI \mid \rB, \rI,\bA^{<\rI},\bB^{>\rI}}. 
\end{align*}
The inequality is by Proposition~\ref{prop:info-decrease} as we now show $\rA \perp (\bC,\bD) \mid \rProtSI,\rB,\rI,\bA^{<\rI},\bB^{>\rI}$ (and hence conditioning on $(\bC,\bD)$ can only decrease the mutual information). 
This is because $\rA \perp (\bC,\bD) \mid \rB,\rI,\bA^{<\rI},\bB^{>\rI}$ by Observation~\ref{obs:distHPC} and $\rProtSI$ is transcript of a deterministic protocol plus $z_1,\ldots,z_j$ obtained deterministically and hence we can apply
Proposition~\ref{prop:hpc-rectangle}.  

Define a random variable $\rTheta \in \set{0,1}$ where $\rTheta = 1$ iff in Line~(\ref{line:AB}) of protocol $\protSI$, we terminate the protocol. In other words $\rTheta = 1$ iff $x_i \in Z^{<j}$. Since $\rA \perp \rTheta \mid \rB, \rI,\bA^{<\rI},\bB^{>\rI}$, 
further conditioning on $\rTheta$ can only increase the mutual information term above by Proposition~\ref{prop:info-increase}, hence, 
\begin{align}
	\mi{\rA}{\rProtSI \mid \rB} &\leq \mi{\rA}{\rProtSI \mid \rB, \rI,\bA^{<\rI},\bB^{>\rI},\rTheta} \notag \\
	&= \frac{n-j}{n} \cdot \mi{\rA}{\rProtSI \mid \rB, \rI,\bA^{<\rI},\bB^{>\rI},\rTheta=0} + \frac{j}{n} \cdot \mi{\rA}{\rProtSI \mid \rB, \rI,\bA^{<\rI},\bB^{>\rI},\rTheta=1} \notag \\
	&\leq \frac{n-j}{n} \cdot \mi{\rA}{\rProtSI \mid \rB, \rI,\bA^{<\rI},\bB^{>\rI},\rTheta=0}, \label{eq:mi-ra}
\end{align}
since conditioned on $\rTheta=1$, the protocol $\rProtSI$ is simple some prefix of $\rZ^{<j}$ and is hence independent of the input $(\rA,\rB)$ and carries no information about $\rA$ (see \itfacts{info-zero}). 
We now further bound the RHS of Eq~(\ref{eq:mi-ra}). When $\rTheta = 0$, $\rProtSI = (\rZ^{<j} ,\rProt_1,\ldots,\rProt_j) = (\rE^{<j},\rProt_j)$. Hence, we can write, 
  \begin{align*}
	 \mi{\rA}{\rProtSI \mid \rB, \rI,\bA^{<\rI},\bB^{>\rI},\rTheta=0} &\leq \mi{\rA}{\rE_j,\rProt_j \mid \rB, \rI,\bA^{<\rI},\bB^{>\rI},\rTheta=0} \\
	&= \mi{\rA}{\rZ^{<j} \mid \rB, \rI,\bA^{<\rI},\bB^{>\rI},\rTheta=0} \\
	&\hspace{25pt} + \mi{\rA}{\rProt^{<j},\rProt_j \mid \rZ^{<j},\rB, \rI,\bA^{<\rI},\bB^{>\rI},\rTheta=0} \tag{by chain rule in \itfacts{chain-rule} and since $\rE_j = (\rProt^{<j},\rZ^{<j})$} \\
	&\leq \mi{\rA}{\rProt \mid \rZ^{<j},\rB, \rI,\bA^{<\rI},\bB^{>\rI},\rTheta=0},
\end{align*}
as $\rA \perp \rZ^{<j} \mid \rTheta=0$ (and other variables) and hence the first term is zero, and in the second term $\rProt$ contains $\rProt^{<j},\rProt_j$ (plus potentially other terms) and so having $\rProt$ in instead can only increase
the information. By further expanding the conditional information term above,
\begin{align*}
	 &\mi{\rA}{\rProtSI \mid \rB, \rI,\bA^{<\rI},\bB^{>\rI},\rTheta=0} \\
	 &\hspace{50pt}\leq \Ex_{(Z^{<j},i) \mid \rTheta=0} \bracket{\mi{\rA}{\rProt \mid \rB, \bA^{<i},\bB^{>i},\rI=i,\rZ^{<j}=Z^{<j},\rTheta=0}} \\
	&\hspace{50pt}= \Ex_{Z^{<j} \mid \rTheta=0} \bracket{\sum_{\substack{i=1 \\~i \notin Z^{<j}}}^{n} \frac{1}{n-j}  \mi{\rA_{x_i}}{\rProt \mid \rB_{x_i}, \bA^{<i},\bB^{>i},\rI=i,\rZ^{<j}=Z^{<j},\rTheta=0}} \tag{conditioned on $\rTheta=0$, $i$ is chosen
	uniformly at random from $Z^{<j}$; also $(\rA,\rB) = (\rA_{x_i},\rB_{x_i})$} \\
	&\hspace{50pt}= \Ex_{Z^{<j} \mid \rTheta=0} \bracket{\sum_{i \notin Z^{<j}} \frac{1}{n-j} \cdot \mi{\rA_{x_i}}{\rProt \mid \rB_{x_i}, \bA^{<i},\bB^{>i},\rZ^{<j}=Z^{<j},\rTheta=0}}
	\tag{we dropped the conditioning on $\rI=i$ as all remaining variables are independent of this event}  \\
	&\hspace{50pt}= \Ex_{Z^{<j} \mid \rTheta=0} \bracket{\sum_{i \notin Z^{<j}} \frac{1}{n-j} \cdot \mi{\rA_{x_i}}{\rProt \mid \bA^{<i},\bB,\rZ^{<j}=Z^{<j},\rTheta=0}}
	\tag{as $\rA_{x_i} \perp \bB^{<i} \mid \rB_{x_i},\bA^{<i}$ by Observation~\ref{obs:distHPC} and hence we can apply Proposition~\ref{prop:info-increase}} \\
	&\hspace{50pt}\leq \Ex_{Z^{<j} \mid \rTheta=0} \bracket{\sum_{i=1}^{n} \frac{1}{n-j} \cdot \mi{\rA_{x_i}}{\rProt \mid \bA^{<i},\bB,\rZ^{<j}=Z^{<j},\rTheta=0}} \\ 
	\tag{mutual information is non-negative by \itfacts{info-zero} and so we can add the terms in $Z^{<j}$  as well} \\
	&\hspace{50pt}= \Ex_{Z^{<j} \mid \rTheta=0} \bracket{\sum_{i=1}^{n} \frac{1}{n-j} \cdot \mi{\rA_{x_i}}{\rProt \mid \bA^{<i},\bB,\rZ^{<j}=Z^{<j},\rTheta=0}} \\
	&\hspace{50pt}= \frac{1}{n-j} \cdot \Ex_{Z^{<j} \mid \rTheta=0} \bracket{\mi{\bA}{\rProt \mid \bB,\rZ^{<j}=Z^{<j},\rTheta=0}} \tag{by chain rule in \itfacts{chain-rule}} \\
	&\hspace{50pt}= \frac{1}{n-j} \cdot \mi{\bA}{\rProt \mid \bB,\rZ^{<j},\rTheta=0} \tag{by Proposition~\ref{prop:bar-hopping}}\\
	&\hspace{50pt}\leq \frac{1}{n-j} \cdot \paren{\mi{\bA}{\rProt \mid \bB,\rTheta=0} + \en{\rZ^{<j}}} \\
	&\hspace{50pt}= \frac{1}{n-j} \cdot \paren{\mi{\bA}{\rProt \mid \bB} + \en{\rZ^{<j}}} \tag{transcript of the protocol $\protHPC$ (namely $\rProt$) on input $(\bA,\bB)$ is independent of $\rTheta$} \\
	&\hspace{50pt}\leq \frac{1}{n-j} \cdot \paren{\en{\rProt} + \en{\rZ^{<j}}} \leq \frac{\CC{\protHPC}{} }{n-j}   + \frac{j \cdot \log{n}}{n-j}. \tag{by sub-additivity of entropy (\itfacts{sub-additivity}) and \itfacts{uniform}}
\end{align*}

By plugging in this bound in Eq~(\ref{eq:mi-ra}), we have that,
\begin{align*}
	\mi{\rA}{\rProtSI \mid \rB} &\leq \frac{n-j}{n} \cdot \paren{\frac{\CC{\protHPC}{} }{n-j}   + \frac{j \cdot \log{n}}{n-j}} = \frac{\CC{\protHPC}{} }{n}   + \frac{j \cdot \log{n}}{n}. 
\end{align*}
By symmetry, we can also prove the same bound on $\mi{\rB}{\rProtSI \mid \rA}$. As such, we have,
\begin{align*}
	\mi{\rA}{\rProtSI \mid \rB} + \mi{\rB}{\rProtSI \mid \rA} \leq 2 \cdot \paren{\frac{\CC{\protHPC}{} }{n}   + \frac{j \cdot \log{n}}{n}}.
\end{align*}

We shall note that strictly speaking the factor $2$ above is not needed (similar to the proof of Proposition~\ref{prop:cc-ic}) but as this factor is anyway suppressed through O-notation later in the proof, the above bound
suffices for our purpose. \Qed{Claim~\ref{clm:si-info-cost}}

\end{proof}

Now by Claim~\ref{clm:si-info-cost}, we have that
\[
	\IC{\protSI}{\distSI} = O\Paren{\frac{\CC{\protHPC}{}}{n}   + \frac{j \cdot \log{n}}{n}}.
\]
Combined with Theorem~\ref{thm:SI-tvd}, this implies that $\protSI$ can only $\eps$-solves $\SI$ for parameter $\eps$ such that 
\[
	\eps^2 \cdot n = O\Paren{\frac{\CC{\protHPC}{}}{n}   + \frac{j \cdot \log{n}}{n}} \implies \eps = O\Paren{\frac{\sqrt{\CC{\protHPC}{} + j \cdot \log{n}}}{n}}. 
\]
On the other hand, by Claim~\ref{clm:si-eps-solve-reduction}, we know that
\[
	\eps \geq \Ex_{(E_{j},\Prot_j)} \Ex_{x \sim \unif_{\XX}}\Bracket{\tvd{\distribution{\rT_{x} \mid E_j,\Prot_j}}{\distribution{\rT_x}}} - \frac{j}{n}.
\]
which implies 
\[
	\Ex_{x \sim \unif_{\XX}}\Bracket{\tvd{\distribution{\rT_{x} \mid E_j,\Prot_j}}{\distribution{\rT_x}}} = O\Paren{\frac{\sqrt{\CC{\protHPC}{} + j \cdot \log{n}} + j}{n}}. 
\]
This concludes the proof. \Qed{Lemma~\ref{lem:multi-SI}}

\end{proof}

Before getting to the proof of Lemma~\ref{lem:hpc-induction}, we also need the following simple claim based on the rectangle property of the protocol $\protHPC$. 

\begin{claim}\label{clm:hpc-rectangle}
	For any $j \in [k]$ and choice of $(E_j,\Prot_j)$, $\distribution{\rZ_j \mid E_j,\Prot_j} = \distribution{\rZ_j \mid E_j}$. 
\end{claim}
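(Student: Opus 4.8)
The plan is to prove the equivalent statement $\rZ_j \perp \Prot_j \mid E_j$, from which the claimed equality of conditional distributions is immediate. By the symmetry between the two pairs of players I would treat the case that $j$ is odd; the even case is identical after exchanging the roles of $(\bA,\bB)$ and $(\bC,\bD)$. The base case $j=1$ is degenerate: $E_1 = z_0$ is a fixed constant, $\rZ_1 = f_{AB}(z_0)$ is a deterministic function of $(\bA,\bB)$ only, and $\Prot_1$ (phase $1$ being odd) is a function of $(\bC,\bD)$ only, so the independence follows from $(\bA,\bB) \perp (\bC,\bD)$ in $\distHPC$ (Observation~\ref{obs:distHPC}).

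For a general odd $j$, the first step is to pin down the dependence structure after conditioning on $E_j = (\Prot^{<j},Z^{<j})$. Since $j$ is odd, once $E_j$ fixes $z_{j-1}$ to some value we have $\rZ_j = f_{AB}(z_{j-1})$, i.e.\ $\rZ_j$ is the target element of the instance $(A_{z_{j-1}},B_{z_{j-1}})$, a deterministic function of $(\bA,\bB)$; so conditioned on $E_j$, $\rZ_j$ is a function of $(\bA,\bB)$ alone. Symmetrically, in an odd phase only $P_C$ and $P_D$ communicate, so every message of $\Prot_j$ is produced deterministically from $(\bC,\bD)$ and the earlier transcript $\Prot^{<j}$, which is also part of $E_j$; hence conditioned on $E_j$, $\Prot_j$ is a function of $(\bC,\bD)$ alone.

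Given this, it suffices to show the key fact $(\bA,\bB) \perp (\bC,\bD) \mid E_j$, which I would establish in two pieces. First, $\Prot^{<j}$ is the transcript of a valid deterministic protocol between the meta-player holding $(\bA,\bB)$ and the one holding $(\bC,\bD)$ --- the extra $P_A$--$P_B$ and $P_C$--$P_D$ messages only enlarge the transcript without affecting this view --- so the rectangle property (Fact~\ref{fact:rectangle}), exactly as in the proof of Proposition~\ref{prop:hpc-rectangle}, implies that the inputs consistent with any fixed $\Prot^{<j}$ form a combinatorial rectangle; combined with $(\bA,\bB)\perp(\bC,\bD)$ under $\distHPC$ this gives $(\bA,\bB)\perp(\bC,\bD)\mid \Prot^{<j}$. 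Second, I would observe that for any fixed value of $Z^{<j}$ the event that $Z^{<j}$ takes that value is itself a rectangle in the product space of $(\bA,\bB)$ and $(\bC,\bD)$: writing $z_0 = x_1$, once the earlier pointers are fixed, the constraints fixing the odd-indexed pointers $z_1,z_3,\dots$ are conditions on $(\bA,\bB)$ only (these pointers are values of $f_{AB}$, which depends only on $P_A,P_B$'s input), while those fixing the even-indexed pointers $z_2,z_4,\dots$ are conditions on $(\bC,\bD)$ only (values of $f_{CD}$). Intersecting the two rectangles shows that conditioning on $E_j$ restricts to a rectangle of the product space, and restricting a product measure to a rectangle keeps it a product measure, giving $(\bA,\bB)\perp(\bC,\bD)\mid E_j$. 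Plugging this into the dependence structure of the previous paragraph yields $\rZ_j \perp \Prot_j \mid E_j$.

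The only delicate point --- and the one I would isolate as a short standalone observation before deducing the claim --- is that conditioning on $E_j$ (which includes $Z^{<j}$, and not merely a genuine transcript) still preserves the product structure $(\bA,\bB)\perp(\bC,\bD)$; this hinges entirely on the alternating definition of the pointers. Everything else in the argument is routine bookkeeping about which random variable is, after the conditioning, a deterministic function of which side's input.
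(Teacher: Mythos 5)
Your proposal is correct and follows essentially the same route as the paper: reduce the claim to $\rZ_j \perp \rProt_j \mid E_j$, note that after conditioning on $E_j$ these two variables are deterministic functions of $(\bA,\bB)$ and $(\bC,\bD)$ on opposite sides, and invoke the rectangle property to show $(\bA,\bB)\perp(\bC,\bD)\mid E_j$. The only difference is that you spell out the step the paper compresses into one clause --- namely that conditioning on the pointers $Z^{<j}$, not just the transcript $\Prot^{<j}$, still carves out a combinatorial rectangle because of the alternating definition of the pointers --- which is a worthwhile elaboration but not a different argument.
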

\begin{proof}
	This is because for any $j \in [k]$, $\rZ_j \perp \rProt_j \mid E_j$: Conditioned on $\rE_j = E_j = (Z^{<j},\Prot^{<j})$, $\rProt_j$ is only a function of $(\bA,\bB)$ if $j$ is even and a function of $(\bC,\bD)$ if $j$ is odd. 
	On the other hand, $\rZ_j$ is only a function of $(\bA,\bB)$ if $j$ is odd and a function of $(\bC,\bD)$ if $j$ is even. Finally, by Observation~\ref{obs:distHPC}, $(\bA,\bB) \perp (\bC,\bD)$ and this continues to hold even when we condition 
	on $E_j$ by the rectangle property of the protocol $\protHPC$; hence the claim follows. \Qed{Claim~\ref{clm:hpc-rectangle}}
	
\end{proof}

We are now finally ready to prove Lemma~\ref{lem:hpc-induction}. 
\begin{proof}[Proof of Lemma~\ref{lem:hpc-induction}]

	Let $c$ be the constant in Lemma~\ref{lem:multi-SI}. 
	We prove Lemma~\ref{lem:hpc-induction} by induction. We start with the proof of the base case for $j=1$ and then prove the inductive step. 
	
	\paragraph{\emph{{Base case.}}} Recall that we defined $E_1 = z_0$ which is deterministically fixed. This, together with Claim~\ref{clm:hpc-rectangle}, implies that $\distribution{\rZ_1 \mid E_1,\Prot_1} = \distribution{\rZ_1}$, which
	finalizes proof of the base case. 
	
	\paragraph{\emph{Induction step.}} Let us now prove the lemma inductively for $j > 1$. We have, 
	\begin{align*}
		\Ex_{(E_j,\Prot_j)} \Bracket{\tvd{\distribution{\rZ_j \mid E_j,\Prot_j}}{\distribution{\rZ_j}}} &\Eq{Claim~\ref{clm:hpc-rectangle}} \Ex_{E_j} \Bracket{\tvd{\distribution{\rZ_j \mid E_j}}{\distribution{\rZ_j}}} \\
		&\hspace{0.45cm}= \Ex_{(Z^{<j},\Prot^{<j})} \Bracket{\tvd{\distribution{\rZ_j \mid Z^{<j},\Prot^{<j}}}{\distribution{\rZ_j}}} \tag{by definition of $E_j := (Z^{<j},\Prot^{<j})$} \\
		&\hspace{0.45cm}=\Ex_{(Z^{<j},\Prot^{<j})} \Bracket{\tvd{\distribution{\rT_{z_{j-1}} \mid Z^{<j-1},z_{j-1},\Prot^{<j}}}{\distribution{\rZ_j}}} \tag{by definition, the pointer $\rZ_j = \rT_{z_{j-1}}$}.
	\end{align*}
	We can write the RHS above as: 
	\begin{align*}
		&\Ex_{(E_j,\Prot_j)} \Bracket{\tvd{\distribution{\rZ_j \mid E_j,\Prot_j}}{\distribution{\rZ_j}}} \\
		&\qquad \qquad = \Ex_{(Z^{<j-1},\Prot^{<j})} \Ex_{z_{j-1} \sim \rZ_{j-1} \mid \paren{Z^{<j-1},\Prot^{<j}}}\Bracket{\tvd{\distribution{\rT_{z_{j-1}} \mid Z^{<j-1},\Prot^{<j}}}{\distribution{\rZ_j}}}.
	\end{align*}
	This is because $\rT_{z_{j-1}} \perp (\rZ_{j-1}=z_{j-1}) \mid Z^{<j-1},\Prot^{<j}$: if $j-1$ is odd, $\rT_{z_{j-1}}$ is a function of  $(\bC,\bD)$ and if $j-1$ is even, $\rT_{z_{j-1}}$ is a function of $(\bA,\bB)$. On the other hand,
	if $j-1$ is odd, then $\rZ_{j-1}$ is a function of $(\bA,\bB)$ and if even, then $\rZ_{j-1}$ is a function of $(\bC,\bD)$. Finally, by Proposition~\ref{prop:hpc-rectangle}, 
	$(\bA,\bB) \perp (\bB,\bD) \mid \Prot^{<j}$, proving the conditional independence. 
	
	Now notice that distribution of $z_{j-1}$ in the expectation-term above is $\distribution{\rZ_{j-1} \mid E_{j-1},\Prot_{j-1}}$.  By symmetry, let us assume $j-1$ is odd and hence $z_{j-1} \in \YY$. Using Fact~\ref{fact:tvd-small} and since total variation distance is bounded by $1$ always, 
	we can upper bound RHS above with:
	\begin{align*}
		&\Ex_{(E_j,\Prot_j)} \Bracket{\tvd{\distribution{\rZ_j \mid E_j,M_j}}{\distribution{\rZ_j}}}  \\
		&\qquad \qquad \leq \Ex_{(Z^{<j-1},\Prot^{<j})} \bracket{\Ex_{\paren{z_{j-1} \sim \unif_{\YY}}}\Bracket{\tvd{\distribution{\rT_{z_{j-1}} \mid Z^{<j-1},\Prot^{<j}}}{\distribution{\rZ_j}}}} \\
		&\qquad \qquad  \qquad \qquad \qquad  \qquad + \Ex_{(Z^{<j-1},\Prot^{<j})} \bracket{\tvd{\distribution{\rZ_{j-1} \mid E_{j-1},\Prot_{j-1}}}{\unif_{\YY}}}  \\
		&\qquad \qquad =  \Ex_{(E_{j-1},\Prot_{j-1})} \Ex_{{y \sim \unif_{\YY}}}\Bracket{\tvd{\distribution{\rT_{y} \mid E_{j-1},\Prot_{j-1}}}{\distribution{\rZ_j}}} \\
		&\qquad \qquad  \qquad \qquad \qquad  \qquad + \Ex_{(E_{j-1},\Prot_{j-1})} \bracket{\tvd{\distribution{\rZ_{j-1} \mid E_{j-1},\Prot_{j-1}}}{\distribution{\rZ_{j-1}}}}, 	
	\end{align*}
	where in the first term above we only changed the name of variable $z_{j-1}$ to $y$ and in the second term we used $\distribution{\rZ_{j-1}} = \unif_{\YY}$. By Lemma~\ref{lem:multi-SI}, we can bound the first term and by induction, we can bound the second one. Hence, 
	\begin{align*}
		\Ex_{(E_j,\Prot_j)} \Bracket{\tvd{\distribution{\rZ_j \mid E_j,\Prot_j}}{\distribution{\rZ_j}}} &\leq c \cdot \Paren{\frac{\sqrt{\CC{\protHPC}{}+ j \cdot \log{n}} + j}{n}} \\
		&\hspace{20pt} + (j-1) \cdot c \cdot \Paren{\frac{\sqrt{\CC{\protHPC}{} + k \cdot \log{n}} + k}{n}} \\
		&\leq j \cdot c \cdot \Paren{\frac{\sqrt{\CC{\protHPC}{} + k \cdot \log{n}} + k}{n}}. \tag{where we replaced $j \leq k$ by $k$ in the first term}
	\end{align*}
	This concludes the proof. \Qed{Lemma~\ref{lem:hpc-induction}}
	 
\end{proof}

\newcommand{\MF}{\ensuremath{\textnormal{\textsf{MaxFlow}}}\xspace}
\newcommand{\LFMIS}{\ensuremath{\textnormal{\textsf{LFMIS}}}\xspace}

\renewcommand{\mod}{\ensuremath{~\textnormal{mod}~}}
\newcommand{\wstar}{w^{*}}
\newcommand{\istar}{i^{*}}

\section{Graph Streaming Lower Bounds}\label{sec:lower}

We now present our graph streaming lower bounds using reductions from the hidden-pointer chasing problem. In particular, 
we prove the following two results in this section. 

\begin{theorem}[Formalizing Result~\ref{res:cut}] \label{thm:cut}
For any integer $p \geq 1$, any $p$-pass streaming algorithm that with a constant probability outputs the minimum $s$-$t$ cut value in a weighted directed or undirected graph $G(V,E,w)$ requires 
$\Omega({n^2}/{p^5})$ bits of space. 
\end{theorem}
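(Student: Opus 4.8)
The plan is to prove Theorem~\ref{thm:cut} by a reduction from the $k$-step hidden-pointer chasing problem, invoking Theorem~\ref{thm:hpck}. I would phrase the reduction in terms of maximum $s$-$t$ flow and pass to minimum $s$-$t$ cut via the max-flow/min-cut theorem at the very end; the same reduction then also covers the undirected case, since max-flow equals min-cut in undirected capacitated graphs as well (an undirected edge behaves as two oppositely directed capacitated arcs, and in a layered construction backward flow is never useful).

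\textbf{The reduction.} Given an instance $(\bA,\bB,\bC,\bD)\sim\distHPC$ of \HPCk, I build a weighted graph $G$ with $k+1$ layers $V_0,\dots,V_k$, each of size $n$, together with a source $s$ and a sink $t$; the vertices of $V_\ell$ are identified with $\XX$ for even $\ell$ and with $\YY$ for odd $\ell$. Between consecutive layers I insert, for each element $x$ of the relevant universe, an arc from the copy of $x$ in $V_\ell$ to the copy of every $y\in A_x$ (contributed by $P_A$, on even transitions) or $y\in C_x$ (contributed by $P_C$, on odd transitions), and likewise an arc to the copy of every $y\in B_x$ (from $P_B$) or $y\in D_x$ (from $P_D$). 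Thus the copy of $x$ has a ``doubled'' arc exactly to the copy of its target $t_x$ (the unique common element) and a single arc to each of its other neighbors; parallel arcs are removed at the end via Remark~\ref{rem:simple-graph}. The source is attached to $z_0=x_1\in V_0$, the vertices of the last layer $V_k$ are attached to $t$ by arcs whose capacities encode the identity of the vertex (e.g.\ capacity affine in the vertex's index), and a collection of input-independent ``amplification'' gadgets between layers, together with a careful global choice of capacities, is used so that the only way to route the available flow from one layer to the next without loss is through the doubled arc out of the currently active vertex. Consequently the maximum flow traces the path $z_0\to z_1\to\dots\to z_k$, and its value is a strictly monotone function of the index of $z_k$; a streaming algorithm that reports the min $s$-$t$ cut value of $G$ therefore solves \HPCk.

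\textbf{Correctness and the main obstacle.} I would prove by induction on $\ell$ that, under the chosen capacities, every maximum flow saturates the doubled arc from the copy of $z_\ell$ in $V_\ell$ to the copy of $z_{\ell+1}$ in $V_{\ell+1}$, up to a negligible amount of ``leakage'' through non-target arcs that is absorbed by the input-independent gadgets, so that the flow reaching $V_k$ is concentrated on the copy of $z_k$ and the last-layer gadget makes the max-flow value determine $z_k$. I expect \emph{this calibration step to be the main obstacle}: one must set the arc capacities and the input-independent gadgets so that (i) any deviation from the target arc at any of the $k$ layers strictly and detectably decreases the attainable flow, (ii) the leakage does not accumulate over $k$ layers enough to corrupt the final readout, and (iii) all capacities remain polynomially bounded, with the same construction valid for both directed and undirected graphs.

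\textbf{From passes to phases.} The edges of $G$ split among the four players: $P_A$ owns the $A_x$-arcs, $P_B$ the $B_x$-arcs, $P_C$ the $C_y$-arcs, $P_D$ the $D_y$-arcs, and the input-independent gadget edges are known to all. Since a streaming algorithm is oblivious to the stream order, the players feed it all of $P_A$'s edges, then all of $P_B$'s, then $P_C$'s, then $P_D$'s; in one even phase $P_A$ runs the algorithm locally on its block, forwards the $O(S)$-bit memory to $P_B$ (internal to the pair), $P_B$ continues and forwards the memory to $P_C$ (ending the phase), and symmetrically in the next odd phase. Thus one streaming pass is simulated in two phases with $O(S)$ communication, so a $p$-pass, space-$S$ algorithm yields a $k$-phase protocol for \HPCk with $k=\Theta(p)$ and communication $O(p\cdot S)$. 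Fixing $k=\Theta(p)$ and applying Theorem~\ref{thm:hpck} gives $p\cdot S=\Omega(n^2/p^2)$, i.e.\ $S=\Omega(n^2/p^3)$ in the HPC parameter $n$. Since $G$ has $N=\Theta(kn)=\Theta(pn)$ vertices, substituting $n=\Theta(N/p)$ yields $S=\Omega(N^2/p^5)$, which is the claimed bound after renaming $N$ to $n$; the removal of parallel arcs via Remark~\ref{rem:simple-graph} does not affect the min-cut value, completing the proof.
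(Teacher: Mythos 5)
Your proposal follows essentially the same route as the paper: a layered graph built from the $\SI$ instances with a doubled arc from each vertex to its target, a last-layer gadget encoding identities, max-flow/min-cut duality, and the pass-to-phase simulation with $k=\Theta(p)$ giving $p\cdot S=\Omega(n^2/p^2)$, hence $S=\Omega(N^2/p^5)$ after accounting for $N=\Theta(pn)$ vertices. That accounting matches the paper exactly. However, the step you yourself flag as the main obstacle --- calibrating the capacities and proving that the maximum flow value determines $z_k$ --- is the actual content of the reduction (the paper's Lemma~\ref{lem:cut}), and your sketch of it is not quite how the argument goes. The paper sets geometric weights $w_j=(n+1)^{k+1-j}$, attaches \emph{every} vertex of every layer $V_j$ ($j\ge 1$) to $t$ with an edge of weight $w_j$, and adds an extra edge of weight $i-1$ from $v^k_i$ to $t$. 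There is no "negligible leakage absorbed by gadgets": the flow leaving the pointer path through non-target arcs is deliberately routed straight to $t$ via these per-vertex edges, and every such contribution is a multiple of $n+1$, so $z_k$ is read off \emph{exactly} as the max-flow value modulo $n+1$ rather than via a strictly monotone function whose deviations must be kept small. Correctness is then certified by exhibiting an explicit collection of flow paths, checking feasibility, and showing the residual graph has no augmenting $s$-$t$ path. If you pursue your "small-leakage" framing instead, you will need to quantify how leakage compounds over $k$ layers, which is exactly the difficulty the modular-arithmetic readout is designed to avoid.

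The second gap is the undirected case. Your claim that "in a layered construction backward flow is never useful" is not justified for this graph: because every intermediate vertex has an edge to $t$, replacing each arc by an undirected edge creates genuinely new routes (flow can enter a non-pointer vertex of a later layer and exit to $t$), and it is not obvious the max-flow value is preserved. The paper instead invokes the standard reduction from directed to undirected max-flow (each directed arc $(u,v)$ of capacity $c_e$ becomes three undirected edges $\set{s,v},\set{u,v},\set{t,u}$ of capacity $c_e$, with an initial circulation pushed through them), which the players can apply to $G$ with no extra communication. With these two points repaired, your argument coincides with the paper's proof.
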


By max-flow min-cut theorem, Theorem~\ref{thm:cut} also holds for streaming algorithms that can compute the value of maximum $s$-$t$ flow in a capacitated graph (directed or undirected). 

\begin{theorem}[Formalizing Result~\ref{res:mis}] \label{thm:mis}
For any integer $p \geq 1$, any $p$-pass streaming algorithm that with a constant probability outputs the lexicographically-first maximal independent set of an undirected graph $G(V,E)$ requires 
$\Omega({n^2}/{p^5})$ bits of space. 
\end{theorem}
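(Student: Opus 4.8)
The plan is to prove Theorem~\ref{thm:mis} by a reduction from $\HPCk$, parallel to the reduction for minimum $s$-$t$ cut sketched in Section~\ref{sec:tech-cut} (Theorem~\ref{thm:cut}), but exploiting the fact that the sequential greedy for LFMIS propagates a single ``active'' vertex through a layered graph. Given an instance of $\HPCk$, I build a graph $G$ with $k+1$ layers $V_0,\dots,V_k$, each of size $n$; the vertices of $V_j$ are identified with $\XX$ when $j$ is even and with $\YY$ when $j$ is odd (matching the type of $z_j$). Fix the lexicographic order so that $V_0 < V_1 < \dots < V_k$, with an arbitrary order inside each layer. For each $x' \in \XX$ and each transition from an $\XX$-layer $V_{j-1}$ to the next ($\YY$-)layer $V_j$, player $P_A$ adds the edge between the copy of $x'$ in $V_{j-1}$ and the copy of $y$ in $V_j$ whenever $y \notin A_{x'}$, and $P_B$ adds that edge whenever $y \notin B_{x'}$; since $A_{x'}\cap B_{x'}=\{t_{x'}\}$, the union of these two edge sets joins the copy of $x'$ to the copy of $y$ exactly when $y \neq t_{x'} = f_{AB}(x')$. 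The transitions from $\YY$-layers to $\XX$-layers are built symmetrically from $(C_y,D_y)$ by $P_C,P_D$. Finally an input-independent gadget initializes the chain: make the copy of $z_0 = x_1$ in $V_0$ the lexicographically smallest vertex of $G$ and join it to every other vertex of $V_0$. Note that $G$ is automatically a simple graph (parallel edges are irrelevant for independence), so unlike the construction in Section~\ref{sec:tech-cut} no analogue of Remark~\ref{rem:simple-graph} is needed.

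The correctness of the reduction rests on a propagation invariant for the greedy (lex-order) LFMIS: after the algorithm has processed all of $V_j$, the unique vertex of $V_j$ in the LFMIS is the copy of $z_j$. The base case holds because the copy of $x_1 = z_0$ is the first vertex of $G$, hence enters the set, and every other vertex of $V_0$ is adjacent to it, hence is excluded. For the inductive step, the only earlier neighbors of any vertex of $V_j$ lie in $V_{j-1}$ (there are no intra-layer edges, and edges to $V_{j+1}$ come later); among those, only the copy of $z_{j-1}$ is in the set, and by construction it is adjacent to the copy of $y \in V_j$ iff $y \neq t_{z_{j-1}} = z_j$. Therefore every copy in $V_j$ except the copy of $z_j$ is blocked, while the copy of $z_j$ has no active earlier neighbor and enters the LFMIS. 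Consequently the LFMIS restricted to $V_k$ is exactly the single vertex corresponding to $z_k$, so from the output one reads off $z_k$, the answer to $\HPCk$.

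To convert a $p$-pass, $S$-space streaming algorithm for LFMIS into a protocol, order the edge stream so that in each pass all edges owned by $P_A$ appear first (including the input-independent gadget), then all of $P_B$'s, then $P_C$'s, then $P_D$'s. Each pass is then simulated by $P_A$ feeding its edges into the memory state and sending the state to $P_B$, who does likewise and sends to $P_C$ (ending an even phase), who does likewise and sends to $P_D$ (ending an odd phase and returning the state to $P_A$ for the next pass). This is a $2p$-phase protocol for $\HPCk$ with $k := 2p$, using $O(p\cdot S)$ total communication, and after the last pass one player computes the LFMIS, extracts $z_k$, and announces it with $O(\log n)$ extra bits; a randomized algorithm yields a randomized protocol with the same success probability. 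Since $G$ has $N := (k+1)n = \Theta(pn)$ vertices, Theorem~\ref{thm:hpck} gives $p \cdot S = \Omega(n^2/p^2)$, i.e. $S = \Omega(n^2/p^3) = \Omega(N^2/p^5)$; renaming $N$ to $n$ yields the theorem.

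I expect the main obstacle to be bookkeeping rather than a genuinely new idea: one must (i) verify the propagation invariant literally under the greedy order, in particular checking that no intra-layer or ``backward'' edge ever interferes with the decision at a layer-$j$ vertex and that the initialization gadget behaves as claimed, and (ii) pin down the phase/communication accounting so that $k = \Theta(p)$ and the $\Omega(n^2/k^2)$ bound of Theorem~\ref{thm:hpck} translates cleanly into $\Omega(N^2/p^5)$ after accounting for $N = \Theta(pn)$. Both are routine but need care, and the second should be stated in a way that also covers the undirected/unweighted nature of the instance (no gadgets with weights are used here, which is what makes the LFMIS reduction cleaner than the cut one).
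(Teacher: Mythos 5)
Your proposal is essentially the paper's own proof: the same layered graph with complement-of-$A_{x}$/$B_{x}$ edges between consecutive layers, the same initialization gadget on $V_0$, the same propagation invariant showing the unique LFMIS vertex in layer $V_j$ is the copy of $z_j$, and the same pass-to-phase simulation yielding $S=\Omega(n^2/k^3)$ and hence $\Omega(N^2/p^5)$. The one slip is in the phase accounting: since phase $1$ is by definition a $P_C$--$P_D$ phase, streaming $P_A$'s edges first wastes it and your simulation is really a $(2p+1)$-phase protocol, so you should either take $k=2p+1$ or reverse the stream order to $P_D,P_C,P_B,P_A$ as the paper does; this does not affect the asymptotics.
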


\noindent
We prove Theorems~\ref{thm:cut} and~\ref{thm:mis} in Sections~\ref{sec:cut} and~\ref{sec:mis}, respectively.  

\newcommand{\PP}{\ensuremath{\mathcal{P}}}

\subsection{Weighted Minimum $s$-$t$ Cut Problem}\label{sec:cut}

We prove Theorem~\ref{thm:cut} by a  reduction from our hidden-pointer chasing (\HPC) problem. We first give the lower bound for directed graphs and then show how to 
extend it using standard techniques to undirected graphs. 

We turn an instance $(\bA,\bB,\bC,\bD)$ of \HPCk over universes $\XX$ and $\YY$ of $n$ elements, into a weighted directed graph $G(V,E,w)$. The reduction is as follows (see Figure~\ref{fig:cut} on page~\pageref{fig:cut} for an example): 

\begin{itemize}
	\item The vertex-set $V$ of $G$ is partitioned into $k+1$ layers $V_0,\ldots,V_k$ each of size $n$ plus the source and sink vertices $s$ and $t$. We denote the $i$-th vertex in layer $V_j$ by $v^j_i$. 
	
	\item Define the following sequence of weights $w_0,w_1,\ldots,w_k$ where $w_j := (n+1)^{k+1-j}$ for all $j \in [k]$. Hence, $w_k = (n+1)$ and $w_j = (n+1) \cdot w_{j+1}$ for all $j < k$. 
	\item The edge-set $E$ of $G$ contains the following \underline{input-independent} edges. 
	\begin{itemize}
		\item source $s$ is connected to $v^0_1$ with weight $w(s,v^0_1) = w_0$. 
		\item for $0 < j \leq k$, every vertex $v^j_i$ in layer $V_j$ is connected to sink $t$ with weight $w(v^j_i,t) = w_j$. 
		\item any vertex $v^k_i$ in layer $V_k$ is connected to sink $t$ with weight $w(v^k_i,t) = i-1$ (notice that $v^k_i$ also has another edge of weight $w_k$ to $t$ by the previous part). 
	\end{itemize}
	\item The edge-set $E$ also contains the following \underline{input-dependent} edges.
	\begin{itemize}
		\item for all $i \in [n]$, if $A_{x_i}\in \bA$ (resp. $B_{x_i} \in \bB$) contains $y_{i'} \in \YY$, we connect $v^{j}_i$ in layer $V_j$ to $v^{j+1}_{i'}$ in layer $V_{j+1}$ with weight $w(v^j_i,v^{j+1}_{i'}) = w_{j+1}$ for every \emph{even} $0 \leq j < k$.%
		\footnote{\label{foot:simple-graph} Note that we will add two edges between $v^{j}_i$ and $v^{j+1}_{i'}$ iff $y_{i'} \in A_{x_i} \cap B_{x_i}$ and we will keep both copies of these edges in $G$ (see also Remark~\ref{rem:simple-graph} on how to remove the parallel edges).} 
		\item for all $i \in [n]$, if $C_{y_i}\in \bC$ (resp. $D_{y_i} \in \bD$) contains $x_{i'} \in \XX$, we connect $v^{j}_i$ in layer $V_j$ to $v^{j+1}_{i'}$ in layer $V_{j+1}$ with weight $w(v^j_i,v^{j+1}_{i'}) = w_{j+1}$ for every \emph{odd} $0 < j < k$. 
	\end{itemize}
\end{itemize}
\noindent
This concludes the description of the weighted graph $G(V,E,w)$ in the reduction. It is straightforward to verify that this graph can be constructed from an instance $(\bA,\bB,\bC,\bD)$ with no communication between the players. We now prove the following key lemma which establishes the correctness of the reduction.

\begin{lemma} \label{lem:cut}
    Let $\wstar$ be the weight of a minimum $s$-$t$ cut in graph $G$ in the reduction. Let the pointer $z_k$ be $x_{\istar}$ (resp. $y_{\istar}$) if $k$ is even (resp. odd). Then $\istar = (\wstar \mod (n+1)) +1$.
\end{lemma}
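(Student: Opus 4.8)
The plan is to prove Lemma~\ref{lem:cut} through max-flow min-cut duality: since $G$ is a directed weighted graph, the minimum $s$-$t$ cut value $\wstar$ equals the maximum $s$-$t$ flow value, and I will pin down this common value \emph{exactly} by exhibiting a feasible flow and a cut of the same value. The arithmetic heart of the argument is the observation that every edge weight in $G$, with the sole exception of the ``extra'' sink edges $v^k_i \to t$ of weight $i-1$, is one of the weights $w_j = (n+1)^{k+1-j}$ with $k+1-j \ge 1$, hence a multiple of $n+1$. Therefore, for \emph{any} $s$-$t$ cut $(S,\bar S)$, the total capacity crossing it is congruent modulo $n+1$ to $\sum_{i:\, v^k_i \in S}(i-1)$; it thus suffices to show that some minimum cut puts, among the layer-$k$ vertices, exactly $v^k_{z_k} = v^k_{\istar}$ on the source side, which gives $\wstar \equiv \istar - 1 \pmod{n+1}$, and then $0 \le \istar - 1 \le n-1 < n+1$ forces $\istar = (\wstar \bmod (n+1)) + 1$.

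To realize this I route flow along the ``pointer path'' $v^0_{z_0} \to v^1_{z_1} \to \cdots \to v^k_{z_k}$. Let $U_j$ denote the number of distinct out-neighbours of the pointer vertex $v^j_{z_j}$ in layer $j+1$; by the $\SI$ promise $A_{z_j} \cap B_{z_j} = \{z_{j+1}\}$ (and the analogous statement for the $C,D$ sets on odd layers), $v^j_{z_j}$ has $U_j - 1$ ordinary out-edges of weight $w_{j+1}$ plus a doubled (parallel) edge of capacity $2w_{j+1}$ to $v^{j+1}_{z_{j+1}}$, and $U_j \le n$. Define the flow recursively by letting $v^j_{z_j}$ receive an amount $\rho_j$, saturate one $w_{j+1}$-edge to each of its $U_j - 1$ other out-neighbours, and forward the remaining $\rho_{j+1}$ along its doubled edge; for $1 \le j \le k-1$ the vertex $v^j_{z_j}$ additionally leaks $w_j$ to $t$, whereas $v^0_1$ (which has no sink edge) forwards all of $\rho_0$ onward; and impose the boundary condition $\rho_k = w_k + (\istar - 1)$ so that $v^k_{z_k}$ saturates both of its edges to $t$. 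Unwinding the recursion gives $\mathrm{value}(f) = \rho_0 = \sum_{j=1}^{k} U_{j-1} w_j + (\istar - 1)$. The only capacity checks that are not immediate are $\rho_j \le 2w_j$ on the doubled edges and $\rho_0 \le w_0$ on the edge out of $s$; both follow by downward induction from $w_j = (n+1) w_{j+1}$ and $U_j \le n$, since $\rho_j = \rho_{j+1} + w_j + (U_j - 1) w_{j+1} \le \bigl(2 + (n+1) + (n-1)\bigr) w_{j+1} = 2 w_j$, with base case $\rho_k = w_k + (\istar - 1) < 2 w_k$.

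For the matching cut I take $S^* = \{s\} \cup \{v^j_{z_j} : 0 \le j \le k\}$ and compute the capacity of the edges leaving $S^*$: from each $v^j_{z_j}$ with $0 \le j \le k-1$, its $U_j - 1$ ordinary out-edges leave $S^*$ (the two parallel edges to $v^{j+1}_{z_{j+1}}$ stay inside $S^*$, so they are not counted — and, crucially, not double-subtracted), contributing $(U_j - 1) w_{j+1}$; the sink edge $v^j_{z_j} \to t$ leaves $S^*$ for each $1 \le j \le k$, contributing $w_j$; and the extra edge $v^k_{z_k} \to t$ contributes $\istar - 1$. Summing, the cut capacity is $\sum_{j=1}^{k} (U_{j-1} - 1) w_j + \sum_{j=1}^{k} w_j + (\istar - 1) = \sum_{j=1}^{k} U_{j-1} w_j + (\istar - 1) = \mathrm{value}(f)$. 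By weak duality the flow and the cut are both optimal, hence $\wstar = \sum_{j=1}^{k} U_{j-1} w_j + (\istar - 1)$; since each $w_j \equiv 0 \pmod{n+1}$, this is $\equiv \istar - 1 \pmod{n+1}$, and as $0 \le \istar - 1 < n+1$ we conclude $\istar = (\wstar \bmod (n+1)) + 1$.

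The main obstacle is the feasibility and value bookkeeping for the flow: correctly tracking the parallel (doubled) edge carrying the ``signal'' to each successive pointer vertex, verifying the tight inequality $\rho_j \le 2w_j$ that keeps it within capacity, and being careful that on even versus odd layers the relevant sets switch between $(A_{z_j},B_{z_j})$ and $(C_{z_j},D_{z_j})$ while the arithmetic stays identical; one must also be careful, both in the flow and in the cut computation, that the two parallel edges to $v^{j+1}_{z_{j+1}}$ are handled consistently (counted once with capacity $2w_{j+1}$ in the flow, and entirely inside $S^*$ in the cut).
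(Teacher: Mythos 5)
Your proof is correct and builds the same flow as the paper (your recursive description, with $\rho_j$ entering $v^j_{z_j}$, one saturated $w_{j+1}$-edge to each of the $U_j-1$ non-pointer out-neighbours, a $w_j$ leak to $t$, and the remainder forwarded on the doubled edge, unfolds to exactly the paper's path collection $\PP = \PP_1 \cup \dots \cup \PP_k \cup \{P^*\}$, and your bound $\rho_j \le 2w_j$ is the paper's Claim~\ref{clm:flow-feasible}). Where you genuinely diverge is the optimality certificate: the paper proves maximality by showing the residual graph has no augmenting $s$-$t$ path (Claim~\ref{clm:flow-optimal}, an induction showing $s$ can only reach the pointer vertex in each layer), whereas you exhibit the explicit cut $S^* = \{s\} \cup \{v^j_{z_j}\}$ and verify its capacity equals the flow value, invoking weak duality. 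Your route is arguably better matched to the statement, since the lemma is about the min cut: it produces the minimum cut itself, gives the closed form $\wstar = \sum_{j=1}^{k} U_{j-1} w_j + (\istar - 1)$ rather than just its residue, and your opening observation --- that \emph{every} cut's capacity is congruent mod $n+1$ to the sum of $i-1$ over layer-$k$ vertices on the source side --- cleanly explains why the construction encodes the answer in the residue class at all. The paper's residual-graph argument, on the other hand, avoids having to enumerate the cut edges and generalizes more readily if one wanted to reason about the structure of all maximum flows. Both handle the parallel-edge bookkeeping consistently, and your modular arithmetic at the end ($0 \le \istar - 1 < n+1$) is the same as the paper's.
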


\begin{proof}
We prove this lemma by considering the maximum $s$-$t$ flow in $G$ and then use the duality of maximum flow and minimum cut to conclude the proof. For the flow problem, we assume that the capacity $c(e)$
of an edge $e=(u,v)$ in $G$ is equal to the total weight of the edges (in $w$) that connect $u$ to $v$ (recall that $G$ may have parallel edges; see Footnote~\ref{foot:simple-graph}). 

We start with some definitions. Define $u_j$ in layer $V_j$ to be the vertex corresponding to the pointer $z_j$, namely, for all even (resp. odd) values of $j$, $u_j=v^j_i$ where $x_i = z_j$ (resp. $y_i = z_j$). 
Furthermore, let $\PP := \PP_1 \cup \ldots \cup \PP_k \cup \set{P^*}$ be a collection of flow paths defined as follows: For any $j \in [k]$, the set of paths $\PP_j :=  \set{(s,u_0,u_1,\dots, u_{j-1}, v^j_i ,t) \mid (u_{j-1},v^j_i) \in E}$ and
each path in $\PP_j$ carries $w_j$ units of flow; moreover, $P^*=(s,u_0,u_1,\dots,u_k,t)$ and carries $\istar-1$ units of flow. See Figure~\ref{fig:flow} for an illustration. 

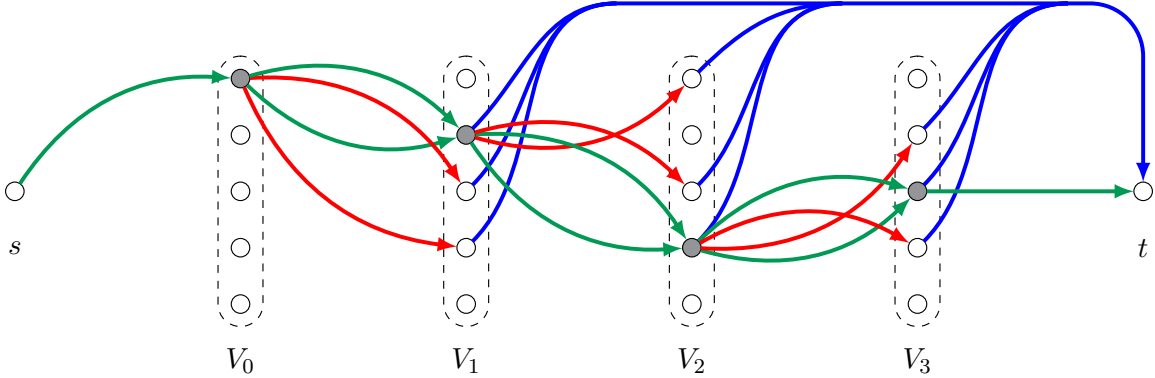
\begin{figure}[h!]
    \centering
\begin{tikzpicture}[ auto ,node distance =1cm and 2cm , on grid , semithick , state/.style ={ circle ,top color =white , bottom color = white , draw, black , text=black}, every node/.style={inner sep=0,outer sep=0}]

\node[circle, black, line width=0.15mm, minimum height=7pt, minimum width=7pt, fill=Gray, draw] (a1){};
\node[state, circle, black, line width=0.15mm, minimum height=7pt, minimum width=7pt] (a2) [below= 0.75cm of a1]{};
\node[state, circle, black, line width=0.15mm, minimum height=7pt, minimum width=7pt] (a3) [below=0.75cm of a2]{};
\node[state, circle, black, line width=0.15mm, minimum height=7pt, minimum width=7pt] (a4) [below=0.75cm of a3]{};
\node[state, circle, black, line width=0.15mm, minimum height=7pt, minimum width=7pt] (a5) [below=0.75cm of a4]{};
\node (v0) [below=0.75cm of a5]{$V_0$};


\node[rectangle, dashed, rounded corners = 3mm, inner sep=5pt, draw,  black, fit=(a1) (a5), line width=0.15mm] {};

\node[state, circle, black, draw, line width=0.15mm, minimum height=7pt, minimum width=7pt] (b1) [right=3cm of a1]{};
\node[ circle, black, line width=0.15mm, minimum height=7pt, minimum width=7pt, fill=Gray, draw] (b2) [below= 0.75cm of b1]{};
\node[state, circle, black, line width=0.15mm, minimum height=7pt, minimum width=7pt] (b3) [below=0.75cm of b2]{};
\node[state, circle, black, line width=0.15mm, minimum height=7pt, minimum width=7pt] (b4) [below=0.75cm of b3]{};
\node[state, circle, black, line width=0.15mm, minimum height=7pt, minimum width=7pt] (b5) [below=0.75cm of b4]{};
\node (v1) [below=0.75cm of b5]{$V_1$};

\node[rectangle, dashed,rounded corners = 3mm, inner sep=5pt, draw,  black, fit=(b1) (b5), line width=0.15mm] {};

\node[state, circle, black, draw, line width=0.15mm, minimum height=7pt, minimum width=7pt] (c1) [right=3cm of b1]{};
\node[state, circle, black, line width=0.15mm, minimum height=7pt, minimum width=7pt] (c2) [below= 0.75cm of c1]{};
\node[state, circle, black, line width=0.15mm, minimum height=7pt, minimum width=7pt] (c3) [below=0.75cm of c2]{};
\node[ circle, black, line width=0.15mm, minimum height=7pt, minimum width=7pt, fill=Gray, draw] (c4) [below=0.75cm of c3]{};
\node[state, circle, black, line width=0.15mm, minimum height=7pt, minimum width=7pt] (c5) [below=0.75cm of c4]{};
\node (v2) [below=0.75cm of c5]{$V_2$};

\node[rectangle, dashed,rounded corners = 3mm, inner sep=5pt, draw,  black, fit=(c1) (c5), line width=0.15mm] {};

\node[state, circle, black, draw, line width=0.15mm, minimum height=7pt, minimum width=7pt] (d1) [right=3cm of c1]{};
\node[state, circle, black, line width=0.15mm, minimum height=7pt, minimum width=7pt] (d2) [below= 0.75cm of d1]{};
\node[ circle, black, line width=0.15mm, minimum height=7pt, minimum width=7pt, fill=Gray, draw] (d3) [below=0.75cm of d2]{};
\node[state, circle, black, line width=0.15mm, minimum height=7pt, minimum width=7pt] (d4) [below=0.75cm of d3]{};
\node[state, circle, black, line width=0.15mm, minimum height=7pt, minimum width=7pt] (d5) [below=0.75cm of d4]{};
\node (v3) [below=0.75cm of d5]{$V_3$};

\node[rectangle, dashed, rounded corners = 3mm, inner sep=5pt, draw,  black, fit=(d1) (d5), line width=0.15mm] {};

\node[state, circle, black, draw, line width=0.15mm, minimum height=7pt, minimum width=7pt] (s)[left=3cm of a3]{};
\node (ss) [below=0.75cm of s]{$s$};
\node[state, circle, black, draw, line width=0.15mm, minimum height=7pt, minimum width=7pt] (t)[right=3cm of d3]{};
\node (tt) [below=0.75cm of t]{$t$};

\node (l1)[above right=1cm and 2cm of b1]{};

\draw[in=180, blue, line width=1.5pt] (b2) to (l1.center);
\draw[in=180, blue, line width=1.5pt] (b3) to (l1.center);
\draw[in=180, blue, line width=1.5pt] (b4) to (l1.center);
\draw[in=180, blue, line width=1.5pt,->,>=latex, rounded corners = 20pt] (l1.center) -| (t);

\node (l2)[above right=1cm and 2cm of c1]{};

\draw[in=180, blue, line width=1.5pt](c1) to (l2.center);
\draw[in=180, blue, line width=1.5pt](c3) to (l2.center);
\draw[in=180, blue, line width=1.5pt](c4) to (l2.center);

\node (l3)[above right=1cm and 2cm of d1]{};

\draw[in=180, blue, line width=1.5pt] (d2) to (l3.center);
\draw[in=180, blue, line width=1.5pt](d3) to (l3.center);
\draw[in=180, blue, line width=1.5pt](d4) to (l3.center);

\draw[->,>=latex,ForestGreen, line width=1.5pt] (d3) to (t);

\draw[->,>=latex, ForestGreen, line width=1.5pt, bend left] (s) to  (a1); 

\draw[->,>=latex, ForestGreen, bend right, line width=1.5pt] (a1) to  (b2) ; 
\draw[->,>=latex, red, bend right, line width=1.5pt] (a1) to (b4); 

\draw[->,>=latex, red, bend left, line width=1.5pt] (a1) to (b3); 
\draw[->,>=latex, ForestGreen, bend left, line width=1.5pt] (a1) to (b2); 

\draw[->,>=latex, red, bend right, line width=1.5pt] (b2) to (c1) node[near start]{}; 
\draw[->,>=latex, ForestGreen, bend right, line width=1.5pt] (b2) to (c4) node[near start]{}; 

\draw[->,>=latex, ForestGreen, bend left, line width=1.5pt] (b2) to (c4) node[near start]{}; 
\draw[->,>=latex, red, bend left, line width=1.5pt] (b2) to (c3) node[near start]{}; 

\draw[->,>=latex, ForestGreen, bend right, line width=1.5pt] (c4) to (d3) node[near start]{}; 
\draw[->,>=latex, red, bend right, line width=1.5pt] (c4) to (d2) node[near start]{}; 

\draw[->,>=latex, ForestGreen, bend left, line width=1.5pt] (c4) to (d3) node[near start]{}; 
\draw[->,>=latex, red, bend left, line width=1.5pt] (c4) to (d4) node[near start]{};

\node[ForestGreen,opacity=0, text opacity=1](L1)[above left=0.5cm and 1.5cm of v0]{$(5+1)^{4}$};

\node[ForestGreen,opacity=0, text opacity=1](L2)[above left=0.5cm and 1.5cm of v1]{$(5+1)^{3}$};
\node[red,opacity=0, text opacity=1](R2)[below=0.5cm of L2]{$(5+1)^{3}$};

\node[ForestGreen,opacity=0, text opacity=1](L3)[above left=0.5cm and 1.5cm of v2]{$(5+1)^{2}$};
\node[red,opacity=0, text opacity=1](R3)[below=0.5cm of L3]{$(5+1)^{2}$};
\node[blue,opacity=0, text opacity=1](B3)[above=0.5cm of L3]{$(5+1)^{3}$};
   
\node[ForestGreen,opacity=0, text opacity=1](L4)[above left=0.5cm and 1.5cm of v3]{$(5+1)^1$};
\node[red,opacity=0, text opacity=1](R4)[below=0.5cm of L4]{$(5+1)^1$};
\node[blue,opacity=0, text opacity=1](B4)[above=0.5cm of L4]{$(5+1)^{2}$};
             
\node[ForestGreen,opacity=0, text opacity=1](L5)[above right=0.5cm and 1.5cm of v3]{$(3-1)$};
\node[blue,opacity=0, text opacity=1](B5)[above=0.5cm of L5]{$(5+1)$};         
\end{tikzpicture}
    \caption{Illustration of the flow paths in $\PP$ in the proof of Lemma~\ref{lem:cut} for $n=5$ and $k=3$. The green edges belong to $P^*$ while red and blue edges are the edges that belong to a path in some $\PP_j$ but not $P^*$.
    The numbers denote the value of the flow sent over each outgoing edge in the corresponding layer with the same color. The value of this flow mod $(n+1)$ is $(\istar-1)$ where $\istar=3$. 
    }
    \label{fig:flow}
\end{figure}

We have the following auxiliary claim. 

\begin{claim}\label{clm:flow-target}
	For any $j \in [k]$, capacity of the edge $e=(u_{j-1},u_j)$ is $c(e) = 2 w_j$. 
\end{claim}
\begin{proof}
	Suppose $u_{j-1}=v^{j-1}_i$ and $u_j=v^j_{i'}$ and assume that $j$ is odd; the even $j$ case is symmetric. Since $j$ is odd, $y_{i'}$ is contained in both $A_{x_i}$ and $B_{x_i}$. Hence, there are two parallel edges from $u_{j-1}$ to $u_j$ each of weight $w_j$. 
	So the capacity of $(u_{j-1},u_j)$ is $2w_j$. \Qed{Claim~\ref{clm:flow-target}}
	
\end{proof}

We claim that $\PP$ gives a maximum flow in graph $G$. This proves the lemma as for all $j \in [k]$, the contribution of each path in $\PP_j$ to the flow$\mod (n+1)$ is $0$. Hence $P^*$ determines the value of the flow mod $(n+1)$ which is $(\istar-1)$ and
$\istar$ encodes the pointer $z_k$. The proof consists of the following two claims that ensure feasibility and optimality of $\PP$, respectively.  

\begin{claim}\label{clm:flow-feasible}
	$\PP$ induces a feasible flow in $G(V,E,w)$ with capacity $w_e$ on every edge $e \in E$. 
\end{claim}
\begin{proof}
Since all the paths in $\PP$ are $s$-$t$ paths, for any vertex in $V \setminus \set{s,t}$, the amount of flow going in that vertex is equal to the amount of flow going out of it. Hence, the flow is preserved on all vertices in $V \setminus \set{s,t}$. It thus remains to prove
that no edge is assigned a flow more than its capacity. 

Any edge $e$ \emph{not} in $P^*$ is contained in at most one path in $\PP$. For paths in $\PP_j$, these are edges $(u_{j-1},v^j_i)$ and $(v^{j}_i,t)$ for some $j \in [k]$ and $i \in [n]$. The amount of flow on these paths
is then equal to $w_j = w(v^{j}_i,t)$ by construction and hence the flow on these edges does not exceed their capacity.

We now prove the result for edges in $P^*$. First consider the edge $(u_k,t)$. There are two paths in $\PP$ that contain $(u_k,t)$: the path $P^*$ that carries $\istar-1$ units of flow and the path in $\PP_k$ that carries $w_k$ units of flow. As $u_k = v^{k}_{\istar}$, the capacity of the edge $(u_k,t)$ is also $w_k + (\istar-1)$ (as there are two edges connecting $v^k_{\istar}$ to $t$ with weights $w_k$ and $(\istar-1)$). Hence the flow on these edges also does not exceed their capacity. 

We next prove that for every $j \in [k]$, there are at most $2w_j$ units of flow passing through $(u_{j-1},u_j)$. By Claim~\ref{clm:flow-target}, this implies that the flow on these edges does not exceed capacity. 
The proof is by induction for $j=k$ down to $j=0$ in this order, where 
the base case is $(u_{k-1},u_k)$. All the paths that contain this edge also contain $(u_k,t)$, so there are $w_k+\istar-1<2w_k$ units of flow passing through this edge by the previous part of the argument.

For the induction step, 
consider the flow paths that contain $(u_{j-1},u_j)$. There is exactly one path in $\PP_j$ that contains this edge and that path carries $w_j$ units of flow by definition. There are 
also at most $n-1$ paths in $\PP_{j+1}$ that contain $(u_{j-1},u_j)$ but do not contain $(u_j,u_{j+1})$. The total flow these paths are carrying is at most $(n-1) \cdot w_{j+1}$. 
 All other paths in $\PP$ that contain $(u_{j-1},u_j)$ also contain $(u_j,u_{j+1})$ and hence by the induction hypothesis, these paths carry at most $2w_{j+1}$ units of flow. 
 So the total flow going through $(u_{j-1},u_j)$ is at most $w_j+(n-1)w_{j+1}+2w_{j+1} \le 2w_j$, proving the induction hypothesis. 

 Finally, consider the edge $(s,u_0)$. There are at most $n-1$ paths in $\PP_1$ that contain $(s,u_0)$ but not $(u_0,u_1)$. The total flow passing through these paths  is 
 at most $(n-1) \cdot w_1$. All other paths in $\PP$ contain $(u_0,u_1)$; these paths carry at most $2w_1$ units of flow as we proved above by induction. So the total flow passing through $(s,u_0)$
 is at most $(n-1)\cdot w_1 + 2w_1 = w_0$ which is equal to the capacity of $(s,u_0)$.
 \Qed{Claim~\ref{clm:flow-feasible}}

\end{proof}

\begin{claim}\label{clm:flow-optimal}
	There is no $s$-$t$ path in the residual graph of $G$ with respect to the flow paths in $\PP$. 
\end{claim}
\begin{proof}
We prove by induction that in the residual graph, $s$ can only reach $u_j$ in layer $V_j$ (strictly speaking, we will prove that if some other vertex in $V_j$ is reachable from $s$, then the path can only go through $t$, but in the end we will prove that $t$ is not 
reachable from $s$). 

The base case trivially holds as $s$ only has an outgoing edge to a single vertex in $V_0$, namely, the vertex $v^0_1 = u_0$. Furthermore, the outgoing edges of vertices in $V_0$ do not belong to any flow path in $\PP$. 
For the induction step, consider the layer $V_{j+1}$. By the induction hypothesis, $s$ can only reach $u_j$ in $V_j$. For any vertex $v^{j+1}_i$ which is not $u_{j+1}$, if the edge $(u_j,v^{j+1}_i)$ exists in $G$, then it is contained in a path in $\PP_{j+1}$ which carries 
$w_{j+1}$ units of flow. As the capacity of this edge is also $w_{j+1}$, the direction of this edge in the residual graph is from $v^{j+1}_i$ to $u_j$. Moreover, no outgoing edge of $v^{j+1}_i$ (except for the one going to $t$) is contained in any path in $\PP$. This means that in the 
residual graph, $v^{j+1}_i$ is not reachable from $s$, proving the induction hypothesis. 

By the above argument, the only vertex reachable from $s$ in $V_k$ is $u_k$. Now consider the sink $t$. For any $j \in [k]$, $(u_j,t)$ is contained in a path in $\PP_j$ and thus its flow matches its capacity. For edge $(u_k,t)$, there are two paths in $\PP$ that contain this 
edge, the first one is in $\PP_k$ which carries $w_k$ units of flow and the other is $P^*$ which carries $\istar-1$ units of flow. So $(u_k,t)=(v^k_{\istar},t)$ is also full. Thus $t$ is not reachable from $s$.
 \Qed{Claim~\ref{clm:flow-optimal}}

\end{proof}

Claims~\ref{clm:flow-feasible} and~\ref{clm:flow-optimal} prove that $\PP$ induces a maximum $s$-$t$ flow in $G$. We are now done as the amount of flow carried by all flow paths in $\PP$ is divisible by $n+1$ except for $P^*$. This is because
the flow carried by each path in $\PP_j$ for $j \in [k]$ is of weight $w_j$ and $(n+1)$ is a factor of $w_j$. As the flow carried by $P^*$ is $\istar-1$, the total flow in $\PP$ is $K \cdot (n+1) + (\istar-1)$ for some integer $K \geq 1$. 
By max-flow min-cut duality, $\wstar \mod (n+1) = \istar-1$. \Qed{Lemma~\ref{lem:cut}}

\end{proof}

We can now prove Theorem~\ref{thm:cut} using this reduction, the standard connection between space complexity of streaming algorithms and communication complexity, and our communication 
lower bound for hidden-pointer chasing in Theorem~\ref{thm:hpck}. 

\begin{proof}[Proof of Theorem~\ref{thm:cut}]
	Let $\alg$ be a $p$-pass streaming algorithm for computing the value of a minimum $s$-$t$ cut in weighted directed graphs. To avoid confusion, in the following, 
	we use $N$ to denote the number of vertices in the graph $G$ and $n$ for the size of universes in $\HPC$. Hence, our goal is to prove a lower bound of $\Omega(N^2/p^5)$ on the space complexity of $\alg$. 
	
	We give a reduction from $\HPCk$ for $k=2p+1$. Given an instance of $\HPCk$, the players first construct the graph $G(V,E,w)$ in the reduction of this section based on their inputs with no communication. 
	Next, they create a stream $\sigma$ of edges of $E$ such that edges depending on input to $P_D$ appear first, then $P_C$, $P_B$ and $P_A$ 
	in this order and input-independent edges appear last. The players run $\alg$ on $\sigma$ and communicate the state of $\alg$ between each other whenever
	necessary to compute the value of a minimum weighted $s$-$t$ cut in $G$. 
	
	By Lemma~\ref{lem:cut}, the value of the minimum $s$-$t$ cut in $G$ immediately determines the pointer $z_k$, hence proving the correctness of the protocol. 
	The number of phases and communication cost of this protocol can be determined as follows. 
	Each pass of the streaming algorithm translates into at most two phases in the protocol and hence the resulting protocol has strictly smaller than
	$k$ phases. The total communication by players in this protocol is at most $O(k \cdot S)$ where $S$ denotes the space complexity of $\alg$. As such, by Theorem~\ref{thm:hpck}, we have, 
	$k \cdot S = \Omega(n^2/k^2)$ which implies $S = \Omega(n^2/k^3)$. Since the total number of vertices in the graph is $N = O(k \cdot n)$ and $k = \Theta(p)$, we obtain a 
	lower bound of $\Omega(N^2/p^5)$ on the space complexity of $\alg$, finalizing the proof for the directed graphs. 
	
	To extend the results to undirected graphs, we can simply use the standard reduction of finding a maximum flow in directed graphs to finding a maximum flow
	 in undirected graphs described in, for example~\cite{Lin09} (see also Appendix C.2 in~\cite{SidfordT18}). This reduction works by turning each directed edge $e = (u,v)$ with capacity $c_e$ in the graph to three undirected edges 
	 $\set{s,v}$, $\set{u,v}$ and $\set{t,u}$ each with capacity $c_e$. It is then easy to see that after pushing an initial flow of $(s,v,u,t)$ with $c_e$ units of flow on every edge $(u,v)$, the residual graph obtained would be equivalent to the original directed graph. 
	 Hence, solving $s$-$t$ maximum flow on this undirected graph would also solve the problem for the original directed graph (see~\cite{Lin09,SidfordT18} for the formal proof). As thus reduction can be done on the graph $G(V,E,w)$ constructed
	 in this section with no further communication between the players, the results in this proof extend to undirected graphs as well, finalizing the proof.  
\end{proof}

\begin{remark}\label{rem:simple-graph}
	The reduction in this section creates a multi-graph $G$. However, we can easily transform this graph to a simple graph without changing the minimum cut value, while increasing the number of vertices by only a constant factor. 
	The transformation is as follows: turn any vertex $v^j_i$ in layer $V_j$ of the graph $G$ into three vertices $w^j_i$, $a^j_i$ and $b^j_i$. Connect $w^j_i$ to $a^j_i$ and $b^j_i$ with edges of weight $w_0$ (which is effectively infinity). 
	The input-independent edges going out of $v^j_i$ to $t$ now goes out of $w^j_i$ to $t$ instead. For any odd $j$, any edge $(v^j_i,v^{j+1}_{i'})$ is now turned into an edge $(a^j_i,w^{j+1}_{i'})$ if the edge
	was added because of $A_{x_i}$ and $(b^j_i,w^{j+1}_{i'})$ if it was added because of $B_{x_i}$. We do the same for even values of $j$ by using $C_{y_i}$ and $D_{y_i}$ instead. 
	It is easy to see that the weight of minimum $s$-$t$ is the same in this new graph and that this graph does not have any parallel edges anymore. 
\end{remark}

\newcommand{\mis}{\mathcal{M}}

\subsection{The Lexicographically-First MIS Problem}\label{sec:mis}

Proof of Theorem~\ref{thm:mis} is also by a reduction from the hidden-pointer chasing (\HPC) problem. 
We turn an instance $(\bA,\bB,\bC,\bD)$ of \HPCk over universes $\XX$ and $\YY$, into an undirected graph $G(V,E)$. 
The reduction is as follows (see Figure~\ref{fig:mis} for an example): 

\begin{itemize}
	\item The vertex-set $V$ of $G$ is partitioned into $k+1$ layers $V_0,\ldots,V_k$ each of size $n$ plus a single vertex $s$ (hence $G$ has $(k+1)n + 1$ vertices). 
	We denote the $i$-th vertex in layer $V_j$ by $v^j_i$. 
	In the lexicographic order, the vertices in layer $V_0$ appear first, followed by vertices in $V_1,\ldots,V_k$ in this order. Inside each layer $V_j$, the ordering is by the index, i.e., in the order $v^{j}_1,\ldots,v^{j}_n$. 
	
	\item The edge-set $E$ contains the following edges:
	\begin{itemize}
		\item vertex $v^0_1$ is connected to all other vertices in $V^0$. 
		\item for all $i \in [n]$, if $A_{x_i}\in \bA$ (resp. $B_{x_i} \in \bB$) does \emph{not} contain $y_{i'} \in \YY$, we connect $v^{j}_i$ in layer $V_j$ to $v^{j+1}_{i'}$ in layer $V_{j+1}$ for every \emph{even} $0 \leq j < k$.
		\item for all $i \in [n]$, if $C_{y_i}\in \bC$ (resp. $D_{y_i} \in \bD$) does \emph{not} contain $x_{i'} \in \XX$, we connect $v^{j}_i$ in layer $V_j$ to $v^{j+1}_{i'}$ in layer $V_{j+1}$ for every \emph{odd} $0 < j < k$. 
	\end{itemize}
\end{itemize}
\noindent
This concludes the description of the graph $G(V,E)$ in the reduction.
 It is straightforward to verify that this graph can be constructed from an instance $(\bA,\bB,\bC,\bD)$ with no communication between the players. 
We now establish the correctness of the reduction. 

\begin{figure}[h!]
    \centering
\begin{tikzpicture}[ auto ,node distance =1cm and 2cm , on grid , semithick , state/.style ={ circle ,top color =white , bottom color = white , draw, black , text=black}, every node/.style={inner sep=0,outer sep=0}]

\node[circle, black, line width=0.15mm, minimum height=7pt, minimum width=7pt, fill=Gray, draw] (a1){};
\node[state, circle, black, line width=0.15mm, minimum height=7pt, minimum width=7pt] (a2) [below= 0.75cm of a1]{};
\node[state, circle, black, line width=0.15mm, minimum height=7pt, minimum width=7pt] (a3) [below=0.75cm of a2]{};
\node[state, circle, black, line width=0.15mm, minimum height=7pt, minimum width=7pt] (a4) [below=0.75cm of a3]{};
\node[state, circle, black, line width=0.15mm, minimum height=7pt, minimum width=7pt] (a5) [below=0.75cm of a4]{};
\node (v0) [below=0.75cm of a5]{$V_0$};


\node[rectangle, dashed, rounded corners = 3mm, inner sep=5pt, draw,  black, fit=(a1) (a5), line width=0.15mm] {};

\node[state, circle, black, draw, line width=0.15mm, minimum height=7pt, minimum width=7pt] (b1) [right=3cm of a1]{};
\node[ circle, black, line width=0.15mm, minimum height=7pt, minimum width=7pt, fill=Gray, draw] (b2) [below= 0.75cm of b1]{};
\node[state, circle, black, line width=0.15mm, minimum height=7pt, minimum width=7pt] (b3) [below=0.75cm of b2]{};
\node[state, circle, black, line width=0.15mm, minimum height=7pt, minimum width=7pt] (b4) [below=0.75cm of b3]{};
\node[state, circle, black, line width=0.15mm, minimum height=7pt, minimum width=7pt] (b5) [below=0.75cm of b4]{};
\node (v1) [below=0.75cm of b5]{$V_1$};

\node[rectangle, dashed,rounded corners = 3mm, inner sep=5pt, draw,  black, fit=(b1) (b5), line width=0.15mm] {};

\node[state, circle, black, draw, line width=0.15mm, minimum height=7pt, minimum width=7pt] (c1) [right=3cm of b1]{};
\node[state, circle, black, line width=0.15mm, minimum height=7pt, minimum width=7pt] (c2) [below= 0.75cm of c1]{};
\node[state, circle, black, line width=0.15mm, minimum height=7pt, minimum width=7pt] (c3) [below=0.75cm of c2]{};
\node[ circle, black, line width=0.15mm, minimum height=7pt, minimum width=7pt, fill=Gray, draw] (c4) [below=0.75cm of c3]{};
\node[state, circle, black, line width=0.15mm, minimum height=7pt, minimum width=7pt] (c5) [below=0.75cm of c4]{};
\node (v2) [below=0.75cm of c5]{$V_2$};

\node[rectangle, dashed,rounded corners = 3mm, inner sep=5pt, draw,  black, fit=(c1) (c5), line width=0.15mm] {};

\node[state, circle, black, draw, line width=0.15mm, minimum height=7pt, minimum width=7pt] (d1) [right=3cm of c1]{};
\node[state, circle, black, line width=0.15mm, minimum height=7pt, minimum width=7pt] (d2) [below= 0.75cm of d1]{};
\node[ circle, black, line width=0.15mm, minimum height=7pt, minimum width=7pt, fill=Gray, draw] (d3) [below=0.75cm of d2]{};
\node[state, circle, black, line width=0.15mm, minimum height=7pt, minimum width=7pt] (d4) [below=0.75cm of d3]{};
\node[state, circle, black, line width=0.15mm, minimum height=7pt, minimum width=7pt] (d5) [below=0.75cm of d4]{};
\node (v3) [below=0.75cm of d5]{$V_3$};

\node[rectangle, dashed, rounded corners = 3mm, inner sep=5pt, draw,  black, fit=(d1) (d5), line width=0.15mm] {};

\draw[black, bend right, line width=0.5pt] (a1) to (a2) node[near start]{}; 
\draw[black, bend right, line width=0.5pt] (a1) to (a3) node[near start]{}; 
\draw[black, bend right, line width=0.5pt] (a1) to (a4) node[near start]{}; 
\draw[black, bend right, line width=0.5pt] (a1) to (a5) node[near start]{}; 

\draw[blue, bend right, line width=1.5pt] (a1) to (b1) node[near start]{}; 
\draw[blue, bend right, line width=1.5pt] (a1) to (b3) node[near start]{}; 
\draw[blue, bend right, line width=1.5pt] (a1) to (b5) node[near start]{}; 

\draw[red, bend left, line width=1.5pt] (a1) to (b1) node[near start]{}; 
\draw[red, bend left, line width=1.5pt] (a1) to (b4) node[near start]{}; 
\draw[red, bend left, line width=1.5pt] (a1) to (b5) node[near start]{}; 

\draw[ForestGreen, bend right, line width=1.5pt] (b2) to (c2) node[near start]{}; 
\draw[ForestGreen, bend right, line width=1.5pt] (b2) to (c3) node[near start]{}; 
\draw[ForestGreen, bend right, line width=1.5pt] (b2) to (c5) node[near start]{}; 

\draw[YellowOrange, bend left, line width=1.5pt] (b2) to (c1) node[near start]{}; 
\draw[YellowOrange, bend left, line width=1.5pt] (b2) to (c2) node[near start]{}; 
\draw[YellowOrange, bend left, line width=1.5pt] (b2) to (c5) node[near start]{};

\draw[blue, bend right, line width=1.5pt] (c4) to (d1) node[near start]{}; 
\draw[blue, bend right, line width=1.5pt] (c4) to (d4) node[near start]{}; 
\draw[blue, bend right, line width=1.5pt] (c4) to (d5) node[near start]{}; 

\draw[red, bend left, line width=1.5pt] (c4) to (d1) node[near start]{}; 
\draw[red, bend left, line width=1.5pt] (c4) to (d2) node[near start]{}; 
\draw[red, bend left, line width=1.5pt] (c4) to (d5) node[near start]{};

\end{tikzpicture}
    \caption{Illustration of the graph in reducing lexicographically-first MIS from $\HPC_3$ with $n=5$. The black (thin) edges incident on $s$ are input-independent while blue, red , brown, and green (thick) edges depend on the inputs of $P_A$, $P_B$, $P_C$, and $P_D$, respectively. 
    The marked nodes denote the vertices corresponding to pointers $z_0,\ldots,z_3$.
    The edges incident on ``non-pointer'' vertices are omitted. This construction has parallel edges but similar to Remark~\ref{rem:simple-graph}, we can remove them.
    }
    \label{fig:mis}
\end{figure}
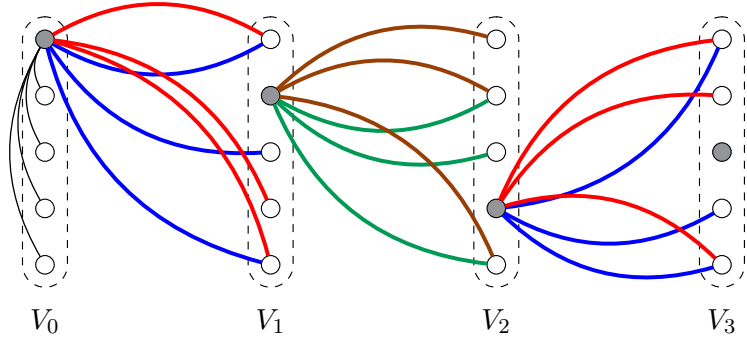

\begin{lemma} \label{lem:redmis}
In the reduction above, the pointer $z_k = x_i$ (resp. $z_k = y_i$) when $k$ is even (resp. odd) iff $v^{k}_i$ belongs to the lexicographically-first \textnormal{MIS} of $G$.
\end{lemma}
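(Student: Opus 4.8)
The plan is to analyze the greedy lexicographically-first MIS algorithm layer by layer and show that the only vertex it picks in each layer $V_j$ is exactly $u_j = v^j_i$ where $z_j$ equals $x_i$ (resp. $y_i$). First I would observe that since $v^0_1$ comes first in the lexicographic order and has no neighbors inside $V_0$ among the first-visited vertices (its only edges within the graph go to other vertices of $V_0$ and to $V_1$), the greedy algorithm puts $v^0_1 = u_0$ into the MIS, and consequently every other vertex in $V_0$ is dominated and none of them is selected. This is the base case $j=0$ of an induction on the layer index.

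For the induction step, suppose the greedy algorithm, when it finishes processing layers $V_0,\ldots,V_j$, has selected precisely $u_0,u_1,\ldots,u_j$ from those layers (and nothing else). Consider layer $V_{j+1}$, processed in index order $v^{j+1}_1,\ldots,v^{j+1}_n$. A vertex $v^{j+1}_{i'}$ is blocked from entering the MIS exactly when it has a neighbor already in the MIS, i.e. when it is adjacent to $u_j$ (the only MIS vertex in $V_j$). By the reduction, $v^j_i \sim v^{j+1}_{i'}$ iff $x_{i'}$ (or $y_{i'}$, according to parity) is \emph{not} in the relevant set of the instance indexed by $u_j$; writing $u_j$ as the vertex of $z_j$, the set intersection instance $(A_{z_j},B_{z_j})$ (resp. $(C_{z_j},D_{z_j})$) has a unique common element, which is precisely $z_{j+1} = f(z_j)$. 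Hence $v^{j+1}_{i'}$ is adjacent to $u_j$ for every $i'$ \emph{except} the one corresponding to $z_{j+1}$, and that exceptional vertex is $u_{j+1}$. Therefore as the greedy scan goes through $V_{j+1}$, every vertex before $u_{j+1}$ is dominated by $u_j$ and skipped, then $u_{j+1}$ is encountered, has no neighbor yet in the MIS (its neighbors in $V_j$ other than $u_j$ are irrelevant since they are not in the MIS, and layers $V_{j+2},\ldots$ have not been processed), so it is added; and every vertex after $u_{j+1}$ in $V_{j+1}$ is already dominated by $u_j$ and hence skipped. This establishes that $u_{j+1}$ is the unique MIS vertex in $V_{j+1}$, completing the induction.

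Taking $j = k$ in the induction gives that the lexicographically-first MIS contains exactly the vertex $v^k_i$ corresponding to $z_k$ among all of $V_k$, which is the claimed equivalence. The one subtlety I would flag is the handling of the ``non-target'' vertices inside a layer: I need the fact, exactly as in the set-intersection construction, that the unique element of $A_{z_j}\cap B_{z_j}$ is $z_{j+1}$, so that ``not in $A_{x_i}$'' combined with ``not in $B_{x_i}$'' fails for exactly one index. A clean way to phrase this is: $v^j_i$ is \emph{non}-adjacent to $v^{j+1}_{i'}$ iff $x_{i'} \in A_{z_j}$ and $x_{i'} \in B_{z_j}$, i.e. iff $x_{i'} \in A_{z_j}\cap B_{z_j} = \{t_{z_j}\} = \{z_{j+1}\}$ — here I am using that we added the edge $(v^j_i, v^{j+1}_{i'})$ for \emph{both} $A_{x_i}$ not containing $y_{i'}$ and $B_{x_i}$ not containing $y_{i'}$, so the edge is absent only when both sets contain the element. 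The main (very mild) obstacle is just keeping the parity bookkeeping straight — even layers fed by $(\bA,\bB)$ via $f_{AB}$, odd layers fed by $(\bC,\bD)$ via $f_{CD}$ — and noting that the parallel edges (when an element lies in both Alice's and Bob's set) do not affect adjacency, only multiplicity, so they are harmless here (and can be removed as in Remark~\ref{rem:simple-graph}).
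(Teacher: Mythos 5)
Your proof is correct and follows essentially the same route as the paper: an induction over layers showing that the unique vertex of $V_j$ in the lexicographically-first MIS is the one corresponding to $z_j$, with the inductive step resting on the fact that $u_j$ is adjacent to every vertex of $V_{j+1}$ except the one indexed by the target element of the corresponding $\SI$ instance. The only blemish is a notational slip in your ``clean phrasing'' (for even $j$ the sets $A_{z_j},B_{z_j}$ live in $\YY$, so the element should be $y_{i'}$ rather than $x_{i'}$), which you already flag as parity bookkeeping.
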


\begin{proof}
Let $\mis$ be the lexicographically-first MIS of $G$. 
We prove by induction that for any even (resp. odd) $j \in \set{0,1,\ldots,k}$, there is a unique vertex $v^{j}_i$ from layer $V_j$ that belongs to $\mis$ and that vertex corresponds to the pointer $z_j$, namely, $x_i = z_j$ (resp. $y_i = z_j$). 

The base case is trivial since $z_0=x_1$, $v^0_1$ appears first in the lexicographical ordering of vertices, and $v^0_1$ is connected to all vertices in layer $V_0$. We now prove the induction step. Suppose $j$ is even; the other case is symmetric. By induction hypothesis, $v^j_i$ is the unique
vertex in layer $V_j$ that belongs to $\mis$ where $x_i = z_j$. By construction of $G$, $v^j_i$ is connected to all vertices in layer $j+1$ except for the vertex $v^{j+1}_{i'}$, where $\set{y_{i'}} = A_{x_i} \cap B_{x_i}$. Hence, $v^{j+1}_{i'}$ 
is the unique index in $V_{j+1}$ that belongs to $\mis$. The proof is concluded by noting that $z_{j+1} = y_{i'}$ by definition. 
\end{proof}

Proof of Theorem~\ref{thm:mis} now follows from Lemma~\ref{lem:redmis} and Theorem~\ref{thm:hpck} the same exact way as in proof of Theorem~\ref{thm:cut} in the last section. For completeness, we present this proof here. 

\begin{proof}[Proof of Theorem~\ref{thm:mis}]
	Let $\alg$ be a $p$-pass streaming algorithm for finding the lexicographically-first MIS of an undirected graph. To avoid confusion, in the following, 
	we use $N$ to denote the number of vertices in the graph $G$ and $n$ for the size of universes in $\HPC$. Hence, our goal is to prove a lower bound of $\Omega(N^2/p^5)$ on the space complexity of $\alg$. 
	
	We give a reduction from $\HPCk$ for $k=2p+1$. Given an instance of $\HPCk$, the players first construct the graph $G(V,E)$ in the reduction of this section based on their inputs with no communication. 
	Next, they create a stream $\sigma$ of edges of $E$ such that edges depending on input to $P_D$ appear first, then $P_C$, $P_B$ and $P_A$ 
	in this order and input-independent edges appear last. The players then run $\alg$ on $\sigma$ and communicate the state of $\alg$ between each other whenever
	necessary to find the lexicographically-first MIS $\mis$ of $G$. 
	
	By Lemma~\ref{lem:redmis}, the vertex in layer $V_k$ of $G$ that belongs to $\mis$ determines the pointer $z_k$, hence proving the correctness of the protocol. 
	The number of phases and communication cost of this protocol can be determined as follows. 
	Each pass of the streaming algorithm translates into at most two phases in the protocol and hence the resulting protocol has strictly smaller than
	$k$ phases. The total communication by players in this protocol is at most $O(k \cdot S)$ where $S$ denotes the space complexity of $\alg$. As such, by Theorem~\ref{thm:hpck}, we have, 
	$k \cdot S = \Omega(n^2/k^2)$ which implies $S = \Omega(n^2/k^3)$. Since the total number of vertices in the graph is $N = O(k \cdot n)$ and $k = \Theta(p)$, we obtain a 
	lower bound of $\Omega(N^2/p^5)$ on the space complexity of $\alg$, finalizing the proof.	
\end{proof}

We also note that similar to the previous section, we can also turn the graph $G$ in the reduction of this section to a simple graph with no parallel edges using essentially the same gadget. We omit the details.

\section{A Lower Bound for Submodular Function Minimization}\label{sec:SFM}

A non-monotone set-function $f:U \rightarrow [M]$ is called \emph{submodular} iff for every $A \subseteq B \subseteq U$ and for every element $i \notin B$, $f(A \cup \set{i}) - f(A) \geq f(B \cup \set{i}) - f(B)$. 
In the submodular function minimization (SFM) problem, we assume access to an \emph{evaluation oracle} for $f$ that given any set $S \subseteq [n]$ returns $f(S)$; the goal is to return a set $S^*$ that minimizes $f(S^*)$. 
We say that an algorithm for SFM is \emph{$k$-adaptive} iff it makes its queries to the evaluation oracle in at most $k$ rounds of adaptive queries where the queries in each round are performed in parallel. 
We prove the following theorem on the query complexity of $k$-adaptive algorithms for SFM. 

\begin{theorem}\label{thm:sfm}
	For any $k \geq 1$, any $k$-round adaptive algorithm for submodular function minimization that with constant probability outputs the minimum value of a non-monotone submodular function $f:U \rightarrow [M]$
	for $\card{U} = N$ and $M = O(N^{k+1})$ requires $\Omega(\frac{N^2}{k^5 \cdot \log{N}})$ queries to the evaluation oracle. 
\end{theorem}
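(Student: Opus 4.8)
The plan is to reduce from the hidden-pointer chasing problem, reusing the minimum-$s$-$t$-cut gadget of Section~\ref{sec:cut} together with the elementary fact that computing a minimum cut is itself an instance of submodular minimization. Concretely, given an instance $(\bA,\bB,\bC,\bD)$ of $\HPC_{k'}$ with $k' := 2k+1$, I build the layered weighted directed graph $G(V,E,w)$ of Section~\ref{sec:cut}; it has $k'+1$ layers of $n$ vertices, plus $s$ and $t$. Put $U := V(G)\setminus\set{s,t}$ and let $f(S)$ be the capacity of the directed cut $\big(\set{s}\cup S,\ \set{t}\cup(U\setminus S)\big)$. Directed cut functions are submodular, so $f : 2^{U}\to[M]$ is submodular, and $\min_{S\subseteq U}f(S)$ is exactly the minimum $s$-$t$ cut value, which by Lemma~\ref{lem:cut} determines the pointer $z_{k'}$ (it is $\istar-1$ modulo $n+1$, where $z_{k'}$ is the $\istar$-th element of its universe). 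Here $N := \card{U} = (k'+1)n = \Theta(kn)$, and since the edge weights are bounded by $(n+1)^{O(k')}$ the range of $f$ is $N^{O(k)}$; one may tune the gadget so it fits the hypothesis $M=O(N^{k+1})$, and in any case the only quantitative fact used below is $\log M = O(k\log N)$.

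Next I simulate a $k$-round adaptive SFM algorithm $\alg$ by a $2k$-phase protocol for $\HPC_{k'}$. The key structural observation is the additive decomposition $f = f_{\mathrm{ind}} + f_A + f_B + f_C + f_D$, where $f_P$ is the contribution to the cut of the edges generated by player $P$'s input and $f_{\mathrm{ind}}$ is the input-independent part; each player can evaluate its own $f_P$ on any set, and $f_{\mathrm{ind}}$ is public. Treating the two $\HPC$ sides $\set{P_A,P_B}$ and $\set{P_C,P_D}$ as the two parties of a two-player protocol (recall $(\bA,\bB)\perp(\bC,\bD)$, and intra-side communication is free in the phase count, since a phase permits an arbitrary conversation inside a side before one cross-side message), the protocol runs $\alg$ using public randomness. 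Both sides always know the current batch $\mathcal{Q}_i$ of query sets — it is a deterministic function of the public coins and of the answers to rounds $1,\dots,i-1$, which both sides have learned — so no query set is ever transmitted. In round $i$, phase A ($P_A,P_B$ active): $P_A$ and $P_B$ exchange $\{f_A(S)\}_{S\in\mathcal Q_i}$ and $\{f_B(S)\}_{S\in\mathcal Q_i}$, then one of them sends $\{(f_A+f_B)(S)\}_{S\in\mathcal Q_i}$ to the other side; phase B is symmetric with $P_C,P_D$, ending with $\{(f_C+f_D)(S)\}_{S\in\mathcal Q_i}$ sent back. After phase B of round $i$ all four players know $f|_{\mathcal Q_i}$, i.e. the round-$i$ answers, and recompute $\mathcal Q_{i+1}$ locally. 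After $k$ rounds the full transcript (plus public coins) determines every queried value, hence $\min_S f(S)$, hence $z_{k'}$. Each round costs $2$ phases and $O(|\mathcal Q_i|\cdot\log M)$ bits, so the protocol uses $2k < k'$ phases and $O(Q\cdot\log M)$ total communication, where $Q$ is the total number of queries $\alg$ makes.

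The accounting then finishes the proof. If $\alg$ is correct with constant probability then so is this protocol, it uses at most $k'$ phases, and it communicates $O(Q\cdot\log M)=O(Q\cdot k\log N)$ bits. By Theorem~\ref{thm:hpck}, any $k'$-phase protocol solving $\HPC_{k'}$ with constant probability needs $\Omega(n^2/(k')^2)=\Omega(n^2/k^2)$ bits, so $Q\cdot k\log N = \Omega(n^2/k^2)$ and thus $Q = \Omega\big(n^2/(k^3\log N)\big)$. Substituting $n=\Theta(N/k)$ gives $Q=\Omega\big(N^2/(k^5\log N)\big)$, as claimed. Randomization needs no extra work: Theorem~\ref{thm:hpck} already covers randomized protocols, and the internal coins of $\alg$ are simulated with the protocol's public coins.

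I expect the simulation step to be the main obstacle, and within it the two points to verify carefully are: (i) one round of adaptive queries maps to $O(1)$ phases — this uses that each side may consolidate its two players' partial sums inside a single phase — and (ii) the per-round communication is only $O(|\mathcal Q_i|\log M)$ rather than $\Omega(|\mathcal Q_i|\cdot N)$, which hinges on both sides recomputing each batch locally instead of transmitting it. Checking submodularity of $f$, re-deriving the encoding of $z_{k'}$ from Lemma~\ref{lem:cut}, and the bookkeeping relating $k'$, $N$, $n$, $M$ are all routine.
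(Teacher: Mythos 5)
Your proposal is correct and follows essentially the same route as the paper: reduce from hidden-pointer chasing via the weighted min-$s$-$t$-cut gadget of Section~\ref{sec:cut}, observe that the cut function is submodular and decomposes additively across the players, answer each query with $O(\log M)$ bits, charge two phases per adaptive round, and invoke Theorem~\ref{thm:hpck}. Your explicit treatment of why query sets never need to be transmitted (both sides recompute each batch from public coins and prior answers) and your remark that only $\log M = O(k\log N)$ is actually used are points the paper glosses over, but they do not change the argument.
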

\begin{proof}
	The proof is by a reduction from $\HPC_{3k}$ similar to the proof of Theorem~\ref{thm:cut} using the fact that cut functions are submodular. 
	
	Given an instance of $\HPC_{3k}$ problem, we construct the weighted graph $G(V,E,w)$ in the reduction of Theorem~\ref{thm:cut}. Let $U := V \setminus \set{s,t}$. We define a set-function $f: U \rightarrow [M]$ 
	where for any $S \subseteq U$, $f(S)$ is defined to be the value of the cut $(\set{s} \cup S, \set{t} \cup U \setminus S)$ in $G$, i.e., the total weight of the edges going from $\set{s} \cup S$ to $V \setminus (S \cup \set{s})$. 
	We set $M = \sum_{e \in E} w_e$ and hence clearly $f(S) \leq M$. Note that by construction of $G$, $M= O(n^{k+1})$ and $N = \card{U} = O(n \cdot k)$. 
	The function $f$ is a well-known submodular function. Also, it is easy to see that minimizing $f$ corresponds to computing the minimum weighted $s$-$t$ cut in $G$. 
	
	Now let $\alg$ be a $k$-adaptive algorithm for minimizing $f$. We turn $\alg$ into a protocol for $\HPC_{3k}$ with strictly smaller than $3k$ phases. We first argue that any query asked by $\alg$ can be answered by the players in $\HPC_{3k}$ using 
	$O(\log{M})$ communication. Indeed, if $\alg$ asks for a query $S$, then each player needs to look at her input and determine the weights of the edges crossing the cut $\set{s} \cup S$, and communicate it to other players with 
	$O(\log{M})$ bits of communication. The players can on their own also add the weights of the input-independent edges and hence each player knows the answer to $f(S)$. Using this, the players can simulate running $\alg$ on $f$ 
	and by Lemma~\ref{lem:cut} solve $\HPC_{3k}$ using $O(Q \cdot \log{M})$ communication where $Q$ denotes the query complexity of $\alg$ (the players use public randomness to simulate randomness of $\alg$). Moreover, 
	each round of adaptive queries translates into at most two phases in the protocol. As such, the protocol has $<3k$ phases and hence by Theorem~\ref{thm:hpck}, we have that 
	\begin{align*}
		Q \cdot \log{M} = \Omega(\frac{n^2}{k^2}) \implies Q = \Omega(\frac{N^2}{k^5 \cdot \log{n}}),
	\end{align*}
	finalizing the proof. 
\end{proof}

We conclude with the following immediate corollary of Theorem~\ref{thm:sfm}.

\begin{corollary}\label{cor:sfm}
	For any constant $\delta \in (0,1)$, there exists an $\eps:= \eps(\delta)$ in $(0,1)$ such that any algorithm for submodular function minimization on a universe of size $N$ with query complexity $N^{2-\delta}$ requires at least $N^{\eps}$ rounds of adaptive queries to succeed with constant probability. 
\end{corollary}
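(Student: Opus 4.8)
The plan is to obtain Corollary~\ref{cor:sfm} as a direct quantitative rearrangement of Theorem~\ref{thm:sfm}: there is no new idea beyond choosing the exponent $\eps$ correctly and unwinding the asymptotic notation. First I would fix $\delta \in (0,1)$ and set $\eps := \delta/6$ (any constant strictly smaller than $\delta/5$ works, the slack being used to absorb the $\log{N}$ factor). Suppose towards a contradiction that there is an algorithm $\alg$ for submodular function minimization on a universe of size $N$ whose query complexity is at most $N^{2-\delta}$ and which uses only $k < N^{\eps}$ rounds of adaptive queries.

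Next I would invoke Theorem~\ref{thm:sfm} with this value of $k$. Since $\alg$ succeeds with constant probability on every submodular $f : U \rightarrow [M]$ with $\card{U}=N$ and $M = O(N^{k+1})$ --- in particular on the hard instances obtained by the reduction from $\HPC_{3k}$ in the proof of that theorem --- the theorem forces $\alg$ to make $\Omega(N^2 / (k^5 \log{N}))$ queries. Substituting $k < N^{\eps}$ gives a lower bound of $\Omega(N^{2-5\eps}/\log{N}) = \Omega(N^{2-5\delta/6}/\log{N})$ on the number of queries of $\alg$. For all sufficiently large $N$ this strictly exceeds $N^{2-\delta}$, since $N^{2-5\delta/6}/\log{N} \geq N^{2-\delta}$ is equivalent to $N^{\delta/6} \geq \log{N}$, which holds once $N$ is large enough. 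This contradicts the assumed query bound of $\alg$, and hence $k \geq N^{\eps}$, which is the claim.

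The one routine point that needs attention is the regime of small $N$, below the absolute threshold where $N^{\delta/6} \geq \log{N}$ first holds: there the inequality $k \geq N^{\eps}$ asserts only a bounded quantity and is accommodated either by the usual convention of stating the bound for $N$ large enough, or by shrinking $\eps$ further; in any case, since every non-adaptive (one-round) algorithm for SFM already requires super-polynomially many queries~\cite{BalkanskiS17}, the claim is trivially true for $k=1$. I do not expect a genuine obstacle here: the entire content resides in Theorem~\ref{thm:sfm}, and this corollary is simply its contrapositive recast as a query/adaptivity trade-off.
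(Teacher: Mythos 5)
Your proposal is correct and is essentially the paper's own proof, which likewise sets $\eps := \delta/6$ and applies Theorem~\ref{thm:sfm} with $k = N^{\eps}$; your contrapositive phrasing and the explicit check that $N^{\delta/6} \geq \log N$ for large $N$ just spell out the arithmetic the paper leaves implicit.
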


The proof of this corollary is by simply setting $\eps := \delta/6$, and then applying Theorem~\ref{thm:sfm} with $k=N^{\eps}$ to obtain the desired bounds.

\bibliographystyle{abbrv}
\bibliography{general}


\clearpage

\appendix


\section{Further Related Work}\label{app:related}

Understanding space/pass tradeoffs for streaming algorithms dates all the way back to the early results on median-finding~\cite{MunroP78} more than four decades ago and has remained a focus
of attention since; we refer the interested reader to~\cite{GuhaM07,GuhaM08,ChakrabartiCM08,ChakrabartiCKM10} and references therein.

A closely related line of work to graph streaming algorithms that have received a significant attention in recent years is on 
streaming algorithms for submodular optimization and in particular set cover and
maximum coverage~\cite{SahaG09,DemaineIMV14,EmekR14,AssadiKL16,Assadi17sc,Har-PeledIMV16,ChakrabartiW16,BateniEM17,KumarMVV13,ChakrabartiK14,McGregorV17,AssadiK18}. Particularly relevant to our work,~\cite{ChakrabartiW16} uses a reduction from the multi-party tree pointer chasing problem~\cite{ChakrabartiCM08} to prove
an $\Omega(\frac{\log{n}}{\log\log{n}})$ pass lower bound for approximating set cover with $m$ sets and $n$ elements using $O(n \cdot \poly\set{\log{n},\log{m}})$ space (this can also be interpreted as a lower bound
for the edge-cover problem on hyper-graphs with $n$ vertices and $m$ hyper-edges in the graph streaming model). For the set cover problem, a lower bound of $\Omega(\frac{m \cdot n^{1/\alpha}}{p})$ space for $p$-pass streaming $\alpha$-approximation algorithms is established in~\cite{Assadi17sc} using a reduction from the set disjointness problem (this can also be interpreted as a lower bound for the dominating set problem on graphs with $n=m$ vertices in the graph streaming model). 

Similar-in-spirit round/communication tradeoffs for distributed computation of many graph and related problems have also been studied in the literature~\cite{DobzinskiNO14,AlonNRW15,Assadi17ca,BravermanO17,BravermanMW18,AssadiK18}. For 
example,~\cite{BravermanO17} proves an $\Omega(\frac{\log{n}}{\log\log{n}})$ round lower bound for protocols with low communication that can approximate matchings in a communication model in which players correspond to vertices of an $n$-vertex graph. 
Similarly,~\cite{AssadiK18} proves an $\Omega(\frac{\log{n}}{\log\log{n}})$ round lower bound for constrained submodular maximization in a communication model where $n$ elements of a universe are partitioned between the players. 

Adaptivity lower bounds for submodular 
optimization~\cite{BalkanskiRS16,BalkanskiRS17,BalkanskiS17,BalkanskiS18,BalkansiRS18,BalkanskiBS18,FahrbachMZ18,FahrbachMZ18a,EneN18,EneNV18} is another topic related to our work. 
For example,~\cite{BalkanskiS18} proves that $\Omega(\frac{\log{n}}{\log\log{n}})$ rounds of adaptivity are necessary for constrained submodular maximization with polynomial query complexity. 
Additionally,~\cite{BalkanskiS17} proved that no non-adaptive algorithm can obtain a better than $1/2$ approximation to submodular minimization with polynomially many queries. 
Finally, if one goes (way) beyond submodular optimization and considers minimizing a non-smooth convex function, then an $\Omgt(n^{1/3})$ lower bound on rounds of adaptivity is known
for any algorithm that makes polynomially many queries~\cite{Nemirovski94,BalkanskiS18m}. 

The appearance of the same logarithmic term in these lower bounds is not merely a coincidence. The core idea behind all these results (with the exception of~\cite{BravermanO17}) is a round-elimination type 
argument (see, e.g.~\cite{MiltersenNSW95}) that is a reminiscent of the lower bounds for the tree pointer chasing problem~\cite{ChakrabartiCM08} (see~\cite{AlonNRW15,AssadiK18} and~\cite{BalkanskiS17} for the details on, respectively, the communication lower bounds and 
the adaptivity lower bounds). As such, these results also inherit the shortcoming of the tree pointer chasing problem in having an exponential dependence on number of rounds, leading to at most logarithmic bound in the round/adaptivity lower bound. 

However, we shall also emphasize that most lower bounds mentioned above hold even for ``simpler'' variants of the problem, say by allowing approximation and/or considering simpler constraints such as cardinality constraint for submodular maximization. 
For these simpler variants, these bounds are essentially tight as there do exist approximation algorithms with round/adaptivity complexity that almost match these bounds. 
Nevertheless, once we consider ``harder'' variants of these problems, say, by switching to the exact solution in case of maximum matching or more general constraints such as $p$-systems in submodular maximization, 
no such efficient algorithms are known. At the same time, no better lower bounds are also known for these harder variants (see, e.g.~\cite{DobzinskiNO14} that posed the question of round/communication tradeoffs for finding perfect matchings in the communication model). We hope that our approach in this paper can also pave the path for obtaining stronger lower bounds in these settings.

\section{Background and Preliminaries}\label{app:prelim}

We use the following basic inequality in our proofs. 

 \begin{proposition}\label{prop:rearr}
        For any two lists of numbers $a_1 \le a_2 \le \dots \le a_n$ and $b_1 \ge b_2 \ge \dots \ge b_n$, $\sum_{i=1}^{n} a_ib_i \le \frac{1}{n}\sum_{i=1}^n a_i \cdot \sum_{i=1}^n b_i$.
    \end{proposition}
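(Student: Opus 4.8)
The statement is the (oppositely-sorted) Chebyshev sum inequality, and the plan is to prove it by the classical ``square the difference'' trick rather than by any rearrangement bookkeeping. The key observation is that for every pair of indices $i,j \in [n]$, the quantities $a_i - a_j$ and $b_i - b_j$ have opposite signs (or at least one is zero): if $i \le j$ then $a_i - a_j \le 0$ since $a$ is nondecreasing, while $b_i - b_j \ge 0$ since $b$ is nonincreasing, and symmetrically if $i \ge j$. Hence $(a_i - a_j)(b_i - b_j) \le 0$ for all $i,j$.

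The next step is to sum this inequality over all ordered pairs $(i,j) \in [n]^2$ and expand:
\begin{align*}
0 \;\ge\; \sum_{i=1}^n \sum_{j=1}^n (a_i - a_j)(b_i - b_j)
&= \sum_{i,j}\Paren{a_i b_i - a_i b_j - a_j b_i + a_j b_j} \\
&= n\sum_{i=1}^n a_i b_i - \Paren{\sum_{i=1}^n a_i}\Paren{\sum_{j=1}^n b_j} - \Paren{\sum_{j=1}^n a_j}\Paren{\sum_{i=1}^n b_i} + n\sum_{j=1}^n a_j b_j \\
&= 2n\sum_{i=1}^n a_i b_i - 2\Paren{\sum_{i=1}^n a_i}\Paren{\sum_{i=1}^n b_i}.
\end{align*}
Rearranging this gives $n\sum_{i=1}^n a_i b_i \le \Paren{\sum_{i=1}^n a_i}\Paren{\sum_{i=1}^n b_i}$, and dividing by $n$ yields the claimed bound.

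There is no real obstacle here; the only thing to be slightly careful about is the bookkeeping in the expansion of the double sum, namely that the two cross terms $\sum_{i,j} a_i b_j$ and $\sum_{i,j} a_j b_i$ are each equal to $\Paren{\sum_i a_i}\Paren{\sum_j b_j}$ and that the two ``diagonal-type'' terms $\sum_{i,j} a_i b_i$ and $\sum_{i,j} a_j b_j$ each contribute $n\sum_i a_i b_i$. This is entirely routine, so the proof is short. (If one wanted an alternative, one could instead induct on $n$ or invoke a rearrangement-inequality argument, but the symmetric double-sum argument above is the cleanest and self-contained, so that is the route I would take.)
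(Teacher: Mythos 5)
Your proof is correct, but it takes a genuinely different route from the paper. The paper invokes the rearrangement inequality (cited to Hardy--Littlewood--P\'olya): for each cyclic shift $j$, the sorted orderings give $\sum_i a_i b_i \le \sum_i a_i b_{i+j}$ (indices mod $n$), and averaging over all $n$ shifts yields $\sum_i a_i b_i \le \frac{1}{n}\sum_i a_i \cdot \sum_i b_i$ since the shifted sums together cover every product $a_i b_{i'}$ exactly once. You instead give the standard self-contained proof of Chebyshev's sum inequality: the pointwise sign observation $(a_i - a_j)(b_i - b_j) \le 0$ for oppositely sorted sequences, summed over all ordered pairs and expanded. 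Your expansion bookkeeping is right ($2n\sum_i a_i b_i - 2\sum_i a_i \sum_i b_i \le 0$). The trade-off is minor: your argument needs no external inequality as a black box, while the paper's is a one-line consequence once the rearrangement inequality is granted; both are entirely routine and equally valid here.
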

    \begin{proof}
    	The rearrangement inequality~\cite{hardy1988inequalities} states that for any list of numbers $x_1 \leq \dots \leq x_n$ and $y_1\leq \dots\leq y_n$ and any permutation $\sigma$ of $[n]$, 
	\begin{align*}
		x_1 \cdot y_n + \dots + x_n \cdot y_1 \leq x_1 \cdot y_{\sigma(1)} + \dots + x_n \cdot y_{\sigma(n)} \leq x_1 \cdot y_1 + \dots + x_n \cdot y_n.
	\end{align*}
        By rearrangement inequality, for any $0 \le j < n$,
        \[\sum_{i=1}^{n} a_i b_i \le \sum_{i=1}^n a_i b_{i+j}, \]
        where, with a slight abuse of notation, we use $b_{i+j}$ for $i+j > n$ to denote $b_{i+j - n}$. As such, 
        \[
        \sum_{i=1}^n a_i b_i \le \frac{1}{n} \sum_{j=0}^{n-1} \sum_{i=1}^n a_i b_{i+j} = \frac{1}{n} \sum_{i=1}^n (a_i \sum_{j=0}^{n-1} b_{i+j}) = \frac{1}{n} \sum_{i=1}^n a_i \cdot \sum_{i=1}^n b_i
        \qed 
        \]
         
    \end{proof}

\subsection{Background on Information Theory}\label{sec:info}

We briefly introduce some definitions and facts from information theory that are needed. We refer the interested reader to~\cite{CoverT06} for an excellent introduction to this field. 

For a random variable $\rA$, we use $\supp{\rA}$ to denote the support of $\rA$ and $\distribution{\rA}$ to denote its distribution. 
When it is clear from the context, we may abuse the notation and use $\rA$ directly instead of $\distribution{\rA}$, for example, write 
$A \sim \rA$ to mean $A \sim \distribution{\rA}$, i.e., $A$ is sampled from the distribution of random variable $\rA$. 
We denote the \emph{Shannon Entropy} of a random variable $\rA$ by
$\en{\rA}$, which is defined as: 
\begin{align}
	\en{\rA} := \sum_{A \in \supp{\rA}} \Pr\paren{\rA = A} \cdot \log{\paren{1/\Pr\paren{\rA = A}}} \label{eq:entropy}
\end{align} 
\noindent
The \emph{conditional entropy} of $\rA$ conditioned on $\rB$ is denoted by $\en{\rA \mid \rB}$ and defined as:
\begin{align}
\en{\rA \mid \rB} := \Ex_{B \sim \rB} \bracket{\en{\rA \mid \rB = B}}, \label{eq:cond-entropy}
\end{align}
where 
$\en{\rA \mid \rB = B}$ is defined in a standard way by using the distribution of $\rA$ conditioned on the event $\rB = B$ in Eq~(\ref{eq:entropy}). 
The \emph{mutual information} of two random variables $\rA$ and $\rB$ is denoted by
$\mi{\rA}{\rB}$ and is defined as:
\begin{align}
\mi{\rA}{\rB} := \en{\rA} - \en{\rA \mid  \rB} = \en{\rB} - \en{\rB \mid  \rA} = \mi{\rB}{\rA}. \label{eq:mi}
\end{align}
\noindent
The \emph{conditional mutual information} $\mi{\rA}{\rB \mid \rC}$ is $\en{\rA \mid \rC} - \en{\rA \mid \rB,\rC}$ and hence by linearity of expectation:
\begin{align}
	\mi{\rA}{\rB \mid \rC} = \Ex_{C \sim \rC} \bracket{\mi{\rA}{\rB \mid \rC = C}}. \label{eq:cond-mi}
\end{align} 

When it may lead to confusion, we use the subscript $\dist$ in $\HH_{\dist}$ and $\II_{\dist}$ to mean that the random variables in these
terms are distributed according to the distribution $\dist$. 


\subsubsection{Useful Properties of Entropy and Mutual Information}\label{sec:prop-en-mi}

We shall use the following basic properties of entropy and mutual information throughout. 

\begin{fact}[cf.~\cite{CoverT06}; Chapter~2]\label{fact:it-facts}
  Let $\rA$, $\rB$, $\rC$, and $\rD$ be four (possibly correlated) random variables.
   \begin{enumerate}
  \item \label{part:uniform} $0 \leq \en{\rA} \leq \log{\card{\supp{\rA}}}$. The right equality holds
    iff $\distribution{\rA}$ is uniform.
  \item \label{part:info-zero} $\mi{\rA}{\rB} \geq 0$. The equality holds iff $\rA$ and
    $\rB$ are \emph{independent}.
  \item \label{part:cond-reduce} \emph{Conditioning on a random variable can only reduce the entropy}:
    $\en{\rA \mid \rB,\rC} \leq \en{\rA \mid  \rB}$.  The equality holds iff $\rA \perp \rC \mid \rB$.
    \item \label{part:sub-additivity} \emph{Subadditivity of entropy}: $\en{\rA,\rB \mid \rC}
    \leq \en{\rA \mid C} + \en{\rB \mid  \rC}$.
   \item \label{part:ent-chain-rule} \emph{Chain rule for entropy}: $\en{\rA,\rB \mid \rC} = \en{\rA \mid \rC} + \en{\rB \mid \rC,\rA}$.
  \item \label{part:chain-rule} \emph{Chain rule for mutual information}: $\mi{\rA,\rB}{\rC \mid \rD} = \mi{\rA}{\rC \mid \rD} + \mi{\rB}{\rC \mid  \rA,\rD}$.
   \end{enumerate}
\end{fact}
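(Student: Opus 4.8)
The plan is to derive all six items from two elementary ingredients: the non-negativity of Kullback--Leibler divergence (the \emph{information inequality}), and the additivity of $\log$ over products. Both follow from Jensen's inequality applied to the strictly concave function $\log$, and since all the random variables here have finite support, no convergence issues arise. I would begin by recording the information inequality: for distributions $\mu,\nu$ on a common finite set with $\supp{\mu}\subseteq\supp{\nu}$, one has $\kl{\mu}{\nu}\ge 0$, with equality iff $\mu=\nu$; this is immediate from Jensen applied to $\log$ (equivalently, the log-sum inequality).

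Part~(\ref{part:uniform}) then follows quickly: the bound $\en{\rA}\ge 0$ holds termwise, since $\Pr(\rA=A)\le 1$ forces each summand $\log(1/\Pr(\rA=A))\ge 0$ in~(\ref{eq:entropy}), while $\log\card{\supp{\rA}}-\en{\rA}=\kl{\distribution{\rA}}{\unif_{\supp{\rA}}}\ge 0$ gives the upper bound, with equality exactly when $\distribution{\rA}$ is uniform. Part~(\ref{part:info-zero}) follows by writing $\mi{\rA}{\rB}=\kl{\distribution{(\rA,\rB)}}{\distribution{\rA}\otimes\distribution{\rB}}$ and invoking the information inequality; equality holds iff the joint distribution equals the product of the marginals, i.e.\ $\rA\perp\rB$. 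Next I would handle the conditional statements by conditioning pointwise and averaging. For Part~(\ref{part:cond-reduce}), using the definition of conditional mutual information and~(\ref{eq:cond-mi}), $\en{\rA\mid\rB}-\en{\rA\mid\rB,\rC}=\mi{\rA}{\rC\mid\rB}=\Ex_{B}[\mi{\rA}{\rC\mid\rB=B}]\ge 0$ by Part~(\ref{part:info-zero}) applied to the conditional laws; being an average of non-negative terms, it vanishes iff $\mi{\rA}{\rC\mid\rB=B}=0$ for $\distribution{\rB}$-almost every $B$, i.e.\ iff $\rA\perp\rC\mid\rB$.

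The remaining parts are pure bookkeeping. Part~(\ref{part:ent-chain-rule}) is a direct computation: substituting $\Pr(\rA=A,\rB=B\mid\rC=C)=\Pr(\rA=A\mid\rC=C)\cdot\Pr(\rB=B\mid\rA=A,\rC=C)$ into the definitions~(\ref{eq:entropy})--(\ref{eq:cond-entropy}) of $\en{\cdot\mid\cdot}$ splits the single $\log$ into two, which regroup into $\en{\rA\mid\rC}$ and $\en{\rB\mid\rC,\rA}$. Part~(\ref{part:sub-additivity}) is then immediate: $\en{\rA,\rB\mid\rC}=\en{\rA\mid\rC}+\en{\rB\mid\rC,\rA}\le\en{\rA\mid\rC}+\en{\rB\mid\rC}$, using Part~(\ref{part:cond-reduce}) to drop the extra conditioning on $\rA$. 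Finally, Part~(\ref{part:chain-rule}) follows by expanding each mutual information via $\mi{\rX}{\rY\mid\rZ}=\en{\rX\mid\rZ}-\en{\rX\mid\rY,\rZ}$ and applying the entropy chain rule~(\ref{part:ent-chain-rule}) to $\en{\rA,\rB\mid\rD}$ and to $\en{\rA,\rB\mid\rC,\rD}$; all terms then cancel in pairs.

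There is no genuine obstacle here — each item is a one- or two-line consequence of Jensen's inequality plus bookkeeping, which is precisely why the statement is simply cited to \cite{CoverT06}. The only points meriting care are tracking the equality conditions (the conditional parts require the ``almost every $B$'' qualifier) and checking the support hypotheses in the information inequality; since every random variable in this paper is discrete with finite support, these are automatic.
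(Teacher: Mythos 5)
Your derivation is correct and is exactly the standard textbook argument (non-negativity of KL divergence plus the chain rules) found in the reference the paper cites; the paper itself offers no proof of this fact, so there is nothing further to compare. The equality conditions and the "almost every $B$" qualifier are handled properly, and all six items check out against the paper's definitions in Appendix~\ref{sec:info}.
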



\noindent
We also use the following two standard propositions regarding the effect of conditioning on mutual information.

\begin{proposition}\label{prop:info-increase}
  For random variables $\rA, \rB, \rC, \rD$, if $\rA \perp \rD \mid \rC$, then, 
  \[\mi{\rA}{\rB \mid \rC} \leq \mi{\rA}{\rB \mid  \rC,  \rD}.\]
\end{proposition}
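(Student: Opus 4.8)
The plan is to prove the inequality $\mi{\rA}{\rB \mid \rC} \leq \mi{\rA}{\rB \mid \rC, \rD}$ under the hypothesis $\rA \perp \rD \mid \rC$ by manipulating the relevant entropy terms via the chain rule for mutual information. The key identity to start from is the chain rule (\itfacts{chain-rule}), which lets us expand $\mi{\rA}{\rB,\rD \mid \rC}$ in two different ways:
\begin{align*}
\mi{\rA}{\rB,\rD \mid \rC} &= \mi{\rA}{\rD \mid \rC} + \mi{\rA}{\rB \mid \rC,\rD} \\
&= \mi{\rA}{\rB \mid \rC} + \mi{\rA}{\rD \mid \rC,\rB}.
\end{align*}

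The first step is to invoke the hypothesis: since $\rA \perp \rD \mid \rC$, we have $\mi{\rA}{\rD \mid \rC} = 0$ by the characterization of independence in \itfacts{info-zero}. Substituting this into the first expansion gives $\mi{\rA}{\rB,\rD \mid \rC} = \mi{\rA}{\rB \mid \rC,\rD}$. The second step is to observe that $\mi{\rA}{\rD \mid \rC,\rB} \geq 0$, again by \itfacts{info-zero} (non-negativity of conditional mutual information, which follows by applying non-negativity pointwise over the conditioning and using Eq~(\ref{eq:cond-mi})). Combining the second expansion with this non-negativity yields $\mi{\rA}{\rB,\rD \mid \rC} \geq \mi{\rA}{\rB \mid \rC}$. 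Chaining the two displays gives $\mi{\rA}{\rB \mid \rC,\rD} = \mi{\rA}{\rB,\rD \mid \rC} \geq \mi{\rA}{\rB \mid \rC}$, which is exactly the claim.

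There is no real obstacle here; the only thing to be careful about is the direction in which the chain rule is applied and making sure the term that gets dropped (not the term that gets zeroed) is the non-negative one, i.e. keeping track of which variable plays the role of the ``extra'' piece in each expansion. One could alternatively phrase the whole argument in terms of conditional entropies: $\mi{\rA}{\rB\mid\rC,\rD} = \en{\rA\mid\rC,\rD} - \en{\rA\mid\rB,\rC,\rD}$, use $\en{\rA\mid\rC,\rD} = \en{\rA\mid\rC}$ (equality in \itfacts{cond-reduce}, since $\rA\perp\rD\mid\rC$), and then use $\en{\rA\mid\rB,\rC,\rD}\leq\en{\rA\mid\rB,\rC}$ (the inequality in \itfacts{cond-reduce}) to conclude. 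Either route is a two-line argument; I would present the entropy-based version as it makes the use of the hypothesis most transparent.
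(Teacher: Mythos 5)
Your proposal is correct. The entropy-based variant you sketch in your last paragraph is exactly the paper's proof: it writes $\mi{\rA}{\rB\mid\rC} = \en{\rA\mid\rC} - \en{\rA\mid\rB,\rC}$, uses the equality case of \itfacts{cond-reduce} to get $\en{\rA\mid\rC} = \en{\rA\mid\rC,\rD}$ from the hypothesis $\rA\perp\rD\mid\rC$, and then drops to $\en{\rA\mid\rB,\rC,\rD}\leq\en{\rA\mid\rB,\rC}$. Your primary chain-rule argument (expanding $\mi{\rA}{\rB,\rD\mid\rC}$ two ways, zeroing $\mi{\rA}{\rD\mid\rC}$ by the hypothesis, and discarding the non-negative term $\mi{\rA}{\rD\mid\rB,\rC}$) is an equally valid and essentially equivalent route; either presentation is fine.
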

 \begin{proof}
  Since $\rA$ and $\rD$ are independent conditioned on $\rC$, by
  \itfacts{cond-reduce}, $\HH(\rA \mid  \rC) = \HH(\rA \mid \rC, \rD)$ and $\HH(\rA \mid  \rC, \rB) \ge \HH(\rA \mid  \rC, \rB, \rD)$.  We have,
	 \begin{align*}
	  \mi{\rA}{\rB \mid  \rC} &= \HH(\rA \mid \rC) - \HH(\rA \mid \rC, \rB) = \HH(\rA \mid  \rC, \rD) - \HH(\rA \mid \rC, \rB) \\
	  &\leq \HH(\rA \mid \rC, \rD) - \HH(\rA \mid \rC, \rB, \rD) = \mi{\rA}{\rB \mid \rC, \rD}. \qed
	\end{align*}
	
\end{proof}

\begin{proposition}\label{prop:info-decrease}
  For random variables $\rA, \rB, \rC,\rD$, if $ \rA \perp \rD \mid \rB,\rC$, then, 
  \[\mi{\rA}{\rB \mid \rC} \geq \mi{\rA}{\rB \mid \rC, \rD}.\]
\end{proposition}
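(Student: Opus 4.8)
The plan is to mirror the proof of Proposition~\ref{prop:info-increase}, exploiting the chain rule for mutual information together with the hypothesis that $\rA$ and $\rD$ are conditionally independent given $(\rB,\rC)$. The point is simply that moving $\rD$ into the conditioning of $\mi{\rA}{\rB \mid \rC}$ subtracts off a nonnegative ``side information'' term.

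Concretely, I would first expand $\mi{\rA}{\rB,\rD \mid \rC}$ in the two possible orders using the chain rule (\itfacts{chain-rule}):
\[
\mi{\rA}{\rB,\rD \mid \rC} = \mi{\rA}{\rD \mid \rC} + \mi{\rA}{\rB \mid \rC,\rD} = \mi{\rA}{\rB \mid \rC} + \mi{\rA}{\rD \mid \rC,\rB}.
\]
The hypothesis $\rA \perp \rD \mid \rB,\rC$ gives $\mi{\rA}{\rD \mid \rC,\rB} = 0$ by \itfacts{info-zero}, so the two expansions collapse to
\[
\mi{\rA}{\rB \mid \rC} = \mi{\rA}{\rD \mid \rC} + \mi{\rA}{\rB \mid \rC,\rD}.
\]
Since $\mi{\rA}{\rD \mid \rC} \geq 0$ (again \itfacts{info-zero}), the claimed inequality $\mi{\rA}{\rB \mid \rC} \geq \mi{\rA}{\rB \mid \rC,\rD}$ follows immediately.

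Alternatively, and essentially equivalently, I could argue directly with entropies in the style of Proposition~\ref{prop:info-increase}: the conditional independence $\rA \perp \rD \mid \rB,\rC$ yields $\HH(\rA \mid \rB,\rC) = \HH(\rA \mid \rB,\rC,\rD)$ by \itfacts{cond-reduce}, while conditioning on an extra variable can only reduce entropy, so $\HH(\rA \mid \rC) \geq \HH(\rA \mid \rC,\rD)$; subtracting these and unfolding the definition of conditional mutual information gives the bound. There is no genuine obstacle in this proof — the only subtlety to watch is that the conditional-independence hypothesis must be applied to the term $\mi{\rA}{\rD \mid \rC,\rB}$ (which conditions on $\rB$), not to $\mi{\rA}{\rD \mid \rC}$, which in general need not vanish.
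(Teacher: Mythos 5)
Your proposal is correct. Your primary argument — expanding $\mi{\rA}{\rB,\rD \mid \rC}$ by the chain rule in both orders, killing the term $\mi{\rA}{\rD \mid \rB,\rC}$ via the conditional-independence hypothesis, and dropping the nonnegative term $\mi{\rA}{\rD \mid \rC}$ — is a slightly different (and arguably cleaner) route than the paper's, which works directly with entropies: it uses $\HH(\rA \mid \rB,\rC) = \HH(\rA \mid \rB,\rC,\rD)$ together with $\HH(\rA \mid \rC) \geq \HH(\rA \mid \rD,\rC)$ and subtracts. Your alternative entropy argument is exactly the paper's proof. The two are of course equivalent identities; the chain-rule version has the minor advantage of exhibiting the exact deficit $\mi{\rA}{\rD \mid \rC}$ by which the inequality can be strict, while the paper's version parallels its proof of Proposition~\ref{prop:info-increase} more closely. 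Your closing caveat — that the independence hypothesis must be applied to the term conditioned on $\rB$, not to $\mi{\rA}{\rD \mid \rC}$ — is exactly the right thing to watch and is handled correctly.
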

 \begin{proof}
 Since $\rA \perp \rD \mid \rB,\rC$, by \itfacts{cond-reduce}, $\HH(\rA \mid \rB,\rC) = \HH(\rA \mid \rB,\rC,\rD)$. Moreover, since conditioning can only reduce the entropy (again by \itfacts{cond-reduce}), 
  \begin{align*}
 	\mi{\rA}{\rB \mid  \rC} &= \HH(\rA \mid \rC) - \HH(\rA \mid \rB,\rC) \geq \HH(\rA \mid \rD,\rC) - \HH(\rA \mid \rB,\rC) \\
	&= \HH(\rA \mid \rD,\rC) - \HH(\rA \mid \rB,\rC,\rD) = \mi{\rA}{\rB \mid \rC,\rD}. \qed
 \end{align*}
 
\end{proof}
\noindent
Finally, we also use the following simple inequality that states that conditioning on a random variable can only increase the mutual information
by the entropy of the conditioned variable. 

\begin{proposition}\label{prop:bar-hopping}
  For random variables $\rA, \rB$ and $\rC$, 
$
  \mi{\rA}{\rB \mid \rC} \leq \mi{\rA}{\rB} + \en{\rC}.
  $ 
\end{proposition}
\begin{proof}
	By chain rule for mutual information (\itfacts{chain-rule}), we can write:
	\begin{align*}
		\mi{\rA}{\rB \mid \rC} &= \mi{\rA}{\rB,\rC} - \mi{\rA}{\rC} = \mi{\rA}{\rB} + \mi{\rA}{\rC \mid \rB} - \mi{\rA}{\rC} \\
		&\leq \mi{\rA}{\rB} + \en{\rC \mid \rB} \leq \mi{\rA}{\rB} + \en{\rC},
	\end{align*}
	where the first two equalities are by chain rule (\itfacts{chain-rule}), the second inequality is by definition of mutual information and its positivity (\itfacts{info-zero}), and the last one is because conditioning 
	can only reduce the entropy (\itfacts{cond-reduce}).
\end{proof}

\subsubsection{Measures of Distance Between Distributions}\label{sec:prob-distance}

We shall make use of several measures of distance (or divergence) between distributions in our proofs. We define these measures here and present their main properties that we use in this paper. 

\paragraph{KL-divergence.} For two distributions $\mu$ and $\nu$, the \emph{Kullback-Leibler divergence} between $\mu$ and $\nu$ is denoted by $\kl{\mu}{\nu}$ and defined as: 
\begin{align}
\kl{\mu}{\nu}:= \Ex_{a \sim \mu}\Bracket{\log\frac{\Pr_\mu(a)}{\Pr_{\nu}(a)}}. \label{eq:kl}
\end{align}
We have the following relation between mutual information and KL-divergence. 
\begin{fact}\label{fact:kl-info}
	For random variables $\rA,\rB,\rC$, 
	\[\mi{\rA}{\rB \mid \rC} = \Ex_{(b,c) \sim {(\rB,\rC)}}\Bracket{ \kl{\distribution{\rA \mid \rC=c}}{\distribution{\rA \mid \rB=b,\rC=c}}}.\] 
\end{fact}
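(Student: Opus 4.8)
The plan is to prove the identity by unpacking both sides from the definitions of entropy, mutual information and KL-divergence recorded above. I would carry out the computation for an arbitrary fixed value $c$ of $\rC$; the stated average over $(\rB,\rC)$ then follows at once by taking the expectation over $c\sim\rC$ via Eq~(\ref{eq:cond-mi}). The manipulation is routine, so I only sketch it.

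Writing $p(\cdot)$ for the relevant conditional probability mass functions, I would first expand $\mi{\rA}{\rB\mid\rC=c} = \en{\rA\mid\rC=c} - \en{\rA\mid\rB,\rC=c}$ using Eq~(\ref{eq:mi}) and Eq~(\ref{eq:cond-mi}). Unpacking the two entropies via Eq~(\ref{eq:entropy}), and using the chain rule for entropy (\itfacts{ent-chain-rule}) to express $\en{\rA\mid\rB,\rC=c}$ as an expectation over $(\rA,\rB)\mid\rC=c$, rewrites this as $\Ex_{(a,b)\sim\distribution{(\rA,\rB)\mid\rC=c}}\Bracket{\log\Paren{p(a,b\mid c)\,/\,p(a\mid c)\,p(b\mid c)}}$. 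Bayes' rule $p(a,b\mid c) = p(a\mid b,c)\,p(b\mid c)$ then collapses the log-ratio to $\log\Paren{p(a\mid b,c)/p(a\mid c)}$. Finally I would factor the outer expectation as: draw $b\sim\distribution{\rB\mid\rC=c}$, then $a\sim\distribution{\rA\mid\rB=b,\rC=c}$; for each fixed $(b,c)$ the resulting inner expectation is, by the definition of KL-divergence in Eq~(\ref{eq:kl}), exactly a divergence between the conditional distribution $\distribution{\rA\mid\rB=b,\rC=c}$ (the one sampled from, hence the first argument of $\DD(\cdot\,\|\,\cdot)$) and the marginal $\distribution{\rA\mid\rC=c}$; averaging over $b\sim\distribution{\rB\mid\rC=c}$ and then over $c\sim\rC$ produces the claimed formula. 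This is also the shape in which the identity is applied later, e.g.\ in the displayed computations in the proofs of Lemmas~\ref{lem:protDisj-correctness} and~\ref{lem:PI-ic}.

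There is no real obstacle in this argument. The one point that must be handled with care is keeping track of the distribution under which each inner expectation is taken, since this — via the definition Eq~(\ref{eq:kl}) of $\DD(\cdot\,\|\,\cdot)$ — is precisely what pins down the order of its two arguments; beyond that bookkeeping, the statement is an immediate consequence of the definitions of $\en{\cdot}$, $\II(\cdot\,;\cdot\mid\cdot)$ and $\DD(\cdot\,\|\,\cdot)$ together with the elementary entropy identities collected in Fact~\ref{fact:it-facts}.
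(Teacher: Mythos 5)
Your computation is correct and is the standard one; the paper states Fact~\ref{fact:kl-info} without proof, so there is no alternative argument to compare against --- expanding $\mi{\rA}{\rB\mid\rC}$ pointwise, using Bayes' rule to collapse the log-ratio to $\log\paren{p(a\mid b,c)/p(a\mid c)}$, and regrouping the expectation is exactly the intended derivation. One point is worth making explicit: the identity your bookkeeping yields is
\[
\mi{\rA}{\rB\mid\rC} \;=\; \Ex_{(b,c)\sim(\rB,\rC)}\Bracket{\kl{\distribution{\rA\mid\rB=b,\rC=c}}{\distribution{\rA\mid\rC=c}}},
\]
i.e.\ with the conditional law $\distribution{\rA\mid\rB=b,\rC=c}$ --- the distribution being sampled from --- in the first slot of $\DD(\cdot\,\|\,\cdot)$, and you are right that Eq~(\ref{eq:kl}) forces this orientation. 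This is the transpose of the order in which the fact is typeset in the statement above; as written, with $\distribution{\rA\mid\rC=c}$ first, the averaged divergence is not in general equal to the mutual information (it can even be infinite when $\distribution{\rA\mid\rB=b,\rC=c}$ has strictly smaller support than $\distribution{\rA\mid\rC=c}$), so the typeset order should be read as a transcription slip rather than the claim you need to prove. Your orientation is also the one actually used downstream, in the proof of Claim~\ref{clm:disj2-h-terms-1} inside Lemma~\ref{lem:PI-ic}, where the conditional transcript distribution $\rProt_{\disjin{1}{x_2}{1}{0}}$ appears as the first argument and the mixture $\rProt_{\disjin{1}{*}{1}{0}}$ as the second. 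One cosmetic slip on your side: Fact~\ref{fact:kl-info} is not invoked in Lemma~\ref{lem:protDisj-correctness} (that proof is purely about total variation distance), so that cross-reference should be dropped, though it does not affect your argument.
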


\paragraph{Total variation distance.} We denote the total variation distance between two distributions $\mu$ and $\nu$ on the same 
support $\Omega$ by $\tvd{\mu}{\nu}$, defined as: 
\begin{align}
\tvd{\mu}{\nu}:= \max_{\Omega' \subseteq \Omega} \paren{\mu(\Omega')-\nu(\Omega')} = \frac{1}{2} \cdot \sum_{x \in \Omega} \card{\mu(x) - \nu(x)}.  \label{eq:tvd}
\end{align}
\noindent
We use the following basic properties of total variation distance. 
\begin{fact}\label{fact:tvd-small}
	Suppose $\mu$ and $\nu$ are two distributions for $\event$, then, 
	$
	\Pr_{\mu}(\event) \leq \Pr_{\nu}(\event) + \tvd{\mu}{\nu}.
$
\end{fact}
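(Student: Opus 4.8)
The plan is to derive the inequality directly from the definition of total variation distance recorded in Eq~(\ref{eq:tvd}), so essentially no machinery is needed. First I would note that, since $\mu$ and $\nu$ are distributions over a common support $\Omega$, the event $\event$ is a (measurable) subset of $\Omega$, and hence $\Pr_{\mu}(\event) = \mu(\event)$ and $\Pr_{\nu}(\event) = \nu(\event)$ by definition of the probability of an event.

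Next I would recall the first expression in Eq~(\ref{eq:tvd}), namely $\tvd{\mu}{\nu} = \max_{\Omega' \subseteq \Omega}\paren{\mu(\Omega') - \nu(\Omega')}$, and instantiate the maximum at the specific choice $\Omega' = \event$. This gives $\mu(\event) - \nu(\event) \leq \tvd{\mu}{\nu}$, and rearranging yields $\Pr_{\mu}(\event) \leq \Pr_{\nu}(\event) + \tvd{\mu}{\nu}$, which is exactly the claim. Alternatively, one can obtain the same bound from the $\ell_1$ form $\tvd{\mu}{\nu} = \frac12\sum_{x}\card{\mu(x)-\nu(x)}$ by writing $\mu(\event)-\nu(\event) = \sum_{x \in \event}(\mu(x)-\nu(x)) \leq \sum_{x\,:\,\mu(x)>\nu(x)}(\mu(x)-\nu(x)) \leq \tvd{\mu}{\nu}$, where the last step uses that the positive and negative parts of $\sum_x(\mu(x)-\nu(x)) = 0$ each have magnitude $\tvd{\mu}{\nu}$.

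There is no genuine obstacle here; the only point requiring a moment's care is the compatibility of the two equivalent formulations of $\tvd{\cdot}{\cdot}$ in Eq~(\ref{eq:tvd}) and verifying that $\event$ is an admissible subset in the optimization form, both of which are immediate. I would present the one-line argument via the optimization form as the proof.
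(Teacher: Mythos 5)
Your proof is correct: the paper states this as a standard fact without proof, and your one-line derivation — instantiating the optimization form of Eq~(\ref{eq:tvd}) at $\Omega' = \event$ and rearranging — is exactly the canonical argument. The alternative $\ell_1$-form derivation you sketch is also fine, though unnecessary given the first.
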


The following Pinskers' inequality bounds the total variation distance between two distributions based on their KL-divergence, 

\begin{fact}[Pinsker's inequality]\label{fact:pinskers}
	For any distributions $\mu$ and $\nu$, 
	$
	\tvd{\mu}{\nu} \leq \sqrt{\frac{1}{2} \cdot \kl{\mu}{\nu}}.
	$ 
\end{fact}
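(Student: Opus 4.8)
The final statement is the classical Pinsker inequality $\tvd{\mu}{\nu}\le\sqrt{\tfrac12\,\kl{\mu}{\nu}}$, and the plan is the textbook two-step argument: reduce to distributions supported on just two points, then dispatch that two-point case by a short convexity computation. I read $\log$ in the definition of $\kl{\cdot}{\cdot}$ as the natural logarithm (under any other base the asserted inequality only becomes weaker, since $\ln 2<1$), and I assume $\kl{\mu}{\nu}<\infty$, since otherwise there is nothing to prove.

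\emph{Reduction to two points.} Let $\Omega$ be the common support and set $\Omega':=\set{x\in\Omega:\mu(x)\ge\nu(x)}$, so that by Eq~(\ref{eq:tvd}) we have $\tvd{\mu}{\nu}=\mu(\Omega')-\nu(\Omega')$. Writing $p:=\mu(\Omega')$ and $q:=\nu(\Omega')$, consider the pushforwards of $\mu$ and $\nu$ under the indicator map $x\mapsto\mathbf{1}[x\in\Omega']$, namely the Bernoulli distributions $\mathrm{Bern}(p)$ and $\mathrm{Bern}(q)$. The key fact I would invoke is the data-processing (monotonicity-under-coarsening) property of KL-divergence, $\kl{\mu}{\nu}\ge\kl{\mathrm{Bern}(p)}{\mathrm{Bern}(q)}$; this follows from the log-sum inequality $\sum_j a_j\log\frac{a_j}{b_j}\ge(\sum_j a_j)\log\frac{\sum_j a_j}{\sum_j b_j}$ applied separately to the atoms inside $\Omega'$ and to those inside $\Omega\setminus\Omega'$, and the log-sum inequality is itself just Jensen's inequality for the convex function $t\mapsto t\log t$. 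Hence it suffices to prove the two-point bound $\kl{\mathrm{Bern}(p)}{\mathrm{Bern}(q)}\ge 2(p-q)^2$.

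\emph{The two-point inequality.} Fix $q\in(0,1)$ and define on $[0,1]$
\[
  \phi(p)\;:=\;p\log\frac{p}{q}+(1-p)\log\frac{1-p}{1-q}\;-\;2(p-q)^2 .
\]
I would then simply compute: $\phi(q)=0$; $\phi'(q)=0$; and $\phi''(p)=\frac{1}{p(1-p)}-4\ge 0$ for all $p\in(0,1)$, using $p(1-p)\le\tfrac14$. Convexity of $\phi$ together with $\phi(q)=\phi'(q)=0$ forces $\phi\ge 0$ on $[0,1]$, i.e.\ $\kl{\mathrm{Bern}(p)}{\mathrm{Bern}(q)}\ge 2(p-q)^2$. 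Combining this with the reduction step,
\[
  \tvd{\mu}{\nu}^2\;=\;(p-q)^2\;\le\;\tfrac12\,\kl{\mathrm{Bern}(p)}{\mathrm{Bern}(q)}\;\le\;\tfrac12\,\kl{\mu}{\nu},
\]
and taking square roots gives the claim.

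\emph{Where the (mild) difficulty lies.} There is no genuine obstacle here: the proof is elementary. The only items needing a modicum of care are the degenerate cases — atoms with $\nu(x)=0$, or $q\in\set{0,1}$, where some summands are $0\log 0=0$ or $\kl{\cdot}{\cdot}=\infty$ and the inequality holds trivially — and being explicit that it is precisely the data-processing inequality of the reduction step that legitimizes collapsing an arbitrary pair of distributions to a single pair of Bernoullis before running the calculus argument.
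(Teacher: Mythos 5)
Your proof is correct: the reduction to the Bernoulli case via the log-sum (data-processing) inequality followed by the second-derivative computation $\phi''(p)=\frac{1}{p(1-p)}-4\ge 0$ is the standard textbook argument for Pinsker's inequality, and your handling of the degenerate cases and of the logarithm base is fine. Note that the paper does not prove this fact at all --- it is stated as a known result (Fact~\ref{fact:pinskers}) in the preliminaries --- so there is no in-paper argument to compare against; your write-up supplies exactly the classical proof one would cite.
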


\paragraph{Hellinger distance.} For two distributions $\mu$ and $\nu$, the \emph{Hellinger distance} between $\mu$ and $\nu$ is denoted by $\hd{\mu}{\nu}$ and is defined as:
\begin{align}
	\hd{\mu}{\nu} := \sqrt{\frac{1}{2}\sum_{x \in \Omega}(\sqrt{\mu(x)}-\sqrt{\nu(x)})^2} = \sqrt{1-\sum_{x \in \Omega}\sqrt{\mu(x)\nu(x)}}. \label{eq:hd}
\end{align}
The following inequalities relate Hellinger distance and total variation distance (the proof follows from Cauchy-Schwartz). 
\begin{fact}\label{fact:hellinger-tvd}
	For any distributions $\mu$ and $\nu$, $\hdt{\mu}{\nu} \leq \tvd{\mu}{\nu} \leq \sqrt{2} \cdot \hd{\mu}{\nu}.$
\end{fact}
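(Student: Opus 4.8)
The plan is to establish the two inequalities $\hdt{\mu}{\nu} \le \tvd{\mu}{\nu}$ and $\tvd{\mu}{\nu} \le \sqrt{2}\cdot\hd{\mu}{\nu}$ separately, both starting from the defining sums in Eq~(\ref{eq:tvd}) and Eq~(\ref{eq:hd}) and using the elementary factorization $\mu(x)-\nu(x) = \paren{\sqrt{\mu(x)}-\sqrt{\nu(x)}}\paren{\sqrt{\mu(x)}+\sqrt{\nu(x)}}$, valid since probabilities are nonnegative. I will state the argument for discrete $\Omega$ as in the paper; the continuous case is identical with integrals and densities.

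For the lower bound I would argue term by term. Writing $\card{\mu(x)-\nu(x)} = \card{\sqrt{\mu(x)}-\sqrt{\nu(x)}}\cdot\paren{\sqrt{\mu(x)}+\sqrt{\nu(x)}}$ and using the pointwise bound $\card{\sqrt{\mu(x)}-\sqrt{\nu(x)}} \le \sqrt{\mu(x)}+\sqrt{\nu(x)}$ gives $\paren{\sqrt{\mu(x)}-\sqrt{\nu(x)}}^2 \le \card{\mu(x)-\nu(x)}$ for every $x\in\Omega$. Summing over $x$ and multiplying by $1/2$ yields exactly $\hdt{\mu}{\nu}\le\tvd{\mu}{\nu}$ by the definitions of the two quantities.

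For the upper bound I would again use the factorization and then apply Cauchy--Schwarz to split $\sum_x \card{\sqrt{\mu(x)}-\sqrt{\nu(x)}}\paren{\sqrt{\mu(x)}+\sqrt{\nu(x)}}$ as $\sqrt{\sum_x \paren{\sqrt{\mu(x)}-\sqrt{\nu(x)}}^2}\cdot\sqrt{\sum_x \paren{\sqrt{\mu(x)}+\sqrt{\nu(x)}}^2}$. The first factor is $\sqrt{2}\cdot\hd{\mu}{\nu}$ by definition. For the second, expanding the square and using $\sum_x\mu(x)=\sum_x\nu(x)=1$ gives $\sum_x\paren{\sqrt{\mu(x)}+\sqrt{\nu(x)}}^2 = 2 + 2\sum_x\sqrt{\mu(x)\nu(x)} \le 4$, where the last step is AM--GM ($\sqrt{\mu(x)\nu(x)}\le\tfrac12(\mu(x)+\nu(x))$), so the second factor is at most $2$. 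Combining, $\tvd{\mu}{\nu} = \tfrac12\sum_x\card{\mu(x)-\nu(x)} \le \tfrac12\cdot\sqrt{2}\cdot\hd{\mu}{\nu}\cdot 2 = \sqrt{2}\cdot\hd{\mu}{\nu}$.

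There is no real obstacle here beyond bookkeeping: the constant $\sqrt 2$ and the distinction between $\hd{\mu}{\nu}$ and its square $\hdt{\mu}{\nu}$. The only point worth a sentence of care is that all three pointwise inequalities (the factorization, $\card{\sqrt a-\sqrt b}\le\sqrt a+\sqrt b$, and AM--GM) need only nonnegativity of the masses, and that every sum ranges over the common support $\Omega$ so all terms are well defined.
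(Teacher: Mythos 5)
Your proof is correct and follows exactly the standard route the paper alludes to (it states only that the fact "follows from Cauchy--Schwarz"): the pointwise factorization with $\card{\sqrt{\mu(x)}-\sqrt{\nu(x)}}\le\sqrt{\mu(x)}+\sqrt{\nu(x)}$ gives the lower bound, and Cauchy--Schwarz together with $\sum_x(\sqrt{\mu(x)}+\sqrt{\nu(x)})^2\le 4$ gives the upper bound, with the constants $\sqrt{2}$ handled correctly.
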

One can also relate Hellinger distance to the KL-divergence as follows. 
\begin{fact}[cf.~\cite{Lin91}]\label{fact:hellinger-kl}
	For any distributions $\mu$ and $\nu$, $\hdt{\mu}{\nu} \leq \frac{1}{2} \cdot \paren{\kl{\mu}{\frac{\mu+\nu}{2}} + \kl{\nu}{\frac{\mu+\nu}{2}}}$.
\end{fact}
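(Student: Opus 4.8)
The plan is to deduce Fact~\ref{fact:hellinger-kl} from a single scalar estimate on the binary entropy function; here and below $\log$ denotes $\log_2$, consistent with the entropy convention in Eq~(\ref{eq:entropy}). Write $m := \tfrac{\mu+\nu}{2}$. By Eq~(\ref{eq:hd}) and $\sum_x\mu(x)=\sum_x\nu(x)=1$,
\[
\hdt{\mu}{\nu}=1-\sum_{x}\sqrt{\mu(x)\nu(x)}=\sum_{x}\Paren{\tfrac{\mu(x)+\nu(x)}{2}-\sqrt{\mu(x)\nu(x)}},
\]
while $\tfrac12\bigl(\kl{\mu}{m}+\kl{\nu}{m}\bigr)=\sum_{x}\tfrac12\Paren{\mu(x)\log\tfrac{2\mu(x)}{\mu(x)+\nu(x)}+\nu(x)\log\tfrac{2\nu(x)}{\mu(x)+\nu(x)}}$. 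So it suffices to prove the inequality coordinate by coordinate. Fixing $x$ and putting $p:=\mu(x)$, $q:=\nu(x)$, $s:=p+q$, $t:=p/s\in[0,1]$ (coordinates with $p=q=0$ contribute nothing on both sides), a direct substitution shows that the $x$-term on the left equals $\tfrac s2\bigl(1-2\sqrt{t(1-t)}\bigr)=\tfrac s2(\sqrt t-\sqrt{1-t})^2$ and the $x$-term on the right equals $\tfrac s2\bigl(1-H_2(t)\bigr)$, where $H_2(t):=-t\log t-(1-t)\log(1-t)$. Hence the whole statement reduces to
\[
H_2(t)\le 2\sqrt{t(1-t)}\qquad\text{for all }t\in[0,1].
\]

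Establishing this scalar bound is the crux. I would first note that equality holds at $t\in\{0,\tfrac12,1\}$, and that by the symmetry $t\leftrightarrow 1-t$ of both sides it is enough to handle $t\in(0,\tfrac12)$. The main obstacle is that the cheapest convexity arguments overshoot the constant: bounding $-\ln u\le u^{-1/2}-u^{1/2}$ termwise only yields $H_2(t)\le\tfrac{\sqrt2}{\ln 2}\sqrt{t(1-t)}\approx 2.04\sqrt{t(1-t)}$, which fails precisely for $t$ near $\tfrac12$, exactly where the target inequality is tight. To get the sharp constant $2$ I would analyze $\phi(t):=2\sqrt{t(1-t)}-H_2(t)$ directly: one checks that $\phi''$ has a single sign change on $(0,\tfrac12)$ — negative while $2\sqrt{t(1-t)}<\ln 2$ and positive afterwards — so $\phi$ is concave on an initial subinterval $(0,t_0)$ and convex on $(t_0,\tfrac12)$. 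Using $\phi(0)=\phi(\tfrac12)=0$, $\phi'(0^+)=+\infty>0$, $\phi'(\tfrac12)=0$, and convexity at $\tfrac12$, one concludes $\phi\ge0$ on $(t_0,\tfrac12)$ (there $\phi$ decreases monotonically to its zero at $\tfrac12$) and on $(0,t_0)$ ($\phi$ is concave, hence lies above the chord from $(0,0)$ to $(t_0,\phi(t_0))$, which is nonnegative since $\phi(t_0)>0$). This gives $\phi\ge0$ throughout $[0,1]$ and closes the argument.

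Finally, I would remark that the base-$2$ logarithm is essential: in nats the right-hand side shrinks by a factor $\ln 2$ and the inequality becomes false (near a coordinate with $\mu(x)\gg\nu(x)$ the Hellinger term approaches $\tfrac{\mu(x)}{2}$ while the Jensen--Shannon term approaches only $\tfrac{\mu(x)}{2}\ln 2$), so the statement must be read with the paper's convention. As an alternative to the elementary calculus above, one may simply invoke the standard binary-entropy estimate $H_2(t)\le 2\sqrt{t(1-t)}$, or quote directly the Jensen--Shannon/Hellinger comparison of Lin~\cite{Lin91}.
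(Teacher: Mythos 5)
Your proof is correct, and it is worth noting that the paper itself supplies no proof of this fact at all: it is stated as a black-box citation to Lin~\cite{Lin91} in the preliminaries. Your reduction is the right one: writing $m:=\tfrac{\mu+\nu}{2}$, both sides split coordinatewise, and with $p=\mu(x)$, $q=\nu(x)$, $s=p+q$, $t=p/s$ the $x$-terms become $\tfrac{s}{2}\bigl(1-2\sqrt{t(1-t)}\bigr)$ and $\tfrac{s}{2}\bigl(1-H_2(t)\bigr)$ respectively (the latter using $\log=\log_2$), so the fact is exactly equivalent to the binary-entropy bound $H_2(t)\le 2\sqrt{t(1-t)}$. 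Your calculus verification of that bound checks out: multiplying $\phi''(t)$ by $\bigl(t(1-t)\bigr)^{3/2}$ and using $(1-2t)^2=1-4t(1-t)$ gives exactly $\tfrac{\sqrt{t(1-t)}}{\ln 2}-\tfrac12$, confirming the single inflection point you describe, and the concave-then-convex argument with the boundary data $\phi(0)=\phi(\tfrac12)=0$, $\phi'(\tfrac12)=0$ is sound. What your approach buys over the paper's citation is a short, self-contained, fully elementary argument; it also surfaces a point the paper leaves implicit, namely that the inequality with the constant $\tfrac12$ on the right is a base-$2$ statement (in nats the coordinatewise inequality fails as $t\to 0$), which matters since the fact is later combined with mutual-information bounds measured in bits. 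The one cosmetic caveat is that the two-sided display in Fact~\ref{fact:hellinger-kl}'s eventual application only needs the inequality as stated, so none of your sharper structural observations are required downstream — but they do make the constant's provenance transparent.
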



\subsection{Background on Communication and Information Complexity}\label{sec:cc-ic}

\paragraph{Communication complexity.} We briefly review the standard definitions of the two-party communication model of Yao~\cite{Yao79}.  See the text by Kushilevitz and Nisan~\cite{KushilevitzN97} for 
an extensive overview of communication complexity. In Section~\ref{sec:hpc}, we also use a standard generalization of this model to allow for more than two players, but we defer
the necessary definitions to that section. 

Let $P: \mathcal{X} \times \mathcal{Y} \rightarrow \mathcal{Z}$ be a relation.  Alice receives an input $X
\in \mathcal{X}$ and Bob receives $Y \in \mathcal{Y}$, where $(X,Y)$ are chosen from a
joint distribution $\dist$ over $\mathcal{X} \times \mathcal{Y}$. We allow players to have access to both public and private randomness. 
They communicate with each other by exchanging messages such that each message
depends only on the private input and random bits of the player sending the message, and the already communicated messages plus the public randomness. 
At the end, one of the players need to output an answer  $Z$ such that $Z \in P(X,Y)$.  

We use $\prot$ to denote a protocol used by the players. We always assume that the protocol $\prot$ can be randomized (using both public and
private randomness), \emph{even against a prior distribution $\dist$ of inputs}. For any
$0 < \delta < 1$, we say $\prot$ is a $\delta$-error protocol for $P$ over a distribution
$\dist$, if the probability that for an input $(X,Y)$, $\prot$ outputs some $Z$ where $Z \notin P(X,Y)$ is at most
$\delta$ (the probability is taken over the randomness of \emph{both} the distribution and the protocol).

\begin{definition}[Communication cost]
  The \emph{communication cost} of a protocol $\prot$ on an input
  distribution $\dist$, denoted by $\CC{\prot}{\dist}$, is the \emph{worst-case} bit-length of the transcript
  communicated between Alice and Bob in the protocol $\prot$, when the inputs are chosen from $\dist$.
\end{definition}

Communication complexity of a problem $P$ is defined as the minimum communication cost of a protocol $\prot$ that solves $P$ on every distribution $\dist$ with probability at least $2/3$.

\paragraph{Information complexity.} There are several possible definitions of information
cost of a communication prtocol that have been considered depending on the application (see, e.g.,~\cite{ChakrabartiSWY01,Bar-YossefJKS02,BarakBCR10,BravermanR11,BravermanEOPV13}).  
We use the notion of \emph{internal information cost}~\cite{BarakBCR10} that measures the average amount of information each player learns about the input of the other player
by observing the transcript of the protocol. 
\begin{definition}[Information cost]
  Consider an input distribution $\dist$ and a protocol $\prot$. Let $(\rX,\rY) \sim \dist$ denote the random variables for the input of Alice and Bob and $\rProt$ be the 
  the random variable for the transcript of the protocol \emph{concatenated} with the public randomness $\rR$ used by $\prot$. 
  The \emph{(internal) information cost} of $\prot$ with respect to
  $\dist$ is $\ICost{\prot}{\dist}:=\mii{\rProt}{\rX \mid \rY}{\dist} + \mii{\rProt}{\rY \mid \rX}{\dist}$. 
\end{definition}

One can also define information complexity of a problem $P$ similar to communication complexity with respect to the information cost. However, we avoid presenting this definition formally due to some subtle technical issues
 that need to be addressed which lead to multiple different but similar-in-spirit definitions. As such, we state our results directly in terms of information cost. 

Note that any public coin protocol is a distribution over private coins protocols, run by
first using public randomness to sample a random string $\rR=R$ and then running the
corresponding private coin protocol $\prot^R$. We also use $\rProt^R$ to denote the transcript of the protocol $\prot^R$. 
We have the following standard proposition. 
\begin{proposition}\label{prop:public-random}
	 For any distribution $\dist$ and any protocol $\prot$ with public randomness $\bR$, $$\ICost{\prot}{\dist} = \mii{\rProt}{\rX \mid \rY, \rR}{\dist} + \mii{\rProt}{\rY \mid \rX, \rR}{\dist} = \Ex_{R \sim \rR}\bracket{\ICost{\prot^R}{\dist}}.$$ 
\end{proposition}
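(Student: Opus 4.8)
The plan is to unwind the definition of internal information cost and use the fact that the public randomness $\rR$ is (by construction of the model) independent of the inputs $(\rX,\rY)$. First I would recall that $\rProt$ denotes the transcript \emph{concatenated with} the public string $\rR$, so that $\rProt$ determines $\rR$; hence for the middle expression $\mii{\rProt}{\rX \mid \rY, \rR}{\dist} + \mii{\rProt}{\rY \mid \rX, \rR}{\dist}$ I just need to show that conditioning a mutual information term on a variable that is already a deterministic function of one of its arguments does not change its value. Concretely, since $\rR$ is a function of $\rProt$, the chain rule (\itfacts{chain-rule}) gives $\mii{\rProt}{\rX \mid \rY}{} = \mii{\rR}{\rX \mid \rY}{} + \mii{\rProt}{\rX \mid \rY,\rR}{}$, and $\mii{\rR}{\rX \mid \rY}{} = 0$ because $\rR \perp (\rX,\rY)$ implies $\rR \perp \rX \mid \rY$ (by \itfacts{info-zero} together with \itfacts{cond-reduce}); the symmetric statement handles the $\rY$ term. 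This establishes the equality of the first two expressions in the proposition, i.e.\ that $\ICost{\prot}{\dist}$ can be computed with the public randomness moved into the conditioning.

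Next I would pass from the conditioned-on-$\rR$ form to the averaged form $\Ex_{R \sim \rR}[\ICost{\prot^R}{\dist}]$. By definition of conditional mutual information (Eq~(\ref{eq:cond-mi})/(\ref{eq:cond-entropy})), $\mii{\rProt}{\rX \mid \rY,\rR}{\dist} = \Ex_{R \sim \rR}\bigl[\mii{\rProt}{\rX \mid \rY, \rR = R}{\dist}\bigr]$, and for each fixed $R$ the conditional distribution of the protocol is exactly the transcript of the private-coin protocol $\prot^R$ run on $\dist$ (this is the content of the remark preceding the proposition that a public-coin protocol is a mixture of private-coin protocols $\prot^R$). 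Here I would be slightly careful that $\rProt$ conditioned on $\rR = R$ is the transcript of $\prot^R$ together with the constant $R$, but appending a constant changes no information term, so $\mii{\rProt}{\rX \mid \rY, \rR=R}{\dist} = \mii{\rProt^R}{\rX \mid \rY}{\dist}$. Adding the two symmetric terms and pulling the expectation over $R$ outside the sum (linearity) yields $\Ex_{R \sim \rR}[\mii{\rProt^R}{\rX \mid \rY}{\dist} + \mii{\rProt^R}{\rY \mid \rX}{\dist}] = \Ex_{R \sim \rR}[\ICost{\prot^R}{\dist}]$, completing the chain of equalities.

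This is a routine ``folklore'' fact and I expect no genuine obstacle; the only point requiring mild care is bookkeeping around the convention that $\rProt$ already includes $\rR$ — one must make sure not to double-count $\rR$ when splitting off the private-coin protocols, and must invoke the model assumption $\rR \perp (\rX,\rY)$ explicitly rather than treating it as obvious. If one wished to avoid the ``transcript includes $\rR$'' convention altogether, an alternative is to start directly from $\ICost{\prot}{\dist} = \mii{\rProt}{\rX\mid\rY}{} + \mii{\rProt}{\rY\mid\rX}{}$ with $\rProt$ the transcript proper, apply the chain rule to \emph{introduce} $\rR$ into the conditioning (again using $\rR\perp(\rX,\rY)$ to kill the extra term), and then proceed as above; both routes are a few lines.
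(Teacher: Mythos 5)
Your proposal is correct and follows essentially the same route as the paper's proof: both write $\rProt$ as containing $\rR$, apply the chain rule to split off $\mii{\rR}{\rX \mid \rY}{}$ (which vanishes since $\rR \perp (\rX,\rY)$), and then expand the conditional mutual information as an expectation over $R$ to identify it with $\Ex_{R \sim \rR}[\ICost{\prot^R}{\dist}]$. No gaps.
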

\begin{proof} By definition of internal information cost, 
\begin{align*}
	\ICost{\prot}{\dist} &= \mii{\rProt}{\rX \mid \rY}{\dist} + \mii{\rProt }{\rY \mid \rX}{\dist} = \mi{\rProt,\rR}{\rX \mid \rY} + \mi{\rProt,\rR}{\rY \mid \rX} \tag{$\Prot$ denotes the transcript and the public randomness} \\
	&= \mi{\rR}{\rX \mid \rY}  + \mi{\rProt}{\rX \mid \rY , \rR} + \mi{\rR}{\rY \mid \rX}  + \mi{\rProt}{\rY \mid \rX , \rR}  \tag{chain rule of mutual information, \itfacts{chain-rule}} \\
	&= \mi{\rProt}{\rX \mid \rY , \rR} + \mi{\rProt}{\rY \mid \rX , \rR} \tag{$\mi{\rR}{\rX \mid \rY}  = \mi{\rR}{\rY \mid \rX}  = 0$ since $\rR \perp \rX, \rY$ and \itfacts{info-zero}} \\
	&= \Ex_{R \sim \rR}\bracket{\mi{\rProt}{\rX \mid \rY , \rR=R} + \mi{\rProt}{\rY \mid \rX , \rR=R}} = \Ex_{R \sim \rR}\bracket{\ICost{\prot^R}{\dist}}, 
\end{align*}
concluding the proof.
\end{proof}

The following well-known proposition relates communication cost and information cost. 
\begin{proposition}[cf.~\cite{BravermanR11}]\label{prop:cc-ic}
  For any distribution $\dist$ and any protocol $\prot$: $\ICost{\prot}{\dist} \leq \CC{\prot}{\dist}$.
\end{proposition}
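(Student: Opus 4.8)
The final statement to prove is Proposition~\ref{prop:cc-ic}: for any distribution $\dist$ and any protocol $\prot$, $\ICost{\prot}{\dist} \leq \CC{\prot}{\dist}$. This is a standard fact relating internal information cost to worst-case communication cost, and the plan is to show that each message communicated can reveal at most its bit-length worth of information to the receiver, then sum over all messages via the chain rule.

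Here is the plan. First I would fix a protocol $\prot$ with public randomness $\rR$, write $(\rX,\rY)\sim\dist$, and decompose the transcript into its individual messages $\rProt = (\rR, \rM_1, \rM_2, \ldots, \rM_r)$, where $\rM_\ell$ is the $\ell$-th message and $r$ is the (worst-case) number of message rounds. By definition $\ICost{\prot}{\dist} = \mi{\rProt}{\rX \mid \rY} + \mi{\rProt}{\rY \mid \rX}$, and since $\rR \perp (\rX,\rY)$ the public-randomness term contributes nothing, so (as in Proposition~\ref{prop:public-random}) it suffices to bound $\sum_{\ell}\big(\mi{\rM_\ell}{\rX \mid \rY, \rR, \rM^{<\ell}} + \mi{\rM_\ell}{\rY \mid \rX, \rR, \rM^{<\ell}}\big)$, which equals $\ICost{\prot}{\dist}$ by the chain rule for mutual information (\itfacts{chain-rule}) applied to each of the two terms.

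Next I would argue message by message. For each $\ell$, either Alice or Bob is the sender. Say Alice sends $\rM_\ell$. Then $\rM_\ell$ is a (possibly randomized, using private coins) function of $\rX$, $\rR$, $\rM^{<\ell}$ only, so $\rM_\ell \perp \rY \mid \rX, \rR, \rM^{<\ell}$; hence $\mi{\rM_\ell}{\rY \mid \rX, \rR, \rM^{<\ell}} = 0$ by \itfacts{info-zero}. For the other term, $\mi{\rM_\ell}{\rX \mid \rY, \rR, \rM^{<\ell}} \leq \en{\rM_\ell \mid \rY, \rR, \rM^{<\ell}} \leq \en{\rM_\ell} \leq \log\card{\supp{\rM_\ell}} \leq |\rM_\ell|$, using the definition of mutual information and non-negativity of conditional entropy, then \itfacts{cond-reduce} (conditioning reduces entropy), then \itfacts{uniform}. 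Symmetrically, if Bob sends $\rM_\ell$ the roles of $\rX$ and $\rY$ swap. Either way, the contribution of message $\ell$ to $\ICost{\prot}{\dist}$ is at most $|\rM_\ell|$ (the bit-length of that message), and summing over $\ell$ gives $\ICost{\prot}{\dist} \leq \sum_\ell |\rM_\ell| \leq \CC{\prot}{\dist}$ since $\CC{\prot}{\dist}$ is defined as the worst-case total transcript length. There is a minor subtlety about whether the bound is over the worst case or in expectation, but since $\sum_\ell |\rM_\ell|$ over any fixed run is at most $\CC{\prot}{\dist}$ by definition, taking expectations preserves the inequality.

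I do not expect any real obstacle here — this is a textbook argument (it appears in~\cite{BravermanR11}). The only points requiring a little care are: making the round-decomposition of the transcript precise (including that each prefix $\rM^{<\ell}$ together with the public coins determines which player speaks next, so "sender of $\rM_\ell$" is well-defined), and handling private randomness correctly so that the conditional-independence claim $\rM_\ell \perp \rY \mid \rX,\rR,\rM^{<\ell}$ genuinely holds (Alice's private coins are independent of $\rY$ given everything public). Given these, the chain of inequalities above closes the proof immediately.
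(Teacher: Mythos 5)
Your proposal is correct and follows essentially the same route as the paper's proof: decompose the transcript via the chain rule, observe that each piece reveals zero information about the non-sender's input conditioned on the sender's input, and bound the information revealed to the receiver by the entropy (hence the length) of what was sent, with public randomness factored out exactly as in Proposition~\ref{prop:public-random}. The only cosmetic difference is that the paper decomposes the transcript bit by bit (so each term is trivially at most $1$), whereas you decompose message by message and invoke $\en{\rM_\ell}\leq\card{\rM_\ell}$ — which is where your "minor subtlety" about variable-length messages lives and which the bit-level decomposition sidesteps entirely.
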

\begin{proof}
Let us assume first that $\prot$ only uses private randomness and thus $\rProt$ only contain the transcript.  
For any $b \in [\CC{\prot}{\dist}]$, we define $\Prot_b$ to be the $b$-th bit of the transcript. We have, 
\begin{align*}
	 \ICost{\prot}{\dist} &= \mi{\rProt}{\rX \mid \rY}+ \mi{\rProt }{\rY \mid \rX} \\
	 &= \sum_{b=1}^{\CC{\prot}{\dist}} \mi{\rProt_b}{\rX \mid \rProt^{<b},\rY} + \mi{\rProt_b}{\rY \mid \rProt^{<b},\rX} \tag{by chain rule of mutual information in \itfacts{chain-rule}}\\
	 &= \sum_{b=1}^{\CC{\prot}{\dist}} \Ex_{\Prot^{<b}}\bracket{\mi{\rProt_b}{\rX \mid \rProt^{<b} = \Prot^{<b},\rY} + \mi{\rProt_b}{\rY \mid \rProt^{<b}= \Prot^{<b},\rX}}.
\end{align*}
Consider each term in the RHS above. By conditioning on $\Prot^{<b}$, the player that transmit $\rProt_b$ would become fix. If this player is Alice, then $\mi{\rProt_b}{\rY \mid \rProt^{<b}= \Prot^{<b},\rX} = 0$, because
$\rProt_b$ is only a function of $(\rProt^{<b},\rX)$ in this case; similarly, if this player is Bob, then $\mi{\rProt_b}{\rX \mid \rProt^{<b}= \Prot^{<b},\rY} = 0$. Moreover, 
$\mi{\rProt_b}{\rX \mid \rProt^{<b} = \Prot^{<b},\rY} \leq \en{\rProt_b}  \leq 1$ and similarly $\mi{\rProt_b}{\rY \mid \rProt^{<b}= \Prot^{<b},\rX} \leq 1$. As such, the above term can be upper bounded 
by $\CC{\prot}{\dist}$. To finalize the proof, note that by Proposition~\ref{prop:public-random}, for any public-coin protocol $\prot$, $\ICost{\prot}{\dist} = \Ex_{R \sim \rR}\bracket{\ICost{\prot^R}{\dist}} \leq \Ex_{R \sim \rR}\bracket{\CC{\prot^R}{\dist}} \leq \CC{\prot}{\dist}$, where
the first inequality is by the first part of the argument.  
\end{proof}

Proposition~\ref{prop:cc-ic} provides a convinent way of proving communication complexity lower bounds by lower bounding information cost of any protocol.

\subsection*{Rectangle Property of Communication Protocols}\label{sec:statstics}
We conclude this section by mentioning some basic properties of communication protocols. For any protocol $\prot$ and inputs $x \in \mathcal{X}$ and $y \in \mathcal{Y}$, we define $\Prot_{x,y}$ as the transcript
of the protocol conditioned on the input $x$ to Alice and input $y$ to Bob. Note that for randomized protocols, $\Prot_{x,y}$ is a random variable which we denote by $\rProt_{x,y}$. 

The following is referred to as the rectangle property of deterministic protocols. 
\begin{fact}[Rectangle property]\label{fact:rectangle}
	For any deterministic protocol $\prot$ and inputs $x,x' \in \mathcal{X}$ to Alice and $y,y' \in \mathcal{Y}$ to Bob, if $\Prot_{x,y} = \Prot_{x',y'}$, then $\Prot_{x,y'} = \Prot_{x',y}$. 
\end{fact}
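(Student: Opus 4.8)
The final statement to prove is the rectangle property of deterministic communication protocols (Fact~\ref{fact:rectangle}): if $\Prot_{x,y} = \Prot_{x',y'}$ then $\Prot_{x,y'} = \Prot_{x',y}$.

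\textbf{Proof plan.} The plan is to argue by induction on the rounds of the protocol that the transcript on input $(x,y')$ agrees, round by round, with the common transcript $\tau := \Prot_{x,y} = \Prot_{x',y'}$, and symmetrically for $(x',y)$. First I would fix the deterministic protocol $\prot$ and write $\tau = (\tau_1,\tau_2,\ldots,\tau_m)$ for the sequence of messages in the common transcript, where message $\tau_i$ is sent by a player determined solely by the prefix $\tau^{<i} := (\tau_1,\ldots,\tau_{i-1})$ (this ``who speaks next'' rule is part of the protocol specification and does not depend on the inputs). The key structural fact I would invoke is that each message a player sends is a function only of that player's own input together with the transcript so far: Alice's message at a step where she speaks is $m_A(x, \tau^{<i})$ and Bob's is $m_B(y, \tau^{<i})$.

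The induction hypothesis at step $i$ is: on input $(x,y')$, the first $i-1$ messages produced by $\prot$ equal $\tau^{<i}$. The base case $i=1$ is vacuous. For the inductive step, suppose the prefix so far equals $\tau^{<i}$; then the player who speaks at round $i$ is the same as in the execution on $(x,y)$ and on $(x',y')$ (since the speaker depends only on $\tau^{<i}$). If that speaker is Alice, her $i$-th message on input $(x,y')$ is $m_A(x,\tau^{<i})$, which is exactly the message she sent on input $(x,y)$, namely $\tau_i$ (because $\Prot_{x,y} = \tau$). If the speaker is Bob, his $i$-th message on $(x,y')$ is $m_B(y',\tau^{<i})$, which equals the message Bob sent on input $(x',y')$, namely $\tau_i$ (because $\Prot_{x',y'} = \tau$). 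Either way the $i$-th message on $(x,y')$ is $\tau_i$, completing the induction; at the end we also need that the protocol halts at the same point, which again follows since the halting condition and the output depend only on the transcript. Hence $\Prot_{x,y'} = \tau$. The identical argument with the roles of $(x,y)$ and $(x',y')$ swapped — using Alice's message equals what she sent on $(x',y')$ and Bob's equals what he sent on $(x,y)$ — gives $\Prot_{x',y} = \tau$, so in particular $\Prot_{x,y'} = \Prot_{x',y}$.

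\textbf{Main obstacle.} This statement is essentially a standard textbook fact, so there is no deep obstacle; the only thing requiring care is being precise about the bookkeeping — namely that in a deterministic protocol the identity of the next speaker, the decision to halt, and the final output are all functions of the transcript alone, while each individual message is a function of the speaker's input and the transcript prefix. Once these conventions are stated cleanly, the round-by-round induction is immediate. I would therefore spend the bulk of the write-up pinning down these conventions and then give the two-line induction; no nontrivial calculation is involved.
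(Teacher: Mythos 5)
Your proof is correct and is the standard round-by-round induction argument for the rectangle property; the paper states this as a known fact without proof (it is the classical result from Yao's model, cf.\ Kushilevitz--Nisan), and your write-up supplies exactly the argument the paper implicitly relies on. No gaps.
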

\noindent
Fact~\ref{fact:rectangle} implies that the set of inputs consistent with any transcript $\Prot_{x,y}$ of a deterministic protocol forms a combinatorial rectangle. 
One can also extend the rectangle property of deterministic protocols to randomized protocols using the following fact. 
\begin{fact}[Cut-and-paste property; cf.~\cite{Bar-YossefJKS02}]\label{fact:r-rectangle}
	For any randomized protocol $\prot$ and inputs $x,x' \in \mathcal{X}$ to Alice and $y,y' \in \mathcal{Y}$ to Bob, $\hd{\rProt_{x,y}}{\rProt_{x',y'}} = \hd{\rProt_{x,y'}}{\rProt_{x',y}}$. 
\end{fact}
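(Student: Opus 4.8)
The plan is to derive the identity from the product (``rectangle'') structure of transcript distributions for randomized protocols, exactly as in the standard cut-and-paste lemma of~\cite{Bar-YossefJKS02}. First I would dispose of public randomness: by the convention fixed in Section~\ref{sec:cc-ic} the random variable $\rProt_{x,y}$ already includes the public string $\rR$, so every outcome is a pair $(r,\tau)$ with $\Pr[\rProt_{x,y} = (r,\tau)] = \Pr[\rR = r]\cdot\Pr[\prot^r \text{ produces transcript } \tau \mid x,y]$, where $\prot^r$ is the private-coin protocol obtained by fixing the public string to $r$. Since the factor $\Pr[\rR = r]$ is identical for all four input pairs, it suffices to prove the statement for each fixed private-coin protocol $\prot^r$ and then sum the resulting pointwise identity over $(r,\tau)$.

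Second, for a private-coin protocol I would prove, by induction on the number of messages exchanged, the decomposition $\Pr[\rProt^r_{x,y} = \tau] = q_A(x,\tau)\cdot q_B(y,\tau)$, where $q_A$ is a function of Alice's input and $\tau$ only, $q_B$ of Bob's input and $\tau$ only. Concretely, $q_A(x,\tau)$ is the product, over the message-rounds in which Alice speaks, of the conditional probability (marginalized over Alice's private coins) that she sends the message prescribed by $\tau$ given the prefix of $\tau$ and her input $x$; it is set to $0$ on transcripts $\tau$ not reachable with input $x$, and $q_B$ is defined symmetrically. The only point needing care is that the identity of the next speaker at each round is determined by the transcript so far (this holds in the standard protocol-tree model, and is where Fact~\ref{fact:rectangle} comes from in the deterministic case); given this, the factor contributed by round $t$ is attributed unambiguously to $q_A$ or $q_B$, and independence of Alice's and Bob's private coins yields the product form.

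Third, I would invoke the Bhattacharyya form of squared Hellinger distance from Eq~\eqref{eq:hd}, namely $\hdt{\mu}{\nu} = 1 - \sum_{\omega}\sqrt{\mu(\omega)\,\nu(\omega)}$. Applying the decomposition to each of the four transcript distributions, the pointwise terms satisfy $\sqrt{\Pr[\rProt^r_{x,y}=\tau]\,\Pr[\rProt^r_{x',y'}=\tau]} = \sqrt{q_A(x,\tau)\,q_A(x',\tau)\,q_B(y,\tau)\,q_B(y',\tau)} = \sqrt{\Pr[\rProt^r_{x,y'}=\tau]\,\Pr[\rProt^r_{x',y}=\tau]}$, since both outer expressions equal the manifestly symmetric middle one. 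Summing over $\tau$, then over $r$ with the weights $\Pr[\rR=r]$ restored, gives $\sum_{\omega}\sqrt{\Pr[\rProt_{x,y}=\omega]\,\Pr[\rProt_{x',y'}=\omega]} = \sum_{\omega}\sqrt{\Pr[\rProt_{x,y'}=\omega]\,\Pr[\rProt_{x',y}=\omega]}$, hence $\hdt{\rProt_{x,y}}{\rProt_{x',y'}} = \hdt{\rProt_{x,y'}}{\rProt_{x',y}}$, and taking square roots gives the claimed equality of Hellinger distances.

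The only real content --- and hence the step I would budget most of the write-up for --- is the decomposition $\Pr[\rProt_{x,y}=\tau] = q_A(x,\tau)\,q_B(y,\tau)$: one must be precise about the protocol model so that ``whose turn it is to speak'' is a function of the transcript, handle zero-probability transcripts so that the square roots and the induction are well defined, and track the public randomness cleanly. Everything after that is a one-line symmetry observation about the Bhattacharyya coefficient.
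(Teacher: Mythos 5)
Your proof is correct: the paper states this as a Fact without proof, citing~\cite{Bar-YossefJKS02}, and your argument is precisely the standard one from that reference --- factor out the public randomness, establish the product decomposition $\Pr[\rProt^r_{x,y}=\tau]=q_A(x,\tau)\,q_B(y,\tau)$ by induction on the rounds of the protocol tree, and observe that the Bhattacharyya coefficient $\sum_\tau\sqrt{q_A(x,\tau)q_A(x',\tau)q_B(y,\tau)q_B(y',\tau)}$ is symmetric under swapping $y$ and $y'$. You also correctly identify the one step with real content (the product decomposition, including the requirement that the speaker at each round is determined by the transcript prefix), so nothing is missing.
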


\section{Communication Phases in \HPC}\label{app:phase}

 An important notion in computing $\HPC$ is a \emph{communication phase} defined as follows:  
Let $\prot$ be any protocol for $\HPC$. We partition the communication steps of $\prot$ into multiple \emph{phases} starting from phase one. 
In an \emph{odd} phase in $\prot$, the players $P_C$ and $P_D$ can communicate back and forth with each other (without restriction on the number of rounds of interaction), but once one of them 
sends a single message (possibly more than one bit) to either $P_A$ or $P_B$ this phase is concluded. In an \emph{even} phase of $\prot$, $P_A$ and $P_B$ are allowed to communicate back and forth and then again once one of them
sends a single message to either $P_C$ or $P_D$ this phase is concluded. One can always uniquely partition the communication steps of any protocol into multiple phases. We refer to a protocol $\prot$ as a \emph{$k$-phase} protocol iff its communication steps consists of $k$ phases. See Figure~\ref{fig:hpc-phase} for an illustration. 

\begin{figure}[h!]
    \centering
    \subcaptionbox{In phase one, $P_C$ and $P_D$ communicate back and forth.}[0.24\textwidth]{

\begin{tikzpicture}[ auto ,node distance =1cm and 2cm , on grid , semithick , state/.style ={ circle ,top color =white , bottom color = white , draw, black , text=black}, every node/.style={inner sep=0,outer sep=0}]

\node[state,rectangle, top color=white, black, bottom color=white, text=black, minimum height=30pt, minimum width = 30pt, rounded corners=2mm,  align=center]  (PA){$P_A$};
\node[state,rectangle, top color=white, black, bottom color=white, text=black, minimum height=30pt, minimum width = 30pt, rounded corners=2mm,  align=center]  (PB)[right=70pt of PA]{$P_B$};
\node[state,rectangle, top color=white, black, bottom color=white, text=black, minimum height=30pt, minimum width = 30pt, rounded corners=2mm,  align=center]  (PC)[below=55pt of PA]{$P_C$};
\node[state,rectangle, top color=white, black, bottom color=white, text=black, minimum height=30pt, minimum width = 30pt, rounded corners=2mm,  align=center]  (PD)[right=70pt of PC]{$P_D$};

\node[inner sep=3pt, draw, dotted, blue, fit=(PA) (PB), line width=0.5mm] {};
\node[inner sep=3pt, draw, dotted, red, fit=(PC) (PD), line width=0.5mm] {};

\draw[<->,line width=1pt] (PC) -- (PD);

\end{tikzpicture}
}   \hspace{2mm}
    \subcaptionbox{Phase one ends when $P_C$ or $P_D$ sends a message to $P_A$ or $P_B$.}[0.22\textwidth]{

    \begin{tikzpicture}[ auto ,node distance =1cm and 2cm , on grid , semithick , state/.style ={ circle ,top color =white , bottom color = white , draw, black , text=black}, every node/.style={inner sep=0,outer sep=0}]

\node[state,rectangle, top color=white, black, bottom color=white, text=black, minimum height=30pt, minimum width = 30pt, rounded corners=2mm,  align=center]  (PA){$P_A$};
\node[state,rectangle, top color=white, black, bottom color=white, text=black, minimum height=30pt, minimum width = 30pt, rounded corners=2mm,  align=center]  (PB)[right=70pt of PA]{$P_B$};
\node[state,rectangle, top color=white, black, bottom color=white, text=black, minimum height=30pt, minimum width = 30pt, rounded corners=2mm,  align=center]  (PC)[below=55pt of PA]{$P_C$};
\node[state,rectangle, top color=white, black, bottom color=white, text=black, minimum height=30pt, minimum width = 30pt, rounded corners=2mm,  align=center]  (PD)[right=70pt of PC]{$P_D$};

\node[inner sep=3pt, draw, dotted, blue, fit=(PA) (PB), line width=0.5mm] {};
\node[inner sep=3pt, draw, dotted, red, fit=(PC) (PD), line width=0.5mm] {};

\draw[->,line width=1pt, dashed] (PC) -- (PB); 
\draw[->,line width=1pt, dashed] (PC) -- (PA); 
\draw[->,line width=1pt, dashed] (PD) -- (PB); 
\draw[->,line width=1pt, dashed] (PD) -- (PA); 

\end{tikzpicture}
}
\hspace{2mm}
    \subcaptionbox{In phase two, $P_A$ and $P_B$ communicate back and forth with each other.}[0.22\textwidth]{

    \begin{tikzpicture}[ auto ,node distance =1cm and 2cm , on grid , semithick , state/.style ={ circle ,top color =white , bottom color = white , draw, black , text=black}, every node/.style={inner sep=0,outer sep=0}]

\node[state,rectangle, top color=white, black, bottom color=white, text=black, minimum height=30pt, minimum width = 30pt, rounded corners=2mm,  align=center]  (PA){$P_A$};
\node[state,rectangle, top color=white, black, bottom color=white, text=black, minimum height=30pt, minimum width = 30pt, rounded corners=2mm,  align=center]  (PB)[right=70pt of PA]{$P_B$};
\node[state,rectangle, top color=white, black, bottom color=white, text=black, minimum height=30pt, minimum width = 30pt, rounded corners=2mm,  align=center]  (PC)[below=55pt of PA]{$P_C$};
\node[state,rectangle, top color=white, black, bottom color=white, text=black, minimum height=30pt, minimum width = 30pt, rounded corners=2mm,  align=center]  (PD)[right=70pt of PC]{$P_D$};

\node[inner sep=3pt, draw, dotted, blue, fit=(PA) (PB), line width=0.5mm] {};
\node[inner sep=3pt, draw, dotted, red, fit=(PC) (PD), line width=0.5mm] {};

\draw[<->,line width=1pt] (PA) -- (PB); 

\end{tikzpicture}
}\hspace{2mm}
    \subcaptionbox{Phase two ends when $P_A$ or $P_B$ sends a message to $P_C$ or $P_D$.}[0.22\textwidth]{

    \begin{tikzpicture}[ auto ,node distance =1cm and 2cm , on grid , semithick , state/.style ={ circle ,top color =white , bottom color = white , draw, black , text=black}, every node/.style={inner sep=0,outer sep=0}]

\node[state,rectangle, top color=white, black, bottom color=white, text=black, minimum height=30pt, minimum width = 30pt, rounded corners=2mm,  align=center]  (PA){$P_A$};
\node[state,rectangle, top color=white, black, bottom color=white, text=black, minimum height=30pt, minimum width = 30pt, rounded corners=2mm,  align=center]  (PB)[right=70pt of PA]{$P_B$};
\node[state,rectangle, top color=white, black, bottom color=white, text=black, minimum height=30pt, minimum width = 30pt, rounded corners=2mm,  align=center]  (PC)[below=55pt of PA]{$P_C$};
\node[state,rectangle, top color=white, black, bottom color=white, text=black, minimum height=30pt, minimum width = 30pt, rounded corners=2mm,  align=center]  (PD)[right=70pt of PC]{$P_D$};

\node[inner sep=3pt, draw, dotted, blue, fit=(PA) (PB), line width=0.5mm] {};
\node[inner sep=3pt, draw, dotted, red, fit=(PC) (PD), line width=0.5mm] {};

\draw[<-,line width=1pt, dashed] (PC) -- (PB); 
\draw[<-,line width=1pt, dashed] (PC) -- (PA); 
\draw[<-,line width=1pt, dashed] (PD) -- (PB); 
\draw[<-,line width=1pt, dashed] (PD) -- (PA); 

\end{tikzpicture}
}
    \caption{Illustration of a two-phase communication protocol for the \HPC problem.}
    \label{fig:hpc-phase}
\end{figure}
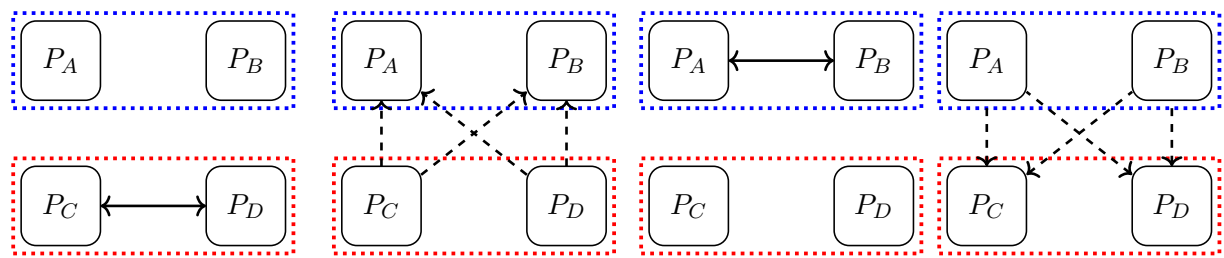

\clearpage

\section{A Schematic Organization of Proof of Lemma~\ref{lem:PI-ic}}\label{app:schematic}

\textcolor{white}{We have}

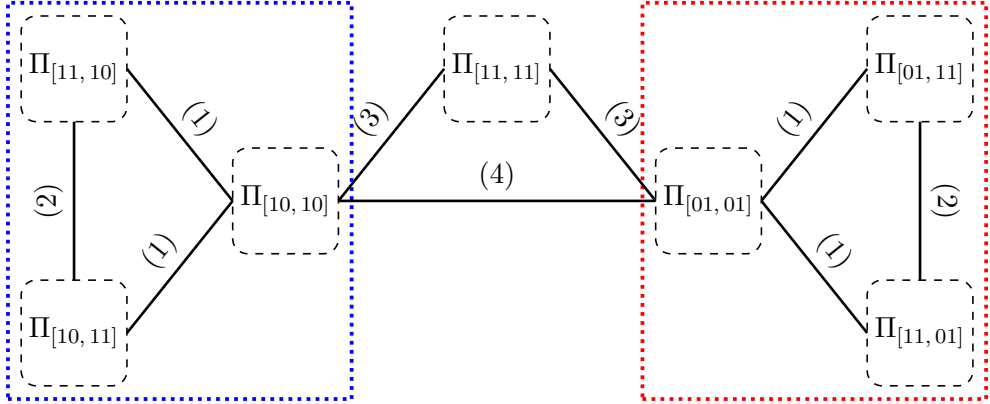
\begin{figure}[h!]
    \centering
    \begin{tikzpicture}[ auto ,node distance =1cm and 2cm , on grid , semithick , state/.style ={ circle ,top color =white , bottom color = white , draw, black , text=black}, every node/.style={inner sep=0,outer sep=0}]

\node[state,rectangle, top color=white, black, bottom color=white, text=black,  dashed, minimum height=40pt, minimum width = 40pt, rounded corners=2mm]  (P1010){$\Prot_{\disjin{1}{0}{1}{0}}$};
\node[state,rectangle, top color=white, black, bottom color=white, text=black,  dashed, minimum height=40pt, minimum width = 40pt, rounded corners=2mm]  (P1110)[above left=50pt and 80pt of P1010]{$\Prot_{\disjin{1}{1}{1}{0}}$};
\node[state,rectangle, top color=white, black, bottom color=white, text=black,   dashed, minimum height=40pt, minimum width = 40pt, rounded corners=2mm]  (P1011)[below left=50pt and 80pt of P1010]{$\Prot_{\disjin{1}{0}{1}{1}}$};

\node[state,rectangle, top color=white, black, bottom color=white, text=black,   dashed, minimum height=40pt, minimum width = 40pt, rounded corners=2mm]  (P1111)[above right=50pt and 80pt of P1010]{$\Prot_{\disjin{1}{1}{1}{1}}$};

\node[state,rectangle, top color=white, black, bottom color=white, text=black,  dashed, minimum height=40pt, minimum width = 40pt, rounded corners=2mm]  (P0101)[right=160pt of P1010]{$\Prot_{\disjin{0}{1}{0}{1}}$};
\node[state,rectangle, top color=white, black, bottom color=white, text=black,  dashed, minimum height=40pt, minimum width = 40pt, rounded corners=2mm]  (P0111)[above right=50pt and 80pt of P0101]{$\Prot_{\disjin{0}{1}{1}{1}}$};
\node[state,rectangle, top color=white, black, bottom color=white, text=black,   dashed, minimum height=40pt, minimum width = 40pt, rounded corners=2mm]  (P1101)[below right=50pt and 80pt of P0101]{$\Prot_{\disjin{1}{1}{0}{1}}$};

\node[inner sep=5pt, draw, dotted, blue, fit=(P1010) (P1110) (P1011), line width=0.5mm] (K1){};
\node[opacity=0, text opacity=1] (K1n)[above=90pt of K1]{target element = 1};

\node[inner sep=5pt, draw, dotted, red, fit=(P0101) (P0111) (P1101), line width=0.5mm] (K2){};
\node[opacity=0, text opacity=1] (K2n)[above=90pt of K2]{target element = 2};

\path[every node/.style={sloped,anchor=south,auto=false}, draw, line width=1pt]
        (P1011.east) edge              node {(1)} (P1010.west)
        (P1110.east) edge              node {(1)} (P1010.west)
        (P1101.west) edge             node {(1)} (P0101.east)
        (P0111.west) edge             node {(1)} (P0101.east)
        
        (P1011) edge              node {(2)}  (P1110)
        (P0111) edge              node {(2)}  (P1101)
        
        (P1010.east) edge node{(3)} (P1111.west)
        (P0101.west) edge node{(3)} (P1111.east)
        
           (P1010.east) edge node{(4)} (P0101.west);

\end{tikzpicture}

    \caption{Organization of the proof of Lemma~\ref{lem:PI-ic}. Each box denotes the transcript of the protocol for a specific input to players. The boxes in the left are for inputs with target element $k=1$, while the ones on the right are for $k=2$. 
    The middle box is the transcript obtained by running the protocol on $\disjin{1}{1}{1}{1}$ which is \emph{not} a valid input to $\PI$. The strategy in the proof is to show that distribution of all these transcript are close to each other. Each edge
    between two boxes shows the step for establishing the distance between the distribution of the transcripts on its endpoints. The steps are as follows: 
    \newline
    \newline
	\textcolor{white}{1} \hspace{0.25cm} Step (1): Follows from the contradicting assumption on the information revealed by the protocol (in Claim~\ref{clm:disj2-info-terms} and Claim~\ref{clm:disj2-h-terms-1}). 	
    \newline
    \newline
	\textcolor{white}{1} \hspace{0.5cm} Step (2):  ~~Follows from the triangle inequality between the distances (in Claim~\ref{clm:disj2-h-terms-2}). 
    \newline
    \newline
	\textcolor{white}{1} \hspace{0.35cm} Step (3):  Follows from the cut-and-paste property (Fact~\ref{fact:r-rectangle}), applied to the two left most boxes and the two right most ones, respectively.  
    \newline
    \newline
    	\textcolor{white}{1} \hspace{0.35cm} Step (4):  Follows from the cut-and-paste property (Fact~\ref{fact:r-rectangle}), applied to the two left most boxes and the two right most ones, respectively (in Claim~\ref{clm:disj2-h-terms-2}).  
\newline
\newline
      The proof then is finalized by applying the triangle inequality to all pairs of boxes with no edge in the figure~(in Claim~\ref{clm:disj2-sep-terms}). At this point, we obtain that the transcript of the protocol is essentially distributed the
      same regardless of the input, hence
      the protocol cannot possibly distinguish between the cases when target element is $1$ versus the ones when it is $2$ with a non-negligible advantage over random guessing. 
    }
    \label{fig:pi-ic}
\end{figure}


\end{document}